\newtheorem{problem}[theorem]{Problem}
\newtheorem{fact}[theorem]{Fact}
\crefname{fact}{Fact}{Facts}
\crefname{claim}{Claim}{Claims}
\newcommand{\abs}[1]{\ensuremath{\lvert #1 \rvert}}
\newcommand{\floor}[1]{\lfloor{#1}\rfloor}
\newcommand{\ceil}[1]{\lceil{#1}\rceil}
\newcommand{\set}[1]{\{#1\}}
\newcommand{\Oh}{\ensuremath{\mathcal{O}}\xspace}
\newcommand{\Ohtilde}{\ensuremath{\smash{\rlap{\raisebox{-0.2ex}{$\widetilde{\phantom{\Oh}}$}}\Oh}}\xspace}
\newcommand{\tOh}{\Ohtilde}
\newcommand{\RR}{\mathbb{R}\xspace}
\newcommand{\Rz}{\mathbb{R}_{\ge 0}}
\newcommand{\ZZ}{\mathbb{Z}\xspace}
\renewcommand{\emptyset}{\varnothing}
\newcommand{\emptystring}{\varepsilon}
\DeclareMathOperator{\poly}{poly}
\DeclareMathOperator{\polylog}{polylog}
\newcommand{\dd}{.\,.}
\newcommand{\fragmentco}[2]{[#1\dd #2)}
\newcommand{\fragmentoc}[2]{(#1\dd #2]}
\newcommand{\fragmentoo}[2]{(#1\dd #2)}
\newcommand{\fragmentcc}[2]{[#1\dd #2]}
\newcommand{\position}[1]{[#1]}
\newcommand{\row}[3]{#1_{#2},\ldots,#1_{#3}}
\newcommand{\setof}[2]{\set{#1\colon\,#2}}
\newcommand{\w}[2]{w}
\newcommand{\maybe}{}
\newcommand{\says}[3]{}
\newcommand{\Esigma}{\overline{\Sigma}}
\newcommand{\wdef}{\ensuremath{w \colon \Esigma^2 \to [0, W]}\xspace}
\newcommand{\dist}{\mathsf{dist}}
\newcommand{\ed}{\mathsf{ed}}
\newcommand{\eed}[2]{\mathsf{ed}(#1,#2)}
\newcommand{\wed}{\ed^w}
\newcommand{\wwed}[3]{\mathsf{ed}^{#1}(#2,#3)}
\newcommand{\hd}{\mathsf{hd}}
\newcommand{\hhd}[2]{\mathsf{hd}(#1,#2)}
\newcommand{\cost}{\mathsf{cost}}
\newcommand{\sed}{\mathsf{self}\text{-}\ed}
\newcommand{\sedk}{\sed^k}
\newcommand{\AG}{\mathsf{AG}}
\newcommand{\AGw}{\AG^w}
\newcommand{\oAGw}{\overline{\AG}^w}
\newcommand{\BMw}{\mathsf{BM}^w}
\newcommand{\meq}[1]{\stackrel{#1}{=}}
\newcommand{\cA}{\mathcal{A}}
\newcommand{\cB}{\mathcal{B}}
\newcommand{\oG}{\overline{G}}
\newcommand{\hX}{\hat{X}}
\newcommand{\hY}{\hat{Y}}
\newcommand{\hk}{\hat{k}}
\newcommand{\Sem}{\mathcal{S}}
\newcommand{\absorbingmatrix}{\mathsf{z}}
\newcommand{\matmon}{\mathcal{M}}
\newcommand{\xotimes}{\operatorname{\hat{\otimes}}}
\newcommand{\localds}{\mathcal{D}}
\newcommand{\localdscor}{\overline{\localds}}
\newcommand{\fancysymbol}{\dagger}
\newcommand{\para}[1]{\subparagraph*{#1}}
\newcommand{\calX}{\mathcal{X}}
\def\ipmOpName{{\tt IPM}\xspace}
\def\accOpName{{\tt Access}\xspace}
\def\extractOpName{{\tt Extract}\xspace}
\def\lenOpName{{\tt Length}\xspace}
\def\lceOpName{{\tt LCP}\xspace}
\newcommand{\modelname}{\texttt{PILLAR}\xspace}
\newcommand{\pillar}{\modelname}
\def\lceOp#1#2{{\tt LCP}(#1, #2)}
\def\lcbOp#1#2{{\tt LCP}^R(#1, #2)}
\def\accOp#1#2{#1\position{#2}}
\definecolor{darkgreen}{RGB}{0,160,0}
\definecolor{darkred}{RGB}{220,20,60}
\definecolor{darkblue}{RGB}{0,0,160}
\def\twoheadleadsto{\tikz[baseline=(a.base)]{\draw[%
    decorate,decoration={zigzag,segment length=4, amplitude=.9},%
    ] (0,0) -- (.25, 0);%
    \draw[%
    -{Classical TikZ Rightarrow}.{Classical TikZ Rightarrow},%
    ] (.25, 0) -- (.4, 0);%
    \node (a) at (.4/2,-.03) {\phantom{\(\leadsto\)}};%
}}
\newcommand{\onto}{\twoheadleadsto}
\def\aonto#1{\onto}
\newlength{\leftstackrelawd}
\newlength{\leftstackrelbwd}
\def\leftstackrel#1#2{\settowidth{\leftstackrelawd}%
{${{}^{#1}}$}\settowidth{\leftstackrelbwd}{$#2$}%
\addtolength{\leftstackrelawd}{-\leftstackrelbwd}%
\leavevmode\ifthenelse{\lengthtest{\leftstackrelawd>0pt}}%
{\kern-.5\leftstackrelawd}{}\mathrel{\mathop{#2}\limits^{#1}}}
\title{Bounded Weighted Edit Distance}
\titlerunning{Bounded Weighted Edit Distance: Dynamic Algorithms and Matching Lower Bounds}
\author{Itai Boneh}{Reichman University and University of Haifa, Israel}{itai.bone@biu.ac.il}{https://orcid.org/0009-0007-8895-4069}{supported by Israel Science Foundation grant 810/21.}
\author{Egor Gorbachev}{Saarland University and Max Planck Institute for Informatics, Saarland Informatics Campus, Germany}{egorbachev@cs.uni-saarland.de}{https://orcid.org/0009-0005-5977-7986}{This work is part of the project TIPEA that has received funding from the European Research Council (ERC) under the European Unions Horizon 2020 research and innovation programme (grant agreement No.\ 850979).}
\author{Tomasz Kociumaka}{Max Planck Institute for Informatics, Saarland Informatics
Campus, Saarbrücken, Germany}{tomasz.kociumaka@mpi-inf.mpg.de}{https://orcid.org/0000-0002-2477-1702}{}
\authorrunning{I. Boneh, E. Gorbachev, and T. Kociumaka} 
\keywords{Edit distance, dynamic algorithms, conditional lower bounds} 
\begin{document}

\maketitle

\begin{abstract}
    The edit distance $\ed(X,Y)$ of two strings $X,Y\in \Sigma^*$ is the minimum number of character edits (insertions, deletions, and substitutions) needed to transform $X$ into $Y$. 
Its weighted counterpart~$\wed(X,Y)$ minimizes the total cost of edits, where the costs of individual edits, depending on the edit type and the characters involved, are specified using a function $w$, normalized so that each edit costs at least one.
The textbook dynamic-programming procedure, given strings $X,Y\in \Sigma^{\le n}$ and oracle access to $w$, computes $\wed(X,Y)$ in $\Oh(n^2)$ time.
Nevertheless, one can achieve better running times if the computed distance, denoted $k$, is small:
$\Oh(n+k^2)$ for unit weights [Landau and Vishkin; JCSS'88] and $\Ohtilde(n+\sqrt{nk^3})$\footnote{Henceforth, the $\Ohtilde(\cdot)$ notation hides factors poly-logarithmic in $n$.} for arbitrary weights [Cassis, Kociumaka, Wellnitz; FOCS'23].

In this paper, we study the dynamic version of the weighted edit distance problem, where the goal is to maintain $\wed(X,Y)$ for strings $X,Y\in \Sigma^{\le n}$ that change over time, with each update specified as an edit in $X$ or~$Y$.
Very recently, Gorbachev and Kociumaka [STOC'25] showed that the \emph{unweighted} distance $\ed(X,Y)$ can be maintained in $\Ohtilde(k)$ time per update after $\Ohtilde(n+k^2)$-time preprocessing; here, $k$ denotes the \emph{current} value of $\ed(X,Y)$.
Their algorithm generalizes to small integer weights, but the underlying approach is incompatible with large weights.

Our main result is a dynamic algorithm that maintains $\wed(X,Y)$ in $\Ohtilde(k^{3-\gamma})$ time per update after $\Ohtilde(nk^\gamma)$-time preprocessing.
Here, $\gamma\in [0,1]$ is a real trade-off parameter and $k\ge 1$ is an integer threshold \emph{fixed} at preprocessing time, with $\infty$ returned whenever $\wed(X,Y)>k$.
We complement our algorithm with conditional lower bounds showing fine-grained optimality of our trade-off for~$\gamma \in [0.5,1)$ and justifying our choice to fix $k$.

We also generalize our solution to a much more robust setting while preserving the fine-grained optimal trade-off.
Our full algorithm maintains $X\in \Sigma^{\le n}$ subject not only to character edits but also substring deletions and copy-pastes, each supported in $\Ohtilde(k^2)$ time.
Instead of dynamically maintaining $Y$, it answers queries that, given any string $Y$ specified through a sequence of $\Oh(k)$ arbitrary edits transforming $X$ into $Y$, in $\Ohtilde(k^{3-\gamma})$ time compute $\wed(X,Y)$ or report that $\wed(X,Y)>k$.

\end{abstract}

\section{Introduction}\label{sec:introduction}
Among the most fundamental string processing problems is the task of deciding whether two strings are similar to each other. 
A classic measure of string (dis)similarity is the \emph{edit distance}, also known as the \emph{Levenshtein distance}~\cite{Levenshtein66}.
The (unit-cost) edit distance $\ed(X,Y)$ of two strings $X$ and $Y$ is defined as the minimum number of character edits (insertions, deletions, and substitutions) needed to transform $X$ into~$Y$.
In typical use cases, edits model naturally occurring modifications, such as typographical errors in natural-language texts or mutations in biological sequences.
Some of these changes occur more frequently than others, which motivated introducing \emph{weighted edit distance} already in the 1970s~\cite{WF74}.
In this setting, each edit has a cost that may depend on its type and the characters involved (but nothing else), and the goal is to minimize the total cost of edits rather than their quantity.
The costs of individual edits can be specified through a weight function~$w \colon \Esigma^2 \to \mathbb{R}_{\ge 0}$, where~$\Esigma=\Sigma\cup\{\emptystring\}$ denotes the alphabet extended with a symbol $\emptystring$ representing the lack of a character.
For~$a,b\in \Sigma$, the cost of inserting $b$ is $w(\emptystring,b)$, the cost of deleting $a$ is $w(a,\emptystring)$, and the cost of substituting $a$ for $b$ is $w(a,b)$.
(Note that the cost of an edit is independent of the position of the edited character in the string.)
Consistently with previous works, we assume that $w$ is normalized: $w(a,b)\ge 1$ and $w(a,a)=0$ hold for every distinct $a,b\in \Esigma$.
The unweighted edit distance constitutes the special case when $w(a,b)=1$ holds for $a\ne b$.

The textbook dynamic-programming algorithm~\cite{Vin68,NW70,Sel74,WF74} takes $\Oh(n^2)$ time to compute the edit distance of two strings of length at most $n$, and some of its original formulations incorporate weights~\cite{Sel74,WF74,Sel80}.
Unfortunately, there is little hope for much faster solutions: 
for any constant $\delta>0$, an $\Oh(n^{2-\delta})$-time algorithm would violate the Orthogonal Vectors Hypothesis~\cite{ABW15,BK15,AHWW16,BI18}, and hence the Strong Exponential Time Hypothesis~\cite{IP01,IPZ01}. 
The lower bound holds already for unit weights and many other fixed weight functions~\cite{BK15}.

\subparagraph*{Bounded Edit Distance.}
On the positive side, the quadratic running time can be improved when the computed distance is small.
For unweighted edit distance, the algorithm by Landau and Vishkin~\cite{LV88}, building upon the ideas of \cite{Ukk85,Mye86}, computes $k\coloneqq \ed(X,Y)$ in $\Oh(n+k^2)$ time for any two strings $X,Y\in \Sigma^{\le n}$.
This running time is fine-grained optimal: for any~$\delta>0$, a hypothetical $\Oh(n+k^{2-\delta})$-time algorithm, even restricted to instances satisfying~$k=\Theta(n^{\kappa})$ for some constant $\kappa \in (0.5, 1]$, would violate the Orthogonal Vectors Hypothesis.

As far as the bounded \emph{weighted} edit distance problem is concerned, a simple optimization by Ukkonen \cite{Ukk85} improves the quadratic time complexity of the classic dynamic-programming algorithm to $\Oh(nk)$, where $k\coloneqq \wed(X,Y)$.
Recently, Das, Gilbert, Hajiaghayi, Kociumaka, and Saha~\cite{DGHKS23} developed an $\Oh(n+k^5)$-time solution and, soon afterwards, Cassis, Kociumaka, and Wellnitz~\cite{CKW23} presented an $\Ohtilde(n+\sqrt{nk^3})$-time algorithm. 
Due to the necessity to read the entire input, the latter running time is optimal for $k\le \sqrt[3]{n}$.
More surprisingly, there is a tight conditional lower bound for $\sqrt{n} \le k \le n$~\cite{CKW23}, valid already for edit costs in the real interval~$[1,2]$.
For any constants $\kappa \in [0.5, 1]$ and $\delta>0$, an $\Oh(\sqrt{nk^{3-\delta}})$-time algorithm restricted to instances satisfying $k=\Theta(n^{\kappa})$ would violate the All-Pairs Shortest Paths Hypothesis~\cite{VWW18}.
The optimal complexity remains unknown for $\sqrt[3]{n} < k < \sqrt{n}$, where the upper bound is $\Ohtilde(\sqrt{nk^3})$, yet the lower-bound construction allows for an $\Ohtilde(n+k^{2.5})$-time algorithm.

\subparagraph*{Dynamic Edit Distance.}
Over the last decades, the edit distance problem has been studied in numerous variants, including the dynamic ones, where the input strings change over time.
The most popular dynamic (weighed) edit distance formulation is the following one~\cite{CKM20,GK24}.

\begin{problem}[Dynamic Weighted Edit Distance]\label{prob:base}
    Given an integer $n$ and oracle access to a normalized weight function $w\colon \Esigma^2 \to \mathbb{R}_{\ge 0}$, maintain strings $X,Y\in \Sigma^{\le n}$ subject to updates (character edits in $X$ and $Y$) and report $\wed(X,Y)$ after each update.\footnote{For clarity, the introduction focuses on the version of the problem where the value $\wed(X, Y)$ is reported after every update. 
    In \cref{thm:full-simpler,thm:full-complete-algorithm}, we address a more general setting where queries may occur less frequently, allowing us to distinguish between update time and query time.}
\end{problem}

The time complexity of \cref{prob:base} is already well understood if it is measured solely in terms of the string length $n$.
In case of unit weights, the algorithm of Charalampopoulos, Kociumaka, and Mozes~\cite{CKM20} (building upon the techniques by Tiskin~\cite{Tis08}) is characterized by~$\Ohtilde(n)$ update time and $\Oh(n^2)$ preprocessing time; any polynomial-factor improvement would violate the lower bound for the static edit distance problem.
As far as arbitrary weights are concerned, the approach presented in \cite[Section 4]{CKM20} achieves $\Ohtilde(n\sqrt{n})$ update time after $\Ohtilde(n^2)$-time preprocessing, and there is a matching fine-grained lower bound~\cite[Section 6]{CKW23} conditioned on the All-Pairs Shortest Paths Hypothesis.

In this work, we aim to understand \cref{prob:base} for small distances:
\begin{quote}
    \centering{\textit{How fast can one dynamically maintain $\wed(X,Y)$ when this value is small?}}
\end{quote}

Very recently, Gorbachev and Kociumaka~\cite{GK24} settled the parameterized complexity of \cref{prob:base} in the \emph{unweighted} case.
Their solution achieves $\Ohtilde(n+k^2)$ preprocessing time and~$\Ohtilde(k)$ update time; improving either complexity by a polynomial factor, even for instances satisfying $k=\Theta(n^\kappa)$ for some $\kappa \in (0,1]$, would violate the Orthogonal Vectors Hypothesis.

Interestingly, the approach of~\cite{GK24} builds upon the static procedure for bounded \emph{weighted} edit distance~\cite{CKW23} and, as a result, the dynamic algorithm seamlessly supports \emph{small integer} weights,
achieving $\Oh(W^2k)$ update time for edit costs in $[1\dd W]$.
Unfortunately, \cite{GK24} does not provide any improvements for arbitrary weights; in that case, the results of~\cite{CKW23} yield a solution to \cref{prob:base} with $\Ohtilde(n)$ preprocessing time and $\Ohtilde(\min(k^3,\sqrt{nk^3}))$ update time.
This is far from satisfactory: for $k\ge \sqrt[3]{n}$, the algorithm simply recomputes $\wed(X,Y)$ from scratch after every update, and for $k< \sqrt[3]{n}$, it does not store (and reuse) anything beyond a dynamic data structure for checking the equality between fragments of $X$ and $Y$.

\paragraph*{Our Results: Lower Bounds}
The reason why the approach of~\cite{GK24} is incompatible with large weights, e.g., $W=\Omega(k)$, is that their presence allows a single update to drastically change $\wed(X,Y)$.
Our first result strengthens the lower bound of~\cite{CKW23} and shows that the $\Ohtilde(n+\sqrt{nk^3})$-time static algorithm is fine-grained optimal already for instances that can be transformed from a pair of equal strings using four edits.
The ideas behind the following theorem are presented in \cref{sec:lower-bounds}. 

\renewcommand{\maybe}{\lipicsEnd}
\begin{restatable}{theorem}{thmstaticlb}\label{thm:static-lb}
    Let $\kappa \in [0.5,1]$ and $\delta > 0$ be real parameters. 
    Assuming the APSP Hypothesis, there is no algorithm that, given two strings $X,Y\in \Sigma^{\le n}$ satisfying $\eed{X}{Y} \le 4$, a threshold $k \le n^\kappa$, 
    and oracle access to a normalized weight function $w\colon\Esigma^2\to \mathbb{R}_{\ge 0}$, 
    in time~$\Oh(n^{0.5+1.5\kappa - \delta})$ decides whether $\wwed{w}{X}{Y} \le k$.
\end{restatable}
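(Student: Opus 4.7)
The strategy is to strengthen the $\Omega(\sqrt{nk^3})$ conditional lower bound of~\cite{CKW23} by restricting the hard instances to those satisfying $\ed(X,Y)\le 4$. Because the algorithm only sees $w$ through an oracle, essentially all of the combinatorial content of the reduction can be pushed into the weight function, leaving the two strings structurally almost identical. Concretely, I plan to revisit the reduction used in~\cite{CKW23} from an APSP-hard problem (a variant of min-plus matrix multiplication / min-plus convolution: given $A,B\in\RR^m$, a target index $t$ and a threshold $\tau$, decide whether $\min_{i+j=t} A[i]+B[j]\le\tau$) and re-engineer it so that the resulting $X$ and $Y$ differ by at most four edits.

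The construction builds $X$ and $Y$ from a common ``template'' string $T$ consisting of a long uniform padding over a small ``indexing'' alphabet whose role is to permit a large family of shift-based alignments. Both $X$ and $Y$ are obtained from $T$ by inserting (or substituting) at most two designated boundary symbols each; the alignment that matches the two templates directly then incurs at most four edits, immediately witnessing $\ed(X,Y)\le 4$. The boundary symbols are placed so that in every alignment of weighted cost at most $k$, the interior must shift $X$'s template against $Y$'s by some offset pair $(i,j)$ with $i+j$ equal to the fixed target $t$. The weight function~$w$ is then designed so that the cost of such a shift alignment equals $A[i]+B[j]$ up to an additive $O(1)$ contribution from the boundary, while every character pair that does not correspond to a valid convolution index is assigned a weight strictly greater than $k$, so no alignment using such a pair can have total cost $\le k$.

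Correctness reduces to two things: first, exhibiting the explicit four-edit alignment that witnesses $\ed(X,Y)\le 4$; and second, showing that every alignment of weighted cost $\le k$ is structurally of the shift-and-boundary form, so that deciding $\wwed{w}{X}{Y}\le k$ becomes equivalent to deciding $\min_{i+j=t} A[i]+B[j]\le k-O(1)$. The main obstacle I expect is precisely this soundness analysis: the ``forbidden'' weights must be large enough to rule out every unintended alignment, yet the admissible shift alignments must remain rich enough to encode all $\Theta(m)$ index pairs of the convolution; this will likely require a careful case analysis over the possible interior shapes of an optimal alignment and a small amount of padding between the boundary markers to prevent them from being matched in undesired ways.

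Finally, to reach the claimed exponent, I would balance parameters by taking the underlying APSP-hard instance of size $m=\Theta(n^{(1+3\kappa)/4})$ and setting the threshold to $k=\Theta(n^\kappa)$, so that any algorithm running in time $\Oh(n^{0.5+1.5\kappa-\delta})$ would yield a $(\min,+)$-convolution algorithm of running time $\Oh(m^{2-\delta'})$ for some $\delta'>0$, contradicting the APSP Hypothesis.
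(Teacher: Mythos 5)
There is a genuine gap, and it sits at the very foundation of your reduction: the source problem and the hypothesis do not match. The problem you reduce from --- given $A,B\in\RR^m$, a single target index $t$ and a threshold $\tau$, decide whether $\min_{i+j=t}A[i]+B[j]\le\tau$ --- is solvable in $\Oh(m)$ time by direct enumeration, so no lower bound can come out of it. If you instead mean full $(\min,+)$-convolution, the difficulty is that subquadratic $(\min,+)$-convolution is \emph{not} known to violate the APSP Hypothesis (MinConv hardness is a separate, incomparable conjecture; the constraint $i+j=t$ is precisely what blocks the matrix-product techniques), so your final step ``an $\Oh(m^{2-\delta'})$ convolution algorithm contradicts APSP'' is unsupported. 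There is also a quantitative obstruction to the encoding itself: because $w$ is normalized, any alignment of cost at most $k$ contains at most $k$ insertions and deletions, so the relative offset between the two templates is bounded by $k\le n^{\kappa}$ at every position; hence your shift gadget can distinguish at most $\Oh(k)$ candidate pairs $(i,j)$, whereas your parameter choice needs $m=\Theta(n^{(1+3\kappa)/4})>n^{\kappa}$ candidates for every $\kappa<1$. And even when $m\le \Oh(k)$, a structure of the form ``minimum over $\Oh(k)$ oracle table lookups $A[i]+B[j]$'' is decidable in $\Oh(k)$ time, far below the claimed $n^{0.5+1.5\kappa}$ barrier, so no amount of care in the soundness analysis can rescue this encoding.

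The paper's proof avoids both problems by keeping the full three-index structure of an APSP-hard problem inside the alignment. It starts from the hardness of Batched Weighted Edit Distance (itself a reduction from Tripartite Negative Triangle, as in~\cite{CKW23}): $m$ strings $X_1,\dots,X_m$ versus one string $Y$, where each individual value $\wed(X_i,Y)$ already encodes a quadratic amount of negative-triangle work. It then composes all the $X_i$'s, auxiliary gadgets $X_i^{\top},X_i^{\bot}$, and many copies of $Y$ into a single pair $\tilde X,\tilde Y$ that are obtained from one common string by inserting two expensive $\$$ symbols each, which is what gives $\eed{\tilde X}{\tilde Y}\le 4$; long separator strings $U,V$ force every cheap alignment to skip exactly one consecutive $U,V$ occurrence, and the position of that skip \emph{selects} which $X_i$ gets aligned against an extra copy of $Y$, so the total cost equals a fixed baseline plus $\wed(X_i,Y)$. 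Your boundary-symbol idea is similar in spirit to the $\$$/skip mechanism, but the crucial difference is what the selection pays for: in the paper each candidate contributes a full weighted edit distance computation (inheriting Negative Triangle hardness under APSP), whereas in your proposal it contributes only an $\Oh(1)$-size table lookup, which cannot be hard. To fix your approach you would essentially have to re-introduce the batched structure, i.e., make each admissible shift trigger an alignment of some $X_i$ against $Y$, which is the route the paper takes.
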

\renewcommand{\maybe}{}

As a simple corollary, we conclude that significantly improving upon the na\"ive update time of $\Ohtilde(\sqrt{nk^3})$ for $\sqrt{n} \le k \le n$ requires large preprocessing time.

\begin{corollary}\label{cor:simple}
    Suppose that \cref{prob:base} admits a solution with preprocessing time $T_P(n, \wed(X,Y))$ and update time $T_U(n,\wed(X,Y))$ for some non-decreasing functions $T_P$ and $T_U$.
    If $T_P(n, 0)=\Oh(n^{0.5+1.5\kappa-\delta})$ and $T_U(n,n^\kappa)=\Oh(n^{0.5+1.5\kappa-\delta})$ hold for some real parameters $\kappa \in [0.5,1]$ and $\delta > 0$, then the APSP Hypothesis fails.
\end{corollary}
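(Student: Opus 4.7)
The plan is to derive a contradiction by reducing the static problem of \cref{thm:static-lb} to the hypothesized dynamic algorithm. Given an instance $(X, Y, w, k)$ with $\ed(X, Y) \le 4$ and $k \le n^\kappa$, I would initialize the dynamic algorithm with the pair $(X, X)$: since the initial weighted edit distance is $0$, preprocessing takes time $T_P(n, 0) = \Oh(n^{0.5+1.5\kappa-\delta})$ by assumption. Then, using Landau--Vishkin in $\Oh(n)$ time, I would compute a shortest unweighted edit sequence from $X$ to $Y$ of length at most $4$, apply these edits one by one via the update interface, and decide the static problem from the value of $\wed(X, Y)$ reported after the final update.

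The total cost is $T_P(n, 0) + \sum_{i} T_U(n, \wed(X, Y_i))$, where $Y_i$ is the string after $i$ updates; by the monotonicity of $T_U$, this is $\Oh(n^{0.5+1.5\kappa-\delta})$ provided each intermediate $\wed(X, Y_i) = \Oh(n^\kappa)$. The main step is to secure this bound on intermediate weighted edit distances. The plan is to apply a weight-capping preprocessing on the oracle, replacing any value exceeding $k$ by $+\infty$: this preserves the answer to ``$\wed(X, Y) \le k$?'' because an optimal cost-$\le k$ alignment cannot contain an edit whose weight alone exceeds $k$, and it ensures that every update we apply in the reduction carries weight at most $k$. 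Thus, by the triangle inequality, $\wed(X, Y_i) \le i \cdot k = \Oh(n^\kappa)$.

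Putting the pieces together, the reduction solves each static instance in time $\Oh(n^{0.5+1.5\kappa-\delta})$, contradicting \cref{thm:static-lb} under the APSP Hypothesis. The main obstacle is bounding the intermediate $\wed(X, Y_i)$, resolved by the weight-capping trick above. A minor technical subtlety is that a substitution whose capped weight is $+\infty$ must instead be emulated by a delete+insert pair (still $\Oh(1)$ updates overall), with the rare residual case where no cheap emulation exists settled by a short case analysis on the constantly many characters involved in the four edits.
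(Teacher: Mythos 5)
Your overall reduction is exactly the paper's: initialize the dynamic algorithm with $(X,X)$ (so preprocessing is charged at $T_P(n,0)$), apply the at most four character edits turning the second copy into $Y$, and read off $\wed(X,Y)$ after the last update. You are also right that the only real issue is bounding the \emph{intermediate} values $\wed(X,Y_i)$ so that the assumption on $T_U(n,n^\kappa)$ is applicable, and that this is handled by modifying the weight oracle. However, your specific fix is wrong: replacing every weight exceeding $k$ by $+\infty$ does \emph{not} ensure that the updates you apply ``carry weight at most $k$'', and it does not bound $\wed(X,Y_i)$. In the hard instances behind \cref{thm:static-lb}, the four edits are precisely insertions/deletions of a special symbol (the $\$$ of \cref{app:smallEDLB}) whose every edit cost is astronomically large; under your capping, after the first such update the intermediate string has a $\$$-pattern incompatible with $X$, so $\wed(X,Y_1)=\infty$, and $T_U(n,\infty)$ is not controlled by the hypothesis (monotonicity of $T_U$ only helps for \emph{smaller} arguments). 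Your fallback of emulating an expensive substitution by a delete--insert pair does not help, because the deletion and insertion costs are themselves capped to $+\infty$, and the ``rare residual case'' is in fact the generic case on these instances: any $\Oh(1)$-update path from $(X,X)$ to $(X,Y)$ that keeps the first string fixed must pass through a string whose $\$$-configuration differs from that of $X$, hence is at unbounded capped distance. No case analysis on the constantly many edited characters can repair this.

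The correct (and standard) fix, which the paper spells out in its formal dynamic lower bound (proof of \cref{lm:dynamic-lb-variable-k}), is to cap the weights at the \emph{finite} value $k+1$ rather than $+\infty$. This preserves the answer to ``$\wed(X,Y)\le k$?'' (an alignment of cost at most $k$ never uses an edit of weight exceeding $k$), keeps $w$ normalized, and immediately bounds the intermediate distances: since $\ed(X,Y_i)\le 4$, the alignment following the unweighted edit script costs at most $4(k+1)=\Oh(n^\kappa)$, so every update is charged at most $T_U(n,\Oh(n^\kappa))$. (Note also that your appeal to the ``triangle inequality'' is shaky, since $w$ need not satisfy it; exhibiting the alignment as above is the clean argument.) There remains the minor bookkeeping that $4(k+1)$ may exceed $n^\kappa$ by a constant factor; this is absorbed, as in the paper, by monotonicity of $T_U$ together with a slight slack in the exponents (e.g., applying \cref{thm:static-lb} with a marginally smaller $\kappa$, or padding $n$), exactly as done at the end of the proof of \cref{lm:dynamic-lb-variable-k}.
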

\begin{proof}[Proof idea.]
For an instance $(X,Y)$ of the static problem of \cref{thm:static-lb}, we initialize the dynamic algorithm with $(X,X)$ and transform one copy of $X$ into $Y$ using four updates.
\end{proof}

As discussed in \cref{app:dynamic-lower-bounds:variable-k}, instead of initializing the dynamic algorithm with $(X,X)$, we can initialize it with a pair of empty strings and gradually transform this instance to~$(X,X)$ using $\Oh(n)$ updates while keeping the weighted edit distance between~$0$ and~$2$ at all times.
Hence, as long as the preprocessing of $(\emptystring,\emptystring)$ takes $n^{\Oh(1)}$ time and is not allowed to access the entire weight function (e.g., weights are revealed online when an update introduces a fresh character for the first time), we can replace $T_P(n, 0)=\Oh(n^{0.5+1.5\kappa-\delta})$ with $T_U(n,2) = \Oh(n^{1.5\kappa - 0.5 - \delta})$ in the statement of \cref{cor:simple}.

Overall, we conclude that, in order to support updates with relatively large distances faster than naively (i.e., by computing $\wed(X,Y)$ from scratch using a static algorithm), one needs to pay a lot for both preprocessing and updates while $\wed(X,Y)$ is still very small.
In particular, we should decide in advance how large distances need to be supported efficiently.
This justifies a simplified variant of \cref{prob:base}, where the parameter $k$ is fixed at preprocessing time.
Here, instead of $\wed(X,Y)$, the algorithm reports the following quantity:
\[\wed_{\le k}(X,Y) = \begin{cases}
    \wed(X,Y) &\text{if }\wed(X,Y)\le k,\\
    \infty & \text{otherwise.}
\end{cases}\]

\begin{problem}[Dynamic Weighted Edit Distance with Fixed Threshold]\label{prob:fixed}
Given integers $1\le k \le n$ and oracle access to a normalized weight function $w\colon \Esigma^2 \to \mathbb{R}_{\ge 0}$, maintain strings $X,Y\in \Sigma^{\le n}$ subject to updates (character edits in $X$ and $Y$) and report $\wed_{\le k}(X,Y)$ upon each update.
\end{problem}

\Cref{cor:simple} can be rephrased in terms of \cref{prob:fixed}, but its major limitation is that it only applies to $k \ge \sqrt{n}$.
To overcome this issue, we compose multiple hard instances from \cref{thm:static-lb}.
This results in the following theorem, discussed in \cref{sec:lower-bounds:dynamic,app:dynamic-lower-bounds:lb_main}.

\begin{theorem}[Simplified version of \cref{thm:lb_main}]\label{thm:lb_main_simple}
    Suppose that \cref{prob:fixed} admits a solution with preprocessing time $T_P(n,k)$ and update time $T_U(n,k)$ for some non-decreasing functions $T_P$ and $T_U$.
    If $T_P(n,n^{\kappa})=\Ohtilde(n^{1+\kappa\cdot \gamma})$ and $T_U(n,n^{\kappa})=\Oh(n^{\kappa\cdot (3-\gamma)-\delta})$ hold for some parameters $\gamma \in [0.5, 1)$, $\kappa \in (0, 1/(3-2\gamma)]$, and $\delta>0$, then the APSP Hypothesis fails.
\end{theorem}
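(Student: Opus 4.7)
The plan is to compose many static instances from \cref{thm:static-lb} into a single dynamic instance of \cref{prob:fixed}. Pick a real $\beta\in[1,2]$ to be fixed later, and set $N\coloneqq k^\beta$, $\kappa^*\coloneqq 1/\beta\in[0.5,1]$, $K\coloneqq N^{\kappa^*}=k$, and $m\coloneqq\lfloor n/N\rfloor$. From the construction underlying \cref{thm:static-lb}, take $m$ hard static instances $(X^{(i)},Y^{(i)},w^{(i)})_{i=1}^m$, each on strings of length at most $N$ with $\eed{X^{(i)}}{Y^{(i)}}\le 4$, for which deciding $\wwed{w^{(i)}}{X^{(i)}}{Y^{(i)}}\le K$ is APSP-hard. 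With the alphabets made pairwise disjoint and a fresh delimiter $\#$ added, assemble a single normalized weight function $w$ that agrees with $w^{(i)}$ on the $i$-th sub-alphabet and satisfies $w(\#,a)=w(a,\#)=k+1$ for all $a\ne\#$. Initialize the presumed dynamic algorithm on $X=Y=X^{(1)}\,\#\,X^{(2)}\,\#\,\cdots\,\#\,X^{(m)}$, of length $\Theta(n)$. For each $i=1,\dots,m$, apply up to four single-character edits to transform the $i$-th block of $Y$ into $Y^{(i)}$, read $\wed_{\le k}(X,Y)$, and undo the four edits.

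Because $\#$ is heavier than $k$, every alignment of weight at most $k$ must match delimiters position-wise, which forces $\wed(X,Y)=\sum_{j=1}^{m}\wed(X^{(j)},Z_j)$ whenever this sum is at most $k$, where $Z_j$ denotes the current content of block $j$ in $Y$. Our schedule perturbs at most one block at the moment of each query, so $\wed_{\le k}(X,Y)$ is finite iff $\wwed{w^{(i)}}{X^{(i)}}{Y^{(i)}}\le K$, thereby deciding the $i$-th static instance. The whole procedure issues $O(m)$ updates, so it runs in time $T_P(n,k)+O(m)\cdot T_U(n,k)=\Ohtilde\bigl(n^{1+\kappa\gamma}+n^{1+\kappa(3-\gamma-\beta)-\delta}\bigr)$ under the assumed upper bounds. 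On the other hand, the APSP-to-edit-distance reduction behind \cref{thm:static-lb}, combined with the standard block-matrix self-reduction of APSP (so that $m$ independent APSP instances of size $N_{\text{APSP}}$ jointly need $\Omega(m\cdot N_{\text{APSP}}^{3-o(1)})$ time under the APSP Hypothesis), implies that solving all $m$ static instances requires $\Omega\bigl(m\cdot N^{0.5+1.5\kappa^*-o(1)}\bigr)=n^{1+1.5\kappa-0.5\kappa\beta-o(1)}$ time.

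Both exponents in the upper bound are strictly smaller than $1+1.5\kappa-0.5\kappa\beta$ exactly when $\beta<3-2\gamma$ (which defeats the $T_P$ contribution) and $\beta>3-2\gamma-2\delta/\kappa$ (which defeats the $T_U$ contribution); this is a nonempty open interval for any $\delta,\kappa>0$, it lies in $[1,2]$ thanks to $\gamma\in[0.5,1)$, and the hypothesis $\kappa\le 1/(3-2\gamma)$ forces $\kappa\beta<1$ so that $m\to\infty$ with $n$. Any choice of $\beta$ in this range yields the desired contradiction. The main technical obstacle is the direct-sum step, as \cref{thm:static-lb} is stated as a worst-case bound per single instance whereas our reduction solves $m$ instances jointly; I plan to bypass black-box use by re-running the APSP-to-edit-distance reduction of \cref{thm:static-lb} on $m$ independent APSP triples and invoking the APSP Hypothesis directly on their joint batch. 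A subordinate concern is the delimiter construction: one must check that the combined $w$ is a legal normalized oracle, that each activating/deactivating operation is an admissible single-character edit in the update model of \cref{prob:fixed}, and that transient states between the four edits (where $\wed_{\le k}(X,Y)$ may briefly be $\infty$) are harmless because $T_U(n,k)$ depends only on $n$ and $k$, not on the instantaneous distance.
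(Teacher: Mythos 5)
Your proposal is correct and follows essentially the same route as the paper's own proof (\cref{lm:dynamic-lb-fixed-k} together with the derivation of \cref{thm:lb_main_simple}): concatenate the blocks of many hard instances from \cref{thm:static-lb} separated by characters of cost $k+1$, initialize the dynamic algorithm on a pair of equal strings, apply and revert $\Oh(1)$ edits per block, and obtain batch hardness by splitting one larger Tripartite Negative Triangle instance rather than invoking \cref{thm:static-lb} as a black box -- which is exactly how the paper resolves the direct-sum concern you raise. Your parameter $\beta$ corresponds to the paper's choice of the number of blocks ($\mu=1-\kappa\beta$ in their notation), so the trade-off analysis is the same up to reparameterization.
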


\Cref{thm:lb_main_simple} indicates that (conditioned on the APSP Hypothesis and up to subpolynomial-factor improvements), for a dynamic algorithm with preprocessing time $\Ohtilde(nk^{\gamma})$ for $\gamma\in [0.5,1)$, the best possible update time is $\Ohtilde(k^{3-\gamma})$.
The lower bound does not extend to $\kappa > 1/(3-2\gamma)$; precisely then, the $\Ohtilde(\sqrt{nk^3})$-time static algorithm improves upon the $\Ohtilde(k^{3-\gamma})$ update time.

\paragraph*{Our Results: Algorithms}
Our main positive result is an algorithm that achieves the fine-grained-optimal trade-off.

\begin{theorem}\label{thm:simple_tradeoff}
There exists an algorithm that, initialized with an extra real parameter $\gamma \in [0,1]$, solves \cref{prob:fixed} with preprocessing time $\Ohtilde(nk^{\gamma})$ and update time $\Ohtilde(k^{3-\gamma})$.
\end{theorem}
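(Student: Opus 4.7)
The plan is to combine a block decomposition of $X$ with the static weighted edit distance algorithm of \cite{CKW23} and efficient composition of boundary-distance matrices in the spirit of Tiskin and \cite{CKM20}. I partition $X$ into $\Theta(n/B)$ consecutive blocks of length $B := \lceil k^{3-2\gamma}\rceil$; this choice of $B$ balances the preprocessing and update costs. Applying (a profile version of) the static $\tOh(m+\sqrt{mk^3})$-time algorithm on each block yields total preprocessing $\tOh((n/B)(B+\sqrt{Bk^3})) = \tOh(n+nk^{\gamma}) = \tOh(nk^{\gamma})$, and re-running it on a single block costs $\tOh(B+\sqrt{Bk^3}) = \tOh(k^{3-\gamma})$.

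For each block $X_i$, I build a boundary-distance representation that captures $\wed(X_i, Y\fragmentcc{s}{e})$ for every fragment $Y\fragmentcc{s}{e}$ that could participate in a cost-$\le k$ alignment of $X$ with $Y$. Thresholding at $k$ leaves only $O(k)$ relevant starting and ending positions of $Y$ per block, so each boundary matrix is effectively of size $O(k)$. These matrices are organized in a balanced binary tree over the blocks of $X$: each internal node stores the boundary matrix of its subtree, obtained by $(\min,+)$-multiplying its children. Building on the alignment-graph framework of \cite{CKW23} and the Monge-type structure used in \cite{CKM20}, these $(\min,+)$-products can be carried out in $\tOh(k^2)$ time, so the full tree fits within the preprocessing budget, and the root yields $\wed_{\le k}(X,Y)$ in $\tOh(1)$ time.

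An update to $X$ affects exactly one leaf, which is rebuilt in $\tOh(k^{3-\gamma})$ via the static algorithm; the $O(\log n)$ ancestors are then recombined at $\tOh(k^2)$ each, for a total update time of $\tOh(k^{3-\gamma})$, dominated by the leaf rebuild. Updates to $Y$ are more delicate because they shift the feasible alignment ranges of many blocks of $X$ simultaneously. I handle these by maintaining a symmetric block decomposition of $Y$ as well, exploiting the fact that a single $Y$-edit touches only $O(1)$ active boundary endpoints at each level of the composition tree.

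The main obstacle is establishing that the truncated boundary matrices for \emph{weighted} edit distance admit $\tOh(k^2)$-time $(\min,+)$-composition: the unit-Monge structure used by Tiskin for unit weights does not transfer verbatim. The weighted alignment-graph machinery of \cite{CKW23} nevertheless endows the relevant boundary distances with enough sparse, piecewise-linear structure to enable such composition, and integrating it with the dynamic tree---together with handling the asymmetry between $X$-edits and $Y$-edits through the symmetric scheme above---forms the technical heart of the proof.
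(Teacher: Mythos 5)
Your proposal has two genuine gaps, and they are precisely the two obstacles that the paper's construction is built to overcome. First, the handling of $Y$-updates does not work. Your precomputed per-block representations encode distances between a block of $X$ and concrete fragments of $Y$; after $\Theta(k)$ updates that, say, interleave insertions at the front of $Y$ with deletions at its end, the window of $Y$ relevant for \emph{every} block of $X$ drifts by $\Theta(k)$ positions, so essentially all $n/B$ block matrices expire and must be recomputed. Amortized over those $\Theta(k)$ updates this costs $\tOh\bigl((n/B)\cdot k^{3-\gamma}/k\bigr)=\tOh(nk^{\gamma-1})$ per update, which exceeds the $\tOh(k^{3-\gamma})$ budget whenever $n\gg k^{4-2\gamma}$ (in particular for $k\ll\sqrt n$ at $\gamma=1$). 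The claim that a single $Y$-edit ``touches only $O(1)$ active boundary endpoints at each level'' addresses isolated edits but not this cumulative drift, which is exactly the ``curse'' the paper describes. The paper's missing idea is to maintain the decomposition for $\oAGw(X,X)$ (aligning $X$ with itself, so updates to $X$ move both coordinates and nothing drifts) and to treat $Y$ only at query time, specified by $\Oh(k)$ edits transforming $X$ into $Y$, reprocessing only the $\Oh(k)$ affected subgraphs via \cref{lm:box-with-one-string-changing}.

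Second, the preprocessing primitive you assume does not exist for $\gamma<1$. You need, for each block of length $B=k^{3-2\gamma}$, the truncated $O(k)\times O(k)$ boundary-distance matrix within $\tOh(B+\sqrt{Bk^3})=\tOh(k^{3-\gamma})$ time. The static algorithm of \cite{CKW23} computes a \emph{single} distance in that time, not $\Theta(k^2)$ boundary-to-boundary distances; the generic planar-graph route (Klein) costs $\tOh(Bk)=\tOh(k^{4-2\gamma})$ per block, which fits the budget only at $\gamma=1$. Your appeal to ``sparse, piecewise-linear structure'' of weighted boundary distances is exactly the step that is not known to hold (the cheap $(\min,+)$-composition of $O(k)\times O(k)$ Monge matrices via SMAWK is not the issue). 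The paper circumvents this instead of solving it: blocks are sized $\Theta(k^{2-\gamma})$ with only $\tOh(k^2)$ preprocessing each, full boundary matrices are precomputed only for blocks that are periodic in the sense $\sedk=0$, and the remaining blocks are handled at query time with their cost charged against the super-additive $k$-shifted self-edit distance (\cref{lm:superadditivity-of-sedk}, \cref{lm:block-data-structure}), which is why that machinery is introduced at all. Without a substitute for these two ideas, the proposed scheme establishes the theorem only in the regime $\gamma=1$ with a static (or query-time-specified) $Y$, not the claimed trade-off.
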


As a warm-up, in \cref{sec:simple-algorithm}, we prove \cref{thm:simple_tradeoff} for $\gamma=1$.
Similarly to \cite{GK24}, the main challenge is that updates can make $X$ and $Y$ slide with respect to each other.
If we interleave insertions at the end of $Y$ and deletions at the beginning of $Y$, then, after $\Omega(k)$ updates, for every character $X\position{i}$, the fragment $Y\fragmentcc{i-k}{i+k}$ containing the plausible matches of $X\position{i}$ changes completely.
The approach of \cite{GK24} was to restart the algorithm every $\Theta(k)$ updates.

Our preprocessing time is too large for this, so we resort to a trick originating from dynamic edit distance approximation~\cite{KMS23}: we maintain a data structure for aligning $X$ with itself.
Instead of storing $Y$, we show how to compute $\wed_{\le k}(X,Y)$ for any string $Y$ specified using $\Oh(k)$ edits (of arbitrary costs) transforming $X$ into $Y$.
To find such edits in the original setting of \cref{prob:fixed}, one can use a dynamic \emph{unweighted} edit distance algorithm.
Already the folklore approach, based on~\cite{LV88,MSU97}, with $\Ohtilde(k^2)$ time per update is sufficiently fast.

In \cref{sec:full-algorithm,app:full-algorithm}, we discuss how to generalize our dynamic algorithm to arbitrary $\gamma \in [0,1]$. 
This is our most difficult result -- it requires combining the techniques of~\cite{CKW23} and their subsequent adaptations in~\cite{GK24} with several new ideas, including a novel \emph{shifted} variant of self-edit distance.
We derive \cref{thm:simple_tradeoff} as a straightforward corollary of the following result, which demonstrates that the same fine-grained optimal trade-off can be achieved in a much more robust setting compared to the one presented in \cref{prob:fixed}.

\begin{theorem}[Simplified version of \cref{thm:full-complete-algorithm}]\label{thm:full-simpler}
    There is a dynamic algorithm that, initialized with a real parameter $\gamma \in [0, 1]$, integers $1\le k \le n$, and $\Oh(1)$-time oracle access to a normalized weight function $w \colon \Esigma^2 \to \RR_{\ge 0}$, maintains a string $X\in \Sigma^{\le n}$ and, after~$\Ohtilde(nk^\gamma)$-time preprocessing, supports the following operations (updates and queries):
    \begin{itemize}
        \item Apply a character edit, substring deletion, or copy-paste to $X$ in $\Ohtilde(k^2)$ time.
        \item Given $u$ edits transforming $X$ into a string $Y$, compute $\wed_{\le k}(X, Y)$ in $\Ohtilde(u + k^{3 - \gamma})$ time.
    \end{itemize}
\end{theorem}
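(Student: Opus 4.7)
The plan is to combine the static bounded-weighted-edit-distance algorithm of \cite{CKW23} with the self-alignment paradigm of \cite{GK24}, enhanced by the \emph{shifted} variant of self-edit distance announced in the introduction. Instead of storing $Y$ explicitly, I would maintain an oracle on $X$ that, given any shift $s$ and any range in $X$, returns a succinct encoding of the cheapest low-weight alignment between the corresponding fragment of $X$ and its $s$-shift. Given $u$ edits transforming $X$ into $Y$, such an oracle reduces the computation of $\wed_{\le k}(X, Y)$ to $\Oh(u)$ local dynamic-programming problems at the seams of the edits, connected by $\Oh(u)$ oracle calls that handle the long unchanged (but possibly shifted) stretches in between.

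For preprocessing, I would first wrap $X$ in a balanced tree representation supporting character edits, substring deletions, and copy-pastes in $\Ohtilde(1)$ structural time and providing \pillar-model primitives in $\Ohtilde(1)$ per query. On top of this, for each of $\Ohtilde(n/k^\gamma)$ blocks of $X$ of length $\Theta(k^\gamma)$, I would precompute a matrix-monoid element $\matmon$ of size $\Ohtilde(k^{2\gamma})$ that captures the relevant weighted alignment costs between the block and all $\Oh(k)$ meaningful shifts of $X$. These gadgets are built using the CKW23 divide-and-conquer, at total cost $\Ohtilde((n/k^\gamma) \cdot k^{2\gamma}) = \Ohtilde(nk^\gamma)$.

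An update (character edit, substring deletion, or copy-paste) perturbs only the gadgets intersecting the $\Theta(k)$-neighborhood of the affected position; this is $\Ohtilde(k^{1-\gamma})$ gadgets, each recomputable in $\Ohtilde(k^{2\gamma})$ time, summing to $\Ohtilde(k^{1+\gamma}) \le \Ohtilde(k^2)$ per update. Copy-paste is handled by pointer-sharing gadgets from the source region, so that only the two seams require fresh work. To answer a query, I decode the $u$ edits into a sparse representation of $Y$ as an alternation of $\Oh(u+1)$ maximal ``copied'' fragments of $X$ (each with an explicit shift) and $\Oh(u)$ inserted characters, and then compose matrix-monoid elements along the alignment graph $\oAGw(X, Y)$: each maximal copied fragment contributes at most $\Ohtilde(k^{1-\gamma})$ precomputed gadgets, and each of the $\Oh(u)$ seams contributes one freshly computed gadget built by local dynamic programming on its $\Theta(k)$-neighborhood. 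Multiplying the resulting $\Ohtilde(k^{1-\gamma})$ matrix-monoid elements using the CKW23 fast product yields the claimed $\Ohtilde(u + k^{3-\gamma})$ query time.

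The crux of the argument is the design of the shifted self-edit-distance gadget so that (i) it has size $\Ohtilde(k^{2\gamma})$ and is maintainable within the permitted $\Ohtilde(k^2)$ update budget, (ii) it composes correctly under concatenation \emph{with a change of shift}, which is exactly what happens whenever the alignment crosses an inserted or deleted character, and (iii) its $\Ohtilde(k)$-time evaluation at query time does not require re-reading the underlying fragment. Verifying that the alignment-graph machinery of \cite{CKW23,GK24} remains intact under these modifications --- and in particular that sparse evaluation of matrix-monoid products survives in the presence of arbitrary real weights and arbitrary shifts --- is where the genuinely new work lives; the robust update operations (substring delete, copy-paste) should then follow because they are already absorbed by the tree representation and the locality of gadget recomputation.
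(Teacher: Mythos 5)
Your high-level picture --- align $X$ with itself so that updates do not ``slide'' the preprocessed region, cover the width-$\Theta(k)$ band of $\oAGw(X,X)$ with blocks, compose Monge/boundary-distance gadgets at query time, and recompute only the locally affected gadgets on an update --- matches the paper's strategy. But the quantitative core of your construction does not work for $\gamma<1$, and the place where you locate ``the genuinely new work'' is exactly where the argument fails. You propose $\Ohtilde(n/k^\gamma)$ blocks of width $\Theta(k^\gamma)$, each summarized by a gadget of size $\Ohtilde(k^{2\gamma})$ computable in $\Ohtilde(k^{2\gamma})$ time. A width-$k^\gamma$ slab of the band still has separators of size $\Theta(k)$ on both sides, so its boundary distance matrix is $\Theta(k)\times\Theta(k)$; merely writing it down costs $\Theta(k^2)$, and for arbitrary real weights no sub-$k^2$ representation that composes under min-plus product is known (the seaweed-style compression that makes this possible in the unweighted case is precisely what breaks down with large weights, which is the raison d'\^etre of this paper). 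The paper resolves this by going in the \emph{opposite} direction: it uses blocks of width $\Theta(k^{2-\gamma})\ge k$, so that there are only $\Oh(n/k^{2-\gamma})$ of them, each carrying a genuine $\Theta(k)\times\Theta(k)$ Monge matrix built in $\tOh(k^2)$ time, giving $\tOh(nk^\gamma)$ preprocessing and $\Oh(1)$ affected blocks per update (hence $\tOh(k^2)$ update time).

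Two further ingredients are missing. First, with only $\tOh(k^2)$ preprocessing per block of width $k^{2-\gamma}$, one cannot fully preprocess a generic block; the paper's trade-off rests on classifying blocks by their $k$-shifted self-edit distance $\sedk$: blocks with $\sedk(X_i)=0$ are periodic and admit complete preprocessing via a decomposition into $O(1)$ distinct $\Theta(k)\times\Theta(k)$ pieces, while the remaining blocks are handled lazily at query time in $\tOh((u_i+\sedk(X_i))\cdot k^{2-\gamma})$ each, and the \emph{super-additivity} of $\sedk$ (\cref{lm:superadditivity-of-sedk}) is what caps the total at $\tOh(k^{3-\gamma})$. Your proposal invokes the shifted self-edit distance by name but never uses this counting argument, and your claim that each maximal copied fragment contributes only $\Ohtilde(k^{1-\gamma})$ gadgets is unsubstantiated (a copied fragment can span $\Theta(n/k^\gamma)$ of your blocks). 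Second, the block-level query machinery only answers queries for fragments $\hX$ with $\sed(\hX)\le k$; to handle a general query one needs the iterative weight-halving and alignment-refinement loop of \cite{GK24} (as in \cref{lm:improve-alignment,lm:full-complete-algorithm}), which bootstraps from an optimal unweighted alignment through the capped weight functions $w_t$. Without these pieces the proposal does not yield the claimed bounds.
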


We remark that our query algorithm for $\gamma=1$ extends a subroutine of~\cite{CKW24}, where the string $X$ is static.
In~\cite{CKW24}, this result is used for approximate pattern matching with respect to weighted edit distance.
If the pattern is far from periodic, which is arguably true in most natural instances, the trade-off of \cref{thm:full-simpler} yields a faster pattern matching~algorithm.

\paragraph*{Open Problems}
Recall that \cref{thm:lb_main_simple} proves fine-grained optimality of \cref{thm:simple_tradeoff} only for $\gamma \in [0.5,1)$.
The omission of $\gamma \in [0,0.5)$ stems from a gap between algorithms and lower bounds for the underlying static problem.
Hence, we reiterate the open question of~\cite{CKW23} asking for the complexity of 
computing $\wed(X,Y)$ when $\sqrt[3]{n} < \wed(X,Y) < \sqrt{n}$.
Since our lower bounds arise from \cref{thm:static-lb}, it might be instructive to first try designing faster algorithms for the case of $\ed(X,Y)\le 4$.
Before this succeeds, one cannot hope for faster dynamic algorithms.

Wider gaps in our understanding of \cref{prob:fixed} remain in the regime of large preprocessing time.
It is open to determine the optimal update time for $\gamma = 1$ and to decide whether $\gamma > 1$ can be beneficial.
As far as $\gamma = 1$ is concerned, we believe that \cite[Section 4]{CKM20} can be used to obtain $\Ohtilde(k\sqrt{n})$ update time, which improves upon \cref{thm:simple_tradeoff} for $k\ge \sqrt{n}$.
Moreover, by monotonicity, the lower bound of \cite[Section 6]{CKW23} excludes update times of the form $\Oh(k^{1.5 - \delta})$.

Finally, we remark that all our lower bounds exploit the fact that $\wed(X,Y)$ can change a lot as a result of a single update. 
It would be interesting to see what can be achieved if the weights are small (but fractional) or when the update time is parameterized by the change in $\wed(X,Y)$. 
In these cases, we hope for non-trivial applications of the approach of~\cite{GK24}.

\section{Preliminaries\protect\footnote{In this section, we partially follow the narration of \cite{CKW23,GK24}.}}\label{sec:preliminaries}

A string $X \in \Sigma^n$ is a sequence of $|X| \coloneqq n$ characters over an alphabet~$\Sigma$.
We denote the set of all strings over $\Sigma$ by $\Sigma^*$, we use $\emptystring$ to represent the empty string,
and we write $\Sigma^+ \coloneqq \Sigma^*\setminus\{\emptystring\}$.
For a \emph{position} $i \in \fragmentco{0}{n}$, we
say that $X\position{i}$ is the $i$-th character of $X$.
We say that a string $X$ occurs as a \emph{substring} of a string $Y$, if there exist indices
$i, j \in \fragmentcc{0}{|Y|}$ satisfying $i \le j$ such that $X = Y[i]\cdots Y[j-1]$.
The occurrence of $X$ in $Y$ specified by~$i$ and~$j$ is a \emph{fragment} of $Y$ denoted by $Y\fragmentco{i}{j}$, $Y\fragmentcc{i}{j-1}$, $Y\fragmentoc{i-1}{j-1}$,
or $Y\fragmentoo{i-1}{j}$.

\paragraph*{Weighted Edit Distances and Alignment Graphs}
Given an alphabet $\Sigma$, we set $\Esigma \coloneqq \Sigma \cup \{\emptystring\}$.
We call $w$ a \emph{normalized weight function} if $w \colon \Esigma \times \Esigma \to \RR_{\geq 1}\cup \{0\}$ and, for $a, b \in \Esigma$, we have $w(a, b) = 0$ if and only if $a = b$.
Note that~\(w\) does not need to satisfy the triangle inequality nor does \(w\) need to be
symmetric.

We interpret weighted edit distances of strings as distances in alignment graphs.

\begin{definition}[Alignment Graph,~{\cite{CKW23}}]\label{def:alignment-graph}
For strings $X,Y\in \Sigma^*$ and a normalized weight function $w\colon \Esigma^2 \to \Rz$,
we define the \emph{alignment graph} $\AGw(X,Y)$ to be a weighted directed graph with vertices $\fragmentcc{0}{|X|}\times \fragmentcc{0}{|Y|}$ and the following edges:
\begin{itemize}
    \item a horizontal edge $(x,y)\to (x+1,y)$ of cost $w(X\position{x},\emptystring)$ representing a deletion of~$X\position{x}$, for every $(x,y)\in \fragmentco{0}{|X|}\times \fragmentcc{0}{|Y|}$,
    \item a vertical edge $(x,y)\to (x,y+1)$ of cost $w(\emptystring,Y\position{y})$ representing an insertion of~$Y\position{y}$, for every $(x,y)\in \fragmentcc{0}{|X|}\times \fragmentco{0}{|Y|}$, and
    \item a diagonal edge $(x,y)\to (x+1,y+1)$ of cost $w(X\position{x},Y\position{y})$ representing a substitution of~$X\position{x}$ into $Y\position{y}$ or, if $X\position{x}=Y\position{y}$, a match of $X\position{x}$ with $Y\position{y}$, for every $(x,y)\in \fragmentco{0}{|X|}\times \fragmentco{0}{|Y|}$.
\end{itemize}
\end{definition}

We visualize the graph $\AGw(X, Y)$ as a grid graph with $|X|+1$ columns and $|Y|+1$ rows, where $(0,0)$ and $(|X|,|Y|)$ are the top-left and bottom-right vertices, respectively.

\emph{Alignments} between fragments of $X$ and $Y$ can be interpreted as paths in $\AGw(X,Y)$.

\begin{definition}[Alignment]
    Consider strings $X,Y\in \Sigma^*$ and a normalized weight function $w\colon \Esigma^2 \to \Rz$.
    An \emph{alignment} $\cA$ of $X\fragmentco{x}{x'}$ onto $Y\fragmentco{y}{y'}$, denoted $\cA : X\fragmentco{x}{x'} \onto Y\fragmentco{y}{y'}$, is a path from $(x,y)$ to $(x',y')$ in $\AGw(X,Y)$, often interpreted as a sequence of vertices.
    The \emph{cost} of $\cA$, denoted by $\wed_{\cA}(X\fragmentco{x}{x'},Y\fragmentco{y}{y'})$, is the total cost of edges belonging to $\cA$. 
\end{definition}

The \emph{weighted edit distance} of strings $X, Y \in \Sigma^*$
with respect to a weight function~$w$ is~$\wed(X, Y) \coloneqq \min \wed_\cA(X, Y)$ where the minimum is taken over all alignments $\cA$ of $X$ onto~$Y$.
An alignment $\cA$ of $X$ onto $Y$ is \emph{$w$-optimal} if $\wed_\cA(X, Y) = \wed(X, Y)$.

We say that an alignment $\cA$ of $X$ onto $Y$ aligns fragments $X\fragmentco{x}{x'}$ and $Y\fragmentco{y}{y'}$
if~$(x,y),(x',y')\in \cA$.
In this case, we also denote the cost of the \emph{induced alignment} (the subpath of $\cA$ from $(x,y)$ to $(x',y')$) by $\wed_\cA(X\fragmentco{x}{x'}, Y\fragmentco{y}{y'})$.

A normalized weight function satisfying $w(a,b) = 1$ for distinct $a, b \in \Esigma$ gives the \emph{unweighted} edit distance (Levenshtein distance~\cite{Levenshtein66}).
In this case, we drop the superscript $w$.

\begin{definition}[Augmented Alignment Graph,~{\cite{GK24}}]
For strings $X,Y\in \Sigma^*$ and a weight function $\wdef$, the \emph{augmented alignment graph} $\oAGw(X, Y)$ is obtained from $\AGw(X,Y)$ by adding, for every edge of $\AGw(X, Y)$, a back edge of weight $W + 1$.\footnote{These edges ensure that $\oAGw(X, Y)$ is strongly connected, and thus the distance matrices used throughout this work are Monge matrices (see~\cref{def:monge}) rather than so-called partial Monge matrices.}
\end{definition}

\noindent
The following fact shows several properties of $\oAGw(X, Y)$.
Importantly, $\wed(X\fragmentco{x}{x'},Y\fragmentco{y}{y'})$ is the distance from $(x,y)$ to $(x',y')$ in $\oAGw(X,Y)$.

\begin{fact}[{\cite[Lemma 5.2]{GK24}}] \label{lm:oAGw-properties}
    Consider strings $X, Y \in \Sigma^{*}$ and a weight function $\wdef$.
    Every two vertices $(x, y)$ and $(x', y')$ of the graph $\oAGw(X,Y)$ satisfy the following properties:
\begin{description}
    \item[Monotonicity.] Every shortest path from $(x, y)$ to $(x', y')$ in $\overline{\AG}^w(X, Y)$ is (weakly) monotone in both coordinates.
    \item[Distance preservation.] If $x \le x'$ and $y \le y'$, then
        \[\dist_{\oAGw(X, Y)}((x, y), (x', y')) = \dist_{\AGw(X, Y)}((x, y), (x', y')).\]
    \item[Path irrelevance.] If ($x \le x'$ and $y \ge y'$) or ($x \ge x'$ and $y \le y'$), then every path from $(x, y)$ to $(x', y')$ in $\oAGw(X, Y)$ monotone in both coordinates is a shortest path between these two vertices.
\end{description}
\end{fact}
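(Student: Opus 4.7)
The plan is to prove the three properties in the stated order, using each as an ingredient in the next.

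\textbf{Monotonicity.} The starting observation is that every back edge costs $W+1$, strictly exceeding the maximum weight $W$ of any forward edge. Since $\AGw(X,Y)$ is acyclic (all forward edges point rightward and/or downward), every directed cycle in $\oAGw(X,Y)$ must traverse at least one back edge and therefore has strictly positive cost; in particular every shortest path is simple. The core step is an exchange argument, by induction on the number of back edges in a simple shortest path $P$ from $(x,y)$ to $(x',y')$. Consider the \emph{last} back edge of $P$, say a horizontal one $(a{+}1,b)\to(a,b)$ at path position $t_1$. Counting crossings of the boundary between columns $a$ and $a{+}1$ shows that, after $t_1$, $P$ must contain at least one forward edge (horizontal or diagonal) crossing that boundary rightward; let it occur at path position $t_2$. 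Between $t_1$ and $t_2$ there are no further back edges, so the $x$-coordinate remains constant at $a$, and the subpath consists solely of forward vertical edges, whose weights depend only on their $y$-coordinate. Shifting this subpath one column to the right preserves its cost, while dropping the back edge (cost $W+1$) and its compensating forward edge (cost at most $W$) saves at least $1 > 0$, contradicting optimality of $P$. The cases where the last back edge is vertical or diagonal are handled symmetrically.

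\textbf{Distance preservation.} The inequality $\distag((x,y),(x',y')) \le \dist_{\AGw(X,Y)}((x,y),(x',y'))$ is automatic because $\AGw(X,Y) \subseteq \oAGw(X,Y)$. The converse follows from Monotonicity: any shortest path in $\oAGw(X,Y)$ from $(x,y)$ to $(x',y')$ with $x \le x'$ and $y \le y'$ is weakly monotone in both coordinates, hence uses only forward edges, hence lies entirely in $\AGw(X,Y)$.

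\textbf{Path irrelevance.} Assume without loss of generality $x \le x'$ and $y \ge y'$. A path from $(x,y)$ to $(x',y')$ monotone in both coordinates has weakly increasing $x$ and weakly decreasing $y$, so it can use only forward horizontal and back vertical edges; every other edge type would violate one of these constraints. Such a path therefore contains exactly $x'-x$ forward horizontal edges and exactly $y-y'$ back vertical edges. Since horizontal edges' weights depend only on the $x$-coordinate they cross, the horizontal contribution is the path-independent sum $\sum_{i=x}^{x'-1} w(X\position{i},\emptystring)$; all back vertical edges uniformly cost $W+1$. Thus all such paths share the same cost, and Monotonicity guarantees that the shortest path is among them, so each is itself a shortest path.

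The main obstacle will be executing the exchange argument for Monotonicity with care: the back edge and its compensating forward edge need not be adjacent in $P$, and one has to argue that between them $P$ truly stays on a single column. Taking the \emph{last} back edge guarantees that no further back edges interfere, and separately handling the three back-edge types (horizontal, vertical, diagonal) together with the two possible compensating edge types (axis-aligned vs.\ diagonal) closes the remaining case analysis.
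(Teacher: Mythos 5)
This statement appears in the paper only as a Fact imported from \cite{GK24} (Lemma 5.2 there); the paper gives no proof of its own, so there is nothing in-paper to compare against and I can only assess your argument on its own terms. Your deductions of Distance preservation and Path irrelevance from Monotonicity are correct: the edge-type analysis for monotone paths and the observation that horizontal (resp.\ vertical) forward costs are row- (resp.\ column-) independent while all back edges cost $W+1$ are exactly what is needed. The positivity of cycles and simplicity of shortest paths are also fine. The problem is the Monotonicity step itself, on which everything else rests.

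The gap is the assertion that, after the last back edge, say $(a+1,b)\to(a,b)$ at position $t_1$, ``counting crossings'' forces a later forward crossing of the boundary between columns $a$ and $a+1$. Since $t_1$ is the last back edge, the suffix of $P$ uses forward edges only, so it crosses that boundary rightward if and only if the destination satisfies $x'\ge a+1$ --- and nothing guarantees this. Indeed the claim must sometimes fail, because if your exchange could always be carried out it would show that shortest paths contain no back edges at all, which is false: whenever $x'<x$ or $y'<y$ every path, in particular every shortest one, uses back edges (e.g.\ the shortest path from $(1,0)$ to $(0,0)$ is a single back edge of cost $W+1$). What Monotonicity actually forbids is a column (or row) boundary being crossed in \emph{both} directions, and the configuration your template misses is precisely the one where the rightward crossing precedes the leftward (back-edge) crossing --- a path that overshoots to the right and returns even though the target lies weakly to the left. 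In that configuration the segment between the two opposite crossings is not confined to a single column and may itself contain further back edges, so the ``shift the vertical run one column over'' exchange does not apply verbatim; one needs a more careful extremal choice (e.g.\ the outermost column reached by the detour and two consecutive opposite crossings of its boundary), plus extra care because diagonal crossings shift the row by one and the detour may contain back vertical edges, and then the symmetric treatment for the $y$-coordinate and the mixed cases. As written, the Monotonicity proof has a genuine hole, so the proposal does not yet constitute a proof of the Fact.
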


Consistently with many edit-distance algorithms, we will often compute not only the distance from $(0,0)$ to $(|X|,|Y|)$ in $\oAGw(X, Y)$ but the entire \emph{boundary distance matrix} $\BMw(X,Y)$ that stores the distances from every \emph{input} vertex on the top-left boundary to every \emph{output} vertex on the bottom-right boundary of the graph $\oAGw(X,Y)$.

\begin{definition}[Input and Output Vertices of an Alignment Graph, {\cite[Definition 5.5]{GK24}}]
    Let $X, Y \in \Sigma^*$ be two strings and $\wdef$ be a weight function.
    Furthermore, for fragments $X\fragmentco{x}{x'}$ and $Y\fragmentco{y}{y'}$, let $R = \fragmentcc{x}{x'} \times \fragmentcc{y}{y'}$ be the corresponding \emph{rectangle} in~$\oAGw(X, Y)$.
    We define the \emph{input vertices} of $R$ as the sequence of vertices of $\oAGw(X, Y)$ on the left and top boundary of $R$ in the clockwise order.
    Analogously, we define the \emph{output vertices} of $R$ as a sequence of vertices of $\oAGw(X, Y)$ on the bottom and right boundary of $R$ in the counterclockwise order.
    Furthermore, the input and output vertices of $X \times Y$ are the corresponding boundary vertices of the whole graph $\oAGw(X, Y)$.
\end{definition}

\begin{definition}[Boundary Distance Matrix, {\cite[Definition 5.6]{GK24}}]
    Let $X, Y \in \Sigma^*$ be two strings and $\wdef$ be a weight function.
    The \emph{\textbf{b}oundary distance \textbf{m}atrix} $\BMw(X, Y)$ of $X$ and $Y$ with respect to $w$ is a matrix $M$ of size $(|X|+|Y|+1)\times (|X|+|Y|+1)$, where $M_{i, j}$ is the distance from the $i$-th input vertex to the $j$-th output vertex of $X \times Y$ in $\oAGw(X, Y)$.
\end{definition}

The following fact captures the simple dynamic programming algorithm for bounded weighted edit distance.

\begin{fact}[{\cite[Fact 3.3]{GK24}}] \label{lm:baseline-wed}
    Given strings $X, Y \in \Sigma^*$, an integer $k \geq 1$, and $\Oh(1)$-time oracle access to
    a normalized weight function
    $w \colon \Esigma^2 \to \RR_{\geq 0}$, the value $\wed_{\leq k}(X, Y)$ can be computed in
    $\Oh(\min(|X| + 1, |Y| + 1) \cdot \min(k, |X| + |Y| + 1))$ time.
    Furthermore, if $\wed(X, Y) \le k$, the algorithm returns a $w$-optimal alignment of $X$ onto $Y$.
\end{fact}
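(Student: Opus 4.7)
The plan is to run the standard Needleman--Wunsch dynamic program restricted to a diagonal band, as in Ukkonen's trick, adapted to arbitrary normalized weights. Concretely, I would fill a table $D\position{i,j} = \wed(X\fragmentco{0}{i}, Y\fragmentco{0}{j})$ for $(i,j)\in \fragmentcc{0}{|X|}\times\fragmentcc{0}{|Y|}$ with the classical recurrence taking the minimum over the three in-edges of $(i,j)$ in $\AGw(X,Y)$: deletion of $X\position{i-1}$, insertion of $Y\position{j-1}$, and substitution/match of $X\position{i-1}$ into $Y\position{j-1}$. Since $w$ is queried in $\Oh(1)$ time, each cell is evaluated in $\Oh(1)$ time, so the entire table is computed in $\Oh(|X|\cdot|Y|)=\Oh(\min(|X|+1,|Y|+1)\cdot(|X|+|Y|+1))$ time, already matching the claim when $k \ge |X|+|Y|+1$.

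For smaller $k$, I would confine the computation to the band $B_k = \{(i,j) : |i-j| \le k\}$. The key observation is that normalization (every non-matching edge has cost $\ge 1$) forces $D\position{i,j} \ge |i-j|$: any path from $(0,0)$ to $(i,j)$ in $\AGw(X,Y)$ must contain at least $|i-j|$ horizontal or vertical edges, because the difference between the numbers of vertical and horizontal steps equals $j-i$. Hence every cell outside $B_k$ has value strictly greater than $k$, and so does every path passing through such a cell; in particular, no alignment of cost $\le k$ ever leaves $B_k$. Therefore, setting $D\position{i,j} = \infty$ outside $B_k$ and applying the recurrence only for $(i,j)\in B_k$ preserves $D\position{|X|,|Y|}$ whenever this value is at most $k$, and can only overestimate it otherwise, which is fine since we return $\infty$ in that case.

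The band $B_k$ contains at most $(2k+1)\cdot(\min(|X|,|Y|)+1)$ cells, so the banded DP runs in $\Oh(\min(|X|+1,|Y|+1)\cdot k)$ time. Taking the minimum with the unbanded bound yields the stated $\Oh(\min(|X|+1,|Y|+1)\cdot\min(k,|X|+|Y|+1))$ complexity. Finally, to recover a $w$-optimal alignment whenever $D\position{|X|,|Y|}\le k$, I would trace back from $(|X|,|Y|)$ to $(0,0)$ within $B_k$, at each vertex picking any in-neighbour attaining the minimum of the recurrence; the walk uses at most $|X|+|Y|$ steps at $\Oh(1)$ time each, which is absorbed by the DP cost. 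Since this is essentially textbook, I foresee no genuine obstacle; the only point deserving care is the band-containment claim, and it reduces cleanly to the length-gap bound $D\position{i,j}\ge|i-j|$ enabled by the normalization of $w$.
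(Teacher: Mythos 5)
Your proposal is correct and matches the intended argument: the paper cites this as \cite[Fact 3.3]{GK24} without reproving it, but the surrounding discussion makes clear that the proof is exactly the banded dynamic program you describe, with the confinement to the diagonal band justified by the normalization bound $\dist \ge |i-j|$ (the same property the paper records as the path-confinement fact). Your complexity accounting and traceback for the optimal alignment are both sound, so there is nothing to fix.
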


The proof of \cref{lm:baseline-wed} relies on the fact that a path of length $k$ can visit only a limited area of the alignment graph $\AGw(X, Y)$. This property can be formalized as follows:

\begin{fact}[{\cite[Lemma 5.3]{GK24}}] \label{lm:paths_dont_deviate_too_much}
    Let $X, Y \in \Sigma^*$ be two strings, $k$ be an integer, and $w \colon \Esigma^2 \to \RR_{\ge 0}$ be a normalized weight function.
    Consider a path $P$ of cost at most $k$ connecting $(x, y)$ to $(x', y')$ in $\oAGw(X,Y)$.
    All vertices $(x^*, y^*) \in P$ satisfy $|(x - y) - (x^* - y^*)| \le k$ and $|(x' - y') - (x^* - y^*)| \le k$.
\end{fact}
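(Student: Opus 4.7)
The plan is to track how the \emph{diagonal index} $d(u) \coloneqq x^* - y^*$ of a vertex $u = (x^*, y^*)$ evolves along the path $P$, and show that every edge that alters this diagonal index contributes at least one unit of cost. Then a subpath of $P$ of total cost at most $k$ can shift the diagonal by at most $k$, which yields both required inequalities.

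The first step is to classify the edges of $\oAGw(X, Y)$ by their effect on $d$. A forward diagonal edge $(x^*,y^*) \to (x^*+1, y^*+1)$ leaves $d$ unchanged; a forward horizontal edge $(x^*,y^*) \to (x^*+1, y^*)$ increases $d$ by $1$ and has weight $w(X[x^*], \emptystring) \ge 1$ by normalization of $w$; a forward vertical edge $(x^*,y^*) \to (x^*, y^*+1)$ decreases $d$ by $1$ and similarly has weight $\ge 1$. The back edges have weight $W+1 \ge 1$ and change $d$ by $\mp 1$. In particular, every edge of $\oAGw(X, Y)$ that changes $d$ has cost at least $1$, while edges that preserve $d$ have arbitrary non-negative cost.

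The second step is a telescoping argument. Fix a vertex $(x^*, y^*) \in P$ and consider the prefix $P_1$ of $P$ from $(x, y)$ to $(x^*, y^*)$ and the suffix $P_2$ from $(x^*, y^*)$ to $(x', y')$. Since all edge weights are non-negative and the total cost of $P$ is at most $k$, we have $\cost(P_1) \le k$ and $\cost(P_2) \le k$. Along $P_1$, the diagonal index changes from $x - y$ to $x^* - y^*$, and every edge either leaves $d$ invariant or changes it by $\pm 1$ while contributing at least $1$ to the cost. Hence the number of diagonal-changing edges on $P_1$ is at most $\cost(P_1) \le k$, so
\[
\abs{(x - y) - (x^* - y^*)} \;\le\; \cost(P_1) \;\le\; k.
\]
An identical argument applied to $P_2$ yields $\abs{(x' - y') - (x^* - y^*)} \le \cost(P_2) \le k$, which completes the proof.

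There is no real obstacle here; the only point worth double-checking is that back edges, despite being necessary to make $\oAGw(X, Y)$ strongly connected, still carry cost at least $1$, so the classification of edges into ``diagonal-preserving, possibly cheap'' versus ``diagonal-changing, cost $\ge 1$'' indeed covers every edge type in $\oAGw(X, Y)$.
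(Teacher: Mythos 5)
Your argument is correct, and it is the natural one: the paper itself does not prove this statement but imports it verbatim as \cite[Lemma 5.3]{GK24}, where essentially the same diagonal-tracking argument is used, so there is nothing to contrast. One tiny imprecision: back edges reversing \emph{diagonal} edges do not change the diagonal index $d$ at all (only reversed horizontal/vertical edges shift $d$ by $\mp 1$), but this is harmless, since all you need is that every $d$-changing edge costs at least $1$, which holds because horizontal, vertical, and back edges all have weight at least $1$ by normalization and by the definition of $\oAGw$.
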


The \emph{breakpoint representation} of an alignment $\cA=(x_t,y_t)_{t=0}^m$ of $X$ onto $Y$
is the subsequence of $\cA$ consisting of pairs $(x_t,y_t)$ such that $t\in \{0,m\}$ or
$\cA$ does not match $X\position{x_t}$ with $Y\position{y_t}$.
Note that the size of the breakpoint representation is at most $2+\ed_\cA(X,Y)$ and that it
can be used to retrieve the entire alignment and its cost:
for any two consecutive elements $(x',y'),(x,y)$ of the breakpoint representation, it
suffices to add $(x-\delta,y-\delta)$ for~$\delta \in \fragmentoo{0}{\max(x-x',y-y')}$.
We will also talk of \emph{breakpoint representations} of paths in~$\oAGw(X, Y)$ that are weakly monotone in both coordinates.
If such a path starts in $(x, y)$ and ends in $(x', y')$ with $x \le x'$ and $y \le y'$, then the breakpoint representation of such a path is identical to the breakpoint representation of the corresponding alignment.
Otherwise, if $x > x'$ or $y > y'$, we call the breakpoint representation of such a path a sequence of all its vertices; recall that all edges in such a path have strictly positive weights.

\paragraph*{Planar Graphs, the Monge Property, and Min-Plus Matrix Multiplication}\label{sec:alg:sec:alg-periodic:sec:planar-toolbox}

As alignment graphs are planar, we can use the following tool of Klein.

\begin{fact}[Klein~\cite{Klein2005}]\label{thm:klein}
    Given a directed planar graph $G$ of size $n$ with non-negative edge weights, we can construct in $\Oh(n \log n)$ time a data structure that, given two vertices $u, v \in V(G)$ at least one of which lies on the outer face of $G$, computes the distance $\dist_G(u, v)$ in $\Oh(\log n)$ time.
    Moreover, the data structure can report the shortest $u\leadsto v$ path $P$
    in $\Oh(|P|\log \log\Delta(G))$ time, where $\Delta(G)$ is the maximum degree of $G$.
\end{fact}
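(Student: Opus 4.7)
The plan is to use Klein's \emph{multiple-source shortest paths} (MSSP) framework, which computes the shortest path tree from every vertex of a distinguished face in essentially the same time as a single Dijkstra run. First, I would fix an arbitrary vertex $r_0$ on the outer face of $G$ and compute, via Dijkstra's algorithm in $\Oh(n\log n)$ time, the shortest path tree $T_{r_0}$ rooted at $r_0$, together with the distances $\dist_G(r_0,\cdot)$. I would then store $T_{r_0}$ in a link-cut tree supporting path-length queries and edge insertions/deletions in $\Oh(\log n)$ time per operation.

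Next, I would slide the source along the outer face, visiting its vertices $r_0,r_1,\ldots,r_{b-1}$ in cyclic order. When the source moves from $r_i$ to $r_{i+1}$, the shortest path tree transforms from $T_{r_i}$ into $T_{r_{i+1}}$ via a sequence of \emph{pivots}: each pivot removes a tree edge whose endpoint sees its distance change and inserts a non-tree edge with a tighter value. Each individual pivot can be executed in $\Oh(\log n)$ amortized time in the link-cut tree. The crux of Klein's analysis is that, because shortest path trees rooted at consecutive boundary vertices do not cross in the planar embedding, the total number of pivots across the entire sweep is $\Oh(n\log n)$; equivalently, each edge is pivoted in and out only $\Oh(\log n)$ times on average, and this bound relies essentially on planarity (via a non-crossing argument in the planar dual, together with a weight-perturbation step ensuring unique shortest paths). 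Combining these facts gives $\Oh(n\log n)$ total preprocessing time.

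To answer queries, I would make each pivot persistent, so that every intermediate tree $T_{r_i}$ remains accessible after preprocessing via a version identifier. On a query $(u,v)$, if $u$ lies on the outer face, I locate the version of the tree rooted at $u$ and read off the root-to-$v$ distance in $\Oh(\log n)$ time; the symmetric case (where $v$ is on the outer face) is handled by running the same preprocessing on the reverse graph. To report the shortest $u\leadsto v$ path $P$, I would unwind the root-to-$v$ path in the corresponding persistent link-cut tree, using fractional-cascading-style speedups on the adjacency structure to achieve the stated $\Oh(|P|\log\log \Delta(G))$ traversal time.

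The hard part will be establishing the pivot-count bound, i.e.\ proving that the total work across the sweep around the face is $\Oh(n\log n)$ rather than, say, $\Oh(n^{1.5})$. This amounts to showing that the trajectories of tree edges as the root moves satisfy a non-crossing property in the planar dual, which in turn supports an amortized argument bounding the number of pivots per edge to $\Oh(\log n)$; without planarity, this step fails and no analogous result is known.
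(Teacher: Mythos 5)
This statement is quoted as an external \emph{Fact} citing Klein's multiple-source shortest paths result; the paper gives no proof of its own, and your sketch is essentially a faithful high-level account of exactly the algorithm the citation refers to (sweep the source around the outer face, maintain the shortest-path tree persistently in a dynamic-tree structure, bound the total pivot work via a non-crossing argument in the planar embedding, and handle the symmetric query case on the reverse graph). One minor inaccuracy: in Klein's analysis each dart enters the tree at most once over the entire sweep, so the total number of pivots is $\Oh(n)$ (each costing $\Oh(\log n)$ amortized), rather than ``$\Oh(\log n)$ pivots per edge on average''; this does not change the $\Oh(n\log n)$ bound.
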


The following lemma specifies a straightforward variation of Klein's algorithm tailored for path reconstruction in alignment graphs. 

\renewcommand{\maybe}{\lipicsEnd}
\begin{restatable}{lemma}{lmkleinupgraded}\label{lm:klein-upgraded}
    Given strings $X, Y \in \Sigma^+$ and $\Oh(1)$-time access to a normalized weight function $\wdef$, we can construct in $\Oh(|X| \cdot |Y| \log^2 (|X| + |Y|))$ time a data structure that, given two vertices $u, v \in \oAGw(X, Y)$ at least one of which lies on the outer face of $\oAGw(X, Y)$, computes the distance $\dist_{\oAGw(X, Y)}(u, v)$ in $\Oh(\log (|X| + |Y|))$ time.
    Moreover, the breakpoint representation of a shortest $u \leadsto v$ path can be constructed in $\Oh((1 + \dist_{\oAGw(X, Y)}(u, v)) \log^2 (|X| + |Y|))$ time.
\end{restatable}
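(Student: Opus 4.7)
The plan is to instantiate Klein's planar data structure (\cref{thm:klein}) on $\oAGw(X, Y)$ together with a constant-time longest-common-extension (LCE) index on $X$, $Y$, and their reversals, and then reconstruct breakpoints by binary-searching through zero-cost match diagonals. Explicit construction of $\oAGw(X, Y)$ takes $\Oh(|X| \cdot |Y|)$ time once a safely large back-edge weight (e.g., one plus the maximum grid-edge weight, obtained via $\Oh(|X|\cdot|Y|)$ oracle calls) is fixed. I would apply \cref{thm:klein} to $\oAGw(X, Y)$ and to its edge-reversal so that, regardless of which of $u, v$ lies on the outer face, a distance query can be dispatched to the copy whose source sits on the outer face; both applications combined cost $\Oh(|X|\cdot|Y|\log(|X|+|Y|))$ and yield $\Oh(\log(|X|+|Y|))$-time distance queries. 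Building the LCE index takes $\Oh((|X|+|Y|)\log(|X|+|Y|))$ time, so preprocessing fits comfortably inside the claimed $\Oh(|X|\cdot|Y|\log^2(|X|+|Y|))$ bound.

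For the breakpoint representation of a shortest $u \leadsto v$ path, we may assume without loss of generality that $u$ lies on the outer face (otherwise work in the reversed graph and swap $u$ and $v$). When $v$ is weakly southeast of $u$, every shortest $u \leadsto v$ path is coordinate-monotone by \cref{lm:oAGw-properties}, and the breakpoints are the endpoints together with all vertices whose outgoing edge is not a diagonal match. I would walk backwards from $v$, maintaining a current breakpoint $c$ initialized to $v$. At each step with $c = (x, y)$, one LCE query returns the largest $\ell \le \min(x, y)$ with $X\fragmentco{x-\ell}{x} = Y\fragmentco{y-\ell}{y}$; then a binary search over $r \in [0, \ell]$ finds the largest $r$ satisfying $\dist_{\oAGw(X, Y)}(u, (x-r, y-r)) = \dist_{\oAGw(X, Y)}(u, c)$. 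The valid $r$ form a prefix of $[0, \ell]$: equality at some $r$ means a shortest $u \leadsto c$ path threads through $(x-r, y-r)$ and then takes free matches, so the additional free matches from $(x-r', y-r')$ to $(x-r, y-r)$ for $r' \le r$ sandwich both distances and force equality at $r'$. If the resulting vertex $c^\star = (x-r_{\max}, y-r_{\max})$ equals $u$, record $u$ as the final breakpoint and stop; otherwise probe the constantly many in-neighbors $c'$ of $c^\star$ via Klein's oracle, select one satisfying $\dist_{\oAGw(X, Y)}(u, c') + w(c', c^\star) = \dist_{\oAGw(X, Y)}(u, c^\star)$, record it as the next breakpoint, and continue with $c \leftarrow c'$. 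Each iteration issues $\Oh(\log(|X|+|Y|))$ Klein queries and costs $\Oh(\log^2(|X|+|Y|))$ time; because each newly recorded breakpoint precedes a non-match edge of cost at least $1$, the loop runs $\Oh(1 + \dist_{\oAGw(X, Y)}(u, v))$ times.

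When $v$ is not weakly southeast of $u$, the preliminaries define the breakpoint representation of a coordinate-monotone path as its entire vertex sequence, and \cref{lm:oAGw-properties} (combined with the fact that diagonal match edges are only free when traversed southeast) forces every edge on such a path to have positive weight; its length is therefore $\Oh(1 + \dist_{\oAGw(X, Y)}(u, v))$, and I would invoke Klein's direct path reporter, which runs in $\Oh(|P| \log\log \Delta) = \Oh(|P|)$ time since $\Delta(\oAGw(X, Y)) = \Oh(1)$. The main obstacle is verifying the two inductive invariants sustaining the skip: that the equality of distances to $c^\star$ and to $c$ certifies a shortest $u \leadsto c$ path passing through $c^\star$ via the free diagonal, and that the selected in-neighbor $c'$ lies on a shortest $u \leadsto c^\star$ path. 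Once these are established, the reverse of the recorded sequence is a shortest $u \leadsto v$ alignment whose non-match outgoing edges occur exactly at the recorded breakpoints, so the output is a valid breakpoint representation.
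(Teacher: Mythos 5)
Your proof is correct, but it takes a genuinely different route from the paper's. The paper builds a \emph{hierarchy} of Klein instances by recursively halving $X$ (hence the $\log^2$ in the preprocessing bound) and reconstructs the path top-down: at each level it locates a crossing vertex on the middle column -- of which there are only $\Oh(1+\dist)$ candidates by \cref{lm:paths_dont_deviate_too_much} -- splits there, and recurses, spending $\Oh((1+\dist)\log(|X|+|Y|))$ per level. You instead keep a single Klein instance plus an LCE index and reconstruct bottom-up by a backward greedy walk: a binary search over each maximal free-match diagonal (justified by the prefix-monotonicity of the set of valid shifts $r$, which you argue correctly via the sandwiching inequalities), followed by a predecessor probe. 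Your termination and output-correctness arguments go through: every selected in-neighbor edge has cost at least $1$ (a back-edge in-neighbor can never satisfy the equation $\dist(u,c')+w(c',c^\star)=\dist(u,c^\star)$, since that would exhibit a non-monotone shortest path, contradicting the monotonicity property of \cref{lm:oAGw-properties}), so the loop runs $\Oh(1+\dist)$ times and the recorded vertices are exactly the endpoints plus the vertices whose outgoing edge is a non-match. What each approach buys: your preprocessing is a logarithmic factor cheaper ($\Oh(|X|\cdot|Y|\log(|X|+|Y|))$ plus the LCE index, versus the paper's $\Oh(|X|\cdot|Y|\log^2(|X|+|Y|))$), at the price of needing an auxiliary string data structure and the diagonal-prefix argument; the paper's recursion is more self-contained and reuses the same machinery it later needs elsewhere (e.g., the middle-column splitting reappears in the proof of \cref{lm:simple-complete-algorithm}). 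Two cosmetic points: the reversed-graph copy is unnecessary, since \cref{thm:klein} already answers $\dist(z,v)$ when $v$ is on the outer face; and when $u$ is reached both as an in-neighbor and as a final $c^\star$, you should deduplicate it in the output.
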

\renewcommand{\maybe}{}

\begin{proof}
    The data structure we build is recursive.
    We first build the data structure of Klein's algorithm (\cref{thm:klein}) for $\oAGw(X, Y)$, and then if $|X| > 1$, recurse onto $(X\fragmentco{0}{\floor{|X| / 2}}, Y)$ and $(X\fragmentco{\floor{|X| / 2}}{|X|}, Y)$.
    Preprocessing costs $\Oh(|X| \cdot |Y| \log^2 (|X| + |Y|))$ time.
    For distance queries, the top-level instance of Klein's algorithm is sufficient.
    It remains to answer path reconstruction queries.
    We reconstruct the shortest $u \leadsto v$ path recursively.
    First, in $\Oh(\log (|X| + |Y|))$ time we query $\dist_{\oAGw(X, Y)}(u, v)$.
    If the distance is $0$, we return the breakpoint representation of the trivial shortest $u \leadsto v$ path.
    Otherwise, we make further recursive calls.
    If $u$ and $v$ both lie in $X\fragmentco{0}{\floor{|X| / 2}} \times Y$ or both lie in $X\fragmentco{\floor{|X| / 2}}{|X|} \times Y$, by the monotonicity property of \cref{lm:oAGw-properties} we can recurse onto the corresponding small recursive data structure instance.
    Otherwise, every $u \leadsto v$ path contains a vertex $(x, y)$ with $x = \floor{|X| / 2}$.
    Furthermore, due to \cref{lm:paths_dont_deviate_too_much}, there are $\Oh(\dist_{\oAGw(X, Y)}(u, v))$ possible values for $y$.
    In $\Oh(\dist_{\oAGw(X, Y)}(u, v) \log (|X| + |Y|))$ time we query $\dist_{\oAGw(X, Y)}(u, (x, y))$ and $\dist_{\oAGw(X, Y)}((x, y), v)$ using the smaller recursive data structure instances and find some $y$ satisfying $\dist_{\oAGw(X, Y)}(u, v) = \dist_{\oAGw(X, Y)}(u, (x, y)) + \dist_{\oAGw(X, Y)}((x, y), v)$.
    We then recursively find the breakpoint representations of the optimal $u \leadsto (x, y)$ path and the optimal $(x, y) \leadsto v$ path in the smaller recursive data structure instances.
    There are $\Oh(\log (|X| + 1))$ recursive levels, on each one of them our total work is $\Oh((1 + \dist_{\oAGw(X, Y)}(u, v)) \log (|X| + |Y|))$; thus, the total time complexity is $\Oh((1 + \dist_{\oAGw(X, Y)}(u, v)) \log^2 (|X| + |Y|))$.
\end{proof}

As discussed in \cite[Section 2.3]{FR06} and~\cite[Fact 3.7 and Section 5.1]{GK24}, the planarity of alignment graphs implies that the $\BMw(X, Y)$ matrices satisfy the following \emph{Monge property}:

\begin{definition}[Monge Matrix] \label{def:monge}
    A matrix $A$ of size $p \times q$ is a \emph{Monge} matrix if, for all~$i \in [0 \dd p - 1)$ and~$j \in [0 \dd q - 1)$, we have $A_{i, j} + A_{i + 1, j + 1} \le A_{i, j + 1} + A_{i + 1, j}$.
\end{definition}

In the context of distance matrices, concatenating paths corresponds to the min-plus product of matrices; this operation preserves Monge property and can be evaluated~fast.

\begin{fact}[Min-Plus Matrix Product, SMAWK Algorithm~\cite{SMAWK87}, {\cite[Theorem 2]{Tiskin10}}]\label{thm:smawk}
    Let $A$ and~$B$ be matrices of size $p \times q$ and $q \times r$, respectively.
    Their \emph{min-plus product} is a matrix $A \otimes B\coloneqq C$ of size $p \times r$ such that $C_{i, k} = \min_j A_{i, j} + B_{j, k}$ for all $i \in [0\dd p)$ and $k \in [0\dd r)$.

    If $A$ and $B$ are Monge matrices, then $C$ is also a Monge matrix. 
    Moreover, $C$ can be constructed in $\Oh(pr + \min(pq, qr))$ time
    assuming $\Oh(1)$-time random access to $A$ and $B$.
\end{fact}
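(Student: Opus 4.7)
The statement packages two claims: Monge preservation of the min-plus product, and an $\Oh(pr+\min(pq,qr))$-time algorithm to compute it. My plan is to dispatch them separately, reducing the algorithmic claim to $\min(p,r)$ independent applications of the SMAWK algorithm on auxiliary Monge matrices.

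For Monge preservation, my plan is a two-case argument. Fix $i<i'$ and $k<k'$; the target inequality is $C_{i,k}+C_{i',k'}\le C_{i,k'}+C_{i',k}$. I would pick $j_1$ and $j_2$ to be minimizers attaining $C_{i,k}$ and $C_{i',k'}$, so that the left-hand side equals exactly $A_{i,j_1}+B_{j_1,k}+A_{i',j_2}+B_{j_2,k'}$. The right-hand side is then bounded from above by substituting a carefully chosen $j$ into each min: if $j_1\le j_2$ I substitute $j_1$ into $C_{i,k'}$ and $j_2$ into $C_{i',k}$, which cancels all $A$-terms and leaves $B_{j_1,k}+B_{j_2,k'}\le B_{j_1,k'}+B_{j_2,k}$, the Monge inequality for $B$ on rows $j_1<j_2$ and columns $k<k'$; if instead $j_1>j_2$ I swap the substitutions, which cancels the $B$-terms and reduces to Monge of $A$ on rows $i<i'$ and columns $j_2<j_1$. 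The only subtlety is selecting the right minimizer pair: had I instead picked minimizers of the off-diagonal entries $C_{i,k'}$ and $C_{i',k}$, the sign of the resulting Monge inequality would flip and the argument would fail, so I would flag this choice carefully in the write-up.

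For the running time, I would reduce to SMAWK (\cref{thm:smawk} being the very statement we are proving; I mean the SMAWK row-minima subroutine). For any fixed $k$ define the $p\times q$ matrix $N^{(k)}_{i,j}\coloneqq A_{i,j}+B_{j,k}$; adding the column-indexed value $B_{j,k}$ to each row of $A$ cancels in every $2\times 2$ Monge inequality, so $N^{(k)}$ inherits the Monge property of $A$, and hence is totally monotone. The $k$-th column of $C$ consists of the row minima of $N^{(k)}$, which SMAWK produces in $\Oh(p+q)$ time using $\Oh(1)$-time entry access (which $N^{(k)}$ inherits from $A$ and $B$). Running this for all $k$ costs $\Oh(r(p+q))=\Oh(pr+qr)$. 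The mirror-image construction, $M^{(i)}_{j,k}\coloneqq A_{i,j}+B_{j,k}$, which is Monge by the same cancellation applied to $B$, lets SMAWK recover the $i$-th row of $C$ as column minima in $\Oh(q+r)$ time, giving $\Oh(pq+pr)$ overall. Running whichever of the two strategies is cheaper yields $\Oh(pr+\min(pq,qr))$.

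The main obstacle, as indicated above, is just the bookkeeping in the Monge-preservation case split: the proof has only two short inequalities, but the wrong assignment of dummy indices makes the cancellations go in the wrong direction. The algorithmic half is almost routine once one observes that shifting a Monge matrix by a row- or column-constant preserves the Monge property, so SMAWK applies directly; the asymmetric bound $\Oh(pr+\min(pq,qr))$ is then immediate from the two dual slicings.
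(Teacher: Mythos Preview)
Your argument is correct on both counts: the two-case Monge-preservation proof (choosing minimizers for the diagonal entries and splitting on whether $j_1\le j_2$) is the standard one, and the algorithmic half correctly reduces each row or column of $C$ to an SMAWK call on a Monge matrix obtained by a row- or column-constant shift.

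There is nothing to compare against, though: the paper states this as a \emph{Fact} with citations to \cite{SMAWK87} and \cite{Tiskin10} and does not supply its own proof. Your write-up is exactly the kind of self-contained justification one would give if asked to expand the citation, and it matches the folklore arguments those references encapsulate.
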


\section{\boldmath \texorpdfstring{$\Ohtilde(k^2)$}{Õ(k²)}-Time Updates after  \texorpdfstring{$\Ohtilde(nk)$}{Õ(nk)}-Time Preprocessing}\label{sec:simple-algorithm}

In this section, we present a dynamic algorithm that maintains $\wed_{\le k}(X, Z)$ for two dynamic strings $X, Z \in \Sigma^{\le n}$ with $\tOh(nk)$ preprocessing time and $\tOh(k^2)$ update time.
As all vertical and horizontal edges cost at least $1$, the relevant part of the graph $\oAGw(X, Z)$ has size~$\Oh(nk)$ (i.e., a band of width $\Theta(k)$ around the main diagonal), and thus we can afford to preprocess this part of $\oAGw(X, Z)$.
Nevertheless, the curse of dynamic algorithms for bounded weighted edit distance is that, after $\Theta(k)$ updates (e.g., $\Theta(k)$ character deletions from the end of~$Z$ and $\Theta(k)$ character insertions at the beginning of~$Z$), the preprocessed part of the original graph~$\oAGw(X, Z)$ can be completely disjoint from the relevant part of the updated graph~$\oAGw(X, Z)$, which renders the original preprocessing useless.
To counteract this challenge, we use the idea from~\cite{KMS23}: we dynamically maintain information about $\oAGw(X, X)$ instead of $\oAGw(X,Z)$.
Every update to $X$ now affects both dimensions of $\oAGw(X, X)$, so the relevant part of the graph does not shift anymore, and hence our preprocessing does not expire.
Tasked with computing $\wed_{\le k}(X,Z)$ upon a query, we rely on the fact that, unless~$\wed_{\le k}(X,Z)=\infty$, the graph $\oAGw(X,Z)$ is similar to $\oAGw(X, X)$.

Consistently with the previous works~\cite{CKW23,GK24}, we cover the relevant part of $\oAGw(X, X)$ with $m = \Oh(n / k)$ rectangular subgraphs $G_i$ of size $\Theta(k) \times \Theta(k)$ each; see \cref{fig:boxes-stripe-D}.
We preprocess each subgraph in $\tOh(k^2)$ time.
A single update to $X$ alters a constant number of subgraphs $G_i$, so we can repeat the preprocessing for such subgraphs in $\tOh(k^2)$ time.
The string $Z$ does not affect any subgraph $G_i$; we only store it in a dynamic strings data structure supporting efficient substring equality queries for the concatenation $X\cdot Z$.

Let $V_i$ for $i \in \fragmentco{1}{m}$ be the set of vertices in the intersection of $G_i$ and $G_{i+1}$; see \cref{fig:boxes-stripe-D}.
Furthermore, let $V_0 \coloneqq \set{(0, 0)}$ and $V_m \coloneqq \set{(|X|, |X|)}$.
Let $G$ be the union of all graphs $G_i$.
Let $D_{i, j}$ for $i, j \in \fragmentcc{0}{m}$ with $i < j$ denote the matrix of pairwise distances from~$V_i$ to~$V_j$ in~$G$.
Note that $D_{a, c} = D_{a, b} \otimes D_{b, c}$ holds for all $a, b, c \in \fragmentcc{0}{m}$ with $a < b < c$.

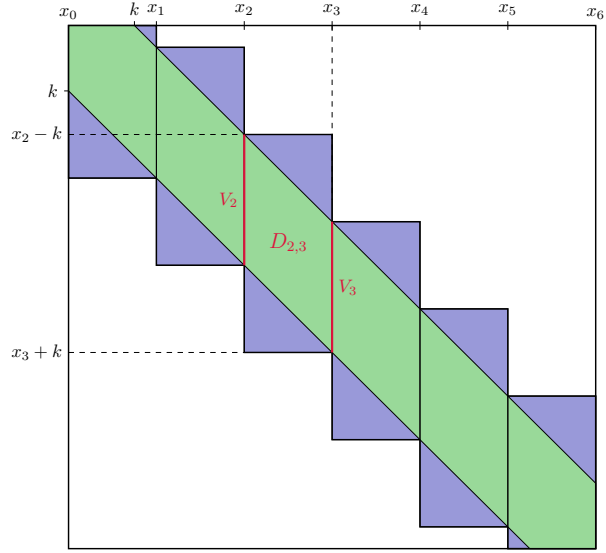
\begin{figure}
    \begin{center}
        \begin{tikzpicture}[y=-.85cm,x=0.85cm]
            \useasboundingbox (-.5, -.5) rectangle (8,8);
            \scope[transform canvas={scale=.68}]
                \def\shft{0}
\def\W{12}
\def\H{12}
\def\myi{2}
\def\cnt{6}
\def\gp{2}
\def\kp{1.5}
\def\charwidth{1 / 15}

\foreach \x in {0,5*\gp}
    \draw[fill=darkblue!40, thick] (\x, {max(\x - \kp, 0)}) rectangle (\x + \gp, {min(\x + \gp + \kp, \H)});

\foreach \x in {\gp,2*\gp,3*\gp} {
    \draw[fill=darkblue!40, thick] (\x, {max(\x - \kp, 0)}) rectangle (\x + \gp, {min(\x + \gp + \kp, \H)});
}

\foreach \x in {4*\gp} {
    \draw[fill=darkblue!40, thick] (\x, {max(\x - \kp, 0)}) rectangle (\x + \gp, {min(\x + \gp + \kp, \H)});
}

\draw (\kp,0.1) -- (\kp,-0.1) node[above] {$k$};
\draw (0.1,\kp) -- (-0.1,\kp) node[left] {$k$};
\draw[fill=darkgreen!40] (0, 0) -- (\kp, 0) -- (\H, \W - \kp) -- (\H, \W) -- (\H - \kp, \W) -- (0, \kp) -- (0, 0);

\foreach \x in {0,\gp,...,{\the\numexpr \W - \gp}}
    \draw (\x, {max(\x - \kp, 0)}) rectangle (\x + \gp, {min(\x + \kp + \gp, \H)});

\draw[thick] (0, 0) rectangle (\H, \W);

\draw (0,0.1) -- (0,0) node[above] {$x_0$};
\draw (\W,0.1) -- (\W,0) node[above] {$x_{\cnt}$};
\foreach \x in {1,2,...,{\the\numexpr \cnt - 1}}
    \draw (\x * \gp,0.1) -- (\x * \gp, -0.1) node[above] {$x_{\x}$};

\draw[dashed] (\myi * \gp + \gp, 0) -- (\myi * \gp + \gp, \myi * \gp);
\draw[dashed] (0,\myi * \gp - \kp) node[left] {$x_{\myi} - k$} -- (\myi * \gp,\myi * \gp - \kp);
\draw[dashed] (0,\myi * \gp + \gp + \kp) node[left] {$x_{\the\numexpr \myi + 1} + k$} -- (\myi * \gp,\myi * \gp + \kp+ \gp);

\node[left, darkred] at (\myi * \gp, \myi * \gp) {$V_{\myi}$};
\node[right, darkred] at (\myi * \gp + \gp, \myi * \gp + \gp) {$V_{\the\numexpr \myi + 1}$};

\draw[very thick, darkred] (\myi * \gp, \myi * \gp - \kp) -- (\myi * \gp, \myi * \gp + \kp);
\draw[very thick, darkred] (\myi * \gp + \gp, \myi * \gp+\gp-\kp) -- (\myi * \gp + \gp, \myi * \gp + \gp + \kp);

\node[darkred] at (\myi * \gp + \gp / 2, \myi * \gp + \gp / 2) {\Large $D_{\myi, {\the\numexpr \myi + 1}}$};
            \endscope
        \end{tikzpicture}
    \end{center}

    \caption{The decomposition of the alignment graph of $X$ onto $X$ into \textcolor{darkblue!60}{subgraphs} $G_i$ of size $\Theta(k) \times \Theta(k)$ that cover the whole \textcolor{darkgreen!60}{stripe} of width $\Theta(k)$ around the main diagonal. Each matrix~$D_{i, i + 1}$ represents the pairwise distances from $V_i$ to $V_{i+1}$.}
    \label{fig:boxes-stripe-D}
\end{figure}

Upon a query for $\wed(X, Z)$, we note that only $\Oh(\ed(X, Z))$ subgraphs $G_i$ in $\oAGw(X, X)$ differ from the corresponding subgraphs in $\oAGw(X, Z)$.
(If $\ed(X, Z) > k$, then $\wed_{\le k}(X, Z) = \infty$, so we can assume that $\ed(X, Z) \le k$.)
We call such subgraphs \emph{affected}.
Let $G'_i$, $V'_i$, and $D'_{i, j}$ be the analogs of $G_i$, $V_i$, and $D_{i, j}$ in $\oAGw(X, Z)$.
Note that if $\wed(X, Z) \le k$, then $\wed(X, Z)$ is the only entry of the matrix $D'_{0, m}$.
Furthermore, we have $D'_{0, m} = D'_{0, 1} \otimes \cdots \otimes D'_{m-1, m}$.
We compute this product from left to right.
For unaffected subgraphs~$G_i$, we have $D'_{i-1, i} = D_{i-1, i}$, and we can use precomputed information about $\oAGw(X, X)$.
To multiply by $D'_{i-1, i}$ for affected subgraphs $G_i$, we use the following lemma, the proof of which is deferred to \cref{app:simple-algorithm:box-with-one-string-changing}.

\begin{restatable}[{Generalization of~\cite[Claim 4.2]{CKW24}}]{lemma}{lmboxwithonestringchanging}\label{lm:box-with-one-string-changing}
    Let $X, Y \in \Sigma^+$ be nonempty strings and \wdef be a weight function.
    Given $\Oh(1)$-time oracle access to $w$, the strings $X$ and $Y$ can be preprocessed in $\Oh(|X| \cdot |Y| \log^2 (|X| + |Y|))$ time so that, given a string $Y' \in \Sigma^*$ and a vector $v$ of size $|X|+|Y'|+1$, the row $v^T \otimes \BMw(X, Y')$ can be computed in $\Oh((\ed(Y, Y') + 1) \cdot (|X| + |Y|) \log (|X| + |Y|))$ time.
    Furthermore, given an input vertex $a$ and an output vertex $b$ of $X \times Y'$, the shortest path from $a$ to $b$ in $\oAGw(X, Y')$ can be computed in the same time complexity.
\end{restatable}

There are $\ed(X, Z)$ edits that transform $\oAGw(X, X)$ into $\oAGw(X, Z)$, and each edit affects a constant number of subgraphs $G_i$.
Therefore, applying \cref{lm:box-with-one-string-changing} for all affected subgraphs takes time $\tOh(k^2)$ in total.
Between two consecutive affected subgraphs $G_i$ and $G_j$, we have $D'_{i, j - 1} = D_{i, j - 1}$, and thus it is sufficient to store some range composition data structure over $(D_{i, i + 1})_{i=0}^{m-1}$ to quickly propagate between subsequent affected subgraphs.

As the information we maintain dynamically regards $\oAGw(X, X)$ and does not involve the string~$Z$, we may generalize the problem we are solving.
All we need upon a query is an alignment of $X$ onto $Z$ of \emph{unweighted} cost at most $k$.
Such an alignment can be obtained, for example, from a dynamic bounded unweighted edit distance algorithm~\cite{GK24} in $\tOh(k)$ time per update.
Alternatively, it is sufficient to have $\tOh(1)$-time equality tests between fragments of $X$ and $Z$.
In this case, we can run the classical Landau--Vishkin~\cite{LV88} algorithm from scratch after every update in $\tOh(k^2)$ time to either get an optimal unweighted alignment or learn that $\wed(X, Z) \ge \ed(X, Z) > k$.
Efficient fragment equality tests are possible in a large variety of settings~\cite{CKW20} including the dynamic setting~\cite{MSU97,ABR00,GKKLS18,KK22}, so we will not focus on any specific implementation.
Instead, we assume that the string $Z$ is already given as a sequence of $u \le k$ edits of an unweighted alignment transforming $X$ into $Z$, and our task is to find a $w$-optimal alignment.
We are now ready to formulate the main theorem of this section.

\begin{restatable}{theorem}{lmsimplecompletealgorithm}\label{lm:simple-complete-algorithm}
    There is a dynamic data structure that, given a string $X \in \Sigma^*$, a threshold $k \ge 1$, and $\Oh(1)$-time oracle access to a normalized weight function $w \colon \Esigma^2 \to \RR_{\ge 0}$, can be initialized in $\Oh((|X| + 1) k \log^2 (|X| + 2))$ time and allows for the following operations:
    \begin{itemize}
        \item Apply a character edit to $X$ in $\Oh(k^2 \log^2 (|X| + 2))$~time.
        \item Given $u \le k$ edits transforming $X$ into a string $Z \in \Sigma^*$, compute $\wed_{\le k}(X, Z)$ in $\Oh(k\cdot (u+ \log (|X| + 2))\cdot \log (|X| + 2))$ time.
            Furthermore, if $\wed(X, Z) \le k$, the algorithm returns the breakpoint representation of a $w$-optimal alignment of $X$ onto $Z$.
    \end{itemize}
\end{restatable}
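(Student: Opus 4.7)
The plan is to implement the high-level strategy outlined at the start of the section, assembling three ingredients: a family of per-subgraph oracles, a Monge range-product data structure over the chain of boundary matrices, and a dynamic equality oracle for fragments of $X\cdot Z$. Choose breakpoints $0=x_0<x_1<\cdots<x_m=|X|$ with $x_{i+1}-x_i=\Theta(k)$, so $m=\Theta((|X|+1)/k)$, and for each $i\in \fragmentco{0}{m}$ let $G_i$ be the rectangular subgraph of $\oAGw(X,X)$ with horizontal extent $\fragmentcc{x_i}{x_{i+1}}$ and vertical extent $\fragmentcc{\max(x_i-k,0)}{\min(x_{i+1}+k,|X|)}$; each $G_i$ has size $\Theta(k)\times\Theta(k)$ and, by \cref{lm:paths_dont_deviate_too_much}, their union contains every length-$k$ path in $\oAGw(X,X)$. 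For each $G_i$, I invoke \cref{lm:box-with-one-string-changing} on the two length-$\Theta(k)$ fragments of $X$ that form its sides, spending $\Oh(k^2\log^2 k)$ time, and extract the Monge matrix $D_{i,i+1}$ of pairwise distances from $V_i$ to $V_{i+1}$ via \cref{lm:klein-upgraded} within the same budget. Summing over the $m$ subgraphs yields the claimed $\Oh((|X|+1)k\log^2(|X|+2))$ preprocessing time.

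On top of the sequence $(D_{i,i+1})_{i=0}^{m-1}$, I maintain a segment tree in which every internal node stores the min-plus product of the matrices in its subtree. By \cref{thm:smawk}, multiplying two $\Theta(k)\times\Theta(k)$ Monge matrices takes $\Oh(k^2)$ time and preserves the Monge property, so the tree fits in the preprocessing budget. A character edit in $X$ alters $\Oh(1)$ of the $G_i$; for each affected index I rerun \cref{lm:box-with-one-string-changing}, recompute $D_{i,i+1}$, and refresh the $\Oh(\log m)$ segment-tree ancestors by redoing their SMAWK min-plus products, totaling $\Oh(k^2\log^2(|X|+2))$ per update.

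Upon a query with $u\le k$ edits describing $Z$, I first locate the $\Oh(u)$ subgraphs whose restriction to $\oAGw(X,Z)$ differs from the one in $\oAGw(X,X)$ by tracing each edit through the implicit correspondence between $X$ and $Z$, using the dynamic fragment-equality oracle to describe the altered fragments of $Z$ as substrings of $X\cdot Z$. I then compute $D'_{0,m}$ from left to right, maintaining a $(1\times |V'_i|)$-row vector of distances from $(0,0)$ to the current $V'_i$: inside each maximal run of unaffected indices, I decompose its product matrix in the segment tree into $\Oh(\log m)$ precomputed Monge matrices and multiply the current row vector by each of them in $\Oh(k)$ time via \cref{thm:smawk}; inside each affected $G'_i$, I invoke the boundary-row primitive of \cref{lm:box-with-one-string-changing} on the substituted fragment of $Z$, spending $\Oh((\ed_i+1)k\log k)$ and using $\sum_i\ed_i=\Oh(u)$. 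Restricting $V'_m$ to $\set{(|X|,|Z|)}$ at the last step gives $\wed_{\le k}(X,Z)$. If this value is at most $k$, the breakpoint representation of a witness is reconstructed by retracing: inside each affected box via the path-reconstruction part of \cref{lm:box-with-one-string-changing}, and inside each unaffected segment-tree range by a standard Monge-matrix recursive descent that resolves each of the $\Oh(\log m)$ levels in $\Oh(k)$ time. Summing the three contributions gives the advertised $\Oh(k(u+\log(|X|+2))\log(|X|+2))$ bound.

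The main technical obstacle is the bookkeeping needed to identify the affected boxes and to feed each application of \cref{lm:box-with-one-string-changing} a consistent description of the modified side as a fragment of $X\cdot Z$, all while correctly handling edits near the boundary between two boxes (which may affect both). A secondary subtlety concerns path reconstruction through the segment tree: the Monge property guarantees that each descent level costs $\Oh(k)$ via SMAWK-style row-minima lookups, but one must carefully amortize the $\Oh(\log(|X|+2))$ depth against the $u+1$ unaffected ranges and the $\Oh(u)$ affected boxes so as not to exceed the query-time target. Once these points are settled, the three running-time analyses combine to yield the theorem.
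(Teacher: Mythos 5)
Your proposal follows essentially the same route as the paper's proof: a chain of $\Theta(k)\times\Theta(k)$ boxes along the diagonal of $\oAGw(X,X)$, each preprocessed via \cref{lm:klein-upgraded} and \cref{lm:box-with-one-string-changing}, a Monge range-product structure over the boundary matrices $D_{i,i+1}$, left-to-right row-vector propagation at query time (SMAWK over unaffected runs, \cref{lm:box-with-one-string-changing} in affected boxes), and backtracking for the alignment.

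Two places where your write-up under-delivers relative to what the proof actually needs. First, correctness of the query hinges on the graph $G'$ obtained by shifting the boxes under the $u\le k$ edits: you must argue that $G'$ still contains an optimal (cost at most $k$) path of $\oAGw(X,Z)$ and that the shifted column sets $V'_i$ remain separators, so that $\bigotimes_i D'_{i,i+1}$ really computes $\wed_{\le k}(X,Z)$. Your only stated justification (\cref{lm:paths_dont_deviate_too_much} applied to $\oAGw(X,X)$) does not address this; the paper sidesteps the issue by taking the vertical extent $\pm 2k$ rather than your $\pm k$, so that after shifts of at most $k$ the boxes still cover the full width-$k$ band of $\oAGw(X,Z)$. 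With your $\pm k$ boxes the statement can still be salvaged, but only via a tighter deviation argument (the sum of the offset deviations from both endpoints is bounded by the path cost), which you neither state nor prove. Second, your update procedure assumes a fixed set of breakpoints and a static segment tree whose leaves are merely ``refreshed''; but character insertions and deletions change $|X|$, make fragment lengths drift out of $\fragmentco{k}{2k}$, and change the number of boxes, so the partition itself must be locally rebuilt and the range-product structure must support split/concatenate (the paper uses a self-balancing tree and repartitions a constant number of neighboring fragments per update). These, plus the untreated corner case $|X|\le k$ and capping $w$ at $k+1$ so that $\oAGw$ is defined, are exactly the ``bookkeeping'' you defer, and they are where the paper's proof does real work.
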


\begin{proof}[Proof Sketch]
    At the initialization phase, we decompose $\oAGw(X, X)$ into subgraphs $G_i$ and, for each subgraph $G_i$, initialize the algorithm of \cref{lm:box-with-one-string-changing} and compute $D_{i-1, i}$ using \cref{lm:klein-upgraded}.
    We also build a dynamic range composition data structure over $(D_{i, i + 1})_{i=0}^{m-1}$, implemented as a self-balancing binary tree.
    Given a range $\fragmentco{a}{b}$, in $\Oh(\log n)$ time, this data structure returns (pointers to) $\ell = \Oh(\log n)$ matrices $\row{S}{1}{\ell}$ such that $D_{a, b} = S_1 \otimes \cdots \otimes S_{\ell}$.

    When an update to $X$ comes, it affects a constant number of subgraphs $G_i$, which we locally adjust and recompute the initialization for in $\tOh(k^2)$ time.

    Upon a query, we are tasked with computing $D'_{0, 1} \otimes \cdots \otimes D'_{m-1, m}$.
    We locate the $\Oh(u)$ affected subgraphs $G_i$ and process them one-by-one.
    We use \cref{lm:box-with-one-string-changing} to multiply the currently computed row $v^T = D'_{0, 1} \otimes \cdots \otimes D'_{i-2, i-1}$ by $D'_{i-1, i}$.
    The algorithm then uses the range composition data structure over $(D_{i, i + 1})_{i=0}^{m-1}$ to obtain $\ell = \Oh(\log n)$ matrices $\row{S}{1}{\ell}$ that correspond to the segment of matrices between the current affected subgraph $G_i$ and the next affected subgraph $G_j$.
    We use SMAWK algorithm (\cref{thm:smawk}) to multiply $v^T$ by matrices $\row{S}{1}{\ell}$.
    After processing all the matrices, we obtain $D'_{0, 1} \otimes \cdots \otimes D'_{m-1, m}$, the only entry of which is equal to $\wed(X, Z)$ if $\wed_{\le k}(X, Z) \neq \infty$.
    The applications of \cref{lm:box-with-one-string-changing} work in $\Oh(k u \log n)$ time in total.
    Each of the $\Oh(u)$ blocks of unaffected subgraphs is treated by invoking SMAWK algorithm (\cref{thm:smawk}) $\Oh(\log n)$ times for a total of $\Oh(k u \log n)$ time.

    See \cref{app:simple-algorithm:simple-complete-algorithm} for a formal proof including the alignment reconstruction procedure.
\end{proof}

\noindent
We complete this section with a simple corollary improving the query time if \mbox{$\wed(X, Z) \ll k$}.

\begin{corollary}\label{lm:simple-complete-algorithm-upgraded}
    There is a dynamic data structure that, given a string $X \in \Sigma^*$, a threshold $k \ge 1$, and $\Oh(1)$-time oracle access to a normalized weight function $w \colon \Esigma^2 \to \RR_{\ge 0}$ can be initialized in $\Oh((|X| + 1) k \log^2 (|X| + 2))$ time and allows for the following operations:
    \begin{itemize}
        \item Apply a character edit to $X$ in $\Oh(k^2 \log^2 (|X| + 2))$ time.
        \item Given $u \le k$ edits transforming $X$ into a string $Z \in \Sigma^*$, compute $\wed_{\le k}(X, Z)$ in $\Oh(1 + (u + d)\cdot (u +\log (|X| + 2))\cdot \log (|X| + 2))$ time, where $d\coloneqq \min(\wed(X, Z), k)$.
            Furthermore, if $\wed(X, Z) \le k$, the algorithm returns the breakpoint representation of a $w$-optimal alignment of $X$ onto $Z$.
    \end{itemize}
\end{corollary}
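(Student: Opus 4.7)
The plan is to build on \cref{lm:simple-complete-algorithm} via \emph{exponential search} on the threshold, combined with a truncated query procedure; the data-structure layer (preprocessing and updates) is kept intact. Upon a query, we first dispose of the trivial case $u = 0$ (where $Z = X$ and $\wed(X, Z) = 0$) in $\Oh(1)$ time, accounting for the leading ``$1$'' in the stated bound. Otherwise, we iterate $i = 0, 1, 2, \ldots$, setting $k_i \coloneqq \min(2^i(u+1),\, k)$ and invoking a modified query with threshold~$k_i$; we stop as soon as it returns a finite value $d'$ (which necessarily equals $\wed(X, Z)$) or we reach $k_i = k$ with the answer still $\infty$ (in which case $\wed_{\le k}(X, Z) = \infty$).

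The modified query mirrors the algorithm of \cref{lm:simple-complete-algorithm} but, at every step, retains only the \emph{relevant} entries of each intermediate vector and matrix, namely those that could represent a distance of at most~$k_i$. By \cref{lm:paths_dont_deviate_too_much} applied inside $\oAGw(X, Z)$, for every boundary $V'_j$ this set forms a contiguous band of length $\Oh(k_i + u)$: the additive $u$ bounds the shift of the diagonal between $\oAGw(X, X)$ and $\oAGw(X, Z)$ caused by the $u$ input edits, while $k_i$ bounds the extra deviation of paths of cost at most~$k_i$. Consequently, every SMAWK multiplication against a precomputed matrix $D_{j, j+1}$ for an unaffected subgraph reduces to SMAWK on an $\Oh(k_i + u) \times \Oh(k_i + u)$ Monge submatrix and runs in $\Oh(k_i + u)$ time, and each call to a refined variant of \cref{lm:box-with-one-string-changing} for an affected subgraph runs in $\Oh((\ed(Y, Y') + 1)(k_i + u) \log(|X|+2))$ time. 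Propagating these savings through the analysis of \cref{lm:simple-complete-algorithm}, the $i$-th iteration costs $\Oh((u + k_i)(u + \log(|X|+2)) \log(|X|+2))$ time.

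I expect the main obstacle to be the refined version of \cref{lm:box-with-one-string-changing}. It should be obtained by restricting the planar shortest-path data structure of \cref{lm:klein-upgraded} (used inside the proof of that lemma) to an $\Oh(k_i + u)$-wide diagonal band of the subgraph's alignment graph; by \cref{lm:paths_dont_deviate_too_much}, every path of cost at most $k_i$ stays inside this band, so no entry of $v^T \otimes \BMw(X, Y')$ below the threshold is missed. Granted the refinement, summing over iterations is routine: the contributions $k_i$ form a geometric series totalling $\Oh(d)$, and the contributions $u$ sum to $u \cdot \Oh(1 + \log\max(1, d/u)) = \Oh(u + d)$, using $\log(d/u) \le d/u$ for $d \ge u$. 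Together with the $\Oh(1)$ for the trivial case, this yields the claimed bound $\Oh(1 + (u+d)(u+\log(|X|+2))\log(|X|+2))$. Finally, to recover the breakpoint representation we re-run the successful iteration with reconstruction enabled, mirroring the reconstruction step in \cref{lm:simple-complete-algorithm}; since the breakpoint representation of a $w$-optimal alignment has size $\Oh(d+1)$, it is produced within the same time budget.
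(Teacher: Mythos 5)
Your proposal is workable, but it takes a genuinely different and considerably heavier route than the paper. The paper proves the corollary as a two-line black-box reduction: it maintains the data structure of \cref{lm:simple-complete-algorithm} separately for every threshold $1,2,4,\ldots,2^{\ceil{\log k}}$ (the geometric sums keep the preprocessing at $\Oh((|X|+1)k\log^2(|X|+2))$ and the update at $\Oh(k^2\log^2(|X|+2))$), and upon a query it probes these instances for increasing thresholds starting at $2^{\ceil{\log u}}$ until a finite answer appears; the geometric sum of query costs immediately gives the stated bound. You instead keep a single instance built for threshold $k$ and simulate queries at smaller thresholds $k_i$ by restricting all intermediate vectors and matrices to an $\Oh(k_i+u)$-wide diagonal band. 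This is sound in principle (out-of-band entries exceed $k_i$ by \cref{lm:paths_dont_deviate_too_much}, and contiguous submatrices of Monge matrices are Monge, so SMAWK and the Klein oracle apply), but it forces you to re-open the internals of \cref{lm:simple-complete-algorithm} and, as you note, of \cref{lm:box-with-one-string-changing}. One concrete cost your sketch does not account for: the query of \cref{lm:box-with-one-string-changing} as stated spends $\Oh((\ed(Y,Y')+1)\cdot|Y|)=\Theta((\ed(Y,Y')+1)\cdot k)$ time just to recompute an optimal unweighted alignment of $Y$ onto $Y'$ via \cref{lm:baseline-wed} and to derive the phrase decomposition of $Y'$; band-restricting the planar data structure does nothing about this, and at $\Theta(k)$ per affected subgraph it would overshoot your claimed $\Oh((\ed(Y,Y')+1)(k_i+u)\log(|X|+2))$ bound whenever $d\ll k$. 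The fix is easy in your setting -- the $u$ input edits are given, so the refined variant should accept them and build the phrase decomposition directly instead of recomputing it -- but it must be said. In exchange for these complications, your approach only saves the $\Oh(\log k)$-fold replication of the data structure, which the paper's geometric-threshold trick already absorbs into constant factors; so the paper's route is the simpler one, while yours is a valid alternative that keeps a single copy of the structure.
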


\begin{proof}
    We maintain the data structures of \cref{lm:simple-complete-algorithm} for thresholds $1, 2, 4, \ldots, 2^{\ceil{\log k}}$.
    Upon a query, we query the data structures for subsequent thresholds starting from $2^{\ceil{\log u}}$ until we get a positive result or reach $2^{\ceil{\log k}}$.
\end{proof}

As discussed earlier, in a variety of settings, upon a query, in $\tOh(k^2)$ time we can compute an optimal unweighted alignment of $X$ onto $Z$ or learn that $\wed(X, Z) \ge \ed(X, Z) > k$.
Therefore, we may assume that whenever \cref{lm:simple-complete-algorithm-upgraded} is queried, we have $u = \ed(X, Z) \le k$.
In this case, the query time complexity can be rewritten as $\Oh(1 + \min(\wed(X, Z), k) \cdot (\ed(X, Z)+ \log(|X| + 2))\cdot \log (|X| + 2))$.

Due to the robustness of the setting of \cref{lm:simple-complete-algorithm-upgraded}, there are many problems beyond dynamic weighted edit distance that can be solved using \cref{lm:simple-complete-algorithm-upgraded}.
For example, given strings $X, Y_0, \ldots, Y_{m-1} \in \Sigma^{\le n}$, we can find $\wed_{\le k}(X, Y_i)$ for all $i \in \fragmentco{0}{m}$ in $\tOh(nm + nk + mk^2)$ time, compared to $\tOh(m \cdot (n + \sqrt{nk^3}))$ time arising from $m$ applications of the optimal static algorithm~\cite{CKW23}.

\section{Trade-Off Algorithm: Technical Overview}\label{sec:full-algorithm}

In \cref{app:full-algorithm}, we generalize the result of \cref{lm:simple-complete-algorithm} in two ways.
First, we extend the set of updates the string $X$ supports from character edits to \emph{block edits}, which include substring deletions and copy-pastes.
More importantly, rather than having $\tOh(nk)$ preprocessing time and $\tOh(k^2)$ query time, we allow for a complete trade-off matching the lower bound of \cref{thm:lb_main_simple}, i.e., $\tOh(n k^{\gamma})$ preprocessing and $\tOh(k^{3 - \gamma})$ update time for any $\gamma \in [0, 1]$.

A major challenge for $\gamma < 1$ is that we cannot preprocess the whole width-$\Theta(k)$ band around the main diagonal of $\oAGw(X, X)$.
In the proof of \cref{lm:simple-complete-algorithm}, we cover the band with $\Theta(k) \times \Theta(k)$-sized subgraphs $G_i$ and preprocess each of them in $\tOh(k^2)$ time for a total of $\tOh(nk)$.
To achieve $\tOh(n k^{\gamma})$-time preprocessing, we are forced to instead use $\Theta(k^{2 - \gamma}) \times \Theta(k^{2 - \gamma})$-sized subgraphs $G_i$ and still spend just $\tOh(k^2)$ preprocessing time per subgraph.

To remedy the shortage of preprocessing time, we use the ideas behind the recent static algorithms \cite{CKW23,GK24}.
At the expense of a logarithmic-factor slowdown, they allow us to answer queries only for ``repetitive'' fragments of $X$ rather than the whole $X$.
More precisely, we are tasked to answer queries involving fragments $\hX$ of $X$ satisfying $\sed(\hX) \le k$, where the \emph{self-edit distance} $\sed(\hX)$ is the minimum unweighted cost of an alignment of $\hX$ onto itself that does not match any character of $\hX$ with itself.
We can use this to our advantage: the subgraphs $G_i$ corresponding to more ``repetitive'' fragments $X_i$ allow for more thorough preprocessing in $\tOh(k^2)$ time since we can reuse some computations.
On the other hand, upon a query, we can afford to spend more time on subgraphs $G_i$ with less ``repetitive'' fragments~$X_i$ as their number is limited due to $\sed(\hX) \le k$.

This approach requires the repetitiveness measure to be super-additive: $\sum_i \sed(X_i) \le \sed(\hX)$.
Unfortunately, self-edit distance is sub-additive: $\sum_i \sed(X_i) \ge \sed(\hX)$.
To circumvent this issue, we introduce a better-suited notion of repetitiveness we call \emph{$k$-shifted self-edit distance} $\sedk(\hX)$: a relaxed version of $\sed(\hX)$ allowed to delete up to $k$ first characters of $\hX$ and insert up to $k$ last characters of $\hX$ for free.
In contrast to self-edit distance, the $k$-shifted variant $\sedk$ satisfies \mbox{$\sum_i \sedk(X_i) \le \sed(\hX)$ if $\sed(\hX) \le k$.}

By using the value of $\sedk(X_i)$ to determine the level of preprocessing performed on $G_i$, upon a query, similarly to \cref{lm:simple-complete-algorithm}, we have $\Oh(k)$ ``interesting'' subgraphs $G_i$ (subgraphs that are either affected by the input edits or satisfy $\sedk(X_i) > 0$) that we process one-by-one in time $\tOh((u_i + \sedk(X_i)) \cdot k^{2 - \gamma})$ each using a generalization of \cref{lm:box-with-one-string-changing}.
The super-additivity property of $\sedk$ guarantees that this sums up to a total of $\tOh(k^{3 - \gamma})$.
The remaining subgraphs $G_i$ are not affected by the input edits and satisfy $\sedk(X_i) = 0$, which means that $X_i$ has a period of at most $k$.
For such subgraphs $G_i$, we can afford the complete preprocessing and, upon a query, processing such subgraphs in chunks in $\tOh(k^2)$ time in total precisely as in the proof of \cref{lm:simple-complete-algorithm}.
Furthermore, similarly to \cref{lm:simple-complete-algorithm}, all the information the data structure stores is either local or cumulative, and thus block edits can be supported straightforwardly.
This rather simple high-level approach runs into several low-level technical complications that are discussed in detail in \cref{app:full-algorithm}.

\section{Lower Bounds: Technical Overview}\label{sec:lower-bounds}
As discussed in \cref{sec:introduction}, our conditional lower bounds stem from the following result:
\thmstaticlb*

The reduction provided in~\cite{CKW23} is insufficient because the instances it produces satisfy $\ed(X,Y)=\Theta(k)$ rather than $\ed(X,Y)\le 4$.
Still, we reuse hardness of the following problem:

\begin{problem}[Batched Weighted Edit Distance~\cite{CKW23}]\label{prob:batched}
 Given a \emph{batch} of strings $X_1,\ldots,X_{m}\in \Sigma^x$, a string $Y\in \Sigma^y$, a threshold $k \in \Rz$, and oracle access to a weight function $w:\Esigma^2\to \Rz$, decide if $\min_{i=1}^m\wed(X_i,Y) \le k$.
\end{problem}

The hard instances of \cref{prob:batched}, which cannot be solved in $\Oh(x^{2-\delta}\sqrt{m})$ time for any $\delta>0$ assuming the APSP Hypothesis, satisfy many technical properties (see \cref{thm:BatchedWEDLBReduction}) including $h\coloneqq \max_{i=1}^{m-1} \hd(X_i,X_{i+1})=\Oh(x/m)$, where $\hd(X_i, X_{i+1})$ is the Hamming distance, i.e., the number of positions in which $X_i$ differs from $X_{i + 1}$.

The approach of~\cite[Section 7]{CKW23} is to construct strings $\bar{X}$ and $\bar{Y}$ that take the following form if we ignore some extra gadgets: $\bar{X} = X_1 Y X_2 \cdots X_{m-1} Y X_m$ and $\bar{Y} = X_0^{\bot} Y X_1^\bot \cdots X_{m-1}^\bot Y X_m^\bot$, where $X_i^\bot\in \Sigma^x$ is chosen so that $\wed(X_i^\bot,X_{i-1}) = \wed(X_i^\bot,X_{i})=h$.
The ignored gadgets let us focus on alignments $\cA_i : \bar{X} \onto \bar{Y}$ that satisfy the following properties for $i\in \fragmentcc{1}{m}$:
\begin{itemize}
    \item the fragment $X_i$ is aligned with the copy of $Y$ in $\bar{Y}$ located between $X_{i-1}^\bot$ and $X_{i}^\bot$;
    \item the $m-1$ copies of $Y$ in $\bar{X}$ are matched with the remaining $m-1$ copies of $Y$ in $\bar{Y}$;
    \item all the characters within the fragments $X_{i-1}^{\bot}$ and $X_{i}^\bot$ of $\bar{Y}$ are inserted;
    \item for $j\ne i$, the fragment $X_j$ is aligned with $X_{j-1}^\bot$ if $j<i$ and with $X_{j}^\bot$ if $j>i$.
\end{itemize}
This ensures that the cost of $\cA_i$ is equal to a baseline cost (independent of $i$) plus $\wed(X_i,Y)$.
\newcommand{\rdol}{\textcolor{red}{\$}}

Our reduction uses three types of `$X$ gadgets': $X_i$, $X^\bot_i$, and $X^\top_i$, with the latter two playing similar roles.
We also introduce a new symbol $\rdol$ that is extremely expensive to delete, insert, or substitute with any other symbol.
Modulo some extra gadgets, our strings are of the following form:
\begin{align*}
\tilde X &= X^\top_0 Y X^\bot_0 Y \rdol \bigodot_{i=1}^{m-1} \Big ( X_i Y X^\top_i Y X^\bot_i Y \Big ) X_m Y X^\top_m Y X^\bot_m \rdol, \\
\tilde Y &= \textcolor{red}{\$} X^\top_0 Y X^\bot_0 Y \bigodot_{i=1}^{m-1} \Big ( X_i Y X^\top_i Y X^\bot_i Y \Big ) X_m Y X^\top_m \textcolor{red}{\$} Y X^\bot_m.
\end{align*}
Notice that removing the two $\rdol$ symbols from $\tilde X$ or from $\tilde Y$ results in the same string, so we have $\ed(\tilde{X},\tilde{Y})\le4$.
Due to the expensive cost for editing a $\rdol$, any optimal alignment must match the two $\rdol$ symbols in $\tilde X$ with the two $\rdol$ symbols in $\tilde Y$, deleting a prefix of $\tilde X$ and a suffix of $\tilde Y$ for some predictable cost.
Thus, we can focus our analysis on $\wwed{w}{\hat X}{\hat Y}$, where 
\[
    \hX = X_1 Y X_1^\top Y X_1^\bot Y \cdots X_m Y X^\top_m Y X^\bot_m \quad\text{and}\quad
    \hY = X^\top_0 Y X^\bot_0 Y X_1 Y \cdots X_{m-1}^\bot Y X_m Y X^\top_m.
\]
Observe that both $\hX$ and $\hY$ consists of `$X$ gadgets' interleaved with $Y$, and that $\hY$ contains one more copy of $Y$ and one more `$X$ gadget'. 
Analogously to~\cite{CKW23}, the ignored extra gadgets let us focus on the alignments $\cA_i : \hX \onto \hY$ satisfying the following properties for $i\in \fragmentcc{1}{m}$:
\begin{itemize}
    \item the fragment $X_i$ in $\hX$ is aligned with the copy of $Y$ in $\hY$ located between $X_{i-1}^\bot$ and $X_{i-1}^\top$;
    \item the $3m-1$ copies of $Y$ in $\hX$ are matched with the remaining $3m-1$ copies of $Y$ in $\hY$;
    \item all characters within the fragments $X_{i-1}^\bot$ and $X_{i-1}^\top$ of $\hY$ are inserted;
    \item the remaining `$X$ gadgets' in $\hX$ are aligned with the remaining `$X$ gadgets' in $\hY$.
\end{itemize}
To perfectly mimic~\cite{CKW23}, we should ensure that all the `$X$ gadgets' that we can possibly align are at weighted edit distance $h$.
We only managed to construct $X_i^\bot$ and $X_i^\top$ so that:
\begin{itemize}
    \item $\wed(X_i,X_{i-1}^\top) = \wed(X_i,X_{i-1}^\bot) = \wed(X_i^\top,X_i)=\wed(X_i^\bot,X_i)=2h$, and
    \item $\wed(X_i^\top, X_{i-1}^\bot)=\wed(X_{i}^\bot, X_{i}^\top)=4h$.
\end{itemize}
Although we have two different distances between the relevant pairs of `$X$ gadgets', it is easy to see the number of pairs of either type is constant across all the alignments $\cA_i$.
This is sufficient to guarantee that the cost of $\cA_i$ is equal to some baseline cost plus $\wed(X_i,Y)$. 
Moreover, as in~\cite{CKW23}, the costs of alignments $\cA_i$ is $\Oh(x+mh)=\Oh(x)$ and, if we denote $n\coloneqq |\tilde X|=|\tilde Y|$, then the lower bound derived from the lower bound for \cref{prob:batched} excludes running times of the form $\Oh(x^{2-\delta}\sqrt{m})=\Oh(x^{2-\delta}\sqrt{n/x})=\Oh(\sqrt{nx^{3-2\delta}})$.

\paragraph*{Lower Bounds for Dynamic Weighed Edit Distance}\label{sec:lower-bounds:dynamic}
From \cref{thm:static-lb} we derive two dynamic lower bounds, each justifying a different limitation of our algorithm given in \cref{thm:simple_tradeoff}.
Our first lower bound, formalized as \cref{lm:dynamic-lb-variable-k}, concerns our choice to fix a threshold $k$ at the preprocessing phase.
If, instead, we aimed for time complexities depending on the current value of $\wed(X, Y)$, then, for sufficiently large values of $\wed(X, Y)$, we could not achieve update times improving upon the static algorithm.

Our second dynamic lower bound justifies the specific trade-off between the preprocessing and update time in \cref{thm:simple_tradeoff}.
In simple words, we prove that, with preprocessing time $\Ohtilde(nk^{\gamma})$ for $\gamma \in [0.5,1)$, the best possible update time is $\Oh(k^{3 - \gamma-o(1)})$.
For that, we note that \cref{thm:static-lb} is based on a reduction from the Negative Triangle problem, which is \emph{self-reducible}: solving $m$ instances of bounded weighted edit distance from \cref{thm:static-lb} is hence, in general, $m$ times harder than solving a single instance. 
Given such $m$ instances $(X_0, Y_0), \ldots, (X_{m-1}, Y_{m-1})$, we initialize a dynamic algorithm with a pair of equal strings $\hX = \hY = \bigodot_{i=0}^{m-1} (X_i \cdot \fancysymbol)$, where $\fancysymbol$ is an auxiliary symbol that is very expensive to edit.
For $i \in \fragmentco{0}{m}$, in $\ed(X_i, Y_i) = \Oh(1)$ updates, we can transform the fragment $X_i$ of $\hY$ into $Y_i$ and retrieve $\wed_{\le k}(\hX, \hY) = \wed_{\le k}(X_i, Y_i)$.
By applying and reverting these updates for every $i \in \fragmentco{0}{m}$, we can thus find $\wed_{\le k}(X_i, Y_i)$ for all $i$.
If we pick $m \coloneqq n / k^{3 - 2\gamma-2\delta}$ for an arbitrary small constant $\delta > 0$, then the static lower bound requires $m \cdot (\sqrt{(n / m) k^3})^{1-o(1)} \ge (n k^{\gamma+\delta})^{1-o(1)}$ total time.
Our preprocessing time is asymptotically smaller, so, among $\Oh(m)$ updates, some must take $\Omega((\sqrt{(n / m) k^3})^{1-o(1)})=\Omega(k^{3-\gamma-\delta-o(1)})$ time.
See \cref{lm:dynamic-lb-fixed-k} for the formal proof.


\bibliography{refs}

\appendix

\section{Deferred Proofs from Section \ref{sec:simple-algorithm}}\label{app:simple-algorithm}

In this section we formally prove \cref{lm:box-with-one-string-changing,lm:simple-complete-algorithm}.

\subsection{Proof of Lemma~\ref{lm:box-with-one-string-changing}}\label{app:simple-algorithm:box-with-one-string-changing}

\lmboxwithonestringchanging*

\begin{proof}
\textbf{Preprocessing.}
    We iterate over $c \in \fragmentcc{0}{|Y|}$ and apply Klein's algorithm (\cref{thm:klein}) for $\oAGw(X, Y\fragmentco{\ell_c}{c})$ and $\oAGw(X, Y\fragmentco{c}{r_c})$, where $\ell_c = \max(0, c - 2^{e_c})$ and $r_c = \min(|Y|, c + 2^{e_c})$ are defined based on $2^{e_c}$ -- the largest power of $2$ that divides $c$ (with $e_0 = \ceil{\log |Y|})$.
    This costs $\Oh(|X| \cdot 2^{e_c} \log (|X|\cdot 2^e))$ time, for a total of $\Oh(|X| \cdot |Y| \log^2 (|X| + |Y|))$ across all $c \in \fragmentcc{0}{|Y|}$.

    We call a fragment $Y\fragmentco{i}{j}$ \emph{simple} if $j \le r_i$ or $i \ge \ell_j$.
    Note that, for every simple fragment $Y\fragmentco{i}{j}$, Klein's algorithm (\cref{thm:klein}) gives $\Oh(\log (|X| + |Y|))$-time oracle access to $\BMw(X, Y\fragmentco{i}{j})$.
    Moreover, an arbitrary fragment $Y\fragmentco{i}{j}$ can be decomposed into two simple fragments $Y\fragmentco{i}{c}$ and $Y\fragmentco{c}{j}$ by picking $c \in \fragmentcc{i}{j}$ that maximizes $e_c$.

\smallskip

\textbf{Distance Propagation Query.}
    We reinterpret the problem of computing $v^T \otimes \BMw(X, Y')$ as follows.
    Let $G$ be the graph $\oAGw(X, Y')$, to which a source node $r$ is added with outgoing edges to all input vertices of $X \times Y'$ with weights specified by $v$.
    The row $v^T \otimes \BMw(X, Y')$ consists of the distances from $r$ to the output vertices of $X \times Y'$ in $G$.

    If $|Y'| > 2 |Y|$, we have $\ed(Y, Y') = \Theta(|Y'|)$.
    In this case, we solve the problem via a standard dynamic programming in time $\Oh(|Y'| \cdot |X|)$ using the monotonicity property of \cref{lm:oAGw-properties}.

    Otherwise, assume that $|Y'| \le 2 |Y|$.
    We first determine the distances from $r$ to all the input vertices of $X \times Y'$ in $\Oh(|X| + |Y'|)$ time.
    Then, we compute $\ed(Y, Y')$ and an optimal unweighted alignment $\cA:Y\onto Y'$ in time $\Oh((\ed(Y, Y') + 1) \cdot |Y|)$ using \cref{lm:baseline-wed}.
    Given the alignment $\cA$, in $\Oh(|Y|)$ time, we decompose $Y'$ into at most $\ed(Y, Y')$ single characters and at most $\ed(Y, Y')+1$ fragments of $Y'$ exactly matching fragments of $Y$.
    Every fragment of $Y$ can be decomposed into at most two simple fragments, so we can compute a decomposition $Y'=\bigodot_{i=0}^{m-1} Y'_i$ into $m \le 3\ed(Y, Y') + 2$ phrases, where every phrase of length at least two matches a simple fragment of $Y$. 
     
    We process the decomposition of $Y'$ in $m$ steps.
    Our invariant is that, after $i \in \fragmentcc{0}{m}$ iterations, we have the distances from $r$ to the output vertices of $X \times (Y'_0 \cdots Y'_{i-1})$ in $G$.
    For $i = 0$, these are the top input vertices of $X \times Y'$, to which we already computed the distances; for $i = m$, these are exactly the output vertices of $X \times Y'$.
    Say, we already processed the first $i \in \fragmentco{0}{m}$ phrases of the decomposition of $Y'$, and now we process $Y'_i = Y'\fragmentco{y'_i}{y'_{i+1}}$.
    We already computed the distance from $r$ to the vertices $(|X|, y)$ for $y \in \fragmentcc{0}{y'_i}$.
    It remains to compute the distances from $r$ to the output vertices of $R \coloneqq \fragmentcc{0}{|X|} \times \fragmentcc{y'_i}{y'_{i+1}}$.
    If $Y'_i$ is a single character, we compute the distances in $\Oh(|X|)$ time using standard dynamic programming and the monotonicity property of \cref{lm:oAGw-properties}.
    Otherwise, $Y'_i$ matches a simple fragment of $Y$, and Klein's algorithm (\cref{thm:klein}) provides $\Oh(\log (|X| + |Y|))$-time oracle access to $\BMw(X, Y'_i)$. 
    Let $u$ be the already computed vector of distances from $r$ to the input vertices of $R$.
    We compute the distances from $r$ to the output vertices of $R$ as the product $u^T \otimes \BMw(X, Y'_i)$ in $\Oh((|X| + |Y'_i|) \log (|X| + |Y|))$ time using SMAWK algorithm (\cref{thm:smawk}).

    We process each of the $m = \Oh(\ed(Y, Y') + 1)$ phrases $Y'_i$ in $\Oh((|X| + |Y|) \log (|X| + |Y|))$ time, for a total of $\Oh((\ed(Y, Y') + 1) \cdot (|X| + |Y|) \log (|X| + |Y|))$.

\smallskip

\textbf{Path Query.}
    If $|Y'| > 2 |Y|$, we compute the path trivially in $\Oh(|Y'| \cdot |X|)$ time.
    Otherwise, we run the same algorithm as for distance propagation queries, this time setting the entries of $v$ to be equal to the distances in $\oAGw(X, Y')$ from $a$ to the remaining input vertices.
    Next, we compute the distances from $a$ to all output vertices of $X \times Y'$, then backtrack the path from $a$ to $b$ into individual subpaths inside $X \times Y'_i$, and finally use Klein's algorithm (\cref{thm:klein}) to find the paths inside these subgraphs.
\end{proof}

\subsection{Proof of Theorem~\ref{lm:simple-complete-algorithm}}\label{app:simple-algorithm:simple-complete-algorithm}

Before proving \cref{lm:simple-complete-algorithm}, we first formally introduce the dynamic range composition data structure that allows us to quickly propagate distances over a range of unaffected subgraphs~$G_i$.

\begin{definition}[Monge Matrix Multiplication Semigroup]\label{def:matrix-monoid}
    For a positive integer $k$, let $\matmon_{\le k}$ be the set containing a special symbol~$\absorbingmatrix$ and all real-valued Monge matrices with dimensions at most~$k$.
    We define the $\xotimes$ operator on the set $\matmon_{\le k}$.
    For two Monge matrices $A, B \in \matmon_{\le k}$, if they can be min-plus multiplied, we set $A \xotimes B \coloneqq A \otimes B$.
    Otherwise, we set $A \xotimes B \coloneqq \absorbingmatrix$.
    Furthermore, we set $A \xotimes \absorbingmatrix = \absorbingmatrix \xotimes A = \absorbingmatrix \xotimes \absorbingmatrix \coloneqq \absorbingmatrix$ for all Monge matrices $A \in \matmon_{\le k}$.
    We call $(\matmon_{\le k}, \xotimes)$ the \emph{$k$-restricted Monge matrix multiplication semigroup}.
\end{definition}

The elements of $(\matmon_{\le k}, \xotimes)$ can be multiplied in $\Oh(k^2)$ time using SMAWK algorithm (\cref{thm:smawk}).

\begin{lemma}[Folklore]\label{fct:dynamic-range-composition-ds}
    Let $(\Sem, \times)$ be a semigroup in which the product of every two elements can be computed in time $T$.
    There is a fully persistent \emph{Dynamic Range Composition} data structure for a list $L \in \Sem^+$ that supports the following operations:
    \begin{description}
        \item[Build:] Given a list $L \in \Sem^+$, build the dynamic range composition data structure for $L$ in $\Oh(T\cdot |L|)$ time.
        \item[Split:] Given a dynamic range composition data structure for a list $L \in \Sem^{\ge 2}$ and an index $i \in \fragmentco{1}{|L|}$, build the dynamic range composition data structures for $L\fragmentco{0}{i}$ and $L\fragmentco{i}{|L|}$ in $\Oh(T \log |L|)$ time.
        \item[Concatenate:] Given dynamic range composition data structures for lists $L, L' \in \Sem^+$, build the dynamic range composition data structure for $L \cdot L'$ in $\Oh(T \log (|L| + |L'|))$ time.
        \item[Query:] Given a dynamic range composition data structure for a list $L \in \Sem^+$ and indices $\ell, r \in \fragmentcc{0}{|L|}$ satisfying $\ell < r$, in time $\Oh(1 + \log |L|)$ return (references to) a sequence of $m = \Oh(1 + \log |L|)$ elements $\row{S}{0}{m-1} \in \Sem$ such that $L_{\ell} \times \cdots \times L_{r-1} = S_0 \times \cdots \times S_{m-1}$.
    \end{description}
\end{lemma}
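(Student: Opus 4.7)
The plan is to implement the data structure as a \emph{fully persistent} balanced binary tree whose leaves, read in-order, spell the list $L$, with every internal node caching the $\times$-product of all elements in its subtree. A concrete choice is a persistent weight-balanced binary tree, or a persistent randomized treap combined with path copying; either supports recursive Split and Concatenate in worst-case $O(\log n)$ pointer work while touching only $O(\log n)$ nodes along a single root-to-leaf path, which is the property we need to make persistence cheap.

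\textbf{Build} would proceed bottom-up on a balanced shape with $|L|-1$ internal nodes, each computing its cached product from the cached products of its two children using one $\times$-operation; this totals $O(T \cdot |L|)$ time. \textbf{Split} at index $i$ follows the textbook recursive split on the chosen balanced tree: descend along the search path for $i$, detach the subtrees to the left and right of the path, and reassemble them into two balanced trees; every new or rotated node along the $O(\log |L|)$ nodes on the touched path has its cached product recomputed with one $\times$-operation, giving $O(T \log |L|)$ total. With path copying we clone exactly those $O(\log |L|)$ nodes, so no shared node is ever overwritten and persistence is obtained with no asymptotic overhead. \textbf{Concatenate} is symmetric, running in $O(T \log(|L|+|L'|))$ by the same argument.

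\textbf{Query} on $[\ell, r)$ is the classical top-down range decomposition: descend the tree searching simultaneously for the leaves at ranks $\ell$ and $r-1$. Once the two searches diverge at some node, continue each side independently, and at every node below the divergence point record a pointer to the cached product of the child that lies entirely inside $[\ell, r)$. This collects $m = O(1 + \log |L|)$ subtree roots whose leaves, concatenated left-to-right, form $L\fragmentco{\ell}{r}$; by associativity of $\times$, the cached products at these roots multiplied in left-to-right order equal $L_\ell \times \cdots \times L_{r-1}$. The traversal visits $O(1+\log|L|)$ nodes and performs no $\times$-operations, matching the claimed query bound.

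The main pitfall worth flagging is that the underlying balanced-tree machinery must provide \emph{worst-case} (not amortized) $O(\log n)$ rebalancing, since amortized structures such as splay trees cannot be naively combined with path-copying persistence. Weight-balanced trees, $2$-$3$-$4$ trees, and treaps with path copying all satisfy this, so all bounds hold unconditionally, and the semigroup is used only as a black box for the $\times$-operation; in particular we never rely on commutativity, only on associativity to justify the Query correctness.
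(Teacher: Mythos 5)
Your proposal matches the paper's own proof in all essentials: a persistent balanced binary search tree storing cached subtree products, with build done bottom-up, split/concatenate recomputing products only along the $\Oh(\log|L|)$ touched (copied) nodes, and queries answered by returning the $\Oh(\log|L|)$ canonical subtree products covering the range, exactly as in a segment tree. The only cosmetic difference is that you store list elements at leaves whereas the paper stores them at every vertex, which changes nothing in the argument or the bounds.
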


\begin{proof}[Proof Sketch]
    The data structure maintains a purely functional self-balancing binary tree over the elements of the list $L$.
    We use indices $i \in \fragmentco{0}{|L|}$ as (implicit) keys for the binary tree.
    Each vertex $i$ stores some element $L_i$ and the product $L_{\ell} \times \cdots \times L_{r - 1}$, where $L\fragmentco{\ell}{r}$ is the range of keys in the subtree of vertex $i$.
    For the build operation, we build the self-balancing binary tree and compute the subtree products of all vertices in the bottom-up manner.
    It takes $\Oh(|L| + T \cdot |L|)$ time.

    Split and concatenate operations are implemented using the self-balancing binary tree split and join operations.
    Whenever a new vertex is created, we immediately compute the value in it as the product of the values in its children and the local element $L_i$.
    This increases the time complexity of each self-balancing binary tree operation by a factor of $T$.
    For a query, we return the minimum cover of the queried segment in logarithmic time similarly to how queries to segment trees are handled.
\end{proof}

We now prove \cref{lm:simple-complete-algorithm}.

\lmsimplecompletealgorithm*

\begin{proof}
    As we are focused on computing $\wed_{\le k}$, we may cap the weight function $w$ by $k+1$ from above and without loss of generality assume that we have $w \colon \Esigma^2 \to [0, k + 1]$.\footnote{Note that this is a necessary step because the augmented alignment graphs $\oAGw$ are only defined for capped weight functions.}
    Let $n \coloneqq |X|$.
    We first assume that $n > k$ holds throughout the lifetime of the data structure.
    Call a sequence $\row{X}{0}{m-1}$ where $X_i = X\fragmentco{x_i}{x_{i+1}}$ a \emph{$k$-partitioning} of $X$ if $x_0 = 0$, $x_m = n$, and $x_{i+1}-x_i \in \fragmentco{k}{2k}$ for all $i \in \fragmentco{0}{m}$.
    Fix a $k$-partitioning $\row{X}{0}{m-1}$ of $X$.
    Furthermore, define a sequence of fragments $(Y_i)_{i=0}^{m-1}$ such that $Y_i = X\fragmentco{y_i}{y'_{i+1}}$ for $y_i = \max(x_i - 2k, 0)$ and $y'_{i+1} = \min(x_{i+1} + 2k, \abs{X})$ for all $i \in \fragmentco{0}{m}$.
    Let $G_i$ be the subgraph of $\oAGw(X, X)$ induced by $\fragmentcc{x_i}{x_{i+1}}\times \fragmentcc{y_i}{y'_{i+1}}$.
    Denote by $G$ the union of all subgraphs $G_i$.
    For each $i \in \fragmentcc{0}{m}$, denote $V_i \coloneqq \setof{(x, y) \in V(G)}{x = x_i, y \in \fragmentcc{y_i}{y'_i}}$, where $y'_0 = 0$ and $y_m = \abs{X}$.
    Moreover, for $i, j \in \fragmentcc{0}{m}$ with $i \le j$, let $D_{i, j}$ denote the matrix of pairwise distances from $V_i$ to $V_j$ in $G$, where the rows of $D_{i, j}$ represent the vertices of $V_i$ in the decreasing order of the second coordinate, and the columns of $D_{i, j}$ represent the vertices of $V_j$ in the decreasing order of the second coordinate.

    Our dynamic data structure maintains a $k$-partitioning $\row{X}{0}{m-1}$ of $X$, the data structures of \cref{lm:klein-upgraded,lm:box-with-one-string-changing} for every pair $X_i, Y_i$, the dynamic range composition data structure (\cref{fct:dynamic-range-composition-ds}) over matrices $D_{i, i + 1}$ with $(\matmon_{\le 4k+1}, \xotimes)$ as the semigroup, and a self-balancing binary tree over $(x_i)_{i=0}^{m}$.

    \subparagraph*{Initialization.} We start the initialization phase by building an arbitrary $k$-partitioning $\row{X}{0}{m-1}$ of $X$.
    For every $i \in \fragmentco{0}{m}$, we initialize the algorithms of \cref{lm:klein-upgraded,lm:box-with-one-string-changing} for $X_i$ and $Y_i$.
    We then use \cref{lm:klein-upgraded} to build all the matrices $D_{i, i + 1}$ and initialize the dynamic range composition data structure over $(D_{i, i + 1})_{i=0}^{m-1}$.
    Finally, we build a self-balancing binary tree over $(x_i)_{i=0}^{m}$.
    In total, the initialization takes $\Oh(m \cdot k^2 \log^2 k + m \cdot k^2 + m) = \Oh(n k \log^2 n)$ time.

    \subparagraph*{Update.} An edit in $X$ corresponds to an edit in one of the fragments $X_i$ from the $k$-partitioning.
    For all $j \in \fragmentco{0}{m} \setminus \fragmentcc{i-2}{i+2}$, the subgraphs $G_j$ do not change.
    For $X\fragmentco{x_{\max(i - 2, 0)}}{x_{\min(i + 3, m)}}$, we perform the initialization from scratch and change the $k$-partitioning of $X$ accordingly in $\Oh(k^2 \log^2 n)$ time.
    This leads to a constant number of updates (splits and concatenations) to the dynamic range composition data structure taking $\Oh(k^2 \log n)$ time each.
    Finally, we minimally update the self-balancing binary tree over $(x_i)_{i=0}^{m}$ to reflect the new $k$-partitioning of $X$.
    This requires a constant number of joins, splits, and range updates.
    The total time complexity of the update is $\Oh(k^2 \log^2 n)$.

    \subparagraph*{Query.}
    Call the \emph{input edits} the $u$ edits transforming $X$ into $Z$ given to the algorithm.
    If $u = 0$, we have $X = Z$, and hence can return a trivial alignment of weight $0$.
    Otherwise, we analyze how the fragments $Y_i$ and the subgraphs $G_i$ would change if in $\oAGw(X, X)$ we would fix the first string and transform the second string into $Z$ using the input edits.
    Let $G'$ be the new graph $G$, let $V'_i$ be the new sets $V_i$, and let $D'_{i, j}$ be the new matrices $D_{i, j}$.
    As $G$ contains all vertices $(x, y)$ of $\oAGw(X, X)$ with $|x - y| \le 2k$, the graph $G'$ contains all vertices $(x, y)$ of $\oAGw(X, Z)$ with $|x - y| \le k$, and thus $\wed(X, Z) = \dist_{G'}((0, 0), (|X|, |Z|))$ holds if $\wed(X, Z) \le k$ due to \cref{lm:paths_dont_deviate_too_much}.
    Hence, it is sufficient to compute the shortest path from $(0, 0)$ to $(|X|, |Z|)$ in $G'$.
    Every input edit affects a constant number of subgraphs $G_i$, and thus in total $\Oh(u)$ subgraphs change.
    Let $F$ be the set of indices $i \in \fragmentco{0}{m}$ of all affected subgraphs $G_i$.
    Note that $\dist_{G'}((0, 0), (|X|, |Z|))$ is equal to the only entry of $D'_{0, m}$.
    Furthermore, for all $a, b, c \in \fragmentcc{0}{m}$ with $a \le b \le c$, we have $D'_{a, c} = D'_{a, b} \otimes D'_{b, c}$.
    We compute $D'_{0, m}$ as $\bigotimes_{i=0}^{m-1} D'_{i, i + 1}$.
    Let $v_i = \bigotimes_{j=0}^{i-1} D'_{j, j + 1}$ for all $i \in \fragmentcc{0}{m}$.
    Initially we have $v_0$ which is a row with a single value $0$.
    We want to compute $v_m$.
    Note that, for every $i \notin F$, we have $D'_{i, i + 1} = D_{i, i + 1}$.
    Therefore, for all pairs $a, b$ of consecutive elements in $F$, given $v_{a+1}$, we can compute $v_b$ in time $\Oh(k \log n)$ by querying the dynamic range composition data structure over $(D_{i, i + 1})_{i \in \fragmentco{0}{m}}$ and applying SMAWK algorithm (\cref{thm:smawk}).
    Furthermore, for every $a \in F$, given $v_a$ we can compute $v_{a+1}$ in time $\Oh((u_a + 1) \cdot k \log n)$ time using \cref{lm:box-with-one-string-changing}, where $u_a$ is the number of input edits that affect $Y_a$.
    As we have $\sum_{a \in F} u_a = \Oh(u)$ and $|F| = \Oh(u)$, the total time to compute $v_m$ is $\Oh(u k \log n)$.

    We computed $\wed_{\le k}(X, Z)$.
    If $\wed(X, Z) \le k$, we now compute a $w$-optimal alignment of $X$ onto~$Z$.
    For that, we backtrack the computation of $v_m$.
    For all $a \in F$, we can compute the vertices in $V_a$ and $V_{a+1}$ through which the optimal path goes.
    Given such nodes, we can compute the breakpoint representation of a shortest path between them using \cref{lm:klein-upgraded}.
    Furthermore, for all pairs $a, b$ of consecutive elements of $F$, given the vertices in $V_{a+1}$ and $V_b$, we can backtrack down the balanced tree of \cref{fct:dynamic-range-composition-ds} (similarly to the proof of \cref{lm:klein-upgraded}) and use \cref{lm:klein-upgraded} inside individual subgraphs $G_i$ to recover the breakpoint representation of a shortest path between the vertices in $V_{a+1}$ and $V_b$.
    As the total weight of the backtracked path is at most $k$, the whole path reconstruction procedure takes $\Oh(k \log^2 n)$ time.

    It remains to drop the assumption that $n > k$ holds at all time.
    If at some point $n\le k$, we drop the dynamic algorithm described above and, after every update, construct the data structure of \cref{lm:box-with-one-string-changing} for $Y = X$ in $\Oh(k^2 \log^2 (n + 2))$ time.
    Now, each query costs $\Oh(u k \log (n + 2))$ time.
    As soon as $n > k$, we initialize the regular dynamic algorithm in $\Oh(k^2 \log^2 (n + 2))$ time and continue.
\end{proof}

\section{Trade-Off Algorithm}\label{app:full-algorithm}

\subsection{\boldmath Preliminaries: \pillar Model, Self-Edit Distance, and $k$-Shifted Self-Edit Distance} \label{app:full-algorithm-preliminaries}

We start by defining some useful concepts for our trade-off algorithm.

\paragraph*{The PILLAR Model}
As discussed in \cref{sec:simple-algorithm}, it is helpful to have a data structure that provides $\tOh(1)$-time equality tests between fragments of $X$ and $Z$ to find an optimal unweighted alignment of $X$ onto $Z$ upon every query.
In this section, we will use such operations not only as a preparation step for a query but also inside the query computation.
For this purpose, we will use the \pillar model introduced by Charalampopoulos, Kociumaka, and Wellnitz~\cite{CKW20}.
The \modelname{} model provides an abstract interface to a set of primitive operations on
strings which can be efficiently implemented in different settings. Thus, an algorithm
developed using the \modelname{} interface does not only yield algorithms in the standard
setting, but also directly yields algorithms in diverse other settings,
for instance, fully compressed, dynamic, etc.

In the~\modelname{} model, we are given a family $\calX$ of strings to preprocess. The
elementary objects are fragments $X\fragmentco{\ell}{r}$ of strings $X \in \calX$.
Initially, the model gives access to each $X \in \calX$, interpreted as $X\fragmentco{0}{|X|}$.
Other fragments can be retrieved via an \extractOpName{} operation:
\begin{itemize}
    \item $\extractOpName(S, \ell, r)$: Given a fragment $S$ and positions $0 \leq \ell \leq r \leq |S|$, extract
    the fragment $S\fragmentco{\ell}{r}$, which is defined as $X\fragmentco{\ell' + \ell}{\ell' + r}$ if $S = X\fragmentco{\ell'}{r'}$ for
    $X \in \calX$.
\end{itemize}
Moreover, the \modelname model provides the following primitive queries for fragments $S$ and $T$~\cite{CKW20}:
\begin{itemize}
    \item $\lceOp{S}{T}$: Compute the length of~the longest common prefix of~$S$ and $T$.
    \item $\lcbOp{S}{T}$: Compute the length of~the longest common suffix of~$S$ and $T$.
    \item $\accOpName(S,i)$: Assuming $i\in \fragmentco{0}{|S|}$, retrieve the character $\accOp{S}{i}$.
    \item $\lenOpName(S)$: Retrieve the length $|S|$ of~the string $S$.
\end{itemize}
Observe that, in the original definition~\cite{CKW20}, the \modelname{} model
also includes an \(\ipmOpName\) operation that finds all exact occurrences of one fragment within another. 
We do not need the \(\ipmOpName\) operation in this
work.\footnote{We still use the name \modelname, and not {\tt PLLAR}, though.}
\pillar operations can be implemented to work in constant time in the static setting \cite[Section 7.1]{CKW20} and in near-logarithmic time in the dynamic setting \cite[Section 8]{KK22}, \cite[Section 7.2]{CKW22}.

We use the following version of the classic Landau--Vishkin algorithm \cite{LV88} in the \modelname{} model.

\begin{fact}[{\cite[Lemma 6.1]{CKW20}}]\label{lm:k2-ed}
    There is a \modelname{} algorithm that, given strings $X, Y \in \Sigma^*$, in time $\Oh(k^2)$ computes $k = \ed(X, Y)$ along with the breakpoint representation of an optimal unweighted alignment of $X$ onto $Y$.
\end{fact}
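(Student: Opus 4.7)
The plan is to implement the classical Landau--Vishkin algorithm \cite{LV88} in the \pillar model. The key observation is that the Landau--Vishkin dynamic program can be phrased as follows: for each diagonal $d \in \fragmentcc{-k}{k}$ and each edit count $e \in \fragmentcc{0}{k}$, let $L[d,e]$ be the largest $x$ such that there exists an alignment of $X\fragmentco{0}{x}$ onto $Y\fragmentco{0}{x-d}$ of unweighted cost at most $e$. The unweighted edit distance $\ed(X,Y)$ is then the smallest $e$ for which $L[|X|-|Y|, e] = |X|$ and $|X|-(|X|-|Y|) = |Y|$ (i.e., the DP reaches the bottom-right corner of the alignment graph).

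First I would set up an exponential search over the threshold, starting with $k'=1$ and doubling to $2k'$ whenever the algorithm fails to find an alignment within cost $k'$; since the successful threshold satisfies $k' \le 2k$, the total work across all attempts remains $\Oh(k^2)$. Within each attempt for threshold $k'$, I would fill in $L[d,e]$ for $d \in \fragmentcc{-k'}{k'}$ and $e \in \fragmentcc{0}{k'}$ in order of increasing $e$, using the standard recurrence
\[
    L[d,e] = \max\bigl(L[d-1, e-1],\; L[d, e-1]+1,\; L[d+1, e-1]+1\bigr) + \ell,
\]
where $\ell$ is the length of the longest common extension of $X$ and $Y$ starting at the appropriate positions on diagonal $d$ just after the last edit. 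That common extension is precisely a $\lceOp{X\fragmentco{x}{|X|}}{Y\fragmentco{x-d}{|Y|}}$ call on suffixes obtainable by \extractOpName{}, which costs $\Oh(1)$ in the \pillar model. The base case $L[d,-1] = -1$ (or suitable sentinel) and boundary conventions (treat $L$-values that would fall outside $\fragmentcc{0}{|X|}$ by clamping) are handled in the usual way. Over all $\Oh(k'^2)$ entries this gives $\Oh(k^2)$ time after summing the geometric series over the doubling iterations.

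For the breakpoint representation of the optimal alignment, I would store, along with each $L[d,e]$, a pointer to the predecessor entry (which of $L[d-1,e-1]$, $L[d,e-1]$, $L[d+1,e-1]$ was chosen) together with the $x$-position at which the last edit occurred. Once $k = \ed(X,Y)$ is detected (when $L[|X|-|Y|, k] = |X|$), I would backtrack through these pointers from $(|X|-|Y|, k)$ to $(0,0)$, emitting each edit as a breakpoint; runs of matches between two consecutive edits are then reconstructed implicitly from the recorded $x$-coordinates, so the breakpoint representation has size $\Oh(k)$ and is produced in $\Oh(k)$ additional time. The only nontrivial points are verifying correctness of the recurrence (standard, from \cite{LV88}) and checking that all primitive operations invoked -- \extractOpName, \lceOpName, \accOpName, \lenOpName -- fit the \pillar interface; since everything reduces to suffix comparisons, the $\Oh(k^2)$ \pillar-operation bound is immediate.
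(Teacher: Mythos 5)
Your proposal is correct and is essentially the proof behind the cited result: the paper itself only invokes \cite[Lemma 6.1]{CKW20}, whose argument is exactly this Landau--Vishkin diagonal DP with \lceOpName{} queries on suffixes in the \pillar model (compare also the paper's proof of \cref{lm:compute-sedk}, which adapts the same scheme). One cosmetic slip: under your own convention $d=x-y$ with $L[d,e]$ storing the furthest $x$, the $+1$'s in the recurrence are attached to the wrong terms (a deletion comes from diagonal $d-1$ and increments $x$, an insertion comes from diagonal $d+1$ and leaves $x$ unchanged), so the correct form is $L[d,e]=\max\bigl(L[d-1,e-1]+1,\;L[d,e-1]+1,\;L[d+1,e-1]\bigr)+\ell$; this does not affect the approach or the $\Oh(k^2)$ bound.
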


\paragraph*{Self-Edit Distance}
Recent developments \cite{CKW23,GJKT24,GK24} on edit distance allow reducing arbitrary instances of bounded edit distance to instances where the input strings are compressible.
The exact compressibility measure used is the so-called self-edit distance.

\begin{definition}[Self-Edit Distance {\cite{CKW23}}] \label{def:self-ed}
    Say that an alignment $\cA : X \onto X$ is a \emph{self-alignment}
    if $\cA$ does not align any character $X\position{x}$ to itself.
    We define the \emph{self-edit distance} of $X$ as $\sed(X) \coloneqq \min_\cA \ed_{\cA}(X, X)$,
    where the minimization ranges over all self-alignments $\cA : X \onto X$. In words,
    $\sed(X)$ is the minimum (unweighted) cost of a self-alignment.
\end{definition}

We can interpret a self-alignment as a
$(0, 0) \leadsto (|X|, |X|)$ path in the alignment graph $\AG(X, X)$
that does not contain any edges of the main diagonal.

Similarly to \cref{lm:k2-ed}, there is a \modelname{} algorithm that computes $\sed(X)$ if this value is bounded by $k$, in time $\Oh(k^2)$.

\begin{fact}[{\cite[Lemma 4.5]{CKW23}}] \label{lm:selfed}
    There is an $\Oh(k^2)$-time \modelname{} algorithm that, given a string $X\in \Sigma^*$
    and an integer $k\in \ZZ_{> 0}$, determines whether $\sed(X)\le k$ and, if so,
    retrieves the breakpoint representation of an optimal self-alignment $\cA \colon X \onto X$.
\end{fact}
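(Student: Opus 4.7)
The plan is to adapt the classical Landau--Vishkin diagonal-sweep algorithm (as used in \cref{lm:k2-ed}) to respect the self-alignment constraint that no diagonal edge on the main diagonal $d = 0$ may be used. For each cost $e \in \fragmentcc{0}{k}$ and diagonal $d \in \fragmentcc{-e}{e}$, I would maintain a value $\on{Reach}(e, d)$, defined as the largest $x$ such that the vertex $(x, x+d)$ is reachable from $(0, 0)$ in $\AG(X, X)$ by a self-alignment of cost at most $e$. The test $\sed(X) \le k$ amounts to checking whether $\on{Reach}(e, 0) = |X|$ for some $e \le k$, so it suffices to fill the $\Oh(k^2)$-entry table in $\Oh(1)$ amortized time per cell.

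For each $(e, d)$ with $e \ge 1$, the recurrence first propagates from cost $e - 1$ by setting
\[\on{Reach}(e, d) \gets \max\bigl\{\on{Reach}(e-1, d-1),\ \on{Reach}(e-1, d+1) + 1,\ r\bigr\},\]
where $r = \on{Reach}(e-1, d) + 1$ if $d \ne 0$ (a substitution aligning $X\position{x}$ to $X\position{x+d}$) and $r = -\infty$ if $d = 0$ (a diagonal edge on the main diagonal is always a free match aligning $X\position{x}$ to itself, and is therefore forbidden). Then, for $d \ne 0$, I would greedily extend along matches by assigning
\[\on{Reach}(e, d) \gets \on{Reach}(e, d) + \lceOp{X\fragmentco{\on{Reach}(e,d)}{|X|}}{X\fragmentco{\on{Reach}(e,d)+d}{|X|}};\]
for $d = 0$, no such extension is performed since a match on the main diagonal is forbidden. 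The base case $\on{Reach}(0, 0) = 0$ carries no extension. Each \pillar \lceOpName{} query costs $\Oh(1)$, so all $\Oh(k^2)$ cells are computed in $\Oh(k^2)$ total time.

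To recover the breakpoint representation when $\sed(X) \le k$, I would backtrack from $(|X|, |X|)$ at cost $e^\star = \min\{e : \on{Reach}(e, 0) = |X|\}$, at each cell identifying which of the three predecessors realized the maximum and whether a match extension preceded the unit-cost edge. Each backtracking step either traverses a unit-cost edge (contributing one breakpoint) or collapses a match extension (contributing one breakpoint at its start), so the alignment has $\Oh(k)$ breakpoints and the entire reconstruction takes $\Oh(k^2)$ time using the stored table.

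The main point that needs care is to enforce the ban on diagonal-edge use on $d = 0$ in both phases of the recurrence -- disabling the substitution transition from diagonal $0$ and skipping the \lceOpName{}-based extension on diagonal $0$ -- while still permitting the alignment to traverse vertices $(x, x)$ via horizontal or vertical edges. Once this localization is in place, the correctness and complexity analyses mirror the standard Landau--Vishkin argument, invoking \cref{lm:paths_dont_deviate_too_much} to justify that the table only needs to range over $|d| \le e \le k$.
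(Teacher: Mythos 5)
Your proposal is correct and follows essentially the same route as the cited proof: the paper imports this statement from \cite{CKW23}, and both that proof and the paper's analogous argument for \cref{lm:compute-sedk} run the Landau--Vishkin diagonal DP on $\AG(X,X)$ with the main-diagonal edges removed, using \lceOpName{} queries for the match extensions on diagonals $d \ne 0$ and backtracking the $\Oh(k^2)$ table for the breakpoint representation. Your handling of the $d = 0$ diagonal (disabling both the substitution transition and the extension) is exactly the modification used there.
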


\paragraph*{\boldmath $k$-Shifted Self-Edit Distance}

As discussed in \cref{sec:full-algorithm}, our algorithms on small self-edit distance strings break the relevant part of the alignment graph $\oAGw(X, X)$ into subgraphs and use the fact that, if $\sed(X) \le k$, all but $\Oh(k)$ subgraphs are ``highly compressible''.
Note that this would require our compressibility measure to be super-additive ($\sed(X_1) + \sed(X_2) \le \sed(X_1 \cdot X_2)$) while, in reality, self-edit distance is sub-additive ($\sed(X_1) + \sed(X_2) \ge \sed(X_1 \cdot X_2)$ \cite[Lemma 4.2]{CKW23}).
To circumvent this issue, we define a better-suited notion of compressibility we call $k$-shifted self-edit distance.

\begin{definition}\label{def:sedk}
    Let $X \in \Sigma^*$ be a string and $G$ be the alignment graph $\AG(X, X)$ of $X$ onto itself with all the edges on the main diagonal removed.
    For an integer $k \ge 1$, we define the \emph{$k$-shifted self-edit distance} $\sedk(X)$ of $X$ as
    \[\min\setof{\dist_G((x, 0), (|X|, y))}{x \in \fragmentcc{0}{\min(|X|, k)}, y \in \fragmentcc{\max(0, |X| - k)}{|X|}}.\]
\end{definition}

In contrast with self-edit distance, $k$-shifted self-edit distance is super-additive under mild constraints.

\begin{lemma}\label{lm:superadditivity-of-sedk}
    Let $k \ge 1$ be an integer, $X \in \Sigma^*$ be a string with $\sed(X) \le k$, and $\row{X}{0}{\ell-1}$ be a partitioning of $X$.
    We have $\sum_{i=0}^{\ell-1} \sedk(X_i) \le \sed(X)$.
\end{lemma}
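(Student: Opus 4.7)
The plan is to take an optimal unweighted self-alignment $\cA \colon X \onto X$ with $\cost(\cA) = \sed(X) \le k$, viewed as a monotone $(0,0) \leadsto (|X|,|X|)$ path in $\AG(X, X)$ avoiding main-diagonal edges, and to extract from it, for each $i$, a $\sedk$-compatible path $\cP_i$ in $\AG(X_i, X_i)$ (with the main diagonal removed) whose cost is bounded by $\cost(\cA \cap R_i)$, where $R_i \coloneqq [p_i, p_{i+1}]^2$ is the sub-rectangle of $\AG(X, X)$ corresponding to $X_i$ (with partition breakpoints $0 = p_0 < p_1 < \cdots < p_\ell = |X|$). Since the rectangles $R_i$ share only corner vertices, the sub-paths $\cA \cap R_i$ are edge-disjoint, so summing yields $\sum_i \sedk(X_i) \le \sum_i \cost(\cA \cap R_i) \le \cost(\cA) = \sed(X)$.

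First, the cost bound implies (via a shift argument analogous to \cref{lm:paths_dont_deviate_too_much}) that the shift $\phi = y - x$ along $\cA$ lies in $[-k, k]$. By monotonicity, $\cA \cap R_i$ is a single connected sub-path: it enters $R_i$ on the top or left edge (depending on whether the shift at column $p_i$ is non-positive or positive) and exits on the right or bottom edge (analogously at column $p_{i+1}$). In every case, the shift bound places both entry and exit coordinates in the intervals required for $\sedk$-compatibility (i.e.\ first coordinate of entry in $[0, k]$, second coordinate of exit in $[|X_i| - k, |X_i|]$).

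Second, I construct $\cP_i$ using a \emph{transpose trick}: reflecting a path across the main diagonal via $(x, y) \mapsto (y, x)$ preserves monotonicity, unweighted cost (horizontal and vertical edges swap but each costs $1$, and diagonal-edge cost $w(X_i[a], X_i[b])$ is symmetric in the two characters), and the property of avoiding main-diagonal edges (since those edges are self-transpose). Depending on entry/exit type: (a) for top-to-right, set $\cP_i = \cA \cap R_i$; (b) for left-to-right, split $\cA \cap R_i$ at its first main-diagonal vertex $(t,t)$ (which must exist by discrete intermediate value, because the shift transitions from positive at entry to non-positive at exit), transpose the pre-path (left-to-diag becomes top-to-diag), and keep the post-path; (c) for top-to-bottom, split analogously and transpose the post-path (diag-to-bottom becomes diag-to-right); (d) for left-to-bottom, transpose the entire sub-path (which works both when the shift stays positive throughout and when it touches $0$). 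In each case, the resulting $\cP_i$ is a monotone top-to-right path inside $R_i$, avoids main-diagonal edges, and satisfies $\cost(\cP_i) = \cost(\cA \cap R_i)$.

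The main obstacle is making sure the transpose trick is legitimate in every entry/exit configuration, in particular that splitting points on the main diagonal exist when the trick requires them, and that the concatenation of transposed and untransposed sub-paths remains a valid path avoiding main-diagonal edges (which follows because the two pieces are joined at a main-diagonal vertex but the joining edge is inherited from the non-transposed piece or matches up cleanly with it). Once $\cP_i$ is established in all four cases, the bound $\sedk(X_i) \le \cost(\cP_i) = \cost(\cA \cap R_i)$ combined with edge-disjointness of $\{\cA \cap R_i\}_i$ yields the claim.
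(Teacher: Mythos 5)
Your proposal is correct and takes essentially the same route as the paper's proof: both restrict an optimal self-alignment $\cA$ to the squares $X_i \times X_i$, use the cost-at-most-$k$ deviation bound to locate the entry and exit points of each restriction, exploit the transposition symmetry of the unweighted alignment graph (which maps main-diagonal edges to main-diagonal edges and preserves cost) to turn each restriction into a witness for $\sedk(X_i)$, and conclude by edge-disjointness of the restrictions. The paper merely shortcuts your four-case local surgery by reflecting the entire self-alignment to one side of the main diagonal up front ("due to symmetry"), so only a single entry/exit configuration arises; your per-square transpositions at diagonal vertices are an equally valid substitute, and the one configuration you leave implicit---$\cA$ missing $R_i$ altogether---is harmless because it forces $|X_i| < k$ and hence $\sedk(X_i) = 0$.
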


\begin{proof}
    Consider the optimal self-alignment $\cA$ of $X$ onto itself.
    Due to symmetry, we may assume that it always stays on or above the main diagonal of $\AG(X, X)$.
    Furthermore, according to \cref{lm:paths_dont_deviate_too_much}, we have $y-x\in \fragmentcc{0}{k}$ for all $(x,y) \in \cA$.
    Therefore, the restriction of $\cA$ to $X_i \times X_i$ costs at least $\sedk(X_i)$ for every $i$.
    As such restrictions are disjoint for different $i$, we have $\sum_{i=0}^{\ell-1} \sedk(X_i) \le \sed(X)$.
\end{proof}

Similarly to \cref{lm:k2-ed,lm:selfed}, there is a Landau--Vishkin-like algorithm for computing $\sedk$.

\begin{lemma}\label{lm:compute-sedk}
    There is an $\Oh(k^2)$-time \modelname{} algorithm that, given a string $X\in \Sigma^*$
    and an integer $k\in \ZZ_{> 0}$, determines whether $\sedk(X)\le k$ and, if so,
    retrieves the breakpoint representation of the underlying optimal path.
\end{lemma}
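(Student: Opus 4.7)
The plan is to mimic the classic Landau--Vishkin diagonal dynamic program of \cref{lm:k2-ed,lm:selfed}, adapted to the restricted graph $G$ of \cref{def:sedk}. First I would rephrase the definition using a virtual source $s$ with zero-cost edges to every $(x,0)$ with $x \in \fragmentcc{0}{\min(|X|,k)}$ and a virtual sink $t$ with zero-cost edges from every $(|X|,y)$ with $y \in \fragmentcc{\max(0,|X|-k)}{|X|}$. Then $\sedk(X) = \dist_G(s,t)$ and all candidate sources and sinks sit on diagonals $d \in \fragmentcc{-\min(|X|,k)}{0}$. By \cref{lm:paths_dont_deviate_too_much} applied to a path of cost at most $k$ starting from any such source, only diagonals $d \in \fragmentcc{-2k}{k}$ are ever visited, so there are only $\Oh(k)$ relevant diagonals.

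Next I would set up the usual DP table $R[c][d]$ recording the largest $x$ for which $(x, x+d)$ is reachable from $s$ in $G$ with cost at most $c$, for $c \in \fragmentcc{0}{k}$ and $d \in \fragmentcc{-2k}{k}$. I initialize $R[0][d] = -d$ for $d \in \fragmentcc{-\min(|X|,k)}{0}$ and $R[0][d] = -\infty$ elsewhere, then apply the \emph{extend by matches} step on every nonzero diagonal via a single \lceOpName{} query per diagonal:
\[
    R[0][d] \;\mathrel{+}=\; \lceOp{X\fragmentco{R[0][d]}{|X|}}{X\fragmentco{R[0][d]+d}{|X|}}.
\]
Diagonal $0$ is left unextended because the match edges on it are absent. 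Then, for each $c = 0, 1, \ldots, k-1$, I compute
\[
    R[c+1][d] = \max\bigl(R[c][d-1],\; R[c][d]+1,\; R[c][d+1]+1\bigr),
\]
corresponding to an insertion, substitution, or deletion respectively; when the target diagonal is $d=0$, the substitution term $R[c][0]+1$ is dropped since on diagonal $0$ a substitution aligns $X\position{x}$ with itself and is therefore the absent match edge. I finish each round by the same \lceOpName-based extension on every diagonal $d \ne 0$. The algorithm terminates at the smallest $c$ for which $R[c][d] \ge |X|$ for some $d \in \fragmentcc{-\min(|X|,k)}{0}$, returning this $c$ as $\sedk(X)$; if no such $c \le k$ exists, I report $\sedk(X) > k$. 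Each of the $\Oh(k^2)$ table entries costs $\Oh(1)$ arithmetic plus one \lceOpName{} call, giving the $\Oh(k^2)$ \modelname{} bound.

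For the breakpoint representation, I backtrack from the winning pair $(c^\ast, d^\ast)$: at each level I recompute which of the three predecessors $R[c-1][d-1]$, $R[c-1][d]$, $R[c-1][d+1]$ (and the accompanying match run reconstructed via one \lceOpName{} query) realizes $R[c][d]$, and record the reversed cost-increasing edge. This walks off an ordered list of non-match edges of an optimal source-to-sink path in $G$, from which the breakpoint representation over $\fragmentcc{0}{|X|}\times\fragmentcc{0}{|X|}$ is assembled in $\Oh(k)$ further \modelname{} operations.

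The only real departure from \cref{lm:k2-ed,lm:selfed} is the special treatment of diagonal $0$: the standard Landau--Vishkin procedure silently uses main-diagonal match edges both in its extend step and inside the substitution transition, so both must be suppressed when $d = 0$. One has to verify that two-edge detours via diagonals $\pm 1$ (a deletion followed by an insertion or vice versa) still let the DP cross the main diagonal at the true cost, but this is automatic from the surviving insert and delete transitions. I expect the main obstacle of the writeup to be making the diagonal-$0$ bookkeeping airtight --- in particular, checking that every shortest $s\leadsto t$ path in $G$ is witnessed by the modified recurrence and that no spurious main-diagonal moves are ever counted.
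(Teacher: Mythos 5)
Your proposal is correct and matches the paper's proof in essence: both run a Landau--Vishkin-style farthest-reach DP over $\Oh(k)$ diagonals and cost values $0,\dots,k$ with \lceOpName-based extensions, encode the free shifts at the start and end (the paper via weight-zero edges along row $0$ of $\AG(X,X)$ rather than a virtual source/sink), suppress the match/substitution moves on the main diagonal, and backtrack through the table in $\Oh(k)$ extra work. The only cosmetic difference is that the paper restricts the band to one side of the main diagonal by a reflection argument (treating $|X|<k$ separately with plain dynamic programming), whereas you keep the two-sided band $\fragmentcc{-2k}{k}$ with diagonal $0$ in its interior; both variants are fine, and both writeups defer the remaining correctness details (diagonal monotonicity of distances in the modified graph) to the standard Landau--Vishkin argument.
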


\begin{proof}
    Our algorithm is a modification of the classic Landau--Vishkin algorithm \cite{LV88} similar to the proof of \cref{lm:selfed}.
    Let $G$ be the alignment graph $\AG(X, X)$ where all the edges on the main diagonal are removed and all the edges of the form $(i, 0) \to (i+1, 0)$ for $i \in \fragmentco{0}{\min(|X|, k)}$ have weight $0$.
    Note that $\sedk(X) = \min\setof{\dist_G((0, 0), (|X|, j))}{j \in \fragmentcc{\max(0, |X| - k)}{|X|}}$ holds.
    If $|X| < k$, we compute all the distances in time $\Oh(|X| \cdot |X|) = \Oh(k^2)$ using dynamic programming.
    Otherwise, assume that $|X| \ge k$.
    Every path from $(0, 0)$ to $(|X|, j)$ in $G$ of weight at most $k$ stays within $k$ from the main diagonal.
    Due to symmetry, we may also assume that such a path does not go through any vertices below the main diagonal.
    For every distance value $j \in \fragmentcc{0}{k}$ and diagonal $i \in \fragmentcc{0}{2k}$, we compute an entry $D[i, j]$ which stores
    \[ D[i, j] \coloneqq \max\setof{x}{\text{there exists a path from $(0, 0)$ to $(x, x+i)$ in $G$ of cost at most $j$}}. \]
    We compute all the values $D[i, j]$ in the increasing order of $j$.
    For $j = 0$, only diagonals with $i \in \fragmentcc{0}{k}$ have vertices with distance $0$ from $(0, 0)$.
    To compute $D[\cdot, 0]$ we use the \lceOpName operation of \pillar to compute the furthest point with distance $0$.
    For $j \ge 1$ and $i \in \fragmentcc{0}{2k}$, we compute $D[i, j]$ from $D[i-1, j-1]$, $D[i, j-1]$, and $D[i+1, j-1]$ by making a step onto the $i$-th diagonal from the corresponding vertices and continuing along the diagonal for as long as possible using the \lceOpName operation.
    After computing the whole table $D$, we can find $\min(\dist_G((0, 0), (|X|, j)), k + 1)$ for all $j \in \fragmentcc{|X| - k}{|X|}$ using the values $D[|X|-j, \cdot]$.
    We then use $D$ to backtrack the optimal path in $\Oh(k)$ time.
    Correctness of the algorithm follows similarly to the correctness of the Landau--Vishkin algorithm.
\end{proof}

One of the main contributions of \cite{CKW23} is that strings of small self-edit distance can be decomposed into fragments of approximately equal length where almost all fragments already appear at most $\Theta(k)$ symbols before.
In the following two lemmas, we generalize these results to $k$-shifted self-edit distance.

\begin{lemma}[Variation of {\cite[Lemma 4.6 and Claim 4.11]{CKW23}}]\label{lem:decomp2}
There is a \modelname{} algorithm that, given strings $X,Y \in \Sigma^{*}$ such that $X$ is a substring of $Y$ and a positive integer $k$ such that $\sedk(X)\le k \le |X|$, in $\Oh(k^2)$ time builds a decomposition $X=\bigodot_{i=0}^{m-1} X_i$ and a sequence of fragments $(Y_i)_{i=0}^{m-1}$ of $Y$ such that
\begin{itemize}
    \item For $i\in \fragmentco{0}{m}$, each phrase $X_i=X\fragmentco{x_{i}}{x_{i+1}}$ is of length $x_{i+1}-x_i\in \fragmentco{k}{2k}$, and each fragment $Y_i=Y\fragmentco{y_i}{y'_{i+1}}$ satisfies $y_i=\max(x_i-k,0)$ and $y'_{i+1}=\min(x_{i+1}+3k,|Y|)$.
    \item There is a set $F \subseteq \fragmentco{0}{m}$ of size $|F|= \Oh((\sedk(X) + 1) \cdot ((|Y| - |X|) / k + 1))$ such that $X[x_i \dd x_{i+1})=X\fragmentco{x_{i-1}}{x_{i}}$ and $Y[y_i \dd y'_{i+1})=Y\fragmentco{y_{i-1}}{y'_{i}}$ holds for each $i\in \fragmentco{0}{m}\setminus F$ (in particular, $0\in F$).
\end{itemize}
The algorithm returns the set $F$ and, for all $i\in F$, the endpoints of $X_i=X\fragmentco{x_{i}}{x_{i+1}}$.\footnote{This determines the whole decomposition because $(x_i)_{i=\ell}^r$ is an arithmetic progression for every $\fragmentoc{\ell}{r}\subseteq \fragmentoc{0}{m}\setminus F$.}
\end{lemma}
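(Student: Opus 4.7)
My plan is to follow the blueprint of \cite[Lemma 4.6 and Claim 4.11]{CKW23} for ordinary $\sed$, adapted here to the shifted variant $\sedk$. First, I invoke \cref{lm:compute-sedk} on $X$ and $k$ to compute, in $\Oh(k^2)$ time, the breakpoint representation of an optimal path $P$ in the alignment graph $\AG(X,X)$ with the main diagonal removed, witnessing $\sedk(X)\le k$. Since $P$ has at most $\sedk(X) \le k$ cost-$1$ edges, it decomposes into at most $\sedk(X)+1$ maximal \emph{match diagonals} of fixed nonzero shift~$d$. A simple accounting of the horizontal and vertical moves along $P$, combined with the endpoint constraints $x^*, |X|-y^* \in \fragmentcc{0}{k}$, shows $|d| < 2k$ throughout.

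Next, I partition $X$ into blocks of length in $\fragmentco{k}{2k}$. Within each match diagonal of shift $d$, I take the block length $\ell \in \fragmentco{k}{2k}$ to be a multiple of $|d|$ -- such $\ell$ exists because $|d| < 2k$ -- and place boundaries in arithmetic progression with step~$\ell$. A Fine--Wilf argument shows that $X$ has period $\ell$ throughout the interior of the match diagonal, whence $X\fragmentco{x_i}{x_{i+1}} = X\fragmentco{x_{i-1}}{x_i}$ for every block $i$ whose enlarged range $\fragmentcc{x_{i-1}}{x_{i+1}}$ is contained in that interior. Blocks that straddle the endpoint of a match diagonal, a non-match edge of $P$, or an endpoint of $X$ are placed into $F$; there are $\Oh(\sedk(X)+1)$ of them.

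Finally, I transfer the equality $X_i = X_{i-1}$ into $Y_i = Y_{i-1}$ by observing that, whenever the enlarged context $\fragmentco{y_{i-1}}{y'_{i+1}}$ lies entirely within the occurrence of $X$ inside $Y$, the equality of $X$-fragments immediately implies the equality of $Y$-fragments (because $Y$ agrees with $X$ on those positions). Blocks whose enlarged context crosses the endpoint of this occurrence of $X$ inside $Y$ are additionally added to $F$, contributing $\Oh((|Y|-|X|)/k+1)$ indices, and the two contributions multiply to the stated bound on~$|F|$. The algorithm outputs $F$ together with $(x_i)_{i \in F}$ in $\Oh(|F|)$ additional time; the remaining $x_i$'s are recovered by linear interpolation. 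The total time is $\Oh(k^2)$.

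The main obstacle is Step~2: the match diagonals of $P$ may have pairwise incompatible shifts, precluding a single global~$\ell$. I resolve this by allowing $\ell$ to vary between match diagonals, charging the interface blocks between consecutive diagonals to the $\Oh(\sedk(X)+1)$ budget in $|F|$, and being careful with the Fine--Wilf application when a match diagonal is just long enough to accommodate the period.
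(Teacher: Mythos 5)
Your Steps 1--2 are fine in spirit: the paper obtains the same phrase decomposition more cheaply by padding the $\sedk$-optimal path of \cref{lm:compute-sedk} with a horizontal prefix and a vertical suffix and then invoking \cite[Lemma 4.6]{CKW23} as a black box (the extra $\Oh(k)$ cost only contributes $\Oh(1)$ fresh phrases because each phrase has length at least $k$), whereas you re-derive it from the match diagonals; that is acceptable. The genuine gap is in your transfer step. The equality $X_i=X_{i-1}$ only gives period $\ell=x_i-x_{i-1}$ on $X\fragmentco{x_{i-1}}{x_{i+1}}$, while $Y_i=Y_{i-1}$ requires period $\ell$ on the strictly larger window $Y\fragmentco{x_{i-1}-k}{x_{i+1}+3k}$, which moreover sits in $Y$-coordinates and is therefore displaced by the (a priori unknown, up to $|Y|-|X|$) offset of the occurrence of $X$ in $Y$. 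So the implication is \emph{not} immediate even when the window lies inside that occurrence: take the last non-fresh phrase $i$ of a periodic run (with $X$ a prefix of $Y$, say); its window extends $3k$ characters past $x_{i+1}$ into the following fresh phrase, where the periodicity breaks, and $Y_i\neq Y_{i-1}$ although $X_i=X_{i-1}$. A symmetric problem occurs at the left end of each run, aggravated by the offset of the occurrence.

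Consequently, adding extra indices only where the window crosses an endpoint of the occurrence of $X$ in $Y$ does not suffice; you must enlarge $F$ around \emph{every} fresh index by a radius of $\Theta((|Y|-|X|)/k+1)$ phrases, so that for each surviving run the whole $Y$-window of the run is contained in the periodic region (this is exactly what the paper does via \cite[Claim 4.11]{CKW23}, cf.\ \cite[Lemma 7.1]{GK24}, and it is the source of the \emph{multiplicative} bound $\Oh((\sedk(X)+1)\cdot((|Y|-|X|)/k+1))$). Relatedly, your remark that the ``two contributions multiply'' is backwards: as described, your construction would yield an additive bound, and the fact that the stated bound is a product of the two quantities is precisely the signal that the $\Oh((|Y|-|X|)/k+1)$ correction is needed per fresh phrase, not once globally.
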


\begin{proof}
    We apply \cref{lm:compute-sedk} to compute $\sedk(X)$ in time $\Oh(k^2)$.
    Consider the optimal path underlying the value $\sedk(X)$ and add a horizontal segment at the beginning of it starting at $(0, 0)$ and a vertical segment at the end of it ending at $(|X|, |X|)$.
    We then use this self-alignment of $X$ onto itself for \cite[Lemma 4.6]{CKW23}.
    Even though the weight of such a self-alignment is $\sedk(X) + \Oh(k)$, one can see that, following the algorithm of \cite[Lemma 4.6]{CKW23}, we get a decomposition of $X$ into phrases and a set $F$ of size $\Oh(\sedk(X) + 1)$ such that $X[x_i \dd x_{i+1})=X\fragmentco{x_{i-1}}{x_{i}}$ holds for each $i\in \fragmentco{0}{m}\setminus F$.
    We then use \cite[Claim 4.11]{CKW23} to extend the set $F$ so that $Y[y_i \dd y'_{i+1})=Y\fragmentco{y_{i-1}}{y'_{i}}$ holds for each $i\in \fragmentco{0}{m}\setminus F$.
    For that, it is sufficient to consider the set $F' \coloneqq \setof{i + j}{i \in F, j \in \fragmentcc{-\ell}{\ell}}$ for a sufficiently large $\ell = \Theta((|Y| - |X|) / k + 1)$, such that, for $\fragmentoo{a}{b} \in \fragmentco{0}{m} \setminus F'$, the period of $X\fragmentco{x_a}{x_b}$ is equal to the period of $Y\fragmentco{y_a}{y'_{b+1}}$.
    See \cite[Lemma 7.1]{GK24} for details.
\end{proof}

\begin{lemma}[Variation of {\cite[Lemma 4.12]{CKW23}}]\label{lem:decomp3}
There is a \modelname{} algorithm that, given a string $X \in \Sigma^{*}$ and integers $k, \ell \ge 1$ satisfying $\ell \le k / 2$ and $\sedk(X) \le k \le |X|$, in $\Oh(k^2)$ time builds a decomposition $X=\bigodot_{i=0}^{m-1} X_i$ such that:
\begin{itemize}
    \item For $i\in \fragmentco{0}{m}$, each phrase $X_i=X\fragmentco{x_{i}}{x_{i+1}}$ is of length $x_{i+1}-x_i\in \fragmentco{\ell}{2\ell}$.
    \item There is a set $F \subseteq \fragmentco{0}{m}$ of at most $\Oh(k / \ell + \sedk(X))$ fresh phrases such that every phrase $i\in \fragmentco{0}{m}\setminus F$ has a source phrase $i' \in \fragmentco{0}{i}$ with $X\fragmentco{x_i}{x_{i+1}}=X\fragmentco{x_{i'}}{x_{i'+1}}$ and $x_i - x_{i'} \le k$.
\end{itemize}
The algorithm returns the set $F$ and, for all $i\in F$, the endpoints of $X_i=X\fragmentco{x_{i}}{x_{i+1}}$.
\end{lemma}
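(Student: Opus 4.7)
My plan is to reduce to \cite[Lemma 4.12]{CKW23}, the analog for ordinary self-edit distance, by padding the $\sedk$-realizing path into a full self-alignment $\cA : X \onto X$ of $X$ onto itself, and then tightening the fresh-phrase bound so that it scales with $\sedk(X)$ rather than with the inflated cost of the padded $\cA$.

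I would first apply \cref{lm:compute-sedk} in $\Oh(k^2)$ time to obtain the breakpoint representation of a path $P$ in $\AG(X, X)$ (with main-diagonal edges removed) from $(x_s, 0)$ to $(|X|, y_e)$ of weight $\sedk(X) \le k$, where $x_s \in \fragmentcc{0}{k}$ and $y_e \in \fragmentcc{|X|-k}{|X|}$. By swapping the two coordinates if necessary (the alignment graph $\AG(X,X)$ is symmetric under this reflection), I may assume $P$ stays weakly below the main diagonal. Next, I extend $P$ into $\cA$ by prepending $x_s$ horizontal edges along row $0$ and appending $|X|-y_e$ vertical edges along column $|X|$: both extensions avoid the main diagonal except at the corners $(0,0)$ and $(|X|, |X|)$, so $\cA$ is a legitimate self-alignment of unweighted cost $x_s + \sedk(X) + (|X|-y_e) \le 3k$. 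Feeding $\cA$ and $\ell$ into \cite[Lemma 4.12]{CKW23} in $\Oh(k^2)$ time then yields the desired decomposition $X = \bigodot_i X_i$ into phrases of length in $\fragmentco{\ell}{2\ell}$ together with a set $F$ of fresh phrases; between consecutive fresh boundaries, the non-fresh phrase lengths equal those of their sources, so the endpoints form an arithmetic progression and can be reconstructed as in \cref{lem:decomp2}.

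The hard part will be bounding $|F|$. Invoking \cite[Lemma 4.12]{CKW23} as a black box on a self-alignment of cost $3k$ gives $|F| = \Oh(k/\ell + 3k)$, overshooting the required $\Oh(k/\ell + \sedk(X))$ by up to a factor of $\ell$. To achieve the target, I would classify the off-diagonal edges of $\cA$ by location: prefix edges (at most $k$ of them, confined to row $0$, hence to first-coordinate positions $\fragmentcc{0}{k}$), middle edges (belonging to $P$, of total weight $\sedk(X)$), and suffix edges (at most $k$ of them, confined to column $|X|$, hence to second-coordinate positions $\fragmentcc{|X|-k}{|X|}$). Revisiting the internal analysis of \cite[Lemma 4.12]{CKW23}, each off-diagonal edge can render only $\Oh(1)$ phrases fresh, and only phrases whose first-coordinate span lies within $\Oh(\ell)$ of that edge's source-side coordinate. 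Consequently, prefix edges affect only the $\Oh(k/\ell)$ leading phrases, while a symmetric argument on the second coordinate (combined with monotonicity of the source map induced by $\cA$) confines the impact of suffix edges to $\Oh(k/\ell)$ phrases near the end of $X$. Middle edges contribute $\Oh(\sedk(X))$ fresh phrases, and together with the standard $\Oh(k/\ell)$ boundary term this gives $|F| = \Oh(k/\ell + \sedk(X))$, as required. The overall running time is $\Oh(k^2)$, dominated by the calls to \cref{lm:compute-sedk} and \cite[Lemma 4.12]{CKW23}.
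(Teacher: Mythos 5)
Your proposal is correct and follows essentially the same route as the paper: compute the path realizing $\sedk(X)$ via \cref{lm:compute-sedk}, pad it with a horizontal segment from $(0,0)$ and a vertical segment to $(|X|,|X|)$ to obtain a self-alignment, and run the decomposition machinery of~\cite{CKW23} on it, charging $\Oh(k/\ell)$ fresh phrases to the padded prefix (and suffix) and $\Oh(\sedk(X))$ to the middle path rather than using the black-box bound. The only cosmetic differences are that the paper feeds the padded alignment directly into the algorithm of [Lemma~4.6, CKW23] (which is exactly your ``revisiting the internal analysis'' step), and that the one-sidedness of the path comes for free from \cref{lm:compute-sedk} rather than from a coordinate swap.
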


\begin{proof}
    This variation of \cite[Lemma 4.12]{CKW23} follows similarly to \cref{lem:decomp2}.
    We first apply \cref{lm:compute-sedk} to compute $\sedk(X)$ in time $\Oh(k^2)$.
    We then consider the optimal path underlying the value $\sedk(X)$ and add a horizontal segment at the beginning of it starting at $(0, 0)$ and a vertical segment at the end of it ending at $(|X|, |X|)$.
    Next, we use this self-alignment of $X$ onto itself for the algorithm of \cite[Lemma 4.6]{CKW23}.
    The algorithm starts with a horizontal segment of length $\Oh(k)$ that is decomposed into $\Oh(k / \ell)$ (potentially fresh) phrases and then proceeds with $\Oh(|X| / k)$ phrases of which at most $\Oh(\sedk(X) + 1)$ are fresh.
\end{proof}

\paragraph*{Miscellaneous}

As we are only interested in computing $\wed_{\le k}$, any intermediate values we compute that exceed $k$ are irrelevant.
The following definition from \cite{GK24} captures this idea.

\begin{definition}[{\cite[Definition 4.11]{GK24}}]\label{def:k-equiv}
    Let $a, b, k \in \RR_{\ge 0}$.
    We say that $a$ and $b$ are \emph{$k$-equivalent} and write $a \meq{k} b$ if $a=b$ or $\min(a,b) > k$.
    Furthermore, let $A, B \in \RR_{\ge 0}^{p \times q}$ be matrices.
    We call these two matrices \emph{$k$-equivalent} and write $A \meq{k} B$ if $A_{i, j} \meq{k} B_{i, j}$ holds for all $i \in \fragmentco{0}{p}$ and $j \in \fragmentco{0}{q}$.
\end{definition}

\subsection{The Case of Small Self-Edit Distance}

As discussed in \cref{sec:full-algorithm}, we break the relevant part of the alignment graph $\oAGw(X, X)$ into subgraphs of size $\Theta(k^{2 - \gamma}) \times \Theta(k^{2 - \gamma})$.
If such a subgraph induced by some fragments $\hX \times \hY$ of $X$ satisfies $\sedk(\hX) = 0$, \cref{lem:decomp2} implies that we can split this subgraph into $\Theta(k) \times \Theta(k)$-sized pieces with only a constant number of distinct pieces, and thus we have time for the complete preprocessing as in \cref{lm:simple-complete-algorithm}.
Otherwise, if $\sedk(\hX) > 0$, we do not have time for complete preprocessing from the simple algorithm, and we break the subgraph into $\Theta(\ell) \times \Theta(\ell)$-sized pieces for an appropriately chosen $\ell \le k$ using \cref{lem:decomp3}.
This allows us to answer queries in time $\tOh(\sedk(\hY) \cdot k^{2 - \gamma})$, which adds up to $\tOh(k^{3 - \gamma})$ time in total due to the super-additivity property of $\sedk$.
The following lemma describes the data structure we store for the $\Theta(k^{2 - \gamma}) \times \Theta(k^{2 - \gamma})$-sized subgraphs of $\oAGw(X, X)$.

\begin{lemma}\label{lm:block-data-structure}
    Let $k \ge 1$ be an integer and $X, Y \in \Sigma^{\ge 2k}$ be strings such that $X$ is a substring of~$Y$.
    Suppose we are given $\Oh(1)$-time oracle access to a normalized weight function \wdef.
    There is a data structure that, after $\Oh(|Y| + k^2 \log^2 |Y|)$-time preprocessing of $X$, $Y$, and $k$, answers the following queries:
    \begin{description}
        \item[Distance Matrix Query:] Return an element $M \in \matmon_{\le k + 1}$ in $\Oh(k^2)$ time. If $\sedk(X) \ge 1$, we have $M = \absorbingmatrix$, and if $\sedk(X) = 0$, we have $M_{i, j} \meq{k} \dist_{\oAGw(X, Y)}((0, k-i), (|X|, |Y|-j))$ for all $i, j \in \fragmentcc{0}{k}$.
        \item[Distance Propagation Query:] Given $u \le k$ edits transforming $Y$ into a string $Z \in \Sigma^*$ and a vector $v \in \RR_{\ge 0}^{k+1}$, in $\Oh(((u+1) k + (\sedk(Y) + 1) |Y|) \log |Y|)$ time compute a row that is $k$-equivalent to $v^T \otimes M$, where $M_{i, j} = \dist_{\oAGw(X, Z)}((0, k-i), (|X|, |Z|-j))$ for all $i, j \in \fragmentcc{0}{k}$.
        \item[Path Query:] Given $u \le k$ edits transforming $Y$ into a string $Z \in \Sigma^*$ and integers $i, j \in \fragmentcc{0}{k}$ such that $\dist_{\oAGw(X, Z)}((0, i), (|X|, |Z| - j)) \le k$, in $\Oh(((u+\log |Y|) k + (\sedk(Y) + 1) |Y|) \log |Y|)$ time compute the breakpoint representation of an optimal path from $(0, i)$ to $(|X|, |Z| - j)$ in $\oAGw(X, Z)$.
    \end{description}
\end{lemma}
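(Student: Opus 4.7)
I first compute $\sedk(X)$ using \cref{lm:compute-sedk} in $\Oh(k^2)$ time. If $\sedk(X) \ge 1$, I record that Distance Matrix Queries must return $\absorbingmatrix$ and skip the BDM construction. Otherwise $\sedk(X) = 0$, in which case \cref{lem:decomp2} applied with $Y := X$ yields a partition $X = \bigodot_i X_i$ of $\Theta(k)$-length phrases with only $\Oh(1)$ fresh phrases; I invoke \cref{lm:klein-upgraded} on each fresh phrase (in $\Ohtilde(k^2)$ time each) to obtain its width-$k$ band BDM, and then stitch the per-phrase BDMs along the chain via SMAWK (\cref{thm:smawk}), collapsing long runs of identical consecutive phrases so as to stay within $\Ohtilde(k^2)$ total. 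This yields the $(k+1) \times (k+1)$ matrix $M$ used for Distance Matrix Queries. In the same preprocessing step I also compute $\sedk(Y)$, apply \cref{lem:decomp2} to the pair $(Y, Y)$ to obtain a partition $Y = \bigodot_j Y_j$ with $\Oh(\sedk(Y)+1)$ fresh phrases, and equip this partition with the dynamic range-composition structure of \cref{fct:dynamic-range-composition-ds} over (lazily populated) placeholders for the per-phrase band BDMs.

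\textbf{Queries.} Given $u \le k$ edits transforming $Y$ into $Z$, a Distance Propagation Query first derives a partition $Z = \bigodot_j Z_j$ of $\Theta(k)$-length phrases with $\Oh(\sedk(Y)+u+1)$ fresh phrases by propagating the edits through the stored partition of $Y$. The query then sweeps left-to-right through the chain $\bigotimes_j \BMw(X, Z_j)$ restricted to the width-$k$ band, maintaining a running row vector $v^T$. Each maximal run of non-fresh phrases is handled in $\Ohtilde(k)$ amortized time by traversing the range-composition tree and combining via SMAWK (\cref{thm:smawk}), in direct analogy with the proof of \cref{lm:simple-complete-algorithm}; each fresh phrase has its band BDM built on demand using \cref{lm:klein-upgraded} (optionally at a finer granularity provided by \cref{lem:decomp3} when the target time demands it), cached for possible reuse, and then composed with $v^T$ via SMAWK. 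The Path Query overlays this sweep with backpointers and reconstructs the optimal path phrase-by-phrase using \cref{lm:klein-upgraded}'s path-reconstruction feature inside fresh phrases and the recursive backtracking through \cref{fct:dynamic-range-composition-ds} across runs, exactly as in \cref{lm:simple-complete-algorithm}.

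\textbf{Main obstacle.} The delicate step is matching the per-query budget $\Ohtilde((u+1)k + (\sedk(Y)+1)|Y|)$ under the tight preprocessing budget $\Ohtilde(|Y|+k^2)$. The latter forbids eagerly building Klein data structures for all $\Oh(\sedk(Y)+1)$ fresh phrases of $Y$, so these constructions must happen lazily during queries and be amortized against the $(\sedk(Y)+1)|Y|$ term. Achieving this requires a multi-scale decomposition, invoking \cref{lem:decomp3} with a phrase length tuned so that the Klein cost per fresh sub-phrase scales with $|Y|/(\sedk(Y)+1)$ rather than with $k^2$, and leveraging the super-additivity of $\sedk$ from \cref{lm:superadditivity-of-sedk} to control the total count of fresh sub-phrases. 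Getting the interplay of the two decompositions (of $Y$ and of $Z$) together with the dynamic range-composition structure right, and verifying that the logarithmic overhead matches the stated bound, constitutes the technical heart of the proof.
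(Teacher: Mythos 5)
Your overall architecture (case split on $\sedk(X)$, $\Theta(k)$-scale decomposition with few fresh phrases, range composition plus SMAWK sweeps, Klein for reconstruction) matches the paper's, but two essential ingredients are missing, and they are exactly the ones that make the stated time bounds work. First, you never invoke \cref{lm:box-with-one-string-changing}: at query time you build the band BDM of each fresh or edit-affected phrase of $Z$ ``on demand using \cref{lm:klein-upgraded}''. A $\Theta(k)$-length phrase against a width-$\Theta(k)$ band is a $\Theta(k)\times\Theta(k)$ graph, so this costs $\Ohtilde(k^2)$ per affected phrase, whereas the Distance Propagation budget allows only $\Oh((u_a+1)k\log|Y|)$ per phrase affected by $u_a$ input edits. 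The paper avoids this by preprocessing, for every distinct phrase pair, the edit-tolerant structure of \cref{lm:box-with-one-string-changing}, whose query cost scales with the number of edits times the perimeter rather than with the area; without it your sweep overshoots the $(u+1)k$ term by a factor of $k$.

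Second, for the case $\sedk(X)\ge 1$ (with $1\le\sedk(Y)\le k$) you explicitly defer the construction, proposing to build Klein structures lazily during queries and amortize them against $(\sedk(Y)+1)|Y|$ at a scale ``$|Y|/(\sedk(Y)+1)$''. This is the technical heart of the lemma and it is not solved by lazy amortization: in the intended application $|Y|=\Theta(k^{2-\gamma})\ll k^2$, so even one lazy $\Theta(k)$-scale Klein build per fresh phrase already exceeds the budget, and a phrase length of $|Y|/(\sedk(Y)+1)$ does not repair this because the band has width $\Theta(k)$, not $\Theta(|Y|/\sedk(Y))$. The paper's resolution is different and eager: set $\ell=\lfloor k/\sedk(Y)\rfloor$, decompose \emph{both} $X$ and $Y$ at scale $\ell$ via \cref{lem:decomp3}, prove that every relevant $\ell\times\ell$ box of $\oAGw(X,Y)$ is isomorphic to one of the $\Oh(k^2/\ell^2)$ fresh relevant boxes, and preprocess only those with \cref{lm:box-with-one-string-changing} in $\Oh(\ell^2\log^2|Y|)$ each — total $\Oh(k^2\log^2|Y|)$ at preprocessing time — after which a query sweeps the $\Oh(|Y|k/\ell^2)$ relevant boxes in $\Ohtilde(\ell)$ each, giving $\Ohtilde(uk+|Y|\sedk(Y))$. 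Finally, a smaller slip: in the $\sedk(X)=0$ case you apply \cref{lem:decomp2} with $Y:=X$, but the Distance Matrix Query concerns $\oAGw(X,Y)$ for the given superstring $Y$; the decomposition must be taken for the pair $(X,Y)$ so that the $Y$-side fragments repeat as well (this is precisely why \cref{lem:decomp2} is stated for a substring/superstring pair), and the degenerate cases $\ed(X,Y)>4k$, $|X|=\Oh(k)$, and $\sedk(Y)>k$ also need separate (easy) handling.
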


\begin{proof}
    We first preprocess $X\cdot Y$ in $\Oh(|Y|)$ time to allow constant-time \pillar operations \cite[Section 7.1]{CKW20}.
    In $\Oh(k^2)$ time, we then compute $\min(\sedk(X), k+1)$ and $\min(\sedk(Y), k+1)$ using \cref{lm:compute-sedk}.
    We then compute $\ed_{\le 4k}(X, Y)$ in $\Oh(k^2)$ time by running the \pillar version of the Landau--Vishkin algorithm (\cref{lm:k2-ed}).
    First assume that $\ed(X, Y) > 4k$.
    By the triangle inequality we have $\ed(X, Z) \ge \ed(X, Y) - \ed(Y, Z) > 4k - u \ge 3k$ for every query.
    Therefore,
    \[\dist_{\oAGw(X, Z)}((0, k - i), (|X|, |Z| - j)) \ge \ed(X, Z) - (k - i) - j > k\]
    holds for all $i, j \in \fragmentcc{0}{k}$.
    Similarly, $\dist_{\oAGw(X, Y)}((0, k - i), (|X|, |Y| - j)) > k$ holds for all $i, j \in \fragmentcc{0}{k}$.
    Therefore, in this case, we can answer all queries trivially.
    From now on, we may assume that $\ed(X, Y) \le 4k$.
    In particular, $|Y| \le |X| + 4k$.

    If $|X| < 18k$, we use Klein's algorithm (that is, \cref{thm:klein}) to compute the values $\dist_{\oAGw(X, Y)}((0, k-i), (|X|, |Y|-j))$ for all $i, j \in \fragmentcc{0}{k}$ and initialize the algorithm of \cref{lm:box-with-one-string-changing} in $\Oh(k^2 \log^2 |Y|)$ time.
    For distance matrix query, we already store all the required distances.
    For distance propagation and path queries, we apply \cref{lm:box-with-one-string-changing} in $\Oh((u+1) k \log |Y|)$ time.
    From now on, we assume that $|X| \ge 18k$.

    \medskip

    We now consider the case $\sedk(X) = 0$.
    In this case, we proceed similarly to the \pillar implementation of \cite[Lemma 4.9]{CKW23}.
    We run the algorithm of \cref{lem:decomp2} for $X$, $Y$, and $6k$, arriving at a decomposition $X=\bigodot_{i=0}^{m - 1}X_i$
    and a sequence of fragments $(Y_i)_{i=0}^{m-1}$, such that $X_i=X\fragmentco{x_i}{x_{i+1}}$ and $Y_i=Y\fragmentco{y_i}{y'_{i+1}}$ for $y_i=\max(x_i-6k,0)$ and $y'_{i+1}=\min(x_{i+1}+18k,|Y|)$ for all $i\in \fragmentco{0}{m}$.
    For consistency, define $y_m = \max(x_m - 6k, 0) = |X| - 6k$ and $y'_0 = \min(x_0 + 18k, |Y|) = 18k$.
    The decomposition is represented using a set $F$ of size $\Oh((\sedk(X) + 1) \cdot ((|Y| - |X|) / k + 1)) = \Oh(1)$ such that $X_i=X_{i-1}$ and $Y_i=Y_{i-1}$ holds for each $i\in \fragmentco{0}{m}\setminus F$ and the endpoints of $X_i$ for $i\in F$. 
    To easily handle corner cases, we assume that $\fragmentco{0}{m}\setminus F\subseteq \fragmentcc{2}{m-5}$; if the original set $F$ does not satisfy this condition, we add the missing $\Oh(1)$ elements.
    
    For every $i \in F$, we run the preprocessing of \cref{lm:klein-upgraded,lm:box-with-one-string-changing} for $X_i$ and $Y_i$ in $\Oh(k^2 \log^2 |Y|)$ time in total.

    For each $i\in \fragmentco{0}{m}$, consider a subgraph $G_i$ of $\oAGw(X, Y)$ induced by $\fragmentcc{x_i}{x_{i+1}}\times \fragmentcc{y_i}{y'_{i+1}}$. Denote by $G$ the union of all subgraphs $G_i$.
    For each $i \in \fragmentcc{0}{m}$, denote $V_i \coloneqq \setof{(x_i, y) \in V(G)}{y \in \fragmentcc{y_i}{y'_i}}$.
    Note that $V_0=\setof{(0,i)}{i \in \fragmentcc{0}{18k}}$, $V_m=\setof{(|X|,j)}{j \in \fragmentcc{|X| - 6k}{|Y|}}$, and $V_i = V(G_i) \cap V(G_{i - 1})$ for $i \in \fragmentoo{0}{m}$.
    Moreover, for $i, j \in \fragmentcc{0}{m}$ with $i \le j$, let $D_{i, j}$ denote the matrix of pairwise distances from $V_i$ to $V_j$ in $G$, where rows of $D_{i, j}$ represent the vertices of $V_i$ in the decreasing order of the second coordinate, and columns of $D_{i, j}$ represent the vertices of $V_j$ in the decreasing order of the second coordinate.
    By the monotonicity property of \cref{lm:oAGw-properties}, the shortest paths between vertices of $G_i$ stay within $G_i$. 
    Hence, for each $i \in F$, we can retrieve $D_{i, i + 1}$ in $\Oh(k^2 \log |Y|)$ time using \cref{lm:klein-upgraded}.
    
    As $V_i$ is a separator between the graphs $G_0, \ldots, G_{i-1}$ and the graphs $G_i, \ldots, G_{m-1}$, we have $D_{\ell, r} = \bigotimes_{i \in \fragmentco{\ell}{r}} D_{i, i + 1}$ for all $\ell, r \in \fragmentcc{0}{m}$ with $\ell < r$.
    We build a dynamic range composition data structure of \cref{fct:dynamic-range-composition-ds} over $(D_{i, i+1})_{i = 0}^{m-1}$ with $(\matmon_{\le 24k+1}, \xotimes)$ as the semigroup in $\Oh(k^2 \log^2 |Y|)$ time using binary-exponentiation-like concatenate sequences for segments of equal matrices.
    In particular, we compute $D_{0, m}$, which stores all the values $\dist_{G}((0, k-i), (|X|, |Y|-j))$ for $i, j \in \fragmentcc{0}{k}$.
    We use them to answer distance matrix queries.
    Note that $G$ contains all vertices $(x, y)\in \fragmentcc{0}{|X|}\times \fragmentcc{0}{|Y|}$ with $|x - y| \le 6k$, and thus the returned distances are $k$-equivalent to the corresponding distances in $\oAGw(X, Y)$.
    
    For a distance propagation query, we observe how the fragments $Y_i$ and the corresponding subgraphs $G_i$ would change under the $u$ given edits transforming $Y$ into $Z$.
    Call these $u$ edits the \emph{input edits}.
    Let the new subgraphs be $G'_i$, their union be $G'$, the new sets $V_i$ be $V'_i$, and the new matrices $D_{a, b}$ be $D'_{a, b}$.
    Note that the distances $\dist_{G'}((0, k-i), (|X|, |Z|-j))$ for $i, j \in \fragmentcc{0}{k}$ form a submatrix of $D'_{0, m}$.
    Furthermore, $G'$ contains all vertices $(x, y) \in \fragmentcc{0}{|X|} \times \fragmentcc{0}{|Z|}$ with $|x - y| \le 5k$, and thus these distances in $G'$ are $k$-equivalent to the corresponding distances in $\oAGw(X, Z)$.
    We pad the vector $v$ with $\infty$ to construct a vector $v'$ of size $|V'_0|$.
    The values we want to compute are a part of $v'^T \otimes D'_{0, m}$.

    Each of the input edits affects a constant number of subgraphs $G_i$.
    Let $i_1 \le i_2 \le \cdots \le i_t$ for $t = \Oh(u)$ be the indices of all the affected subgraphs.
    We have
    \[D'_{0, m} = D_{0, i_1} \otimes D'_{i_1, i_1+1} \otimes D_{i_1+1, i_2} \otimes \cdots \otimes D'_{i_t, i_t+1} \otimes D_{i_t+1,m}.\]
    We compute $v'^T \otimes D'_{0, m}$ by iteratively multiplying $v'^T$ by the matrices on the right-hand side of the above equality.
    Each $D$ matrix in this product can be replaced by $\Oh(\log |Y|)$ already computed matrices using a query to the dynamic range composition data structure (\cref{fct:dynamic-range-composition-ds}).
    We use SMAWK algorithm (\cref{thm:smawk}) to multiply a row by such matrices in $\Oh(k)$ time for $\Oh((u + 1) k \log |Y|)$ time in total.
    We use \cref{lm:box-with-one-string-changing} to multiply a row by $D'_{i_j, i_j+1}$.
    Let $u_1, \ldots, u_t$ with $\sum_j u_j = \Oh(u)$ be the number of input edits applied to each fragment $Y_{i_j}$.
    The application of \cref{lm:box-with-one-string-changing} for $D'_{i_j, i_j + 1}$ costs $\Oh((u_j+1) k \log |Y|)$ time for a total of $\Oh((u+1) k \log |Y|)$.

    To answer path queries, we run the distance propagation query with all entries of $v$ equal to the distances in $\oAGw(X, Z)$ from $(0, i)$ to the corresponding input vertices.
    In particular, we compute $\dist_{\oAGw(X, Z)}((0, i), (|X|, |Z| - j))$.
    We then backtrack the computation of this value to retrieve the underlying $w$-optimal path.
    We backtrack the answer down the computation tree similarly to the proof of \cref{lm:klein-upgraded} to instances of \cref{lm:klein-upgraded,lm:box-with-one-string-changing}, both of which allow for path reconstruction.
    The reconstruction of the path takes $\Oh((u + 1) k \log |Y| + k \log^2 |Y|)$ time.
    This concludes our description of the case of $\sedk(X) = 0$.

    \medskip

    We now consider the case of $\sedk(X) \ge 1$.
    In this case, we return $\absorbingmatrix$ for matrix distance queries.
    If $\sedk(Y) > k$, we do not do any further preprocessing.
    Upon a distance propagation query, let $G$ be the restriction of $\oAGw(X, Z)$ onto the vertices $(i, j)$ with $|i - j| \le 2k$.
    Note that the distances between pairs of vertices $(0, k-i)$ and $(|X|, |Z|-j)$ for all $i, j \in \fragmentcc{0}{k}$ in $G$ are $k$-equivalent to the corresponding distances in $\oAGw(X, Z)$ due to \cref{lm:paths_dont_deviate_too_much}.
    We use Klein's algorithm (\cref{thm:klein}) to compute the matrix $M'$ of such distances in $G$ in $\Oh(|Y|k \log |Y|)$ time.
    We then min-plus multiply $v^T$ and $M'$ in time $\Oh(k^2) \le \Oh(|Y|k)$.
    The total query time is $\Oh(|Y|k \log |Y|) = \Oh(|Y| \sedk(Y) \log |Y|)$.
    To answer path queries, we use trivial dynamic programming in $\Oh(|Y| k)$ time.

    It remains to consider the case when $\sedk(X) \ge 1$ and $\sedk(Y) \le k$.
    As $X$ is a substring of~$Y$, it follows that $\sedk(X) \le \sedk(Y)$ due to \cref{lm:superadditivity-of-sedk}.
    In this case, we proceed similarly to the algorithm of \cite[Lemma 4.9]{CKW23} in the standard setting.
    We set $\ell \coloneqq \floor{k / \sedk(Y)}$ and run the algorithm of \cref{lem:decomp3} with parameters $4k$ and $\ell$ for both $X$ and $Y$, arriving at decompositions
    $X=\bigodot_{i=0}^{m_X-1} X_i$ and $Y=\bigodot_{j=0}^{m_Y-1} Y_j$ with $\Oh(k / \ell + \sedk(X))$ and $\Oh(k / \ell + \sedk(Y))$ fresh phrases correspondingly, both of which are $\Oh(k / \ell)$ due to the definition of $\ell$.
    Next, we define a box decomposition of the alignment graph $\oAGw(X, Y)$ as the indexed family of boxes $(B_{i, j})_{(i, j) \in \fragmentco{0}{m_X} \times \fragmentco{0}{m_Y}}$, where box $B_{i, j}$ is the subgraph of $\oAGw(X, Y)$ induced by $\fragmentcc{x_i}{x_{i+1}} \times \fragmentcc{y_i}{y_{i+1}}$.
    We say that a box $B_{i, j}$ is \emph{relevant} if it contains at least one vertex $(x, y)$ with $|x - y| \le 4k$.
    Call box $B_{i, j}$ fresh if $X\fragmentco{x_i}{x_{i+1}}$ or $Y\fragmentco{y_i}{y_{i+1}}$ is a fresh phrase.
    We say that two boxes are isomorphic if and only if their corresponding fragments in $X$ and $Y$ match.

    \begin{claim}
        Every relevant box is isomorphic to some fresh relevant box.
    \end{claim}

    \begin{claimproof}
        Consider some relevant box $B_{i, j}$.
        We prove the claim by induction on $i$ and $j$.
        Without loss of generality assume that $x_i \ge y_j$.
        If $B_{i, j}$ is fresh, it is isomorphic to itself.
        Otherwise, due to \cref{lem:decomp3}, there is an index $i' \in \fragmentco{0}{i}$ with $X\fragmentco{x_i}{x_{i+1}}=X\fragmentco{x_{i'}}{x_{i'+1}}$ and $x_i - x_{i'} \le 4k$.
        Hence, $B_{i', j}$ is isomorphic to $B_{i, j}$.
        As $B_{i, j}$ is relevant, it contains some vertex $(x, y)$ with $|x - y| \le 4k$.
        Because $x_i \ge y_j$, we can further assume that $x \ge y$.
        Thus, vertex $(x - (x_i - x_{i'}), y)$ belongs to $B_{i', j}$ and satisfies $|(x - (x_i - x_{i'})) - y| \le 4k$.
        Therefore, $B_{i', j}$ is a relevant box.
        By the induction hypothesis, it is isomorphic to some fresh relevant box and so is $B_{i, j}$.
    \end{claimproof}
    
    Note that, for each $i \in \fragmentco{0}{m_X}$, there are $\Oh(k / \ell)$ relevant boxes of the form $B_{i, \cdot}$, and for each $j \in \fragmentco{0}{m_Y}$ there are $\Oh(k / \ell)$ relevant boxes of the form $B_{\cdot, j}$.
    In particular, there are $\Oh(|Y| k / \ell^2)$ relevant boxes in total and $\Oh(k^2 / \ell^2)$ fresh relevant boxes.
    For every fresh relevant box $B_{i, j}$, we run the preprocessing of \cref{lm:box-with-one-string-changing} in $\Oh(\ell^2 \log^2 |Y|)$ time and save the results into a dictionary (implemented as a self-balancing binary tree) with $(X_i, Y_j)$ as the key.
    It takes $\Oh(k^2 / \ell^2 \cdot \ell^2 \log^2 |Y|) = \Oh(k^2 \log^2 |Y|)$ time.
    This concludes the preprocessing phase.
    
    For a distance propagation query, we analyze how the fragments $Y_j$ and the corresponding boxes $B_{i, j}$ would change under the $u$ input edits transforming $Y$ into $Z$.
    Let the new boxes be $B'_{i, j}$ and their union be $G'$.
    Consider the matrix $M'$ consisting of distances $\dist_{G'}((0, k-i), (|X|, |Z|-j))$ for $i, j \in \fragmentcc{0}{k}$.
    Note that $G'$ contains all vertices $(x, y) \in \fragmentcc{0}{|X|} \times \fragmentcc{0}{|Z|}$ with $|x - y| \le 3k$, and thus these distances in $G'$ are $k$-equivalent to the corresponding distances in $\oAGw(X, Z)$.
    We want to compute $v^T \otimes M'$.
    Add an auxiliary vertex $r$ to $G'$ and add edges from it to all vertices of the form $(0, k - i)$ for $i \in \fragmentcc{0}{k}$ with weights corresponding to the values of $v$.
    Let the new graph be $\oG'$.
    The values of $v^T \otimes M'$ are equal to the distances from $r$ to the vertices $(|X|, |Z| - j)$ for $j \in \fragmentcc{0}{k}$ in $\oG'$.
    To determine these distances, we process relevant boxes $B'_{i, j}$ in the lexicographic order.
    For each such box, we already have the distances from $r$ to every input vertex of $B'_{i, j}$ (either computed trivially or from boxes $B'_{i - 1, j}$ and $B'_{i, j - 1}$).
    We use the dictionary to find the data structure of \cref{lm:box-with-one-string-changing} corresponding to $B'_{i, j}$ and query it to compute the distances from $r$ to the output vertices of $B'_{i, j}$.
    It takes $\Oh((\ed(Y_i, Z_i) + 1) \cdot \ell \log |Y|)$ time for a total of $\Oh((u \cdot (k / \ell) + |Y| k / \ell^2) \cdot \ell \log |Y|) = \Oh(u k \log |Y| + |Y| \cdot \sedk(Y) \log |Y|)$.

    To answer path queries, we backtrack the computation of distance propagation queries similarly to the case $\sedk(X) = 0$.
\end{proof}

We are now ready to describe the small self-edit distance dynamic algorithm that maintains instances of \cref{lm:block-data-structure} over $\Theta(k^{2 - \gamma}) \times \Theta(k^{2 - \gamma})$-sized subgraphs of $\oAGw(X, X)$.

\begin{lemma} \label{lm:dynamic-algo-for-the-conquer-step}
    Let $\gamma \in (0, 1]$ be a real value, $k \ge 1$ be an integer, $X, Z \in \Sigma^+$ be strings, and $w \colon \Esigma^2 \to [0, W]$ be a normalized weight function.
    There is a fully persistent data structure $\localds(\gamma, w, k, X)$ that supports the following operations:
    \begin{description}
        \item[Build:] Given $\gamma$, $k$, $X$, and $\Oh(1)$-time oracle access to $w$, build $\localds(\gamma, w, k, X)$ in $\Oh(|X| k^{\gamma}\cdot \allowbreak \log^2 (|X| + 1))$ time.
        \item[Split:] Given $\localds(\gamma, w, k, X)$ and $p \in \fragmentco{1}{|X|}$, build $\localds(\gamma, w, k, X\fragmentco{0}{p})$ and $\localds(\gamma, w, k, \linebreak X\fragmentco{p}{|X|})$ in $\Oh(k^2 \cdot \log^2 (|X| + 1))$ time.
        \item[Concatenate:] Given $\localds(\gamma, w, k, X)$ and $\localds(\gamma, w, k, Z)$, build $\localds(\gamma, w, k, X \cdot Z)$ in $\Oh(k^2\cdot \linebreak \log^2 (|X| + |Z|))$ time.
        \item[Query:] Given $\localds(\gamma, w, k, X)$ such that $\sed(X) \le k$ and a sequence of $u \le k$ edits transforming $X$ into a string $Z$, compute $\wed_{\le k}(X, Z)$ in $\Oh(k^{3 - \gamma} \log (|X| + 1) + k^2 \log^2 (|X| + 1))$ time.
            Furthermore, if $\wed(X, Z) \le k$, the query returns the breakpoint representation of a $w$-optimal alignment of $X$ onto $Z$.
    \end{description}
\end{lemma}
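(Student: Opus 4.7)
\textbf{Proof plan for \cref{lm:dynamic-algo-for-the-conquer-step}.}

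The plan is to maintain $\localds(\gamma, w, k, X)$ as a three-layer structure. First, I would fix a \emph{block partition} $X = \bigodot_{i=0}^{m-1} X_i$ with $|X_i| = \Theta(k^{2-\gamma})$, and for each $i$ set $Y_i$ to be the extension of $X_i$ within $X$ by $\Theta(k)$ characters on either side. Since $\gamma \le 1$, we have $|Y_i| = \Theta(k^{2-\gamma})$. Per block, I would store the instance of \cref{lm:block-data-structure} for the pair $(X_i, Y_i)$ with threshold $k$, which costs $\Oh(|Y_i| + k^2 \log^2 n) = \Oh(k^2 \log^2 n)$ per block. Summed over $m = \Oh(|X|/k^{2-\gamma})$ blocks, the preprocessing is $\Oh(|X| k^\gamma \log^2 n)$. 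Finally, I would invoke the distance-matrix query of \cref{lm:block-data-structure} for every block, storing the resulting element $D_{i,i+1} \in \matmon_{\le k+1}$ (which may equal $\absorbingmatrix$ when $\sedk(X_i) \ge 1$), and maintain those matrices in a dynamic range-composition structure (\cref{fct:dynamic-range-composition-ds}) using the semigroup $(\matmon_{\le k+1}, \xotimes)$. All component structures are persistent, so the composite structure is as well.

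For \textsc{split} and \textsc{concatenate}, only $\Oh(1)$ blocks near the boundary change; I rebuild the corresponding \cref{lm:block-data-structure} instances from scratch in $\Oh(k^2 \log^2 n)$ time each and adjust the range-composition structure with $\Oh(1)$ persistent split/join operations (each costing $\Oh(k^2 \log n)$ because $\xotimes$ is evaluated in $\Oh(k^2)$ time via \cref{thm:smawk}), staying within the claimed budgets.

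For a \textsc{query} with $u \le k$ edits transforming $X$ into $Z$, I would first declare a block \emph{interesting} if it is touched by an input edit (there are $\Oh(u)$ such blocks) or satisfies $\sedk(X_i) \ge 1$ (by the partition super-additivity of \cref{lm:superadditivity-of-sedk}, there are at most $\sum_i \sedk(X_i) \le \sed(X) \le k$ such blocks). I would then compute $\bigotimes_{i=0}^{m-1} D'_{i,i+1}$, where $D'_{i,i+1}$ is the analogous block matrix in $\oAGw(X,Z)$, by walking left-to-right and keeping a current row $v^T$. At each interesting block, I invoke the distance-propagation query of \cref{lm:block-data-structure}; between two consecutive interesting indices I query the range-composition structure to obtain $\Oh(\log n)$ precomputed matrices and multiply $v^T$ through them using \cref{thm:smawk}, each vector-matrix product costing $\Oh(k)$. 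The crucial point is that every range between interesting blocks consists entirely of blocks with $\sedk(X_i) = 0$, so no $\absorbingmatrix$ symbol is ever encountered in the range composition. Path reconstruction is handled by backtracking through the same computation, using path queries of \cref{lm:block-data-structure} on interesting blocks and recursing through the balanced tree of \cref{fct:dynamic-range-composition-ds} on uninteresting runs, in the style of \cref{lm:klein-upgraded}.

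The most delicate point, and my expected main obstacle, is the time analysis of a query. The unaffected-run work sums to $\Oh(|I| \cdot k \log n) = \Oh(k^2 \log n)$. The per-block cost of distance-propagation queries is $\Oh\bigl(((u_i+1)k + (\sedk(Y_i)+1) k^{2-\gamma}) \log n\bigr)$. The $(u_i+1)k$ terms sum to $\Oh((u+|I|)k \log n) = \Oh(k^2 \log n)$, and the ``$+1$'' parts of the second term contribute $\Oh(|I| k^{2-\gamma} \log n) = \Oh(k^{3-\gamma} \log n)$. What remains is bounding $\sum_{i \in I} \sedk(Y_i)$ by $\Oh(\sed(X)) = \Oh(k)$; this is the reason \cref{lm:superadditivity-of-sedk} alone does not suffice, as the windows $Y_i$ overlap. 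I plan to resolve this by a constant-coloring argument: since each position of $X$ lies in $\Oh(1)$ windows $Y_i$, the $Y_i$'s can be partitioned into $\Oh(1)$ groups of pairwise-disjoint fragments, within each of which \cref{lm:superadditivity-of-sedk} applies, yielding $\sum_i \sedk(Y_i) = \Oh(\sed(X))$. Plugging in, the interesting-block work is $\Oh(k^{3-\gamma} \log n)$, and combined with path-reconstruction overheads of $\Oh(k^2 \log^2 n)$ from the recursive Klein invocations, the total query complexity matches $\Oh(k^{3-\gamma} \log n + k^2 \log^2 n)$ as claimed.
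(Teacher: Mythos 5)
Your proposal is correct and follows essentially the same route as the paper's proof: the same $\Theta(k^{2-\gamma})$-block partition with $\Theta(k)$-extended windows, per-block instances of \cref{lm:block-data-structure} combined via a range-composition structure over $(\matmon, \xotimes)$, the same classification of interesting blocks, and the same resolution of the overlapping-window issue via a constant coloring (the paper uses exactly the even/odd split of the $Y_i$ together with \cref{lm:superadditivity-of-sedk}). The only deviations are calibration details the paper handles explicitly (it runs \cref{lm:block-data-structure} with parameter $5k$ and semigroup $\matmon_{\le 5k+1}$ for slack, and treats the degenerate case $|X| < B$ separately).
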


\begin{proof}
    Let $B \coloneqq 10 \floor{k^{2 - \gamma}}$.
    We first describe the data structure $\localds(\gamma, w, k, X)$ and how to build it.
    If $|X| < B$, we do not store anything, so there is nothing to be built.
    Otherwise, let a sequence $X_0, X_1, \ldots, X_{m-1}$, where $X_i = X\fragmentco{x_i}{x_{i+1}}$ be called a \emph{$B$-partitioning} of $X$ if $x_0 = 0$, $x_m = |X|$, and $x_{i+1}-x_i \in \fragmentco{B}{2B}$ for all $i \in \fragmentco{0}{m}$.
    Fix a $B$-partitioning $X_0, X_1, \ldots, X_{m-1}$ of $X$.
    Furthermore, define a sequence of fragments $(Y_i)_{i=0}^{m-1}$ such that $Y_i = X\fragmentco{y_i}{y'_{i+1}}$ for $y_i = \max(x_i - 2k, 0)$ and $y'_{i+1} = \min(x_{i+1} + 2k, \abs{X})$ for all $i \in \fragmentco{0}{m}$.
    Let $G_i$ be the subgraph of $\oAGw(X, X)$ induced by $\fragmentcc{x_i}{x_{i+1}} \times \fragmentcc{y_i}{y'_{i+1}}$.
    Denote by $G$ the union of all subgraphs $G_i$.
    For each $i \in \fragmentcc{0}{m}$, denote $V_i \coloneqq \setof{(x_i, y) \in V(G)}{y \in \fragmentcc{y_i}{y'_i}}$, where $y'_0 = 0$ and $y_m = |X|$.

    Our data structure (implicitly) maintains a $B$-partitioning $X_0, X_1, \ldots, X_{m-1}$ of $X$.
    Furthermore, it stores a dictionary implemented as a functional balanced binary tree that, for all $i \in \fragmentco{0}{m}$, stores the data structures of \cref{lm:block-data-structure} for $X_i$, $Y_i$, and parameter $5k$.
    Moreover, we store a dynamic range composition data structure of \cref{fct:dynamic-range-composition-ds} with $(\matmon_{\le 5k+1}, \xotimes)$ as the semigroup over the elements $M_i$ of $\matmon_{\le 5k+1}$ returned by the distance matrix queries to \cref{lm:block-data-structure}.
    Finally, we store a functional balanced binary tree over $(x_i)_{i=0}^{m}$.

    At the initialization phase, we build an arbitrary $B$-partitioning of $X$.
    For every $i \in \fragmentco{0}{m}$, we initialize the algorithm of \cref{lm:block-data-structure} for $X_i$ and $Y_i$ and query elements $M_i \in \matmon_{\le 5k+1}$.
    It takes $\Oh(m \cdot (B + k^2 \log^2 B)) = \Oh(|X|k^{\gamma} \log^2 |X|)$ time.
    We then build the dynamic range composition data structure of \cref{fct:dynamic-range-composition-ds} over $(M_i)_{i=0}^{m-1}$ in time $\Oh(m k^2) = \Oh(|X| k^{\gamma} \log |X|)$ and a functional balanced binary tree over $(x_i)_{i=0}^m$ in time $\Oh(|X|)$.
    This concludes the initialization phase.

    \medskip

    For a split operation, we show how to compute $\localds(\gamma, w, k, X\fragmentco{0}{p})$; a symmetric approach can be used to compute $\localds(\gamma, w, k, X\fragmentco{p}{|X|})$.
    If $p < B$, there is nothing to be built.
    Otherwise, in $\Oh(\log |X|)$ time, we identify the index $i \in \fragmentco{0}{m}$ such that $p \in \fragmentco{x_i}{x_{i+1}}$.
    The $B$-partitioning of $X\fragmentco{0}{p}$ will consist of phrases $X_0, \ldots, X_{i-2}$ and either one or two new phrases that partition $X\fragmentco{x_{i-1}}{p}$.
    In $\Oh(\log |X|)$ time, we can get a dictionary storing the data structures of \cref{lm:block-data-structure} for $j \in \fragmentco{0}{i-1}$ and a functional balanced binary tree over $(x_j)_{j=0}^{i - 1}$.
    Analogously, in $\Oh(k^2 \log |X|)$ time we can get the corresponding prefix of the dynamic range composition data structure we store.
    It remains to build the data structure of \cref{lm:block-data-structure} for the last one or two phrases in the $B$-partitioning of $X\fragmentco{0}{p}$ and adjust the data structures we store accordingly.
    It takes $\Oh(k^2 \log^2 |X|)$ time in total.
    The only major difference with constructing $\localds(\gamma, w, k, X\fragmentco{p}{|X|})$ is that the indices $x_i$ will be shifted, and we need to subtract $p$ from all of them in the corresponding balanced binary tree.

    \medskip

    We now describe how to handle a concatenate operation.
    If $|X| < B$ and $|Z| < B$, we build the data structure $\localds(\gamma, w, k, X \cdot Z)$ from scratch in $\Oh(k^2 \log^2 (|X| + |Z|))$ time.
    If $|X| \ge B$ and $|Z| < B$, we remove the last phrase in the $B$-partitioning of $X$ and replace it with either one or two phrases partitioning the end of $X$ and $Z$.
    We update the data structure for $X$ similarly to how we implemented the split operation.
    It takes $\Oh(k^2 \log^2 (|X| + |Z|))$ time.
    If $|X| < B$ and $|Z| \ge B$, we proceed analogously.
    It remains to consider the case $|X|, |Z| \ge B$.
    In this case, we just concatenate the $B$-partitionings of $X$ and $Z$ and join the data structures we store (while shifting the indices of the partitioning of $Z$) in time $\Oh(k^2 \log (|X| + |Z|))$.

    \medskip

    It remains to handle a query.
    If $|X| < B$, we compute the corresponding $w$-optimal path from $(0, 0)$ to $(|X|, |Z|)$ in the subgraph of $\oAGw(X, Z)$ induces by the vertices $(x, y)$ with $|x - y| \le k$ in time $\Oh(k \cdot |X|) \le \Oh(k^{3 - \gamma})$.
    The computed distance is $k$-equivalent to $\wed(X, Z)$ due to \cref{lm:paths_dont_deviate_too_much}.
    Otherwise, if $|X| \ge B$, we analyze how the fragments $Y_i$ and the corresponding subgraphs $G_i$ would change under the input edits transforming the second string in $\oAGw(X, X)$ into $Z$.
    Let the new subgraphs be $G'_i$, their union be $G'$, and the changed sets $V_i$ be $V'_i$.
    We compute the required distance in $G'$.
    Note that $G'$ contains all vertices $(x, y) \in \fragmentcc{0}{|X|} \times \fragmentcc{0}{|Z|}$ with $|x - y| \le k$, and thus the computed distance is $k$-equivalent to $\wed(X, Z)$ due to \cref{lm:paths_dont_deviate_too_much}.
    
    For $i, j \in \fragmentcc{0}{m}$ with $i \le j$, let $D'_{i, j}$ denote the matrix of pairwise distances from $V'_i$ to $V'_j$ in $G'$, where rows of $D'_{i, j}$ represent the vertices of $V'_i$ in the decreasing order of the second coordinate, and columns of $D'_{i, j}$ represent the vertices of $V'_j$ in the decreasing order of the second coordinate.
    As $V'_i$ is the separator between the graphs $G'_0, \ldots, G'_{i-1}$ and the graphs $G'_i, \ldots, G'_{m-1}$, we have $D'_{\ell, r} = \bigotimes_{i \in \fragmentco{\ell}{r}} D'_{i, i + 1}$ for all $\ell, r \in \fragmentcc{0}{m}$ with $\ell < r$.
    The value we want to compute is the single entry of $D'_{0, m}$.
    
    Let $q$ be the number of indices $i \in \fragmentco{0}{m}$ with $\sedk(X_i) \ge 1$.
    As $\sed(X) \le k$ is required for the query, we have $q \le k$ by \cref{lm:superadditivity-of-sedk}.
    Furthermore, we can find all such indices $i$ in time $\Oh((q + 1) \log |X|)$ as the indices that have $M_i = \absorbingmatrix$ stored in the dynamic range composition data structure.

    Each input edit affects a constant number of subgraphs $G_i$.
    Let $i_1 \le i_2 \le \cdots \le i_t$ for $t = \Oh(q + u) \le \Oh(k)$ be the indices of all the affected subgraphs and all subgraphs that have $\sedk(X_i) \ge 1$.

    We have
    \[D'_{0, m} = D'_{0, i_1} \otimes D'_{i_1, i_1+1} \otimes D'_{i_1+1, i_2} \otimes \cdots \otimes D'_{i_t, i_t+1} \otimes D'_{i_t+1,m}.\]
    We compute this product from left to right maintaining the current row $D'_{0, j}$.
    By \cref{lm:oAGw-properties} the shortest paths between vertices of $G'_j$ stay within $G'_j$.
    Therefore, for all $j \in \fragmentoo{i_{\ell} + 1}{i_{\ell+1}}$, we have $D'_{j, j + 1} = M_j$, and thus we can use the query operation of the dynamic range composition data structure to get a list $A_1, \ldots, A_a$ of matrices with $a = \Oh(\log |X|)$ and
    \[A_1 \otimes \cdots \otimes A_a = D'_{i_{\ell} + 1, i_{\ell} + 2} \otimes \cdots \otimes D'_{i_{\ell + 1} - 1, i_{\ell + 1}} = D'_{i_{\ell} + 1, i_{\ell + 1}}.\]
    We then use SMAWK algorithm (\cref{thm:smawk}) to multiply a row by such matrices in $\Oh(k)$ time each for a total of $\Oh(k^2 \log |X|)$.
    To multiply the row by $D'_{i_{\ell}, i_{\ell + 1}}$, we use \cref{lm:block-data-structure}.
    Let $u_1, \ldots, u_t$ with $\sum_{\ell} u_{\ell} = \Oh(u)$ be the number of input edits applied to the fragments $Y_{i_1}, \ldots, Y_{i_t}$.
    The application of \cref{lm:block-data-structure} for $D'_{i_{\ell}, i_{\ell} + 1}$ costs $\Oh(((u_{\ell}+1) k + (\sedk(Y_{\ell}) + 1) B) \log |X|)$ time for a total of $\Oh(k^{3 - \gamma} \log |X|)$.
    Here, note that even-indexed fragments $Y_{2j}$ are disjoint and odd-indexed fragments $Y_{2j+1}$ are disjoint, so we have $\sum_{j} \sedk(Y_{2j}) \le \sed(X) \le k$ and $\sum_j \sedk(Y_{2j+1}) \le \sed(X) \le k$ due to \cref{lm:superadditivity-of-sedk}.

    We computed $\wed_{\le k}(X, Z)$.
    If $\wed(X, Z) \le k$, we now compute the $w$-optimal alignment of $X$ onto~$Z$.
    For that, we backtrack the computation of $D'_{0, m}$.
    We backtrack the answer down the computation tree (while pruning branches with distance zero) to instances of \cref{lm:block-data-structure}, which allow for path reconstruction.
    It analogously takes $\Oh(k^{3 - \gamma} \log (|X| + 1) + k^2 \log^2 (|X| + 1))$ time in total.
\end{proof}

To improve the query time of \cref{lm:dynamic-algo-for-the-conquer-step} when $\wed(X, Z) \ll k$, we maintain a logarithmic number of instances of \cref{lm:dynamic-algo-for-the-conquer-step} for exponentially growing thresholds.

\begin{corollary} \label{cor:dynamic-algo-for-the-conquer-step}
    Let $\gamma \in (0, 1]$ be a real value, $k \ge 1$ be an integer, $X, Z \in \Sigma^+$ be strings, and $w \colon \Esigma^2 \to [0, W]$ be a normalized weight function.
    There is a fully persistent data structure $\localdscor(\gamma, w, k, X)$ that supports the following operations:
    \begin{description}
        \item[Build:] Given $\gamma$, $k$, $X$, and $\Oh(1)$-time oracle access to $w$, build $\localdscor(\gamma, w, k, X)$ in $\Oh(|X| k^{\gamma} \cdot \allowbreak \log^2 (|X| + 1))$ time.
        \item[Split:] Given $\localdscor(\gamma, w, k, X)$ and $p \in \fragmentco{1}{|X|}$, build $\localdscor(\gamma, w, k, X\fragmentco{0}{p})$ and $\localdscor(\gamma, w, k, \linebreak X\fragmentco{p}{|X|})$ in $\Oh(k^2\cdot \log^2 (|X| + 1))$ time.
        \item[Concatenate:] Given $\localdscor(\gamma, w, k, X)$ and $\localdscor(\gamma, w, k, Z)$, build $\localdscor(\gamma, w, k, X \cdot Z)$ in $\Oh(k^2 \cdot \linebreak \log^2 (|X| + |Z|))$ time.
        \item[Query:] Given $\localdscor(\gamma, w, k, X)$, an integer $d \in \fragmentcc{1}{k}$ such that $\sed(X) \le d$, and a sequence of $u \le d$ edits transforming $X$ into a string $Z$, compute $\wed_{\le d}(X, Z)$ in $\Oh(d^{3 - \gamma} \log (|X| + 1) + d^2 \log^2 (|X| + 1))$ time.
            Furthermore, if $\wed(X, Z) \le d$, the query returns the breakpoint representation of a $w$-optimal alignment of $X$ onto $Z$.
    \end{description}
\end{corollary}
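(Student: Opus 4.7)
The plan is to mimic the approach used in \cref{lm:simple-complete-algorithm-upgraded}: maintain $\Oh(\log k)$ parallel instances of \cref{lm:dynamic-algo-for-the-conquer-step} for geometrically increasing thresholds, and route each query to the instance whose threshold is the tightest power-of-two upper bound on~$d$.

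Concretely, the data structure $\localdscor(\gamma, w, k, X)$ will store an instance $\localds(\gamma, w, 2^i, X)$ for every integer $i \in \fragmentcc{0}{\ceil{\log k}}$. The build, split, and concatenate operations of $\localdscor$ are implemented by performing the corresponding operation on each of the $\Oh(\log k)$ stored instances. For build, the total cost $\sum_{i=0}^{\ceil{\log k}} \Oh(|X| \cdot 2^{i\gamma}\log^2(|X|+1))$ is a geometric series in $i$ dominated by its top term, yielding the claimed $\Oh(|X| k^{\gamma}\log^2(|X|+1))$ exactly because $\gamma > 0$. For split and concatenate, analogously $\sum_{i=0}^{\ceil{\log k}} \Oh((2^i)^2 \log^2(|X|+1)) = \Oh(k^2 \log^2(|X|+1))$, which matches the bounds in the statement.

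For a query with parameter $d$ and $u \le d$ edits, set $i^\star \coloneqq \ceil{\log d}$, so that $d \le 2^{i^\star} \le 2d$. The preconditions of the underlying query in \cref{lm:dynamic-algo-for-the-conquer-step} are met: indeed, $\sed(X) \le d \le 2^{i^\star}$ and $u \le d \le 2^{i^\star}$. We therefore invoke the query of $\localds(\gamma, w, 2^{i^\star}, X)$ with these $u$ edits, obtaining $\wed_{\le 2^{i^\star}}(X, Z)$, together with the breakpoint representation of a $w$-optimal alignment whenever $\wed(X, Z) \le 2^{i^\star}$, in time $\Oh((2^{i^\star})^{3-\gamma}\log(|X|+1) + (2^{i^\star})^2 \log^2(|X|+1)) = \Oh(d^{3-\gamma}\log(|X|+1) + d^2 \log^2(|X|+1))$. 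If the returned value is at most $d$, we report it (and, if $\wed(X,Z)\le d$, the accompanying alignment); otherwise $\wed(X, Z) > d$, so we output $\infty$.

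There is essentially no main obstacle beyond bookkeeping: the entire argument is a direct reduction to \cref{lm:dynamic-algo-for-the-conquer-step}. The only subtle point worth verifying is that the geometric growth with $\gamma > 0$ keeps the cumulative preprocessing cost within the asymptotic bound of a single top-level instance; without $\gamma > 0$, an extra $\log k$ factor would appear in the build time, but the hypothesis $\gamma \in (0,1]$ of \cref{lm:dynamic-algo-for-the-conquer-step} precisely excludes that corner case.
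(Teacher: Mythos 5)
Your proposal is correct and matches the paper's own proof: the paper likewise maintains instances of \cref{lm:dynamic-algo-for-the-conquer-step} for thresholds $1,2,4,\ldots,2^{\ceil{\log k}}$ and answers a query with parameter $d$ by querying the $\ceil{\log d}$-th instance. Your additional remarks (the geometric-series bound for the build time using $\gamma>0$, and capping the returned value at $d$) are exactly the routine checks the paper leaves implicit.
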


\begin{proof}
    We maintain the data structures of \cref{lm:dynamic-algo-for-the-conquer-step} for thresholds $1, 2, 4, \ldots, 2^{\ceil{\log k}}$.
    This does not affect the build, split, and concatenate time complexities.
    To answer a query, we query the $\ceil{\log d}$-th data structure.
\end{proof}

\subsection{The General Case}

It remains to reduce the general case of dynamic bounded weighted edit distance to instances of small self-edit distance that we can handle using \cref{cor:dynamic-algo-for-the-conquer-step}.
For that, we follow a generalized version of the framework introduced in \cite{GK24}.
We first show that $\localdscor(\gamma, w, k, X)$ lets us compute $\wed(X, Z)$ in time $\tOh(k^{3 - \gamma})$ given as a hint some approximate alignment of $w$-cost at most $k$.

\begin{lemma} \label{lm:improve-alignment}
    Let $\gamma \in (0, 1]$ be a real value, $k \ge 1$ and $k' \ge 10k+2$ be integers, $X \in \Sigma^+$ be a string, and $w \colon \Esigma^2 \to [0, W]$ be a normalized weight function.
    There is an algorithm that, given $\localdscor(\gamma, w, k', X)$, \pillar access to $X$, and an alignment $\cA$ of $X$ onto some string $Z \in \Sigma^*$ of $w$-cost at most $k$, computes $\wed(X, Z)$ and the $w$-optimal alignment of $X$ onto $Z$ in $\Oh(k^{3 - \gamma} \log (|X| + 1) + k^2 \log^2 (|X| + 1))$ time and $\Oh(k^2)$ \pillar operations.
\end{lemma}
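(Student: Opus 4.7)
The plan is to apply the divide-and-conquer framework of~\cite[Lemma 7.5]{GK24}, using $\localdscor$'s query operation as the inner oracle for $w$-optimal alignment on a small-self-edit-distance fragment. Given $\cA \colon X \onto Z$ of $w$-cost at most~$k$, I first split $\cA$ at a vertex $(x_m, z_m)$ so that both halves $\cA_L$ and $\cA_R$ have $w$-cost at most $k/2$; such a vertex is found in $\Oh(k)$ time by scanning $\cA$'s breakpoint representation while accumulating edge weights. Around $(x_m, z_m)$, I extract a ``rigid middle'' fragment $X' = X\fragmentco{x_1}{x_2}$ whose image $Z' = Z\fragmentco{z_1}{z_2}$ under $\cA$ satisfies $\sed(X') \le 10k+2$ and $\wed(X', Z') \le 10k+2$; this is the central technique of~\cite[Lemma 7.5]{GK24}, realized with $\Oh(k^2)$ \pillar operations via Landau--Vishkin-style exploration combined with \cref{lm:compute-sedk}.

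Next I apply two splits to $\localdscor(\gamma, w, k', X)$ at positions $x_1$ and $x_2$ to obtain $\localdscor(\gamma, w, k', X')$ together with structures for the outer pieces of $X$. Invoking the query operation on $\localdscor(\gamma, w, k', X')$ with threshold $d \coloneqq 10k+2 \le k'$ and the $\Oh(d)$ edits transforming $X'$ into $Z'$ (extracted from $\cA|_{X'}$) returns the breakpoint representation of a $w$-optimal alignment $\cB' \colon X' \onto Z'$ in $\Oh(d^{3-\gamma} \log(|X|+1) + d^2 \log^2(|X|+1))$ time, since the preconditions $\sed(X') \le d$ and $u \le d$ both hold. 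Recursively improving $\cA_L$ and $\cA_R$ on the corresponding outer $\localdscor$ structures yields $\cB_L$ and $\cB_R$, and I return $\cB \coloneqq \cB_L \cdot \cB' \cdot \cB_R$.

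It then remains to verify the recursion. Let $T(n, k)$ and $P(n, k)$ denote the overall running time and \pillar operation count, respectively. Summing the per-level costs yields $T(n, k) \le 2T(n, k/2) + \Oh(k^{3-\gamma}\log(|X|+1) + k^2 \log^2(|X|+1))$ and $P(n, k) \le 2P(n, k/2) + \Oh(k^2)$; by the master theorem these solve to $T(n, k) = \Oh(k^{3-\gamma}\log(|X|+1) + k^2 \log^2(|X|+1))$ and $P(n, k) = \Oh(k^2)$, since the exponents $3 - \gamma \ge 2$ and $2$ both strictly exceed $\log_2 2 = 1$, placing the recurrence in the ``root-heavy'' regime where the top level dominates.

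The main obstacles are the extraction of the rigid middle $X'$ and the global correctness claim that combining $\cB_L, \cB', \cB_R$ produces a $w$-optimal alignment of $X$ onto $Z$; both are handled by the framework of~\cite[Lemma 7.5]{GK24} through bounded-distance exploration around $(x_m, z_m)$ together with the super-additivity property of $k$-shifted self-edit distance (\cref{lm:superadditivity-of-sedk}) and the rigidity of fragments with small self-edit distance. An additional subtlety is the accounting of the $\localdscor$ split costs: exploiting the hypothesis $k' \ge 10k+2$ taken tight (so that $k' = \Oh(k)$) together with the persistence of $\localdscor$, each split contributes additively to the local $f(n, k_i)$ at its recursion level and telescopes into the stated $\Oh(k^2 \log^2(|X|+1))$ term.
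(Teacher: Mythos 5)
Your proposal is correct and follows essentially the same route as the paper: both run the divide-and-conquer framework of \cite[Lemma 7.5]{GK24} with the calls to its static subroutine (\cite[Lemma 7.3]{GK24}) replaced by split and query operations on the persistent structure $\localdscor(\gamma, w, k', X)$, and both bound the cost via the recurrence $T(n,d) \le 2T(n,\lfloor d/2\rfloor) + \Oh(d^{3-\gamma}\log(n+1) + d^2\log^2(n+1))$ with $\Oh(d^2)$ \pillar operations per level. The extra low-level details you supply (locating the split vertex, extracting the rigid middle, the $k'=\Theta(k)$ accounting of split costs) are consistent with what the paper delegates to the cited framework.
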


\begin{proof}
    We apply the algorithm of \cite[Lemma 7.5]{GK24} by replacing the calls to \cite[Lemma 7.3]{GK24} with the calls to $\localdscor(\gamma, w, k', X)$.
    \cite[Lemma 7.5]{GK24} takes an alignment $\cA$ and uses it to compute the $w$-optimal alignment in a divide-and-conquer manner.
    For a conquer step, \cite[Lemma 7.5]{GK24} uses \cite[Lemma 7.3]{GK24} to compute $w$-optimal alignments between $X'$ and its image $Z'$ under $\cA$ for fragments $X'$ of $X$ with $\sed(X') \le 10k+2$ and $\wed_{\cA}(X', Z') \le k$.
    For that, we can use the split and query operations of $\localdscor(\gamma, w, k', X)$.
    The number of \pillar operations is $\Oh(k^2)$ as it is not different from \cite[Lemma 7.5]{GK24}.
    The time complexity $T(|X|, k)$ follows the following recurrence relation:
    \begin{align*}
        T(n, 0) &= 1,&\\
        T(n, d) &\le 2 T(n, \floor{d / 2}) + \Oh(d^{3 - \gamma} \log (n + 1) + d^2 \log^2 (n + 1))&\text{ for $d \ge 1$.}\\
    \end{align*}
    Therefore, the time complexity of the algorithm is $\Oh(k^{3 - \gamma} \log (|X| + 1) + k^2 \log^2 (|X| + 1))$.
\end{proof}

Before proceeding to our main trade-off dynamic algorithm, we formalize the update types that the string $X$ supports and show how to efficiently maintain \pillar over $X$ undergoing such updates.

\begin{definition}[Block Edit]\label{def:block-edit}
    Let $X \in \Sigma^*$ be a string.
    A \emph{block edit} is one of the following transformations applied to $X$:
    \begin{description}
        \item[Character Insertion:] Insert an arbitrary character $c \in \Sigma$ at some position $i \in \fragmentcc{0}{|X|}$ in $X$.
        \item[Character Deletion:] Delete some character $X[i]$ for $i \in \fragmentco{0}{|X|}$ from $X$.
        \item[Character Substitution:] Replace $X[i]$ in $X$ for some $i \in \fragmentco{0}{|X|}$ with some other character $c \in \Sigma$.
        \item[Substring Removal:] Remove fragment $X\fragmentco{\ell}{r}$ from $X$ for some $\ell, r \in \fragmentcc{0}{|X|}$ with $\ell < r$.
        \item[Copy-Paste:] For some $\ell, r, p \in \fragmentcc{0}{|X|}$ with $\ell < r$, insert $X\fragmentco{\ell}{r}$ at position $p$ of $X$.
    \end{description}
\end{definition}

\begin{lemma}\label{lm:dynamic-pillar-under-block-edits}
    Let $X \in \Sigma^{*}$ be a string and $n \ge 2$ be an integer.
    There is an algorithm that dynamically maintains \modelname{} over $X$ while $X$ undergoes block edits assuming that $|X| \le n$ holds throughout the lifetime of the algorithm.
    The algorithm takes $\Oh((|X| + 1) \log^{\Oh(1)}\log n)$ time for the initialization and $\Oh(\log n \log^{\Oh(1)}\log n)$ time for every block edit and \modelname{} operation.
\end{lemma}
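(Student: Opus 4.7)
The plan is to extend the existing dynamic \pillar data structures of Kempa and Kociumaka~\cite{KK22} and Charalampopoulos, Kociumaka, and Wellnitz~\cite{CKW22}, which already maintain a string $X$ under character edits (insertion, deletion, substitution) within the claimed time bounds, so that they also support substring removals and copy-pastes. Both of these constructions represent $X$ by a balanced tree augmented with a locally consistent hashing scheme enabling \pillar queries in $\Oh(\log n \log^{\Oh(1)} \log n)$ time; internally, they already rely on $\Oh(\log n \log^{\Oh(1)} \log n)$-time split and concatenate primitives, which are used, among other things, to implement character edits.

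My approach would be to make the underlying trees and the layered hash-based indices fully persistent (in the purely functional sense), a standard transformation that preserves the time bounds up to constant factors. With persistent split and concatenate in hand, substring removal of $X\fragmentco{\ell}{r}$ is implemented by splitting $X$ at positions $\ell$ and $r$ and concatenating the two outer pieces. Copy-paste of $X\fragmentco{\ell}{r}$ at position $p$ is implemented by two splits that isolate a reference to the fragment $B \coloneqq X\fragmentco{\ell}{r}$, one split of $X$ at position $p$, and two concatenations yielding $X\fragmentco{0}{p} \cdot B \cdot X\fragmentco{p}{|X|}$. Persistence is essential here: the copied fragment $B$ is shared as a subtree rather than physically duplicated, so copy-paste respects the update-time budget regardless of~$r-\ell$. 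The invariant $|X| \le n$ keeps the tree depth in $\Oh(\log n)$ throughout, so every subsequent \pillar operation still runs in $\Oh(\log n \log^{\Oh(1)} \log n)$ time, and initialization inherits the $\Oh((|X|+1) \log^{\Oh(1)} \log n)$ bulk-build bound of~\cite{KK22}.

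The main obstacle I expect is verifying that the signature trees and auxiliary indices of~\cite{KK22,CKW22} tolerate persistent splits and concatenates without degrading their \pillar query guarantees. The Karp--Rabin style fingerprints combine in constant time under concatenation (using precomputed powers), and the locally consistent signature hierarchy is designed precisely to behave well under arbitrary factorizations of the string, so the extension is expected to go through without further ideas. The detailed bookkeeping of propagating persistence through every layer of the data structure, rather than the high-level reduction to split and concatenate, is where the real work lies.
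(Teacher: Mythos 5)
Your high-level route is the same as the paper's: both implement block edits on top of the dynamic \pillar structures of \cite{KK22,CKW22}, whose split and concatenate primitives on a persistent collection of strings directly yield substring removal and copy-paste. The gap is in the running-time claim. In these structures the cost of an operation is not governed by the parse-tree depth of the \emph{current} string alone; it is governed by the size of the whole maintained collection (the accumulated versions and the internal symbols they introduce), which grows with every update --- and persistence, which you correctly identify as necessary for copy-paste, is exactly what prevents this history from being discarded. So the straightforward use of \cite{KK22,CKW22} gives $\Oh(\log(n+m)\log^{\Oh(1)}\log(n+m))$ time per operation, where $m$ is the number of block edits performed so far; since the lemma places no bound on the number of updates over the lifetime of the algorithm, your asserted bound $\Oh(\log n \log^{\Oh(1)}\log n)$ does not follow from ``$|X|\le n$ keeps the tree depth in $\Oh(\log n)$.''

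The paper closes precisely this hole: it rebuilds the data structure from scratch every $n$ block edits, keeping the relevant size parameter at $\Oh(n)$, and then deamortizes the rebuilding by maintaining two copies and switching between them in epochs (rebuilding one in the background while operating on the other). Your proposal needs this rebuilding-in-epochs mechanism (or some equivalent way of pruning the history) to be correct; conversely, the persistence bookkeeping you flag as ``the real work'' is essentially already provided by \cite{KK22,CKW22} and is not where the difficulty lies.
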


\begin{proof}
    We use a deterministic implementation of the \modelname{} over $X$ in the dynamic setting~\cite[Section 8]{KK22},~\cite[Section 7.2]{CKW22} in epochs.
    It takes $\Oh((|X| + 1) \log^{\Oh(1)}\log n)$ time for the initialization and $\Oh(\log (n + m) \log^{\Oh(1)}\log (n + m))$ time for every block edit and \modelname{} operation where $m$ is the number of block edits applied to $X$.
    For an amortized solution, we may rebuild the data structure every $n$ block edits.
    To deamortize this solution, we use a standard idea of maintaining two versions of the data structure and rebuilding one while using the other, and then switching between the two in epochs.
\end{proof}

Our final algorithm starts with an optimal unweighted alignment of $X$ onto $Z$ and iteratively improves it using \cref{lm:improve-alignment} until it arrives at a $w$-optimal alignment.

\begin{lemma}\label{lm:full-complete-algorithm}
    There is a dynamic data structure that, given a real value $\gamma \in (0, 1]$, an integer $n \ge 2$, a string $X \in \Sigma^*$ that throughout the lifetime of the algorithm satisfies $|X| \le n$, an integer $k \ge 1$, and $\Oh(1)$-time oracle access to a normalized weight function $w \colon \Esigma^2 \to \RR_{\ge 0}$, can be initialized in $\Oh((|X| + 1) k^{\gamma} \log^2 n)$ time and allows for the following operations:
    \begin{itemize}
        \item Apply a block edit to $X$ in $\Oh(k^2 \log^2 n)$ time.
        \item Given $u \le k$ edits transforming $X$ into a string $Z \in \Sigma^*$, compute $\wed_{\le k}(X, Z)$ in $\Oh(k^{3 - \gamma} \log^2 n + k^2 \log^3 n)$ time.
            Furthermore, if $\wed(X, Z) \le k$, the query returns the breakpoint representation of a $w$-optimal alignment of $X$ onto $Z$.
    \end{itemize}
\end{lemma}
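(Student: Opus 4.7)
The algorithm combines three ingredients already developed in the paper: the dynamic \pillar of \cref{lm:dynamic-pillar-under-block-edits}, the fully persistent small-self-edit-distance data structure $\localdscor(\gamma,w,10k+2,X)$ of \cref{cor:dynamic-algo-for-the-conquer-step}, and the refinement primitive of \cref{lm:improve-alignment}. At initialization I cap $w$ at $k+1$ (without loss of generality for $\wed_{\le k}$), build a dynamic \pillar over $X$ in $\Oh((|X|+1)\log^{\Oh(1)}\log n)$ time, and build $\localdscor(\gamma,w,10k+2,X)$ in $\Oh((|X|+1)k^{\gamma}\log^2 n)$ time, matching the stated preprocessing bound.

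\textbf{Block edits.} Every update in \cref{def:block-edit} reduces to a constant number of split and concatenate operations on the persistent list representation of $X$: a character edit uses two splits and one or two concatenates; a substring deletion of $X\fragmentco{\ell}{r}$ uses two splits and one concatenate; and a copy-paste duplicates the persistent handle to $X\fragmentco{\ell}{r}$ and concatenates it at the insertion point via two splits and two concatenates. Each such operation costs $\Oh(k^2\log^2 n)$ on $\localdscor$ by \cref{cor:dynamic-algo-for-the-conquer-step} and $\Oh(\log n \cdot \log^{\Oh(1)}\log n)$ on the dynamic \pillar, giving $\Oh(k^2\log^2 n)$ per block edit.

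\textbf{Queries.} Given $u \le k$ edits from $X$ to $Z$, I first invoke \cref{lm:k2-ed} inside the dynamic \pillar to compute $\ed(X,Z)$ and, if $\ed(X,Z) \le k$, an unweighted-optimal alignment $\cA_0$ in $\Oh(k^2\log^2 n)$ wall-clock time. If $\ed(X,Z) > k$ I return $\infty$, as $\wed(X,Z) \ge \ed(X,Z)$. Otherwise I feed $\cA_0$ and $\localdscor(\gamma,w,10k+2,X)$ into \cref{lm:improve-alignment}, which produces $\wed(X,Z)$ and a $w$-optimal alignment in $\Oh(k^{3-\gamma}\log n + k^2\log^2 n)$ time plus $\Oh(k^2)$ \pillar operations; routed through \cref{lm:dynamic-pillar-under-block-edits} this fits in the stated $\Oh(k^{3-\gamma}\log^2 n + k^2\log^3 n)$ budget. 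If the returned value exceeds $k$, I output $\infty$.

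\textbf{Main obstacle.} \cref{lm:improve-alignment} is phrased for an input alignment of $w$-cost at most $k$, whereas $\cA_0$ is only unweighted-optimal and, after capping, may have $w$-cost as large as $\Theta(k^2)$. Closing this gap---\emph{iteratively improving} $\cA_0$ in the language of the overview---is the heart of the argument. The plan is to exploit the fact that every candidate $w$-alignment of cost $\le k$ lies within a width-$\Oh(k)$ band around $\cA_0$ by \cref{lm:paths_dont_deviate_too_much}, so the divide-and-conquer of \cite[Lemma 7.5]{GK24} inside \cref{lm:improve-alignment} uses $\cA_0$ purely to localize the search: each conquer step works on a fragment $X'$ with $\sed(X') \le 10k+2$ and invokes $\localdscor$ with a query parameter governed by the $\wed$ of the sub-problem, not by the (possibly large) $w$-cost of the corresponding piece of $\cA_0$. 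The technical challenge is to verify that this weighted adaptation preserves correctness (returning $\infty$ exactly when $\wed(X,Z) > k$) and keeps the runtime recurrence of \cref{lm:improve-alignment} intact when driven by an unweighted-optimal seed.
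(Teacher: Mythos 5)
There is a genuine gap, and it is exactly the one you flag yourself as the ``main obstacle'': \cref{lm:improve-alignment} requires a seed alignment of $w$-cost at most $k$, while your seed $\cA_0$ is only unweighted-optimal and (after capping $w$ at $k+1$) may have $w$-cost $\Theta(k^2)$. Your sketched fix---use $\cA_0$ only to localize the search inside the divide-and-conquer of \cite[Lemma~7.5]{GK24}---is not carried out, and it is not a routine verification: in that divide-and-conquer both the recursion (the cost parameter is halved at each level) and the guarantee $\sed(X')\le 10k+2$ for the conquer pieces are derived from the \emph{$w$-cost} of the input alignment, so feeding it an alignment of $w$-cost $\Theta(k^2)$ breaks the stated recurrence and the hypothesis under which $\localdscor$ may be queried. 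You would essentially have to reprove the divide-and-conquer for an unweighted seed, which you explicitly defer; as written, the query algorithm is unsupported.

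The paper closes this gap differently, with the weight-scaling trick of \cite[Theorem~7.6]{GK24}: define $w_t(a,b)=\lceil w(a,b)/2^t\rceil$ and iterate $t$ from $\lceil\log(k+1)\rceil$ down to $0$. At the top level $w_t$ is the unit weight function, so the Landau--Vishkin alignment from \cref{lm:k2-ed} is already $w_t$-optimal of cost at most $k$; since $w_t\le 2w_{t+1}$, each $w_{t+1}$-optimal alignment has $w_t$-cost at most $2k$, so \cref{lm:improve-alignment} applies at every level with threshold $2k$ (this is also why the paper builds $\localdscor(\gamma,w,20k+4,X)$ rather than your $\localdscor(\gamma,w,10k+2,X)$, which would be too small even for the scaled calls). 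The $\Oh(\log k)$ refinement rounds account for the extra logarithmic factor in the stated query time. Two smaller omissions in your write-up: you only maintain the dynamic \pillar over $X$, so \cref{lm:k2-ed} on the pair $(X,Z)$ is not directly available---the paper maintains it over $X\cdot X$ and temporarily applies the $u$ input edits to the second copy to obtain \pillar access to $Z$, undoing them after the query; and the degenerate case $|X|\le k$ (handled in the paper via \cref{lm:baseline-wed}) is not addressed.
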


\begin{proof}
    As we are focused on computing $\wed_{\le k}$, we may cap the weight function by $k+1$ from above and without loss of generality assume that we have $w \colon \Esigma^2 \to [0, k + 1]$.
    We maintain \cref{lm:dynamic-pillar-under-block-edits} over $X \cdot X$ with threshold $2n$.
    It takes $\Oh((|X| + 1) \log^{\Oh(1)} \log n)$ time to initialize and $\Oh(\log n \log^{\Oh(1)}\log n)$ time per block edit.
    Furthermore, we maintain $\localdscor(\gamma, w, 20k+4, X)$.
    According to \cref{cor:dynamic-algo-for-the-conquer-step}, we build it in time $\Oh((|X| + 1) k^{\gamma} \log^2 n)$.
    
    Each block edit can be executed via a constant number of merge and split operations from \cref{cor:dynamic-algo-for-the-conquer-step}.
    Hence, an update takes $\Oh(k^2 \log^2 n)$ time.

    It remains to explain how to handle a query.
    We start by transforming $X \cdot X$ into $X \cdot Z$ inside \cref{lm:dynamic-pillar-under-block-edits} via $u$ input edits in $\Oh(k \log n \log^{\Oh(1)} \log n)$ time in total.
    This allows us to have \pillar access to $X$ and $Z$.
    At the end of the query, we transform the string back to $X \cdot X$.
    Our goal is to find the breakpoint representation of a $w$-optimal alignment $\cA \colon X\onto Z$ or decide that $\wed(X, Z) > k$.
    From the alignment, in time $\Oh(k)$, we can compute $\wed(X, Z)$.
    If $|X| \le k$, we use \cref{lm:baseline-wed} to answer a query in $\Oh((|X| + 1) \cdot k) \le \Oh(k^2)$ time.
    Otherwise, we proceed similarly to the proof of \cite[Theorem 7.6]{GK24}.
    Let $w_0(a, b) = w(a, b)$ and $w_t(a, b) = \ceil{w(a, b) / 2^t}$ for all $a, b \in \Esigma$ and $t \in \fragmentcc{1}{\ceil{\log (k+1)}}$.
    We now iterate from $t = \ceil{\log (k+1)}$ down to $t = 0$ and, for each such $t$, compute a $w_t$-optimal alignment $\cA_t \colon X \onto Z$.
    At the end, we get a $w$-optimal alignment $\cA = \cA_0$.

    For $t = \ceil{\log (k+1)}$, we have $w_t(a, b) = 1$ for all $a, b \in \Esigma$ with $a \neq b$.
    We find an optimal unweighted alignment of $X$ onto $Z$ using \cref{lm:k2-ed} in $\Oh(k^2)$ \pillar operations or decide that $\ed(X, Z) > k$.
    In the later case we have $\wed(X, Z) \ge \ed(X, Z) > k$, and thus we can terminate.

    For every smaller $t$, given a $w_{t + 1}$-optimal alignment $\cA_{t + 1} \colon X \onto Z$, we apply \cref{lm:improve-alignment} to find a $w_t$-optimal alignment $\cA_t \colon X \onto Z$.
    It takes $\Oh(\ell^{3 - \gamma} \log n)$ time and $\Oh(\ell^2)$ \pillar operations, where $\ell \coloneqq \ed_{\cA_{t + 1}}^{w_t}(X, Y)$.
    Note that $w_t(a, b) \le 2 \cdot w_{t+1}(a, b)$ for all $a, b \in \Esigma$, so $\ed_{\cA_{t+1}}^{w_t}(X, Z) \le 2 \ed_{\cA_{t + 1}}^{w_{t + 1}}(X, Y) \le 2k$.
    Hence, a call to \cref{lm:improve-alignment} takes $\Oh(k^{3 - \gamma} \log n + k^2 \log^2 n)$ time and $\Oh(k^2)$ \pillar operations.
    If we have $\wed(X, Z) \ge \ed^{w_t}(X, Z) > k$ at some point, then we terminate.

    In total, the algorithm takes $\Oh(k^{3 - \gamma} \log^2 n + k^2 \log^3 n)$ time and $\Oh(k^2 \log n)$ \pillar operations.
    Each \pillar operation takes $\Oh(\log n \log^{\Oh(1)} \log n)$ time, and thus the total time complexity is $\Oh(k^{3 - \gamma} \log^2 n + k^2 \log^3 n)$.
\end{proof}

Finally, we once again maintain a logarithmic number of instances of \cref{lm:full-complete-algorithm} for exponentially growing thresholds to improve the query time when $\wed(X, Z) \ll k$.

\begin{theorem}\label{thm:full-complete-algorithm}
    There is a dynamic data structure that, given a real value $\gamma \in (0, 1]$, an integer $n \ge 2$, a string $X \in \Sigma^*$ that throughout the lifetime of the algorithm satisfies $|X| \le n$, an integer $k \ge 1$, and $\Oh(1)$-time oracle access to a normalized weight function $w \colon \Esigma^2 \to \RR_{\ge 0}$, can be initialized in $\Oh((|X| + 1) k^{\gamma} \log^2 n)$ time and allows for the following operations:
    \begin{itemize}
        \item Apply a block edit to $X$ in $\Oh(k^2 \log^2 n)$ time.
        \item Given $u$ edits transforming $X$ into a string $Z$, compute $\wed_{\le k}(X, Z)$ in time $\Oh(1 + u \log n \log^{\Oh(1)} \log n + d^{3 - \gamma} \log^2 n + d^2 \log^3 n)$, where $d = \min(k, \wed(X, Z))$.
            Furthermore, if $\wed(X, Z) \le k$, the algorithm returns the breakpoint representation of a $w$-optimal alignment of $X$ onto $Z$.
    \end{itemize}
\end{theorem}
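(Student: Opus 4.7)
The plan is to maintain, in parallel, one instance $\cD_j$ of \cref{lm:full-complete-algorithm} for each threshold $k_j \coloneqq 2^j$, $j \in \{0, 1, \ldots, \lceil \log k \rceil\}$, together with a single auxiliary \pillar structure over $X$ from \cref{lm:dynamic-pillar-under-block-edits} that I use only to refine input alignments before dispatching them to the $\cD_j$'s. Preprocessing cost telescopes geometrically to $\sum_j \Oh((|X|+1)(2^j)^\gamma \log^2 n) = \Oh((|X|+1)k^\gamma \log^2 n)$ (using $\gamma > 0$), with the auxiliary \pillar contributing only $\Oh((|X|+1)\log^{\Oh(1)}\log n)$; a block edit propagated to all $\cD_j$'s and the auxiliary \pillar costs $\sum_j \Oh((2^j)^2 \log^2 n) = \Oh(k^2 \log^2 n)$ in total.

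For a query with $u$ edits transforming $X$ into $Z$, I would first refine the input. Apply the $u$ edits to the auxiliary \pillar to obtain access to $Z$ in $\Oh(u \log n \log^{\Oh(1)} \log n)$ time, then invoke \cref{lm:k2-ed} to compute $\ed(X, Z)$, capping the computation at $k + 1$. If the cap is exceeded, then $\wed(X,Z) \ge \ed(X,Z) > k$ (since non-match weights are at least $1$) and I return $\infty$. Otherwise I obtain an optimal unweighted alignment of cost $u' \coloneqq \ed(X,Z) \le d$ in additional time $\Oh(\max(u',1)^2 \log n \log^{\Oh(1)} \log n) \le \Oh(d^2 \log^3 n)$, and I revert the $u$ edits on the auxiliary \pillar.

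Armed with these $u'$ edits, I would query $\cD_j$ for $j = \lceil \log \max(u',1) \rceil, \lceil \log \max(u',1) \rceil + 1, \ldots$, stopping at the smallest $j^*$ that returns a finite value (or at $j = \lceil \log k \rceil$, returning $\infty$, if none does). The precondition $u' \le k_j$ of \cref{lm:full-complete-algorithm} holds from the start, and because the preceding trial (if any) returned $\infty$ we have $2^{j^* - 1} < \wed(X, Z)$, so $2^{j^*} \le 2d$; when there is no preceding trial we still have $2^{j^*} \le 2 \max(u',1) \le 2d$. The cost of the trial at threshold $2^j$ is $\Oh((2^j)^{3-\gamma} \log^2 n + (2^j)^2 \log^3 n)$; crucially, each trial's internal edit-processing cost $\Oh(u' \log n \log^{\Oh(1)} \log n)$ is absorbed into $(2^j)^2 \log^3 n$ because $u' \le 2^j$. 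Summing geometrically over $j$ up to $j^*$ yields $\Oh(d^{3-\gamma} \log^2 n + d^2 \log^3 n)$, and the successful trial additionally returns the breakpoint representation of a $w$-optimal alignment.

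The main subtlety is keeping the $u$ term linear rather than incurring a $\log k$ blow-up: the $\Oh(u \log n \log^{\Oh(1)} \log n)$ cost of processing the input edits must be paid only once, which is why the refinement step uses a shared auxiliary \pillar, and the subsequent trials pass only the refined $u' \le 2^j$ edits to the individual $\cD_j$'s so that their internal edit processing is cheap enough to be swallowed by the geometric threshold sum. Once these bookkeeping details are verified, combining the contributions of the refinement step, the failed trials, and the final successful trial yields exactly the claimed query time.
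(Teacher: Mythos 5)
Your proposal is correct and follows essentially the same route as the paper's proof: maintain instances of \cref{lm:full-complete-algorithm} for thresholds $1,2,\ldots,2^{\ceil{\log k}}$, pay the $\Oh(u \log n \log^{\Oh(1)}\log n)$ cost once to obtain \pillar access and an optimal unweighted alignment via \cref{lm:k2-ed} (terminating if $\ed(X,Z)>k$), and then query the thresholds in increasing order starting at $\ceil{\log \ed(X,Z)}$ with that alignment as the edit source. The only cosmetic difference is that the paper reuses the \pillar structure maintained inside \cref{lm:full-complete-algorithm} (kept over $X\cdot X$, so that $X$ and $Z$ are accessible simultaneously for \cref{lm:k2-ed}) rather than a dedicated auxiliary copy; just make sure your auxiliary structure is likewise over $X\cdot X$ rather than over $X$ alone.
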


\begin{proof}
    We maintain the data structures of \cref{lm:full-complete-algorithm} for thresholds $1, 2, \ldots, 2^{\ceil{\log k}}$.
    This does not affect the asymptotic construction and update times.
    Upon a query, we use the trick from \cref{lm:full-complete-algorithm} to get \pillar access to both $X$ and $Z$ in $\Oh(u \log n \log^{\Oh(1)} \log n)$ time.
    After that, we apply \cref{lm:k2-ed} to find $\ed_{\le k}(X, Z)$ in time $\Oh(d^2 \log n \log^{\Oh(1)} \log n)$.
    If $\ed(X, Z) > k$, then we terminate.
    Otherwise, we also get an optimal unweighted alignment $\cA$ of $X$ onto $Z$ of cost at most $d$.
    We then query the data structures of \cref{lm:full-complete-algorithm} in the increasing order of thresholds starting at $\ceil{\log \ed(X, Z)}$ and stop as soon as we get a positive result or hit $2^{\ceil{\log k}}$.
    We use $\cA$ as the source of edits transforming $X$ into $Z$.
\end{proof}

This concludes our trade-off dynamic bounded weighted edit distance algorithm.
Note that the need for a universal upper bound $n$ on the length of $X$ stems from the complexity of updates that can be applied to $X$.
In particular, after a single update, the length $|X|$ of $X$ can become any value in $\fragmentcc{0}{2|X|}$.
We are not aware of a counterpart of \cref{lm:dynamic-pillar-under-block-edits} that would maintain \pillar over $X$ undergoing block edits with $\Oh(\polylog (|X| + 2))$ time per operation.
In contrast, if we restrict the possible updates to just simple edits, we can use \cite[Lemma 8.10]{GK24} to get rid of the universal upper bound $n$ and replace all occurrences of $n$ in the time complexities with $|X| + 2$.

\section{Small Edit Distance Static Lower Bound}\label{app:smallEDLB}
This section provides a complete proof of \cref{thm:static-lb}.
For technical reasons, we prove a slightly stronger statement formulated as a reduction from the tripartite negative triangle problem.
\begin{problem}[Tripartite Negative Triangle]
    Given a tripartite complete graph with vertex set $P\cup Q \cup R$ of size $N$ and a weight function $w_G: (P\times Q)\cup (Q\times R)\cup (R\times P) \to \fragmentcc{-N^{\Oh(1)}}{N^{\Oh(1)}}$, decide if there exists a triangle $(p,q,r)\in P\times Q\times R$ such that $w_G(p,q)+w_G(q,r)+w_G(r,p) < 0$.
\end{problem}

\renewcommand{\maybe}{\lipicsEnd}
\begin{restatable}{theorem}{thmstaticlbreduction}\label{lm:smallEDLBReduction}
    Let $\beta \in [0,1]$ be a real parameter. 
    There is an algorithm that, given $n\in \ZZ_+$ and an instance of the Tripartite Negative Triangle problem with part sizes at most $n$, $n$, and $n^{\beta / 2}$, in $\Oh(n^2)$ time
    produces an alphabet $\Sigma$ of size $\Oh(n)$, a normalized weight function $w:\Esigma^2\to \Rz$, strings $X,Y\in \Sigma^{\Oh(n^{1+\beta})}$ such that $\eed{X}{Y} \le 4$, and a threshold $k =\Oh(n)$ such that the answer to the Tripartite Negative Triangle instance is ``YES'' if and only if $\wed(X, Y) \le k$.
\end{restatable}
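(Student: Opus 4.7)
The plan is to realize the construction outlined in Section~\ref{sec:lower-bounds} as a formal reduction composed on top of the Negative Triangle to Batched Weighted Edit Distance reduction of~\cite{CKW23}. Applied to a Tripartite Negative Triangle instance with part sizes $n$, $n$, $n^{\beta/2}$, that reduction produces in~$\Oh(n^2)$ time a batch of strings $X_1,\ldots,X_m\in \Sigma^x$, a string $Y\in \Sigma^y$, a weight function $w$, and a threshold $k_0$, where the parameters are arranged so that $m = \Theta(n^\beta)$, $x,y = \Theta(n)$, $k_0 = \Oh(n)$, and $h\coloneqq \max_i \hd(X_i,X_{i+1}) = \Oh(x/m)$; moreover, deciding whether $\min_i \wed(X_i,Y)\le k_0$ is as hard as the original negative-triangle instance.

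The first construction step is to build, for every $i\in \fragmentcc{0}{m}$, two auxiliary \emph{$X$-gadgets} $X_i^\top,X_i^\bot$ of length $x$ satisfying the weighted distance identities stated in the overview. I would fabricate them by locating the $\Oh(h)$ positions at which $X_i$ disagrees with $X_{i-1}$ or~$X_{i+1}$, replacing each such position by one of two fresh alphabet symbols $a_i^\top,a_i^\bot$ unique to~$i$, and extending $w$ so that every substitution between a fresh symbol and the intended original character costs exactly~$1$ while substitutions between unrelated fresh symbols (or any deletion/insertion of a fresh symbol) cost prohibitively much. A direct accounting then yields $\wed(X_i,X_{i-1}^\top) = \wed(X_i,X_{i-1}^\bot) = \wed(X_i^\top,X_i) = \wed(X_i^\bot,X_i) = 2h$ and $\wed(X_i^\top,X_{i-1}^\bot) = \wed(X_i^\bot,X_i^\top) = 4h$. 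Finally, the symbol $\$$ is introduced with $w(\$,c) = w(c,\$) = k+1$ for every $c\ne \$$, making any edit of a $\$$ too expensive for a low-cost alignment.

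With these building blocks, the strings $\tilde X$ and $\tilde Y$ are assembled as in the overview, with short \emph{separator} gadgets (analogous to those of~\cite[Section~7]{CKW23}) interleaved between consecutive $X$-gadgets and $Y$-copies; these separators come from a fresh per-position alphabet and are therefore very costly to misalign. Because removing the two $\$$ symbols from either $\tilde X$ or $\tilde Y$ leaves the same string, $\ed(\tilde X,\tilde Y)\le 4$. For the key equivalence, I would analyze an arbitrary alignment $\cA$ of cost at most~$k$ in three steps. Step 1: the $\$$-cost forces $\cA$ to match the two $\$$ of $\tilde X$ exactly to the two $\$$ of $\tilde Y$, pinning the prefix and suffix outside the anchors at a deterministic cost. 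Step~2: between the anchors, $\cA$ induces an alignment of $\hat X$ and $\hat Y$, and the separators force every copy of $Y$ in $\hat X$ either to be matched to another copy of $Y$ or to be deleted entirely, and symmetrically for $\hat Y$; since $\hat Y$ has exactly one extra $Y$-copy and two extra $X$-gadgets, there is a unique \emph{shift index} $i\in \fragmentcc{1}{m}$ for which $X_i\in \hat X$ is aligned with a $Y$-copy of $\hat Y$ and two specific $X$-gadgets of $\hat Y$ are inserted whole. Step~3: the remaining $3m-1$ pairs of $X$-gadgets pair off deterministically, and a direct check on the patterns $(X,X^\top,X^\bot)^m$ in $\hat X$ vs.\ the ``shifted-by-one'' pattern $(X^\top_0,X^\bot_0)(X,X^\top,X^\bot)^{m-1}(X_m,X^\top_m)$ in $\hat Y$ shows that the multiset of induced gadget pairs consists of the same fixed number of $2h$-pairs and of $4h$-pairs regardless of~$i$. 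Thus the cost of an optimal alignment consistent with shift~$i$ is $C + \wed(X_i,Y)$ for some baseline $C = \Oh(x + mh) = \Oh(n)$ independent of~$i$, so $\wed(\tilde X,\tilde Y) = C + \min_i \wed(X_i,Y)$ and setting $k\coloneqq C + k_0$ completes the reduction.

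The main obstacle will be step~2 of the analysis: rigorously ruling out alignments that split a $Y$-copy across the shift boundary or reshuffle the matching of $X$-gadgets. To enforce the intended structure, the separator gadgets must distinguish not only $Y$-copies from $X$-gadgets but also the three $X$-gadget types $(X_j,X_j^\top,X_j^\bot)$ from each other, so that any alignment violating the intended combinatorial layout costs strictly more than the baseline~$C$. The counting identity in step~3 is then a short combinatorial check that uses nothing beyond the two numerical gadget distances~$2h$ and~$4h$, and the final length bounds $|\tilde X|,|\tilde Y| = \Oh(m(x+y)) = \Oh(n^{1+\beta})$ and $k = \Oh(n)$ follow from the parameter choices above.
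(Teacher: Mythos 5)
Your high-level skeleton does match the paper's actual proof: compose the batched reduction of \cref{thm:BatchedWEDLBReduction} with the $\tilde X,\tilde Y$ template from the overview, anchor the alignment with the two expensive \$ symbols, force a block structure with separators, and count gadget pairs so that every admissible alignment costs a fixed baseline plus $\wed(X_i,Y)$. The genuine gap is in your concrete weight design for the gadgets, which contradicts the identities you yourself invoke. Building $X_i^\top,X_i^\bot$ from per-index fresh symbols $a_i^\top,a_i^\bot$ whose insertions/deletions and cross-substitutions are prohibitively expensive breaks the construction in both directions: (i) $X_i^\top$ and $X_{i-1}^\bot$ carry their fresh symbols at the \emph{same} position set $F_{i-1}$ (of size $2h$), so under your weights $\wed(X_i^\top,X_{i-1}^\bot)$ is prohibitive rather than the required $4h$; (ii) your own intended alignment inserts the two gadgets $X_{i-1}^\top$ and $X_{i-1}^\bot$ of $\hat Y$ wholesale, each containing $2h$ fresh symbols, which is impossible if their insertion is prohibitive; and (iii) mixing cost-$1$ substitutions (fresh symbol vs.\ its original character) with prohibitive ones destroys the triangle inequality, which the paper needs (\cref{clm:whatsymmetrictriangle} together with \cref{clm:justmatchpreforsuff}) to argue that once one separator character is matched the whole block is matched. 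The paper instead uses only two \emph{shared} symbols $\top,\bot$, with cost $1$ against ordinary characters and against $\emptystring$ and cost $2$ only within $\{\top,\bot,\diamond\}$; this single choice simultaneously yields the $2h$- and $4h$-identities, keeps whole-gadget insertion cheap, and preserves symmetry and the triangle inequality.

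Your separator mechanism has a related problem. You propose short separators ``very costly to misalign,'' but $\hat Y$ contains one more occurrence of each separator than $\hat X$, so the intended alignment must insert an entire separator block; if its symbols are expensive to insert this needs an argument you do not give, and if they are cheap, short separators do not rule out reshuffled alignments. The paper resolves exactly this tension the other way: the separators $U,V$ have length $r=\Theta(n)$ over \emph{cheap} fresh symbols, the threshold satisfies $\hat k<3r$, and a pigeonhole argument over the $3m+1$ occurrences of some pair $u_q,v_q$ in $\hat Y$ forces all $U,V$ blocks of $\hat X$ to be matched; a six-case analysis of which two consecutive unmatched blocks of $\hat Y$ remain then pins down the shift index and reduces to \cref{clm:theoffsetchanger}. (A minor further point: the cost of editing a \$ must exceed the \emph{final} threshold $\tilde k=\Theta(n)$ -- the paper simply uses $\infty$ -- not just the batched threshold.) So the route is the right one, but as specified the gadget weights make both directions of the equivalence fail and must be replaced by the shared-$\top/\bot$ scheme and long cheap separators, or an equally careful substitute.
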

\renewcommand{\maybe}{}

As we prove at the end of the section, \cref{thm:static-lb} is a consequence of \cref{lm:smallEDLBReduction} combined with the following result:

\begin{fact}[Vassilevska-Williams \& Williams~\cite{VWW18}]\label{fct:negtri}
    Let $\alpha,\beta,\gamma,\delta>0$ be real parameters. Assuming the APSP Hypothesis, there is no algorithm that, given a tripartite graph with part sizes at most $n^\alpha $, $n^\beta$, and $n^\gamma$, solves the Tripartite Negative Triangle problem in $\Oh(n^{\alpha+\beta+\gamma-\delta})$ time.
\end{fact}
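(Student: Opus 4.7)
The plan follows the blueprint sketched in \cref{sec:lower-bounds}. I would start by invoking the existing Tripartite-Negative-Triangle-to-Batched-Weighted-Edit-Distance reduction of~\cite{CKW23} (\cref{thm:BatchedWEDLBReduction}) with parameters tuned so that the output instance has $m = \Theta(n^\beta)$ strings $X_1, \ldots, X_m \in \Sigma_0^x$ with $x = \Theta(n)$, a pattern $Y \in \Sigma_0^y$ with $y = \Theta(n)$, a normalized weight function $w_0$, and a threshold $k_0 = \Oh(n)$ such that the tripartite instance is ``YES'' iff $\min_{i=1}^m \wwed{w_0}{X_i}{Y} \le k_0$. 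Crucially, this reduction guarantees $h \coloneqq \max_{1 \le i < m} \hd(X_i, X_{i+1}) = \Oh(x/m)$ and uses an alphabet of size $\Oh(n)$.

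Next, I would build the gadgets $X_i^\top$ and $X_i^\bot$ for $i \in \{0, 1, \ldots, m\}$. Conceptually, $X_i^\top$ and $X_i^\bot$ are ``half-way'' strings between $X_{i-1}$ and $X_i$: I take the $\le h$ positions where $X_{i-1}$ and $X_i$ differ, split them into two disjoint halves, and, in each half, substitute one fresh marker character chosen from a dedicated auxiliary alphabet, so that $X_i^\top$ differs from $X_i$ on one half and $X_i^\bot$ differs from $X_i$ on the complementary half. Extending the alphabet by these $\Oh(m)$ markers and by one further sentinel symbol $\$$, I define $w$ so that $w_0$ is preserved on $\Sigma_0$, every edit between a marker and any other character costs $1$, and every edit touching $\$$ costs a prohibitive value $L$ chosen larger than any total cost that can arise below. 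A direct calculation then yields $\wed(X_i, X_{i-1}^\top) = \wed(X_i, X_{i-1}^\bot) = \wed(X_i^\top, X_i) = \wed(X_i^\bot, X_i) = 2h$ and $\wed(X_i^\top, X_{i-1}^\bot) = \wed(X_i^\bot, X_i^\top) = 4h$, exactly as required in \cref{sec:lower-bounds}.

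I would then assemble $\tilde X$ and $\tilde Y$ as displayed in \cref{sec:lower-bounds}, include any additional anchor gadgets needed to force the desired alignment structure (the gadgets glossed over in the overview), and set $k \coloneqq B + k_0$, where $B$ denotes the $i$-independent baseline cost of the alignment $\cA_i$. The bound $\eed{\tilde X}{\tilde Y}\le 4$ is immediate since removing both $\$$s from either string produces the same string. For the main equivalence I would (i) use the prohibitive weight $L$ to conclude that every alignment of $w$-cost at most $k$ matches the two $\$$s in $\tilde X$ with the two $\$$s in $\tilde Y$, reducing the task to analyzing $\wed(\hX, \hY)$; (ii) argue by anchoring that every alignment of $\hX$ onto $\hY$ of $w$-cost at most $B + k_0$ realizes one of the $m$ configurations $\cA_i$, i.e., all $3m-1$ copies of $Y$ in $\hX$ are matched to the ``correct'' copies of $Y$ in $\hY$ and each surviving gadget pair is of one of the listed types; and (iii) compute the cost of $\cA_i$ as exactly $B + \wed(X_i, Y)$, using that the numbers of $2h$-type and $4h$-type gadget-to-gadget pairs appearing in $\cA_i$ are the same for every $i$.

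The main obstacle is step (ii), the anchor argument that rules out ``cross'' alignments that mismatch $Y$-blocks with $X$-gadgets or skip over $\$$-boundaries differently. The key input is that $\wed(X^\star, Y)$ is much larger than $B + k_0$ for every gadget $X^\star \in \{X_i, X_i^\top, X_i^\bot\}$ (inherited from the hardness parameters of the batched problem together with the triangle inequality applied to the $2h$- and $4h$-distance identities proved in step two), so any cross alignment pays more than the budget; the precise case analysis adapts~\cite[Section 7]{CKW23} to the larger menu of gadget types. Once (i)--(iii) are established, $\wed(\tilde X, \tilde Y) = B + \min_{i=1}^m \wed(X_i, Y)$, giving $\wed(\tilde X, \tilde Y) \le k$ iff the tripartite instance is ``YES''. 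It is then routine to verify $|\tilde X|, |\tilde Y| = \Oh(m(x+y)) = \Oh(n^{1+\beta})$, $|\Sigma| = \Oh(n)$, $k = B + k_0 = \Oh(mh + m|Y|) + \Oh(n) = \Oh(n)$, and that the whole construction runs in $\Oh(n^2)$ time, dominated by writing the output.
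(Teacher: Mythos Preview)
You have written a proof for the wrong statement. \Cref{fct:negtri} is a \emph{cited} hardness result from~\cite{VWW18}; the paper does not prove it and simply invokes it as a black box. What you have sketched is instead a proof of \cref{lm:smallEDLBReduction} (the reduction producing $\tilde X,\tilde Y$ with $\ed(\tilde X,\tilde Y)\le 4$), which the paper proves in \cref{app:smallEDLB} and which \emph{uses} \cref{fct:negtri} as an ingredient rather than establishing it.

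If your intent was really to prove \cref{fct:negtri}, then nothing in your proposal is relevant: proving it requires the subcubic-equivalence machinery of~\cite{VWW18} relating APSP to Negative Triangle, together with a self-reduction handling unbalanced part sizes. Your gadget construction, the strings $X_i^\top,X_i^\bot$, the $\$$ sentinels, and the alignment analysis all live on the edit-distance side of the reduction and say nothing about why Negative Triangle is hard under the APSP Hypothesis.
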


As discussed in \cref{sec:lower-bounds}, our starting point is the hardness of the Batched Weighted Edit Distance problem (\cref{prob:batched}).
In order to derive \cref{lm:smallEDLBReduction}, we need to reinterpret the proof of~\cite[Theorem 6.13]{CKW23} as a reduction from Tripartite Negative Triangle.

\begin{theorem}[See {\cite[Theorem 6.13]{CKW23}}]\label{thm:BatchedWEDLBReduction}
    Let $\beta \in [0,1]$ be a real parameter.
    There is an algorithm that, given $n\in \ZZ_+$ and an instance of the Tripartite Negative Triangle problem with part sizes at most $n$, $n$, and $n^{\beta / 2}$, in $\Oh(n^2)$ time constructs an instance of the Batched Weighted Edit Distance problem with $m = \Oh(n^{\beta})$ strings of lengths $x\le y \le \Oh(n)$, an alphabet $\Sigma$ of size $\Oh(n)$, and a normalized weight function $w:\Esigma^2\to \Rz$ such that the answer to the Tripartite Negative Triangle instance is ``YES'' if and only if the answer for the Batched Weighted Edit Distance is ``YES''.
    Additionally, the constructed Batched Weighted Edit Distance instance satisfies the following properties:
    \begin{itemize}
        \item The values of the weight function $w$ are rationals with a
            common $O(\log n)$-bit integer denominator.
        \item We have $w(a,b)=w(b,a)\in [1,2]$ for all distinct $a,b\in
            \Esigma$.
        \item We have $w(a,\emptystring)=1$ if $a\in \Sigma$ occurs in any of the
            strings $X_i$.
        \item We have $w(\emptystring, b)=2$ if $b\in \Sigma$ occurs in $Y$.
        \item The threshold satisfies $k \in [2y-x,2y-x+1)$.
        \item Subsequent strings $X_i$ have small Hamming distances,
            that is, $\max_{i=1}^{m-1} \hhd{X_i}{X_{i+1}}=\Oh(n^{1-\beta})$.
    \end{itemize}
\end{theorem}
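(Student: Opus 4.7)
The plan is to follow the construction behind \cite[Theorem 6.13]{CKW23} essentially verbatim, with the technical contribution of this statement being merely to spell out the structural side-properties (rational weights with a common denominator, $w(a,b)\in[1,2]$, specific insertion/deletion costs, the exact threshold range, and the neighbor Hamming bound). At the top level, each vertex $v \in P \cup Q \cup R$ is encoded as a distinct character $c_v$ (so $|\Sigma| = \Oh(n)$), together with a small set of separator symbols used to force canonical alignments. The weight function $w$ is defined by shifting and scaling $w_G$: for a character pair $(c_u, c_v)$ coming from an edge of $G$, I would set $w(c_u, c_v) = 1 + \alpha \cdot (w_G(u,v) - w_{\min})$, where $w_{\min}$ is a uniform shift making all values nonnegative and $\alpha$ is a small rational chosen so every weight lies in $[1,2]$. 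All other substitutions receive weight $2$; I set $w(a, \emptystring) = 1$ for characters appearing in any $X_i$ and $w(\emptystring, b) = 2$ for characters appearing in $Y$. Picking $\alpha$ as the inverse of a fixed $\Oh(\log n)$-bit integer keeps the common denominator bounded as required.

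Next, I would lay out the strings so that $\wed(X_i, Y)$ encodes a minimization over triangles consistent with the index $i$. Following CKW23, $Y$ is a concatenation of $Q$-gadgets interspersed with padding that represents choices of $P$- and $R$-vertices, while each $X_i$ is a concatenation that hard-codes a specific ``probe'' ranging over the $|R|^2 = \Oh(n^{\beta})$ ordered pairs in $R \times R$. This indexing forces each $X_i$ to differ from $X_{i+1}$ only on one small chunk of vertex-gadgets, yielding $\hhd{X_i}{X_{i+1}} = \Oh(n^{1 - \beta})$. The asymmetry $w(a, \emptystring) = 1$ vs.\ $w(\emptystring, b) = 2$ is the driving force of the analysis: horizontal edges in $\AG^w(X_i, Y)$ are strictly cheaper than vertical ones, so every alignment of $X_i$ onto $Y$ pays a baseline of $2|Y| - |X_i| = 2y-x$ plus the substitution cost incurred on the diagonal it traverses. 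A careful choice of gadget structure then ensures that the canonical ``one-triangle'' alignment of $X_i$ onto $Y$ has cost exactly $(2y-x) + \alpha \cdot \min_{(p,q,r)\text{ indexed by }i} (w_G(p,q)+w_G(q,r)+w_G(r,p))$, so setting the threshold $k$ just below $2y-x + \alpha \cdot 0$ (any rational in $[2y-x, 2y-x+1)$ chosen to tip at strictly negative triangle-sums) makes $\min_i \wed(X_i, Y) \le k$ equivalent to the existence of a negative triangle.

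The main obstacle is to rule out ``cheating'' alignments -- those that deviate from the canonical triangle-picking structure but might still stay below the threshold. This is the crux of the CKW23 gadget design: every such deviation accumulates at least one extra off-baseline edge, whose additional cost is at least $1$ (coming from the $1$-vs-$2$ gap between deletions and insertions, or from hitting a weight-$2$ unrelated substitution), and since we choose $\alpha$ to be much smaller than $1/|w_G|_\infty$, this additive penalty dominates any possible savings in the $\alpha \cdot w_G$ contributions. Once this separation is established, the correctness equivalence follows and the construction runs in $\Oh(n^2)$ time because $|X_i|, |Y| = \Oh(n)$ and one only needs to fill in the $\Oh(n^2)$ entries of $w$ induced by the $\Oh(n^2)$ edges of $G$.
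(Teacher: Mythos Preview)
The paper does not supply its own proof of this theorem; it is quoted verbatim from \cite[Theorem 6.13]{CKW23} and used as a black box in the subsequent reduction (\cref{lm:smallEDLBReduction}). Consequently there is no in-paper argument to compare your proposal against.

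That said, your sketch is in the right spirit but is too vague at the one place that actually matters. The high-level ingredients you list---one character per vertex, affine rescaling of $w_G$ into $[1,2]$ with a common $\Oh(\log n)$-bit denominator, the asymmetric $w(a,\emptystring)=1$ versus $w(\emptystring,b)=2$ yielding a baseline of $2y-x$, and indexing the $X_i$ by $R\times R$ so that $m=\Oh(n^\beta)$ and consecutive $X_i$ differ on $\Oh(n^{1-\beta})$ positions---all match the CKW23 design. What your outline does not do is actually specify the gadget layout that makes ``every non-canonical alignment pays at least one extra unit'' true; you assert this separation without constructing the separators that enforce it. In the CKW23 proof this is the bulk of the work, and without it the equivalence claim is unsupported. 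If you intend this as a pointer to the existing proof rather than a self-contained argument, that is fine, but then the proposal should say so explicitly rather than presenting the gadget correctness as something you would re-derive.
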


Consider an instance $X_1,X_2, \ldots X_m$, $Y$ and threshold $k$
for the Batched Weighted Edit Distance problem with the restrictions
stated in \cref{thm:BatchedWEDLBReduction}.

Let $\Sigma_X$ and $\Sigma_Y$ be the symbols that occur in any $X_i$ and in $Y$, respectively.
The conditions of \cref{thm:BatchedWEDLBReduction} guarantee that $\Sigma_X \cap \Sigma_Y = \emptyset$.
Let $h = \max(1, \max_{i=1}^{m-1}(\hhd{X_i}{X_{i+1}}))$ and 
\[r = (m-1)(8h+8) + 2x+k+6h+7.\]
We define a new alphabet 
$\hat \Sigma = \Sigma_X \cup \Sigma_Y \cup \{u_1,u_2 , \ldots u_r, v_1,v_2, \ldots v_r, \diamond, \bot, \top , \$ \}$ and extend $w$ to $\hat w$ as follows:
\[
    \hat w(a, b)
    =
    \begin{cases}
      0 & \text{if }  a=b \\  
      w(a , b) & \text{if }  a,b \in \Sigma_X \cup \Sigma_Y \cup \{\emptystring\},  \\
      2  & \text{if } a\ne b \text{ and }   \{ a,b\}\subseteq \{\top,\bot,\diamond\}, \\
      \infty & \text{if } a\ne b \text{ and } \$ \in \{a,b \},\\ 
      1  &\text{otherwise}. 
    \end{cases}
\]

We make the following observation regarding $\hat w$.
\begin{observation}\label{clm:whatsymmetrictriangle}
    The function $\hat w$ is symmetric ($\hat w(a, b)=\hat w(b , a)$ holds for every $a,b\in \hat \Sigma \cup \{\emptystring\}$) and satisfies the triangle inequality ($\hat w(a, c) \le \hat w(a, b) + \hat w(b, c)$ holds for every $a,b,c\in \hat \Sigma \cup \{\emptystring\}$). 
\end{observation}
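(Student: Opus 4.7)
The statement requires checking two properties of the piecewise definition of $\hat w$, and both reduce to a case analysis. I expect no substantive obstacle; the only non-routine aspect is bookkeeping for the $\$$-cases, which I address explicitly below.

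For \emph{symmetry}, the plan is to observe that each of the five defining cases has both a symmetric predicate (in $a, b$) and a symmetric value. Cases (i) $a = b$, (iii) $\{a,b\} \subseteq \{\top,\bot,\diamond\}$, (iv) $\$ \in \{a,b\}$, and (v) the ``otherwise'' case are visibly symmetric. Case (ii) reduces to the symmetry of $w$ on $\Sigma_X \cup \Sigma_Y \cup \{\emptystring\}$ guaranteed by \cref{thm:BatchedWEDLBReduction}. This delivers $\hat w(a,b) = \hat w(b,a)$.

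For the \emph{triangle inequality} $\hat w(a,c) \le \hat w(a,b) + \hat w(b,c)$, I would proceed in three stages. First, I dispose of degenerate instances: if $a = c$ then the LHS is $0$, and if $a = b$ or $b = c$ the inequality becomes an equality since $\hat w(x,x)=0$ and the remaining two terms coincide. Henceforth $a, b, c$ are pairwise distinct. Second, I handle all scenarios involving $\$$. If $\$ \in \{a, c\}$, say $a = \$$, then the LHS is $\infty$; no matter whether $b = \$$ or not, at least one of $\hat w(a,b)$ or $\hat w(b,c)$ must also be $\infty$, since $a = \$$ with $b \ne \$$ forces $\hat w(a,b) = \infty$ while $b = \$$ with $c \ne \$$ forces $\hat w(b,c) = \infty$. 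The remaining sub-case $b = \$$ with $\$ \notin \{a, c\}$ gives $\hat w(a,b) = \infty$ immediately. In each of these sub-cases the RHS is $\infty$, matching the LHS.

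Third, in the main case where $a, b, c$ are pairwise distinct and $\$ \notin \{a, b, c\}$, I establish two uniform bounds by inspecting the remaining cases of the definition. Bound (A): $\hat w(x, y) \le 2$ for distinct $x, y$ with $\$ \notin \{x, y\}$, using that $w(x, y) \in [1, 2]$ on distinct pairs of $\Esigma$ by \cref{thm:BatchedWEDLBReduction}, while cases (iii) and (v) give exactly $2$ and $1$. Bound (B): $\hat w(x, y) \ge 1$ for distinct $x, y$ with $\$ \notin \{x, y\}$, using $w \ge 1$ on distinct pairs together with the explicit values $2$ and $1$ in the other cases. Applying (A) to $(a,c)$ and (B) to both $(a, b)$ and $(b, c)$ gives $\hat w(a,c) \le 2 \le \hat w(a,b) + \hat w(b,c)$, which completes the argument.
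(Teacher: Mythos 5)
Your proof is correct and follows essentially the same route as the paper's: symmetry via the observation that each defining case of $\hat w$ has a symmetric predicate and value (falling back on the symmetry of $w$ for the $\Sigma_X\cup\Sigma_Y\cup\{\emptystring\}$ case), and the triangle inequality via the degenerate cases, the $\infty$ on any $\$$-mismatch, and the uniform bounds $1\le \hat w(x,y)\le 2$ off the diagonal away from $\$$, yielding $\hat w(a,c)\le 2=1+1\le \hat w(a,b)+\hat w(b,c)$. The only difference is organizational (you treat $a=c$ and the $\$$ sub-cases slightly more explicitly), which does not change the substance.
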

\begin{proof}
    We prove symmetry and triangle inequality separately.
    \para{Symmetry.}
    For every $a,b \in \Sigma_X \cup \Sigma_Y \cup \{ \emptystring\}$, the symmetry of $\hat w$ follows from the symmetry of $w$.
    Otherwise, the value of $\hat w(a, b)$ is decided by the case satisfied by $a$ and $b$ in the definition of $\hat w$.
    Each case in the definition of $\hat w$ depends only on $\{a,b\}$ and is therefore symmetric with respect to $a$ and $b$ (i.e., each case is satisfied for $(a,b)$ if and only if it is satisfied for $(b,a)$).
    
    \para{Triangle inequality.}
    The triangle inequality is trivially satisfied if $a=b$ or $b=c$. 
    If $\$=a\ne b$ or $b \ne c = \$$, then $\hat w(a , c) \le \infty \le \hat w(a, b) + \hat w(b , c)$.
    In all the remaining cases, we have $\hat w(a , c) \le 2 = 1+1 \le \hat w(a, b) + \hat w(b , c)$.
\end{proof}

\Cref{clm:whatsymmetrictriangle} lets us apply the following fact to $\hat w$.
\begin{fact}[see {\cite[Corollary 3.6]{CKW23}}]\label{clm:justmatchpreforsuff}
    If a weight function \wdef satisfies the triangle inequality, then $\wwed{w}{aX}{aY}=\wwed{w}{X}{Y}=\wwed{w}{Xa}{Ya}$ holds for every $a\in \Sigma$ and $X,Y \in \Sigma^*$. 
\end{fact}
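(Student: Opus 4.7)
The plan is to establish only the prefix equality $\wwed{w}{aX}{aY}=\wwed{w}{X}{Y}$; the suffix equality is handled by an analogous argument applied at the far end of the alignment graph. The easy direction $\wwed{w}{aX}{aY}\le \wwed{w}{X}{Y}$ is immediate: prepend to any $w$-optimal alignment $\cA\colon X\onto Y$ the diagonal edge $(0,0)\to(1,1)$ matching $a$ with $a$, which contributes $w(a,a)=0$ and hence does not increase the cost.

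For the reverse inequality $\wwed{w}{X}{Y}\le \wwed{w}{aX}{aY}$, the strategy is to take a $w$-optimal alignment $\cA\colon aX\onto aY$ and surgically modify it, without increasing its cost, into one whose first edge is the diagonal $(0,0)\to(1,1)$; stripping that edge then yields an alignment of $X$ onto $Y$ of cost at most $\wwed{w}{aX}{aY}$. If $\cA$ already begins with $(0,0)\to(1,1)$, its cost is $w(a,a)=0$ and we are done. Otherwise, up to swapping the roles of $aX$ and $aY$, we may assume $\cA$ begins with one or more horizontal edges $(0,0)\to(1,0)\to\cdots\to(t,0)$ deleting leading characters of $aX$. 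The plan is then to inspect how $\cA$ first enters row~$1$: either via a vertical edge $(t,0)\to(t,1)$ inserting the leading $a$ of $aY$, or via a diagonal edge $(t,0)\to(t+1,1)$ substituting some $c\coloneqq aX\position{t}$ with~$a$.

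In the first subcase, the plan is to replace the prefix path $(0,0)\to(1,0)\to\cdots\to(t,0)\to(t,1)$ by $(0,0)\to(1,1)\to(2,1)\to\cdots\to(t,1)$, which preserves all intermediate deletions but saves $w(a,\emptystring)+w(\emptystring,a)\ge 0$ in total cost. In the second subcase, the reroute turns the prefix into $(0,0)\to(1,1)\to(2,1)\to\cdots\to(t+1,1)$; the cost change equals $w(c,\emptystring)-w(a,\emptystring)-w(c,a)$, which is $\le 0$ precisely by the triangle inequality $w(c,\emptystring)\le w(c,a)+w(a,\emptystring)$. The main (mild) obstacle will be bookkeeping: verifying that after either surgery the resulting walk is still a valid monotone alignment from $(0,0)$ to $(|aX|,|aY|)$. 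This holds because the surgery leaves untouched the suffix of $\cA$ starting from the common gluing vertex $(t,1)$ or $(t+1,1)$, so removing the initial diagonal match from the modified alignment produces a valid alignment $X\onto Y$ with the required cost.
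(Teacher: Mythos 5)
Your proof is correct. Note that the paper itself states this as a \emph{Fact} imported from \cite[Corollary 3.6]{CKW23} without reproving it, so there is no in-paper proof to compare against; your local-surgery argument on an optimal alignment of $aX$ onto $aY$ -- rerouting the initial horizontal (or, symmetrically, vertical) run through the free diagonal match $(0,0)\to(1,1)$ and using the triangle inequality $w(c,\emptystring)\le w(c,a)+w(a,\emptystring)$ in the substitution subcase -- is exactly the standard argument for this statement and checks out, including the cost accounting and the validity of the modified path.
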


For every $i\in [1\dd m)$, we define the set $H_i = \setof{j\in [0\dd x)}{ X_i[j] \neq X_{i+1}[j]}$; for $i\in \ZZ \setminus [1\dd m)$, we set $H_i = \emptyset$.
  Note that the definition of the parameter $h$ implies $|H_i|\le h$ for every $i \in \ZZ$.
  Additionally, for every $i\in \ZZ$, we define $F_i\subseteq [0\dd x)$ to be an arbitrary superset of $H_i \cup H_{i+1}$ of size $2h$ exactly.\footnote{Notice that in order for $F_i$ to be well-defined, we must have $2h \le x$.
  Otherwise, i.e., if $x \le 2h$, we can decide $\wed(X_i,Y)\le k$ in $\Oh(xy)=\Oh(hn)$ time, for a total of $\Oh(mhn)=\Oh(n^{\beta+1-\beta+1})=\Oh(n^2)$ time across all $i\in [1\dd m]$.
  In this case, we can output a trivial YES or NO instance of the bounded weighted edit distance problem.}
  Notice that $F_i$ consists of at most $2h$ indices from $H_i \cup H_{i-1}$ and $2h-|H_i \cup H_{i-1}|$ additional arbitrary indices. 

  For convenience of presentation, let us define $X_0 = X_1$. 
  For every $i\in [0\dd m]$, we define the strings $X_i^\top$ and $X_i^\bot$ as follows:
\[
        X_i^\top[j]
        =
        \begin{cases}
        \top &\text{if } j \in F_{i-1}, \\  
        X_i[j] &\text{otherwise, }
        \end{cases}
        \qquad\text{and}\qquad
            X_i^\bot[j]
            =
            \begin{cases}
            \bot &\text{if } j \in F_{i}, \\  
            X_i[j] &\text{otherwise}. 
            \end{cases}
\]

Let $U=u_1u_2 \cdots u_r$ and $V = v_1v_2\cdots v_r$.
We define 
\begin{align*}
        \hat{X_p} &=U \cdot X_0^\top \cdot V \cdot \diamond Y \diamond \cdot U \cdot X_0^\bot \cdot V \cdot Y,  \\
        \hat X &= \bigodot_{i=1}^m \big ( U \cdot \diamond X_i \diamond \cdot V \cdot Y \cdot U \cdot X_i^\top \cdot V \diamond Y \diamond \cdot U \cdot X_i^\bot \cdot V \cdot Y \big ), \\
        \hat Y &= \bigodot_{i=1}^{m} \big (U \cdot X_{i-1}^\top \cdot V \cdot \diamond Y \diamond \cdot U \cdot X_{i-1}^\bot \cdot V \cdot Y \cdot U \cdot \diamond X_i \diamond \cdot V \cdot Y \big ) \cdot U \cdot X_m^\top \cdot V \cdot \diamond Y \diamond, \\
        \hat{Y_s} &= U \cdot X_m^\bot \cdot V \cdot Y.
\end{align*}

    Finally, we define $\tilde{X}= \hat{X_p} \cdot \$ \cdot \hat X \cdot \$$, $\tilde{Y} = \$ \cdot \hat Y \cdot \$ \cdot \hat{Y_s}$, $\hat k = (m-1)(8h+8)+ 2r + 2x + k + 6h + 6$, and $\tilde k = \hat k + 6r + 3x + 6y + 2 $.
    Notice that $\hat{X_p}\cdot \hat X = \hat Y \cdot \hat{Y_s}$, so we have $\eed{\tilde{X}}{\tilde{Y}}\le 4$ as witnessed by deleting all four $\$$ symbols.
    We highlight the following simple property of $\tilde X$ and $\tilde Y$, which is useful in another context.
    \begin{observation}\label{obs:dolarlessZ}
    There exists a string $Z$ such that $\ed(\tilde X,Z) =\ed(\tilde Y ,Z)=2$ and, for every $i\in \fragmentco{0}{|Z|}$, we have $\hat w(Z[i] , \emptystring) \le 2$.
    \end{observation}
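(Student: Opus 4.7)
The natural candidate is $Z \coloneqq \hat{X_p} \cdot \hat X = \hat Y \cdot \hat{Y_s}$, i.e., the common string obtained by removing all four~$\$$ symbols from $\tilde X$ or $\tilde Y$. I will show that this $Z$ satisfies both required properties.

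First I would handle the edit distance claim. From $\tilde X = \hat{X_p} \cdot \$ \cdot \hat X \cdot \$$, the string $Z$ is reached by deleting the two $\$$ symbols, giving $\ed(\tilde X, Z) \le 2$; symmetrically $\ed(\tilde Y, Z) \le 2$ since $\tilde Y = \$ \cdot \hat Y \cdot \$ \cdot \hat{Y_s}$ and $Z = \hat Y \cdot \hat{Y_s}$. For the matching lower bound, note that $|\tilde X| = |Z| + 2 = |\tilde Y|$, so $\ed(\tilde X, Z) \ge \bigl||\tilde X| - |Z|\bigr| = 2$ and likewise $\ed(\tilde Y, Z) \ge 2$. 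Hence both distances are exactly $2$.

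Next I would verify the per-character weight bound by cases on $Z[i]$, using the definition of $\hat w$ and the properties of $w$ guaranteed by \cref{thm:BatchedWEDLBReduction}. The characters appearing in $Z$ come from $\Sigma_X$, $\Sigma_Y$, $\{u_1,\ldots,u_r,v_1,\ldots,v_r\}$, and $\{\diamond, \top, \bot\}$; in particular, no $\$$ appears in $Z$. For $a \in \Sigma_X$, \cref{thm:BatchedWEDLBReduction} gives $w(a,\emptystring)=1$, so $\hat w(a,\emptystring)=w(a,\emptystring)=1\le 2$. For $b \in \Sigma_Y$, \cref{thm:BatchedWEDLBReduction} gives $w(\emptystring,b)=2$ and $w$ is symmetric on $\Esigma$, hence $\hat w(b,\emptystring)=w(b,\emptystring)=2\le 2$. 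For every $c \in \{u_j, v_j, \diamond, \top, \bot\}$, the pair $(c,\emptystring)$ falls into the ``otherwise'' clause of the definition of $\hat w$, since $c\neq \emptystring$, $c,\emptystring\notin \Sigma_X\cup\Sigma_Y\cup\{\$\}$ are not equal, and $\{c,\emptystring\}\not\subseteq\{\top,\bot,\diamond\}$ (because $\emptystring$ is not among those three symbols). Thus $\hat w(c,\emptystring)=1\le 2$.

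I do not foresee any real obstacle: the only subtlety is making sure every character class is covered and invoking the right clause of $\hat w$ (in particular realizing that $(\diamond,\emptystring)$, $(\top,\emptystring)$, and $(\bot,\emptystring)$ land in the ``otherwise'' case rather than the $\{\top,\bot,\diamond\}$-case, because $\emptystring$ is not in that set). The casework is mechanical once one lists the symbols that can appear in $Z$.
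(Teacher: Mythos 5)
Your proposal is correct and uses exactly the string the paper uses ($Z=\hat X_p\cdot\hat X=\hat Y\cdot\hat Y_s$); the paper simply asserts that this $Z$ works, while you spell out the length-difference lower bound and the casework over the character classes of $Z$ against the clauses of $\hat w$. The only nitpick is a wording slip in the last case: the reason the second clause of $\hat w$ does not apply to $(c,\emptystring)$ with $c\in\{u_j,v_j,\diamond,\top,\bot\}$ is that $c\notin\Sigma_X\cup\Sigma_Y\cup\{\emptystring\}$ (not the set $\Sigma_X\cup\Sigma_Y\cup\{\$\}$ you wrote), but your conclusion $\hat w(c,\emptystring)=1$ is right.
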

    \begin{proof}
    It is easy to see that $Z=\hat X_p \cdot \hat X = \hat Y \cdot \hat Y_s$ satisfies \cref{obs:dolarlessZ}.
    \end{proof}

    Our main goal is to prove the following.
    \begin{lemma}\label{lem:tildeEqivalence}
        $\wwed{\hat w}{\tilde X}{\tilde Y}\le \tilde k$ if and only if $\wwed{w}{X_i}{Y}\le k$ for some $i\in [1\dd m]$.
    \end{lemma}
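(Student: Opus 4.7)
My plan is to reduce the equivalence in two stages. First, I use the $\$$ symbols to force a rigid alignment structure that pulls out $\hat{X_p}$ and $\hat{Y_s}$ at a fixed cost. Then I analyze alignments between $\hat X$ and $\hat Y$ to extract a small-distance pair $(X_i, Y)$.

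For the first stage, because $\hat w$ assigns infinite cost to any edit involving a $\$$ and $\tilde k$ is finite, every alignment $\tilde \cA : \tilde X \onto \tilde Y$ of $\hat w$-cost at most $\tilde k$ must match the two $\$$ symbols of $\tilde X = \hat{X_p} \cdot \$ \cdot \hat X \cdot \$$ with the two $\$$ symbols of $\tilde Y = \$ \cdot \hat Y \cdot \$ \cdot \hat{Y_s}$ in order. This forces $\tilde \cA$ to delete the prefix $\hat{X_p}$, align $\hat X$ with $\hat Y$, and insert the suffix $\hat{Y_s}$. Summing $\hat w(a, \emptystring)$ over $a \in \hat{X_p}$ and $\hat w(\emptystring, b)$ over $b \in \hat{Y_s}$, and using the symmetric entries $\hat w(\cdot, \emptystring) = \hat w(\emptystring, \cdot)$ inherited from $w$, the fixed contribution evaluates to exactly $\tilde k - \hat k = 6r + 3x + 6y + 2$ (with $\Sigma_X$ characters and auxiliary symbols contributing $1$ per position while $\Sigma_Y$ characters contribute $2$). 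Hence the claim reduces to $\wwed{\hat w}{\hat X}{\hat Y} \le \hat k \iff \exists i \in [1\dd m] : \wwed{w}{X_i}{Y} \le k$.

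For the backward direction of the reduced equivalence, given $i$ with $\wwed{w}{X_i}{Y} \le k$, I construct an alignment $\cA_i : \hat X \onto \hat Y$ that matches all $U$, $V$, and outer $\diamond$ blocks exactly; aligns the bracketed fragment $X_i$ of $\hat X$ with the copy of $Y$ in $\hat Y$ located between $X_{i-1}^\bot$ and $X_{i-1}^\top$ using the supplied near-optimal alignment; matches the remaining $3m - 1$ copies of $Y$ pairwise; deletes the gadgets $X_{i-1}^\top$ and $X_{i-1}^\bot$ from $\hat Y$; and aligns the remaining $X$-gadgets of $\hat X$ and $\hat Y$ one-to-one along the natural pairings $X_j \leftrightarrow X_{j-1}^\top$, $X_j \leftrightarrow X_{j-1}^\bot$, $X_j^\top \leftrightarrow X_j$, and $X_j^\bot \leftrightarrow X_j$ (contributing roughly $2h$ each) together with $X_j^\top \leftrightarrow X_{j-1}^\bot$ and $X_j^\bot \leftrightarrow X_j^\top$ (contributing roughly $4h$ each). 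Summation then yields a total at most $\hat k$.

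For the forward direction I show that any alignment $\cA : \hat X \onto \hat Y$ of $\hat w$-cost at most $\hat k$ must follow the template of some $\cA_i$. The length-$r$ blocks $U$ and $V$ consist of characters appearing nowhere else in $\hat X$ or $\hat Y$, and $r$ is large enough that any alignment failing to match each such block with its natural counterpart incurs cost exceeding $\hat k$; similarly, the $\diamond$-delimiters pin down the positions of the bracketed $X_i$ fragments. The only remaining freedom is which copy of $Y$ in $\hat Y$ is paired with the bracketed $X_i$ in $\hat X$, and that choice determines the index $i$. The residual cost then equals $\wwed{w}{X_i}{Y}$ plus a baseline of $X$-gadget pairing contributions. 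The main obstacle I foresee is this last rigidity argument: verifying that the $2h$ and $4h$ pairing contributions aggregate to a baseline that is exactly independent of $i$, so that the inequality $\wwed{\hat w}{\hat X}{\hat Y} \le \hat k$ truly reduces to $\wwed{w}{X_i}{Y} \le k$ for the pinpointed $i$. This mirrors but extends the analysis of~\cite[Section 7]{CKW23}, whose two-gadget construction is here enriched with the additional gadget $X_i^\top$ and an extra $Y$-copy per iteration, producing more pairing types that require careful symmetric accounting.
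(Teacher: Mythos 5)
Your first stage (the \(\$\)-forcing argument and the fixed cost \(6r+3x+6y+2\)) and your backward construction coincide with the paper's proof, so up to that point you are on the paper's track. The gap is in the forward direction, and it is substantial. Your rigidity claim --- that \(r\) is large enough that ``any alignment failing to match each \(U\)/\(V\) block with its natural counterpart'' is too expensive --- is not even well posed: \(\hat Y\) contains \(3m+1\) copies of \(U\) and of \(V\) while \(\hat X\) contains only \(3m\), so there is no natural one-to-one counterpart, and the whole point of the construction is that a cheap alignment must skip exactly one adjacent pair of \(U\)/\(V\) occurrences in \(\hat Y\), with the location of that skipped pair encoding the index \(i\). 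The paper establishes this via a pigeonhole argument (since \(3r>\hat k\), some index \(q\) has at most two unmatched \(u_q,v_q\) in \(\hat Y\), forcing all \(u_q,v_q\) of \(\hat X\) to be matched and exactly two consecutive ones of \(\hat Y\) to be skipped) combined with the triangle-inequality fact that a matched \(u_q\) or \(v_q\) forces its entire enclosing \(U\) or \(V\) block to be matched. Your proposal contains neither ingredient, and the \(\diamond\)-delimiters cannot ``pin down'' anything on their own, since editing a \(\diamond\) costs only \(1\) or \(2\).

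Moreover, you misidentify where the difficulty lies. Checking that the \(2h\)/\(4h\) pairing contributions sum to an \(i\)-independent baseline is routine (the number of pairs of each type is constant across the templates). What actually carries the proof is, first, a six-way case analysis over which adjacent \(U\)/\(V\) pair of \(\hat Y\) is skipped, where five of the cases are killed by dedicated cost lower bounds (the claims bounding \(\wwed{\hat w}{X_i^\top}{X_{i-1}^\bot\cdot V\cdot Y\cdot U\cdot \diamond X_i\diamond}\), \(\wwed{\hat w}{Y}{Y\cdot U\cdot X_i^\top\cdot V\cdot \diamond Y\diamond}\), etc., all above \(2r+2x+k\)); and second, the key equivalence that \(\wwed{\hat w}{\diamond X_i\diamond}{X_{i-1}^\top\cdot V\cdot \diamond Y\diamond\cdot U\cdot X_{i-1}^\bot}\le 2r+2x+k\) holds if and only if \(\wwed{w}{X_i}{Y}\le k\). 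This last step is not the bookkeeping you describe (``residual cost equals \(\wwed{w}{X_i}{Y}\) plus a baseline''): it needs the facts that \(\Sigma_X\cap\Sigma_Y=\emptyset\), that inserting \(\Sigma_Y\)-characters costs \(2\) while other insertions cost \(1\), and that \(k\in[2y-x,2y-x+1)\), which together force the two \(\diamond\) symbols to be matched and rule out any deletions in a within-budget alignment of this segment. Without these counting arguments your forward direction does not go through.
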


    We make the following simple observation.
    \begin{observation}\label{clm:fromtildetohat}
    $\wwed{\hat w }{\tilde X}{\tilde Y} = \wwed{\hat w}{\hat X}{\hat Y} + 6r+3x+6y+2$
    \end{observation}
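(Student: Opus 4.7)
The key observation is that $\hat w$ turns every $\$$ into a rigid anchor: $\hat w(a,b)=\infty$ whenever $a\ne b$ and $\$\in\{a,b\}$, so no finite-cost alignment can insert, delete, or non-trivially substitute a $\$$. Since $\tilde X$ and $\tilde Y$ each contain exactly two occurrences of $\$$, any finite-cost alignment $\cA\colon\tilde X\onto\tilde Y$ must pair them in a bijection, and by monotonicity the $i$-th $\$$ of $\tilde X$ has to be matched with the $i$-th $\$$ of $\tilde Y$ for $i\in\{1,2\}$. These two matches partition $\cA$ into three independent subalignments: one of $\hat X_p$ onto $\emptystring$ (the prefix before the first $\$$), one of $\hat X$ onto $\hat Y$ (between the two $\$$'s), and one of $\emptystring$ onto $\hat Y_s$ (after the second $\$$), with the matched $\$$'s contributing cost $0$. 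Conversely, any triple of such subalignments concatenates into an alignment of $\tilde X$ onto $\tilde Y$ of the corresponding total cost, so
\[
    \wwed{\hat w}{\tilde X}{\tilde Y} \;=\; \wwed{\hat w}{\hat X_p}{\emptystring} \;+\; \wwed{\hat w}{\hat X}{\hat Y} \;+\; \wwed{\hat w}{\emptystring}{\hat Y_s}.
\]

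It then remains to evaluate the two boundary terms, which are completely forced: the only alignment of a string $S$ onto $\emptystring$ deletes every character of $S$, and symmetrically for $\emptystring\onto S$ (using the symmetry of $\hat w$ from \cref{clm:whatsymmetrictriangle}). A per-character tally, using $\hat w(a,\emptystring)=1$ for $a\in\{u_i,v_i\}\cup\{\top,\bot,\diamond\}\cup\Sigma_X$ and $\hat w(a,\emptystring)=\hat w(\emptystring,a)=2$ for $a\in\Sigma_Y$ (both granted by \cref{thm:BatchedWEDLBReduction}), gives $\wwed{\hat w}{\hat X_p}{\emptystring}=4r+2x+4y+2$ from the contents $U\,X_0^\top\,V\,\diamond\,Y\,\diamond\,U\,X_0^\bot\,V\,Y$ of $\hat X_p$, and $\wwed{\hat w}{\emptystring}{\hat Y_s}=2r+x+2y$ from $U\,X_m^\bot\,V\,Y$. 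Adding these constant offsets yields the claimed $6r+3x+6y+2$.

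The only subtlety is the bijection argument in the first step: one must also rule out an alignment that tries to ``skip past'' a $\$$ in $\tilde X$ by an insertion (or past a $\$$ in $\tilde Y$ by a deletion). This is immediate, however, because such a skip would leave the $\$$ eventually incident to a horizontal or vertical edge of $\hat w$-cost $\infty$; so every finite-cost alignment realizes exactly the promised bijection. Beyond this, no deeper obstacle arises -- the claim is essentially an accounting lemma isolating the fixed boundary cost of the $\$$-delimited outer gadgets, with the interesting combinatorics deferred to the subsequent analysis of $\wwed{\hat w}{\hat X}{\hat Y}$.
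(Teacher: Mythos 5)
Your proposal is correct and follows essentially the same route as the paper: the $\infty$-cost of any edit touching $\$$ forces the two $\$$'s of $\tilde X$ to be matched with the two $\$$'s of $\tilde Y$, splitting any finite-cost alignment into $\hat X_p \onto \emptystring$, $\hat X \onto \hat Y$, and $\emptystring \onto \hat Y_s$, after which the boundary terms are fixed deletion/insertion costs. Your explicit tally ($4r+2x+4y+2$ and $2r+x+2y$, summing to $6r+3x+6y+2$) just spells out the accounting the paper leaves implicit.
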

    \begin{proof}
    Since $\hat w(\$,b) = \infty$ for every $b \neq \$$, any alignment of $\hat X$ onto $\hat Y$ with a finite cost must align the two $\$$ symbols in $\hat X$ with the two $\$$ symbols in $\hat Y$.
    It follows that $\wwed{\hat w}{\tilde X}{\tilde Y}= \wwed{\hat w}{\tilde{X_p}}{\emptystring}+\wwed{\hat w}{\hat X}{\hat Y}+ \wwed{\hat w}{\emptystring}{\hat{Y_s}} = \wwed{\hat w}{\hat X}{\hat Y} + 6r + 3x + 6y + 2$ as required.
    \end{proof}

    \cref{clm:fromtildetohat} reduces the proof of \cref{lem:tildeEqivalence} to the following.
    \begin{lemma}\label{lem:hatEquivalence}
        $\wwed{\hat w}{\hat X}{\hat Y}\le \hat k$ if and only if $\wwed{w}{X_i}{Y}\le k$ for some $i\in [1\dd m]$.
    \end{lemma}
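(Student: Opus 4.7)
The plan is to prove the ``if'' and ``only if'' directions separately, exploiting the choice $\hat k = 3r-1$ that the construction forces.

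For the ``if'' direction, I would fix $i^* \in [1\dd m]$ with $\wwed{w}{X_{i^*}}{Y} \le k$ and construct an explicit alignment $\cA_{i^*} \colon \hat X \onto \hat Y$. The strings $\hat X$ and $\hat Y$ decompose naturally into sub-blocks of the form $U \cdot g \cdot V \cdot y$, where $g$ is an $X$-gadget ($\diamond X_i \diamond$, $X_i^\top$, or $X_i^\bot$) and $y$ is either $Y$ or $\diamond Y \diamond$; there are $3m$ such sub-blocks in $\hat X$ and $3m+1$ in $\hat Y$. The alignment $\cA_{i^*}$ matches $\hat X$'s $j$-th $U$ and $V$ with $\hat Y$'s $j$-th $U$ and $V$ for $j \le 3i^*-3$ and with $\hat Y$'s $(j+1)$-th $U$ and $V$ for $j \ge 3i^*-1$, realizing the shift by skipping $\hat Y$'s $(3i^*-2)$-th $V$ and $(3i^*-1)$-th $U$. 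The crucial step is to align the $\hat X$-gadget $\diamond X_{i^*} \diamond$ (from sub-block $3i^*-2$) with the $\hat Y$-chunk $X_{i^*-1}^\top V \diamond Y \diamond U X_{i^*-1}^\bot$ that sits between the matched anchors bounding the transition: matching the two outer $\diamond$s of $\diamond X_{i^*} \diamond$ with the two $\diamond$s of $\diamond Y \diamond$ costs $0$, aligning $X_{i^*}$ with $Y$ costs $\wwed{w}{X_{i^*}}{Y}$, and inserting the flanking $X_{i^*-1}^\top V$ and $U X_{i^*-1}^\bot$ costs $2r + 2x$. Routine per-sub-block computation---using $|F_\ell|=2h$, $F_\ell \supseteq H_\ell \cup H_{\ell+1}$, and that substitutions on $X$-gadgets cost $1$ per position except at $F_{\ell-1}\cap F_\ell$ (cost $2$)---yields $8+8h$ per block outside the transition, so the total cost is $(m-1)(8+8h) + 2r + 2x + 6h + 6 + \wwed{w}{X_{i^*}}{Y} = \hat k - k + \wwed{w}{X_{i^*}}{Y} \le \hat k$.

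For the ``only if'' direction, I would let $\cA$ be any alignment with cost at most $\hat k = 3r - 1$. Since every character of each $U$- and $V$-copy appears nowhere else in $\hat X$ or $\hat Y$, any whole $U$ or $V$ left unmatched by $\cA$ contributes at least $r$ to the cost, so $\cA$ may leave at most two $U$/$V$ anchors unmatched in total. Because $\hat Y$ has exactly one more $U$ and one more $V$ than $\hat X$, the alignment must match all $3m$ $U$s (resp.\ $V$s) of $\hat X$ in order to $3m$ of the $3m+1$ $U$s (resp.\ $V$s) of $\hat Y$, skipping exactly one of each; monotonicity forces the two skip indices $(t_U, t_V)$ to satisfy $t_V \in \{t_U - 1, t_U\}$. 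I would then run a case analysis on $(t_U, t_V)$: in every case, the matched anchors surrounding the two skips enclose a ``big chunk'' of $\hat Y$ containing one $Y$-piece and one or two extra $X$-gadgets, which $\cA$ must align with the single $X$-gadget of $\hat X$ lying in the corresponding small region. Using the triangle inequality for $\hat w$ (which is symmetric and metric by \cref{clm:whatsymmetrictriangle}, enabling the cancellation of matching outer $\diamond$s via \cref{clm:justmatchpreforsuff}), I would lower-bound the transition cost by $2r + 2x + \wwed{w}{X_{i^*}}{Y}$ for an $i^* \in [1\dd m]$ determined by $(t_U, t_V)$. Adding the unavoidable $(m-1)(8+8h) + 6h + 6$ contribution from the remaining sub-block alignments gives $\cA$-cost $\ge \hat k - k + \wwed{w}{X_{i^*}}{Y}$, which together with cost $\le \hat k$ yields $\wwed{w}{X_{i^*}}{Y} \le k$.

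The main obstacle will be the exhaustive case analysis in the ``only if'' direction: the choice of $t_U \bmod 3$ (combined with the two choices of $t_V$ relative to $t_U$) determines both whether the $X$-gadget of $\hat X$ inside the transition is $\diamond X_i \diamond$, $X_i^\top$, or $X_i^\bot$ and whether the enclosed $\hat Y$-piece is a plain $Y$ or a $\diamond Y \diamond$, giving several structurally distinct sub-cases. In each, I must verify that stripping the cancelling $\diamond$s and routing the residual alignment optimally yields a clean $\wwed{w}{X_{i^*}}{Y}$ term, rather than a hybrid $\wwed{w}{X_{i^*}^\top}{Y}$ or $\wwed{w}{X_{i^*}^\bot}{Y}$, which would only bound $\wwed{w}{X_{i^*}}{Y} \le k + 2h$ and thus invalidate the reduction. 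The strict control provided by $F_\ell \supseteq H_\ell \cup H_{\ell+1}$ on how $X_i^\top$ and $X_i^\bot$ relate to the neighbouring $X_{i-1}, X_i, X_{i+1}$ is precisely what makes this case analysis close cleanly.
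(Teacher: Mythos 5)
Your overall architecture is the same as the paper's: an explicit alignment with the $(m-1)(8h+8)+2r+2x+6h+6+\wwed{w}{X_{i^*}}{Y}$ cost accounting for the ``if'' direction, and, for the ``only if'' direction, forcing the $U/V$ anchors to be matched up to exactly one skipped $U$ and one skipped $V$ in $\hat Y$, followed by a case analysis on where the skip occurs. Two steps of your plan, however, would not go through as stated. First, your justification for the anchor structure --- ``any whole $U$ or $V$ left unmatched contributes at least $r$, so at most two anchors are unmatched, hence each $U$ of $\hat X$ is matched to a single $U$ of $\hat Y$'' --- only rules out anchors that are \emph{entirely} unmatched. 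It does not exclude an alignment that splits many $U$-copies of $\hat X$ across two $U$-copies of $\hat Y$, in which no copy is wholly unmatched yet no clean skip structure emerges. The paper closes this by a finer pigeonhole: since $\hat k=3r-1<3r$, there is a single index $q$ for which at most two of the $6m+2$ occurrences of $u_q,v_q$ in $\hat Y$ are unmatched; the monotone matching of the subsequence $(u_qv_q)^{3m}$ of $\hat X$ into $(u_qv_q)^{3m+1}$ of $\hat Y$ then forces two consecutive omissions, and \cref{clm:justmatchpreforsuff} (triangle inequality for $\hat w$) upgrades the matched $u_q,v_q$ to whole matched $U,V$ blocks. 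You need some version of this refinement.

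Second, your description of how the case analysis closes is not accurate: it is not true that ``in every case'' the transition cost is lower-bounded by $2r+2x+\wwed{w}{X_{i^*}}{Y}$, and the $F_\ell\supseteq H_\ell\cup H_{\ell+1}$ control is not what rescues the bad cases. In the paper, only the case where the skipped $V,U$ pair sits immediately after $X_{j-1}^\top$ produces the term $\wwed{w}{X_j}{Y}$ (via \cref{clm:theoffsetchanger}); all five remaining placements of the skip are killed outright by counting insertions and substitutions of $\Sigma_Y$-characters (\cref{clm:wrongVUdeletion,clm:UVdeletion}): since $\Sigma_X\cap\Sigma_Y=\emptyset$ and the enclosed chunk of $\hat Y$ then contains an extra unmatched copy of $Y$, those characters alone cost more than $2r+2x+k$ plus the fixed gadget costs, exceeding the budget $2r+2x+k+6h+6$. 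Your worry about ``hybrid'' terms like $\wwed{w}{X_{i^*}^\top}{Y}$ is thus resolved not by the structure of $F_\ell$ but by the length/alphabet-disjointness argument; without that mechanism the bad cases in your plan do not close.
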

    \begin{proof}
    Let us first analyze the weighted edit distance between pairs of gadgets that occur in our construction.
    This analysis is given below as \cref{clm:topbot4h,clm:topprevbot4h,clm:topbotdiamond2h,clm:topbotequalwed,clm:theoffsetchanger,clm:wrongVUdeletion,clm:wrongVUdeletion,clm:UVdeletion}.

    \begin{claim}\label{clm:topbotequalwed}
        Let $Z_\top \in (\Sigma_X \cup \{ \top \})^x$ and $Z_\bot \in (\Sigma_X \cup \{ \bot \})^x$ such that, for every $j\in [0\dd x)$, we have $Z_\top[j]=Z_\bot[j]$, $Z_\top[j]=\top$, or $Z_{\bot}[j]=\bot$.
        Let $c_\top$ and $c_\bot$ be the number of occurrences of $\top$ in $Z_\top$ and of $\bot$ in $Z_\bot$, respectively.
        It holds that $\wwed{\hat w}{Z_\top}{Z_\bot} = \wwed{\hat w}{Z_\bot}{Z_\top} = c_\top+c_\bot$.  
      \end{claim}
    \begin{claimproof}
        We prove that $\wwed{\hat w}{Z_{\top}}{Z_\bot}= c_\top + c_\bot$; the equality $\wwed{\hat w}{Z_\bot}{Z_\top} = c_\top+c_\bot$ then follows from symmetry of $\hat w$ (\cref{clm:whatsymmetrictriangle}).

        Let $F_\top = \setof{j\in [0\dd x)}{Z_\top[j] = \top}$ and $F_\bot = \setof{ j\in [0\dd x)}{Z_\bot[j] = \bot }$.
        \para{$\wwed{\hat w}{Z_\top}{Z_\bot} \le c_\top + c_\bot$:}
        Consider the alignment $\mathcal{A}$ that aligns $Z_\top[j]$ with $Z_\bot[j]$ for every $j\in [0\dd x)$.
        For every $j \notin F_{\top} \cup F_\bot$, we have that $\mathcal{A}$ matches $Z_\top[j]=Z_\bot[j]$, which induces no cost.
        For every $j\in F_{\top} \setminus F_\bot$, the alignment $\mathcal{A}$ substitutes $Z_\top[j] = \top$ with $Z_\bot[j] \in \Sigma_X$, which costs $1$.
        Similarly, for every $j\in F_\bot \setminus F_{\top}$, the alignment $\mathcal{A}$ aligns a $\bot$ symbol with a symbol from $\Sigma_X$ for a cost of $1$.
        Finally, for every index $j\in F_{\top} \cap F_\bot$, the alignment $\mathcal{A}$ aligns a $\top$ symbol with a $\bot$ symbol, which costs $2$.
        The total cost of $\mathcal{A}$ is therefore $|F_\top \setminus F_\bot| +|F_\bot \setminus F_\top| + 2 |F_\bot \cap F_\top| = |F_\top| + |F_\bot| = c_\top+c_\bot$.
        
        \para{$\wwed{\hat w}{Z_\top}{Z_\bot} \ge c_\top + c_\bot$:}
        Let $\mathcal{A}$ be a $\hat w$-optimal alignment of $Z_\top$ onto $Z_\bot$.
        Define \begin{align*}M &=\setof{(j_1,j_2)\in F_\top \times F_\bot}{\mathcal{A} \text{ substitutes } Z_\top[j_1] \text{ to } Z_\bot[j_2]},\\ 
            C_\top &= \setof{j \in F_\top}{\mathcal{A} \text{ does not substitute }Z_\top[j] \text{ to a } \bot \text{ symbol}},\text{ and } \\C_\bot &= \setof{j \in F_\bot }{ \mathcal{A} \text{ does not substitute }Z_\bot[j] \text{ to a } \top \textit{ symbol }}.\end{align*}
        Notice that $c_\top = |M| + |C_\top|$ and $c_\bot = |M| + |C_\bot|$.
    
        Since $\top$ does not occur in $Z_\bot$, every $j\in C_\top$ is either substituted or deleted by $\mathcal{A}$, which induces a cost of $1$.
        Similarly, since there are no $\bot$ symbols in $Z_\top$, every $j\in C_\bot$ is either substituted or inserted, which induces a cost of $1$.
        This contributes a cost of $|C_\top|+|C_\bot|$ to $\mathcal{A}$.
        Notice that there is no double counting because the first summand accounts for substitutions (or deletions) of a $\top$ symbol with a symbol in $\Sigma_X\cup \{\emptystring\}$, and the second accounts for substitutions (or insertions) of a symbol in $\Sigma_X \cup \{ \emptystring \}$ with a $\bot$ symbol.
        Finally, by definition, each pair in $M$ induces a cost of $2$.
        Notice again that this cost is due to substitutions disjoint from the ones already considered.
    
        It follows from the above that the cost of $\mathcal{A}$ is at least $|C_\top| + |C_\bot| + 2|M| = c_\top + c_\bot$, which concludes the proof.     
    \end{claimproof}

    \begin{claim}\label{clm:topbot4h}
    For every $i\in [0\dd m]$, we have $\wwed{\hat w}{X_i^\top}{X_{i}^\bot}=\wwed{\hat w}{X_i^\bot}{X_{i}^\top}=4h$.
    \end{claim}
    \begin{claimproof}
    By definition, for every $j\in[0\dd x)$, we have either $X_i^\top[j] = \top$, $X_i^\bot[j] = \bot$, or $X_i^\bot[j] = X_i[j] = X_i^\top[j]$.
    Additionally, $X_i^\top \in \{ \Sigma_X \cup \{ \top \}\}^x$ contains exactly $2h$ $\top$ symbols and $X_i^\bot \in \{ \Sigma_X \cup \{ \bot \}\}^x$ contains exactly $2h$ $\bot$ symbols.
    The claim therefore follows from \cref{clm:topbotequalwed,clm:whatsymmetrictriangle}.
    \end{claimproof}

    \begin{claim}\label{clm:topprevbot4h}
    For every $i\in [1\dd m]$, we have $\wwed{\hat w}{X_{i}^{\top}}{X_{i-1}^\bot}=\wwed{\hat w}{X_{i-1}^\bot}{X_{i}^{\top}} = 4h$.
    \end{claim}
    \begin{claimproof}
    By definition, $X_i^\top \in \{ \Sigma_X \cup \{ \top \}\}^x$ contains exactly $2h$ $\top$ symbols and $X_{i-1}^\bot \in \{ \Sigma_X \cup \{ \bot \}\}^x$ contains exactly $2h$ $\bot$ symbols.
    Let $j$ be an index such that $X_i^\top[j] \neq \top$ and $X_{i-1}^{\bot}[j] \neq \bot$.
    By the construction of $X_i^{\top}$, we have that $j \notin F_{i-1}$ and, in particular, $j\notin H_{i-1}$.
    Therefore, $X_{i-1}^\bot[j] = X_{i-1}[j] = X_{i}[j] = X_i^\top[j]$.
    We have shown that, for every index $j \in [0\dd x)$, either $X_i^\top[j] = \top$, $X_{i-1}^\bot[j] = \bot$, or $X_i^\top[j] = X_{i-1}^\bot[j]$.
    It follows that \cref{clm:topbotequalwed} applies and yields $\wwed{\hat w }{X_i^\top}{X_{i-1}^\bot} = 4h$, as required. \cref{clm:whatsymmetrictriangle} implies the symmetric statement due to the symmetry of $\hat w$.
    \end{claimproof}

    \begin{claim}\label{clm:topbotdiamond2h}
        For every $i\in [1\dd m]$ and every $Z\in \{ X_{i}^\top, X_i^\bot , X_{i-1}^\top, X_{i-1}^\bot\}$, we have $\wwed{\hat w}{\diamond X_i \diamond}{Z}=\wwed{\hat w}{Z}{\diamond X_i \diamond}=2h+2$.
    \end{claim}
    \begin{claimproof}
    We prove $\wwed{\hat w}{\diamond X_i \diamond}{Z} = 2h+2$; the equality $\wwed{\hat w}{Z}{\diamond X_i \diamond} = 2h+2$ follows by \cref{clm:whatsymmetrictriangle}.

    First, we claim that, for every index $j\in [0\dd x)$, we either have $Z[j]=X_i[j]$ or $Z[j]\in \{ \top,\bot \}$.
    For $Z\in \{ X_i^\top, X_i^\bot\}$, this is immediate from the definitions of $X_i^\top$ and $X_i^\bot$.
    For $Z\in \{X_{i-1}^\bot,X_{i-1}^\top \}$, it holds that, for every $j\in [0\dd x)$, either $Z[j]\in \{ \top,\bot \}$, or $j\notin H_{i-1}$ and $Z[j]=X_{i-1}[j]$, which leads to $Z[j] = X_i[j]$.
    Therefore, the alignment that deletes both diamonds of $\diamond X_i \diamond$ and then aligns $X_i[j]$ to $Z[j]$ using matches and substitutions only has a cost of exactly $2h+2$ (where the cost of $2h$ arises from the occurrences of $\top$ or $\bot$ characters in $Z$).

    Let $\mathcal{A}$ be an alignment of $\diamond X_i \diamond $ onto $Z$.
    Notice that there are no occurrences of $\top$ or of $\bot$ in $\diamond X_i \diamond$ and no occurrences of $\diamond$ in $Z$.
    Therefore, each occurrence of $\top$, $\bot$, and $\diamond$ symbols be involved in an edit.
    It follows that there are $2h+2$ edits involving $\top$, $\bot$, and $\diamond$ symbols in $\mathcal{A}$, each with cost at least $1$.
    The only situation in which an operation is counted twice is when a $\diamond$ symbol is substituted for a $\top$ or a $\bot$ symbol, in which case the operation induces a cost of $2$ instead of $1$.
    It follows that $\cost(\mathcal{A}) \ge 2h+2$ holds as required. 
\end{claimproof}

\begin{claim}\label{clm:theoffsetchanger}
For every $i\in [1\dd m]$, it holds that $\wwed{\hat w}{\diamond X_i \diamond}{X_{i-1}^{\top} \cdot V  \cdot \diamond Y \diamond \cdot U \cdot X_{i-1}^\bot} \le 2r + 2x + k$ if and only if $\wwed{w}{X_i}{Y}\le k$.    
\end{claim}
\begin{claimproof}
If $\wwed{w}{X_i}{Y}\le k$, then the claim cost can be achieved as follows:
\begin{multline*}
\wwed{\hat w}{\diamond X_i \diamond}{X_{i-1}^{\top} \cdot V  \cdot \diamond Y \diamond \cdot U \cdot X_{i-1}^\bot} \le\\
\wwed{\hat w}{\varepsilon}{X_{i-1}^{\top} \cdot V } + \wwed{\hat w}{X_i}{Y} + \wwed{\hat k}{\varepsilon}{ U \cdot X_{i-1}^\bot} \le x+r+k + r + x = 2x + 2r + k.
\end{multline*}

For the other direction, assume $\wwed{\hat w}{\diamond X_i \diamond}{X_{i-1}^{\top} \cdot V  \cdot \diamond Y \diamond \cdot U \cdot X_{i-1}^\bot} \le 2r + 2x + k$ and consider a $\hat w$-optimal alignment $\mathcal{A}:\diamond X_i \diamond\onto X_{i-1}^{\top} \cdot V  \cdot \diamond Y \diamond \cdot U \cdot X_{i-1}^\bot$. 
Let $c$ be the number of characters of $Y$ inserted by $\mathcal{A}$.
Due to length difference, $\mathcal{A}$ must insert at least $2r+y+x$ characters in total.
The total cost of insertions is therefore $2c + (2r+y+x-c)$.
Moreover, the characters of $Y$ cannot be matched due to $\Sigma_X \cap \Sigma_Y = \emptyset$.
It follows that each of the $y-c$ characters of $Y$ that is not inserted is involved in a substitution, which contributes at least $1$ to $\cost(\mathcal{A})$.
In total, insertions and substitutions involving characters of $Y$ account for $2r+2y+x$ units within the cost of $\mathcal{A}$.
Since $2r+2y+x = 2r + 2x + 2y-x > 2r+2x+ k -1 \ge \cost(\mathcal{A})-1$, it must be the case that $\mathcal{A}$ does not make any deletions or any substitutions that do not involve characters of $Y$.
Consequently, the $\diamond$ symbols of $\diamond X_i \diamond$ must match the $\diamond$ symbols of $X_{i-1}^{\top} \cdot V  \cdot \diamond Y \diamond \cdot U \cdot X_{i-1}^\bot$.
Therefore, we have
\[
\cost(\mathcal{A}) =
\wwed{\hat w}{\varepsilon}{X_{i-1}^{\top} \cdot V } + \wwed{\hat w}{X_i}{Y} + \wwed{\hat k}{\varepsilon}{ U \cdot X_{i-1}^\bot} = 
2x + 2r + \wwed{w}{X_i}{Y}\]
Finally, from $\cost(\mathcal{A}) \le 2r+2x+k$, we have that $\wwed{w}{X_i}{Y}\le k$ as required.
\end{claimproof}

\begin{claim}\label{clm:wrongVUdeletion}
    For every $i\in [1\dd m]$, we have $\wwed{\hat w}{X^\top_i}{X^\bot_{i-1} \cdot V \cdot Y \cdot U \cdot \diamond X_i \diamond} > 2r+2x + k$ as well as $\wwed{\hat w}{X^\bot_i}{\diamond X_i \diamond \cdot V \cdot Y \cdot U \cdot  X^\top_i } > 2r+2x + k$.
\end{claim}
\begin{claimproof}
    We prove the first inequality; the second inequality can be proven similarly.
    Consider an alignment $\mathcal{A} : X^\top_i \onto \diamond X_i \diamond \cdot V \cdot Y \cdot U \cdot  X^\top_i$.
    Let $c$ be the number of characters in $\Sigma_Y$ that $\mathcal{A}$ inserts.
    These insertions induce a cost of $2c$.
    Due to length difference, $\mathcal{A}$ has to insert at least $2r+y+x+2$ characters in total.
    It follows that there are at least $2r+y+x+2 -c$ insertions that does not involve $\Sigma_Y$ in $\mathcal{A}$.
    Notice that characters in $\Sigma_Y$ cannot be matched, so each of the $y-c$ characters that is not inserted is involved in a substitution, which contributes at least $1$ to $\cost(\mathcal{A})$.
    In total, insertions and substitutions involving characters of $Y$ account for at least $2r+2y+x+2$ units within the cost of $\mathcal{A}$.
    The claim follows from $\cost(\mathcal{A}) \ge 2r+2x + 1  + (2y - x + 1) > 2r + 2x +k $. 
\end{claimproof}

\begin{claim}\label{clm:UVdeletion}
    For every $i\in [1\dd m]$, the following holds:
    \begin{enumerate}
        \item $\wwed{\hat w}{Y}{Y \cdot U \cdot X^\top_{i}\cdot V \cdot \diamond Y \diamond } > 2r + 2x +k$,
        \item $\wwed{\hat w}{Y}{\diamond Y \diamond \cdot U \cdot X^\bot_{i}\cdot V \cdot Y } > 2r + 2x +k$,
        \item $\wwed{\hat w}{\diamond Y \diamond }{ Y  \cdot U \cdot \diamond X_i \diamond \cdot V \cdot Y } > 2r + 2x +k-1$.
    \end{enumerate}
\end{claim}
\begin{claimproof}
    The first two inequalities follow directly from \cref{clm:justmatchpreforsuff}:
    Namely, $\wwed{\hat w}{Y}{Y U X^\top_i V \diamond Y \diamond} = \wwed{\hat w}{\emptystring}{U X^\top_i V \diamond Y \diamond}= 2r + 2y + x + 2 > 2r + 2x + k$.
    The second inequality is analogous.

    As for the third inequality, consider an alignment $\mathcal{A}:\diamond Y \diamond \onto Y U \diamond X_i \diamond V Y$.
    Let $c$ be the number of characters from $\Sigma_Y$ that the alignment $\mathcal{A}$ inserts.
    These insertions contribute $2c$ to the alignment cost.
    Due to length difference, $\mathcal{A}$ must inserts at least $y+x + 2r$ characters in total, leading to an additional cost of at least $(y+x+2r-c)$.
    Out of the $2y$ characters from $\Sigma_Y$ in $Y U \diamond X_i \diamond V Y$, at most $y$ can be matched.
    Therefore, the remaining $y-c$ characters that are neither inserted nor matched by $\mathcal{A}$ induce an additional substitution cost of at least $y-c$.
    In conclusion, insertions and substitutions involving characters of $\Sigma_Y$ account for at least $2c + (y+x+2r-c) + (y-c) = 2y + x + 2r > 2r + 2x + k -1$ units in the cost of $\cA$.
\end{claimproof}

We are now ready to complete prove \cref{lem:hatEquivalence}.    
If $\wwed{\hat w}{X_i}{Y} \le k$ holds for some $i\in [1\dd m]$, then we align $\hat X$ and $\hat Y$ as follows:
\begin{align*}
    \wwed{\hat w}{\hat X}{\hat Y} &\le \sum_{i=1}^{j-1} \Big[ \wwed{\hat w}{U \diamond X_i \diamond}{U X^\top_{i-1} } + \wwed{\hat w}{VY}{V\diamond Y \diamond} + 
    \wwed{\hat w}{U X^\top_i}{UX^\bot_{i-1}}\\
    &\quad\qquad + \wwed{\hat w}{V\diamond Y \diamond}{VY} + \wwed{\hat w}{U X^\bot_i}{U \diamond X_i \diamond }+ \wwed{\hat w}{VY}{VY}  \Big] \\
    &\quad+
    \wwed{\hat w}{U \diamond X_j \diamond }{U \cdot X^\top_{j-1} \cdot V \cdot \diamond Y \diamond \cdot U \cdot X^\bot_{j-1}} \\
    &\quad+
    \wwed{\hat{w}}{V Y}{V Y}+
    \wwed{\hat w}{U \cdot X^\top_j}{U \diamond X_j \diamond }+
    \wwed{\hat w}{V \diamond Y \diamond }{V \cdot Y}\\
    &\quad+
    \wwed{\hat w}{U \cdot X^\bot_j}{U \cdot X^\top_j}+
    \wwed{\hat w}{VY}{V \cdot \diamond Y \diamond}\\
    &\quad+
    \sum_{i=j+1}^{m} \Big[ 
    \wwed{\hat w}{U \diamond X_i \diamond}{U X^\bot_{i-1}}+
    \wwed{\hat w}{VY}{VY}+
    \wwed{\hat w}{U X^\top_i}{U \diamond X_i \diamond}\\
    &\qquad\qquad+
    \wwed{\hat w}{V \diamond Y \diamond}{VY}+
    \wwed{\hat w}{U \cdot X^\bot_i}{U \cdot  X^\top_i}+
    \wwed{\hat w}{VY}{V \diamond Y \diamond}
    \Big]\\
    &\leftstackrel{\text{\ref{clm:topbot4h}--\ref{clm:topbotdiamond2h}}}{=}
    (j-1)(8h+8)+ \wwed{\hat x}{\diamond X_j \diamond}{X^\top_{j-1} V \diamond Y \diamond U X^\bot_{j-1}}+0+2h+2 \\
    &\qquad+ 2 + 4h + 2+ (m-j)(8h+8) \\
    &\leftstackrel{\text{\ref{clm:theoffsetchanger}}}{\le}
    (m-1)(8h+8) + 2x + 2r + k + 6h+6\\ 
    &= \hat k
\end{align*}
as required. 

For the other direction, let us assume that there is an alignment $\mathcal{A} : \hX \onto \hY$ with cost at most $\hat k$.
We observe that $3r= 2r+ (m-1)(8h+8) + 2x + k + 6h + 7 > \hat k$.
By the pigeonhole principle, there is an index $q\in [1\dd r]$ such that at most two $v_q$ and $u_q$ symbols in $\hat Y$ are not matched by $\mathcal{A}$.
Observe that these characters form a subsequence of $\hat Y$ equal to $(u_q v_q)^{3m + 1}$ and a subsequence of $\hat X$ equal to $(u_q v_q)^{3m}$.
It follows that all $u_q$ and $v_q$ symbols of $\hat X$ must be matched perfectly.
The only way to achieve that is by leaving out two consecutive characters of $(u_q v_q)^{3m+1}$.
It follows from \cref{clm:justmatchpreforsuff} that, if a $u_q$ or a $v_q$ is matched, the entire enclosing $U$ and $V$ strings are matched as well.
We consider several cases depending on which two consecutive occurrences of $U$ and $V$ in $\hat Y$ are not matched by $\mathcal{A}$.
Let $j\in [1\dd m]$ be the maximal index such that the occurrence of $\diamond X_{j-1} \diamond V Y $ in $\hat Y$ is to the left of the occurrences of $U$ and $V$ that are not matched by $\mathcal{A}$ (If there is no such index, we set $j=0$).
The alignment $\mathcal{A}$ can be decomposed into three parts:
\begin{enumerate}
    \item An alignment \[\mathcal{A}_1 : \bigodot_{i=1}^{j-1}U \diamond X_i \diamond VY U X^\top_i V \diamond Y \diamond U X^\bot_i V Y \onto \bigodot_{i=1}^{j-1}U X^\top_{i-1} V \diamond Y \diamond U X^\bot_{i-1} V Y U \diamond X_i \diamond VY \]
    in which all $U$ and $V$ occurrences are matched.
    \item An alignment \[\mathcal{A}_2 : 
    U \diamond X_j \diamond V Y U X^\top_j V \diamond Y \diamond U X^\bot_j V Y \onto 
    U X^\top_{j-1} V \diamond Y \diamond U X^\bot_{j-1} V Y U \diamond X_j \diamond V Y U X^\top_j V \diamond Y \diamond\]
    in which all but two consecutive $U$ and $V$ occurrences are matched.
    \item An alignment \[\mathcal{A}_3:
    \bigodot_{i=j+1}^{m}U \diamond X_i \diamond VY U X^\top_i V \diamond Y \diamond U X^\bot_i V Y \onto 
    \bigodot_{i=j+1}^{m}U X^\bot_{i-1} V Y U \diamond X_i \diamond VY U X^\top_{i} V \diamond Y \diamond\]
    in which all $U$ and $V$ occurrences are matched.
\end{enumerate} 
The costs of $\mathcal{A}_1$ and $\mathcal{A}_3$ can be analyzed independently of $\cA_2$:
\begin{align*}
    \cost(\mathcal{A}_1)&= \sum_{i=1}^{j-1} \Big[ \wwed{\hat w}{U \diamond X_i \diamond}{U X^\top_{i-1} } + \wwed{\hat w}{VY}{V\diamond Y \diamond} + 
    \wwed{\hat w}{U X^\top_i}{UX^\bot_{i-1}}\\
    &\qquad\qquad + \wwed{\hat w}{V\diamond Y \diamond}{VY} + \wwed{\hat w}{U X^\bot_i}{U \diamond X_i \diamond }+ \wwed{\hat w}{VY}{VY}  \Big] \\
    &\leftstackrel{\text{\ref{clm:topbot4h},\ref{clm:topbotdiamond2h}}}{=}
    (j-1) \Big[ (2+2h) + 2 + 4h + 2 + (2h+2)+0  \Big]\\
    &=
    (j-1)(8h+8)
\end{align*}
and 
\begin{align*}
    \cost(\mathcal{A}_3) &= \sum_{i=j+1}^{m} \Big[ 
    \wwed{\hat w}{U \diamond X_i \diamond}{U X^\bot_{i-1}}+
    \wwed{\hat w}{VY}{VY}+
    \wwed{\hat w}{U X^\top_i}{U \diamond X_i \diamond}\\
    &\qquad\qquad+  \wwed{\hat w}{V \diamond Y \diamond}{VY}+
    \wwed{\hat w}{U \cdot X^\bot_i}{U \cdot  X^\top_i}+
    \wwed{\hat w}{VY}{V \diamond Y \diamond} \Big]\\
    &\leftstackrel{\text{\ref{clm:topbot4h},\ref{clm:topbotdiamond2h}}}{=}
    (m-j)\Big[ (2h + 2) + 0 + (2h+2) + 2 + 4h +2 \Big] \\
    &= (m-j)(8h+8)
\end{align*}

From $\hat k \ge \cost(\mathcal{A}) = \cost(\mathcal{A}_1)+\cost(\mathcal{A}_2)+\cost(\mathcal{A}_3)=\cost(\mathcal{A}_2)+(m-1)(8h+8)$, we conclude that $\cost(\mathcal{A}_2) \le 2r + 2x + k + 6h + 6$.
We are left with the task of analyzing the cost of $\mathcal{A}_2$.
We do so by considering several cases regarding the identity of the $U$ and $V$ occurrences not matched by $\mathcal{A}$.

\para{Case 1:} The non-matched occurrences are $V$ and $U$ such that $V$ is preceded by $X_{j-1}^\top$.

It follows that 
\begin{align*}
    \cost(\cA_2) &= 
    \wwed{\hat w}{U \diamond X_j \diamond }{U \cdot X^\top_{j-1} \cdot V \cdot \diamond Y \diamond \cdot U \cdot X^\bot_{j-1}} \\
    &\qquad+\wwed{\hat{w}}{V Y}{V Y}+
    \wwed{\hat w}{U \cdot X^\top_j}{U \diamond X_j \diamond }+
    \wwed{\hat w}{V \diamond Y \diamond }{V \cdot Y}\\
    &\qquad+\wwed{\hat w}{U \cdot X^\bot_j}{U \cdot X^\top_j}+
    \wwed{\hat w}{VY}{V \cdot \diamond Y \diamond}\\
    &\leftstackrel{\text{\ref{clm:topbot4h},\ref{clm:topbotdiamond2h}}}{=}\wwed{\hat w}{\diamond X_j \diamond}{X^\top_{j-1} V \diamond Y \diamond U X^\bot_{j-1}}+0+2h+2 + 2 + 4h + 2 \\
    &=\wwed{\hat w}{\diamond X_j \diamond}{X^\top_{j-1} V \diamond Y \diamond U X^\bot_{j-1}} + 6h+6.
\end{align*}

From $\cost(\mathcal{A}_2) \le 2x + 2r + k + 6h + 6$, we have that $\wwed{\hat w}{\diamond X_j \diamond}{X^\top_{j-1} V \diamond Y \diamond U X^\bot_{j-1}} \le 2x + 2r + k$.
According to \cref{clm:theoffsetchanger}, we have $\wwed{ w}{ X_j}{Y}\le k$ as required.

We proceed to show that every other choice of non-aligned occurrences of $U$ and $V$ leads to a contradiction with $\cost(\mathcal{A}_2)\le 2x+2r+k+6h+6$.

\para{Case 2:} The non-matched occurrences are $V$ and $U$ such that $V$ is preceded by $X^{\bot}_{j-1}$.

It follows that
\begin{align*}
    \cost(\cA_2) &= 
    \wwed{\hat w}{\diamond X_j \diamond}{X^\top_{j-1}}+
    \wwed{\hat w}{Y}{\diamond Y \diamond}+
    \wwed{\hat w}{ X^\top_{j} }{ X^\bot_{j-1} \cdot V \cdot Y \cdot U \cdot \diamond X_{j}\diamond } \\
    &\qquad+
    \wwed{\hat w}{\diamond Y \diamond }{ Y}+
    \wwed{\hat w}{X^\bot_j}{X^\top_j }+
    \wwed{\hat w}{ Y }{\diamond Y \diamond}\\
    &\leftstackrel{\text{\ref{clm:topbot4h},\ref{clm:topbotdiamond2h}}}{=} 2 + 2h + 2+ \wwed{\hat w}{ X^\top_{j} }{ X^\bot_{j-1} \cdot V \cdot Y \cdot U \cdot \diamond X_{j}\diamond }+2+4h+2 \\
    &\leftstackrel{\text{\ref{clm:wrongVUdeletion}}}{\ge} 
    2x + 2r + k + 6h+8
\end{align*}

A contradiction.

\para{Case 3:} The non-matched occurrences are $V$ and $U$ such that $V$ is preceded by $\diamond X_{j}\diamond$.
It follows that
\begin{align*}
    \cost(\cA_2)&=
    \wwed{\hat w}{\diamond X_j \diamond}{X^\top_{j-1}}+
    \wwed{\hat w}{Y}{\diamond Y \diamond}+
    \wwed{\hat w}{X^\top_j}{X^\bot_{j-1}}
    \\
    &\qquad+
    \wwed{\hat w}{\diamond Y \diamond}{Y}+
    \wwed{\hat w}{ X^\bot_{j} }{ \diamond X_{j}\diamond  \cdot V \cdot Y \cdot U \cdot  X^\top_{j}}+
    \wwed{\hat w}{ Y }{\diamond Y \diamond}\\
    &\leftstackrel{\text{\ref{clm:topprevbot4h},\ref{clm:topbotdiamond2h}}}{=}
    2 + 2h + 2+4h + 2+ \wwed{\hat w}{ X^\bot_{j} }{ \diamond X_{j}\diamond  \cdot V \cdot Y \cdot U \cdot  X^\top_{j}}+2\\
    &\leftstackrel{\text{\ref{clm:wrongVUdeletion}}}{\ge}
    2x + 2r + k + 6h+8 .
\end{align*} 

\para{Case 4:} The non-matched occurrences are $U$ and $V$, and the non-matched $U$ is followed by $X^\bot_{j-1}$.

In this case, we have that
\begin{align*}
    \cost(\cA_2)&= 
    \wwed{\hat w}{\diamond X_j \diamond}{X^\top_{j-1}}+
    \wwed{\hat w}{Y}{\diamond Y \diamond U  X^\bot_{j-1} V Y}+
    \wwed{\hat w}{X^\top_j}{\diamond X_{j} \diamond}
    \\
    &\qquad+
    \wwed{\hat w}{ \diamond Y \diamond  }{ Y}+
    \wwed{\hat w}{X^\bot_j}{ X^\top_j} + 
    \wwed{\hat w}{Y}{\diamond Y \diamond}\\
    &\leftstackrel{\text{\ref{clm:topbot4h},\ref{clm:topbotdiamond2h}}}{=}
    2h + 2 + \wwed{\hat w}{Y}{\diamond Y \diamond U  X^\bot_{j-1} V Y}+2h + 2 + 2 + 4h + 2\\
    &\leftstackrel{\text{\ref{clm:UVdeletion}}}{\ge}
    2x + 2r + k + 8h+8.
\end{align*}

\para{Case 5:} The non-matched occurrences are $U$ and $V$, and the non-matched $U$ is followed by $\diamond X_{j} \diamond$.

In this case, we have that
\begin{align*}
    \cost(\cA_2)&= 
    \wwed{\hat w}{\diamond X_j \diamond}{X^\top_{j-1}}+
    \wwed{\hat w}{Y}{\diamond Y \diamond}+
    \wwed{\hat w}{X^\top_j }{X^\bot_{j-1}}
    \\
    &\qquad+
    \wwed{\hat w}{\diamond Y \diamond}{Y U \diamond X_j \diamond V Y}+
    \wwed{\hat w}{X^\bot_j}{X^\top_j}+
    \wwed{\hat w}{ Y }{\diamond Y \diamond}\\
    &\leftstackrel{\text{\ref{clm:topbot4h}--\ref{clm:topbotdiamond2h}}}{=}
    (2h + 2) + 2 + 4h +  \wwed{\hat w}{\diamond Y \diamond}{Y U \diamond X_j \diamond V Y}+4h + 2\\
    &\leftstackrel{\text{\ref{clm:UVdeletion}}}{\ge}
    2x + 2r + k + 10h+5 .
\end{align*}
This also contradicts $\cost(\cA_2) \le 2x+2r+k+6h+6$ due to $h\ge 1$.

\para{Case 6:} The non-matched occurrences are $U$ and $V$, and the non-matched $U$ is followed by $X^\top_{j}$.

In this case, we have that
\begin{align*}
    \cost(\cA_2)&= 
    \wwed{\hat w}{\diamond X_j \diamond}{X^\top_{j-1}}+
    \wwed{\hat w}{Y}{\diamond Y \diamond}+
    \wwed{\hat w}{X^\top_j }{X^\bot_{j-1}}\\
    &\qquad+
    \wwed{\hat w}{\diamond Y \diamond}{Y}+
    \wwed{\hat w}{X^\bot_j}{\diamond X_j \diamond}+
    \wwed{\hat w}{ Y }{Y U X^\top_j V \diamond Y \diamond}\\
    &\leftstackrel{\text{\ref{clm:topprevbot4h},\ref{clm:topbotdiamond2h}}}{=}
    (2h + 2) + 2 + 4h + 2 + (2h+2)+ \wwed{\hat w}{ Y }{Y U X^\top_j V \diamond Y \diamond} \\
    &\leftstackrel{\text{\ref{clm:UVdeletion}}}{\ge}
    2x + 2r + k + 8h+8.\qedhere
\end{align*}

\end{proof}

Recall that, via \cref{clm:fromtildetohat}, we have that \cref{lem:tildeEqivalence} also holds.
We are now ready to prove
\thmstaticlbreduction*
\begin{proof}
    Based on the given instance of the Tripartite Negative Triangle problem, we first apply \cref{thm:BatchedWEDLBReduction} to obtain an instance of the Batched Weighted Edit Distance problem with inputs $\Sigma$, $w$, $X_1,X_2,\ldots, X_m$, $Y$, and $k$.
    We construct $\tilde{X}$, $\tilde{Y}$, $\tilde{k}$, and $\hat w$ as defined earlier in this section so that $\ed(\tilde{X},\tilde{Y})\le 4$.
    By \cref{lem:tildeEqivalence}, we have $\wwed{\hat w}{\tilde{X}}{\tilde{Y}} \le \tilde{k}$ if and only if $\min_{i=1}^{m}\wed(X_i,Y)\le k$.
    By \cref{thm:BatchedWEDLBReduction} the latter condition holds if and only if the answer to the Tripartite Negative Triangle instance is ``YES''.

    Let us analyze the size of the produced instance of the weighted edit distance problem. 
    We have $x \le y = \Oh(n)$, $m= \Oh(n^\beta)$, and $h = \Oh(n^{1-\beta})$.
    Consequently, $r = (m-1)(8h+8)+2x+k+6h+7 = \Oh(n)$.
    Moreover,
    \begin{align*}
        |\hat \Sigma|&= |\Sigma_X| + |\Sigma_Y| + 2r+4 \le x + (m-1)h + y + \Oh(n) = \Oh(n).
        \\
        \tilde k&= (m-1)(8h+8) + 8r +5x + 6y+ k + 6h + 8
        \\
        &< \Oh(n) + \Oh(n) + 8y+4x+6h+9= \Oh(n)
        \\
        |\tilde Y|= |\tilde X|&=4r+ 2x + 2 + 3(m-1)(2r+x+y+2)+2=\Oh(nm)=\Oh(n^{1+\beta}).
    \end{align*}

    Finally, note that the strings $\tilde{X}$, $\tilde{Y}$ and the weight function $\hat w$ can be constructed in $\Oh(|\tilde{X}|+|\tilde{Y}|+|\hat{\Sigma}|^2)=\Oh(n^{1+\beta}+n^2)=\Oh(n^2)$ time.
    The reduction of \cref{thm:BatchedWEDLBReduction} also takes $\Oh(n^2)$ time. \end{proof}

\thmstaticlb*
\begin{proof}
    Let us first pick $\kappa\in [0.5,1)$ and $\delta>0$.
    For a proof by contradiction, suppose that there exists an algorithm $A$ that, given strings $X,Y\in \Sigma^{\le n}$ such that $\ed(X,Y)\le 4$, a threshold $k\le n^{\kappa}$, and oracle access to a normalized weight function $w:\Esigma^2\to \Rz$, in $\Oh(n^{0.5+1.5\kappa - \delta})$ time decides whether $\wed(X,Y)\le k$.

    Let us pick $\beta=\frac{1}{\kappa}-1 \in (0,1]$ and an instance of the Tripartite Negative Triangle problem with parts of size at most $N$, $N$, and $N^{\beta/2}$.
    We can use \cref{lm:smallEDLBReduction} to derive, in $\Oh(N^2)$ time, an equivalent instance of the weighted edit distance problem with strings $X,Y$ of length $\Oh(N^{1+\beta})$ satisfying $\ed(X,Y)\le 4$ and a threshold $k=\Oh(N)$.
    For a sufficiently large $n = \Oh(N^{1+\beta})=\Oh(N^{1/\kappa})$, we have $X,Y\in \Sigma^{\le n}$ and $k \le n^{\kappa}$.
    Thus, the algorithm $A$ can decide $\wed(X,Y)\le k$ in time
    \[
    \Oh(n^{0.5+1.5\kappa-\delta}) = \Oh(N^{(0.5+1.5\kappa-\delta)/\kappa})= \Oh(N^{1.5+0.5\cdot (1+\beta)-\delta/\kappa})= \Oh(N^{2+\beta/2 - \delta/\kappa})
    \]
    The overall time spent on solving the given instance of the Tripartite Negative Triangle problem is $\Oh(N^2 + n^{0.5+1.5\kappa-\delta}) = \Oh(N^{\max(2, 2+\beta/2 - \delta/\kappa)})$. Due to $\beta,\delta>0$, we have $\max(2, 2+\beta/2 - \delta/\kappa)\le 2+\beta/2-\delta'$ for $\delta'\coloneqq \min(\beta/2,\delta/\kappa)>0$, and thus the derived algorithm violates the APSP Hypothesis in the light of \cref{fct:negtri}.

    It remains to consider the case of $\kappa=1$.
    For this, we observe that \cref{thm:static-lb} specialized to $(1,\delta)$ follows from the instance specialized to $(1-\tfrac13\delta,\tfrac12\delta)$ because $0.5+1.5(1-\tfrac13\delta)-\tfrac12\delta=0.5+1.5-\delta$, and we can assume without loss of generality that $\delta \in (0,1)$ so that $1-\tfrac13\delta\in (\tfrac23, 1)$.
\end{proof}

\section{Dynamic Lower Bounds}\label{app:dynamic-lower-bounds}

If we compare the algorithm of \cref{lm:simple-complete-algorithm-upgraded} with the algorithm of \cite[Theorem 8.11]{GK24} for dynamic bounded unweighted edit distance, we can notice a slight difference in the parameters used in the time complexities.
While in \cite[Theorem 8.11]{GK24} the time complexity depends on the current value of $\wed(X, Y)$, the time complexity in \cref{lm:simple-complete-algorithm-upgraded} uses a universal upper bound $k$ on $\wed(X, Y)$ that stays unchanged throughout the lifetime of the algorithm.
Clearly, dependence on the current value of $\wed(X, Y)$ is preferable (see \cite[Corollary 8.12]{GK24}) but as shown in the following lemma, it is not achievable in the weighted case.
The hardness stems from the fact that weighted edit distance of two strings (in contrast with unweighted edit distance) can change significantly after a single edit, and thus we can solve the static problem of \cref{thm:static-lb} using a constant number of ``expensive'' updates.
Therefore, if we want to have an algorithm with update time depending on the current value of $\wed(X, Y)$, there is no hope to surpass the static lower bound.

\begin{lemma} \label{lm:dynamic-lb-variable-k}
    Consider the following dynamic problem:
    Given an integer $n \ge 1$ and $\Oh(1)$-time oracle access to a normalized weight function $w:\Esigma^2\to \mathbb{R}_{\ge 0}$, maintain two initially empty strings $X,Y\in \Sigma^{\le n}$ that throughout the lifetime of the algorithm satisfy $\ed(X, Y) \le 4$ subject to updates (character edits) and output $\wed(X, Y)$ after every update.
    Suppose that there is an algorithm that solves this problem with $T_P(n)$ preprocessing and $T_U(n, \wed(X, Y))$ update time, where $T_U$ is non-decreasing.
    If $T_P(n) = \Oh(n^{0.5 + 1.5 \kappa - \delta})$, $T_U(n, 2) = \Oh(n^{1.5 \kappa - 0.5 - \delta})$, and $T_U(n, n^{\kappa}) = \Oh(n^{0.5 + 1.5 \kappa - \delta})$ hold for some real parameters $\kappa \in [0.5, 1]$ and $\delta > 0$, then the APSP Hypothesis fails.
\end{lemma}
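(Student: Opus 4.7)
The plan is to reduce from the static lower bound of \cref{thm:static-lb} in a way that uses the ``growing-from-empty'' variant mentioned right after \cref{cor:simple}. Starting from a hard static instance $(\tilde X, \tilde Y, \tilde k, \hat w)$ produced by \cref{lm:smallEDLBReduction} with $\tilde X, \tilde Y \in \Sigma^{\le N}$, $\ed(\tilde X, \tilde Y) \le 4$, and $\tilde k = \Theta(N^\kappa)$, the first step is to cap the infinite $\$$-edit weights of $\hat w$ at $W \coloneqq \tilde k + 1 = \Theta(N^\kappa)$. The resulting function remains normalized, and since any alignment that uses a $\$$-edit then costs at least $W > \tilde k$, the decision ``$\wwed{\hat w}{\tilde X}{\tilde Y} \le \tilde k$?'' is unchanged. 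We then instantiate the hypothetical dynamic algorithm on $(X, Y) = (\emptystring, \emptystring)$ with length bound $n \coloneqq cN$, where $c = c(\kappa) = \lceil 5^{1/\kappa} \rceil$ is chosen so that $4W \le n^\kappa$, paying $T_P(n) = \Oh(n^{0.5+1.5\kappa-\delta})$ for preprocessing and handing the capped $\hat w$ to the algorithm as its $\Oh(1)$-time weight oracle.

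The execution proceeds in two phases. For the gradual growth phase, I exploit the $\$$-free witness $Z$ of length at most $N$ guaranteed by \cref{obs:dolarlessZ}, which satisfies $\ed(\tilde X, Z) = \ed(\tilde Y, Z) = 2$ and $\hat w(Z[i], \emptystring) \le 2$ for every $i \in \fragmentco{0}{|Z|}$. I grow $(X, Y)$ to $(Z, Z)$ through $2|Z| = \Oh(n)$ appending insertions: for each $i = 0, \ldots, |Z|-1$, first append $Z[i]$ to $X$ and then append $Z[i]$ to $Y$. Between the two updates of each pair the strings differ only by the trailing character $Z[i]$, so $\ed(X, Y) = 1$ and $\wed(X, Y) \le \hat w(Z[i], \emptystring) \le 2$; after each pair, $X = Y$ and $\wed(X, Y) = 0$. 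Each such update runs in $T_U(n, 2) = \Oh(n^{1.5\kappa - 0.5 - \delta})$ time, for a total of $\Oh(n^{0.5+1.5\kappa-\delta})$. In the second phase I apply four insertions of $\$$ symbols: two to $X$ to transform $Z$ into $\tilde X$, and two to $Y$ to transform $Z$ into $\tilde Y$. At every intermediate state $\ed(X, Y) \le 4$, and, by the triangle inequality applied to the capped weight function, $\wed(X, Y) \le 4W \le n^\kappa$; monotonicity of $T_U$ then gives update time $T_U(n, n^\kappa) = \Oh(n^{0.5+1.5\kappa-\delta})$ for each of the four updates. The value reported after the last update equals $\wwed{\hat w}{\tilde X}{\tilde Y}$ under the capped weights, which I compare with $\tilde k$ to decide the static instance.

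Summing the three contributions, the whole simulation takes $\Oh(n^{0.5+1.5\kappa-\delta})$ time, and after adding the $\Oh(N^2)$ time of the static reduction the APSP Hypothesis is violated exactly as in the closing lines of the proof of \cref{thm:static-lb} in \cref{app:smallEDLB}. The main obstacle is keeping $\wed(X, Y)$ inside the regime in which the assumed $T_U$ bounds are useful. During the long growth phase this is handled by pairing insertions so that the transient single-character discrepancy is always a non-$\$$ character of $Z$ (costing at most $2$ to insert or delete under $\hat w$), which crucially depends on $Z$ having no occurrence of $\$$. During the four final updates it is handled by capping the $\$$-edit weights at $W = \tilde k + 1$, so that the triangle inequality forces $\wed(X, Y) \le 4W \le n^\kappa$ throughout while simultaneously leaving the decision ``$\wwed{\hat w}{\tilde X}{\tilde Y} \le \tilde k$?'' unaffected.
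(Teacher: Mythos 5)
Your proposal is correct and follows essentially the same route as the paper's proof: grow $(X,Y)$ from empty strings to $(Z,Z)$ via paired cheap insertions using the $\$$-free witness of \cref{obs:dolarlessZ} (keeping $\wed(X,Y)\le 2$), then apply the $\Oh(1)$ remaining edits to reach $(\tilde X,\tilde Y)$ while a weight cap keeps $\wed(X,Y)=\Oh(n^\kappa)$. The only cosmetic differences are that the paper caps the entire weight function at $k+1$ and absorbs the resulting constant factor by a slight loss in $\delta$ rather than by rescaling $n$, whereas you cap only the $\$$-weights and rescale $n$ by a constant.
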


\label{app:dynamic-lower-bounds:variable-k}

\begin{proof}
    For a proof by contradiction, suppose that there is a dynamic algorithm, $\kappa \in [0.5, 1]$, and $\delta > 0$, with $T_P(n) = \Oh(n^{0.5 + 1.5 \kappa - \delta})$, $T_U(n, 2) = \Oh(n^{1.5 \kappa - 0.5 - \delta})$, and $T_U(n, n^{\kappa}) = \Oh(n^{0.5 + 1.5 \kappa - \delta})$.
    We use this dynamic algorithm to solve the problem from \cref{thm:static-lb} in $\Oh(n^{0.5+1.5\kappa - \delta / 3})$ time, thus violating the APSP Hypothesis.
    Let $\tilde X, \tilde Y \in \Sigma^{\le n}$ be the input strings and $k \le n^{\kappa}$ be the threshold.
    As we want to decide whether $\wed(\tilde X, \tilde Y) \le k$, we may cap the weight function $w$ by $k+1$ from above and without loss of generality assume that we have $w \colon \Esigma^2 \to [0, k + 1]$.
    Due to \cref{obs:dolarlessZ}, we can assume that the strings $\tilde X$ and $\tilde Y$ are given alongside a string $Z$ with $\ed(Z,\tilde X), \ed(Z, \tilde Y) \le 2$ such that every symbol $\sigma$ that occurs in $Z$ satisfies $w(\sigma,\emptystring)\le 2$.

    We start the dynamic algorithm with two empty strings $X$ and $Y$ in $T_P(n)$ time.
    We then perform $|Z|$ updates to $X$, and after each update to $X$, we mimic the same update on $Y$.
    The $i$-th update in this sequence simply appends $Z[i]$ to the end of $X$.
    At every point in time we have $\wed(X, Y) \le 2$, and thus this process takes $\Oh(n) + 2 n \cdot T_U(n, 2)$ time.
    We then use \cref{lm:baseline-wed} to find the optimal unweighted alignments of $Z$ onto $\tilde X$ and $\tilde Y$ in time $\Oh(n \cdot (\ed(Z, \tilde X) + 1) + n \cdot (\ed(Z, \tilde Y) + 1)) = \Oh(n)$.
    Guided by the alignments, we apply at most $2$ edits to each of $X$ and $Y$ to transform $X$ into $\tilde X$ and $Y$ into $\tilde Y$.
    As $w$ is capped by $k+1$, we have $\wed(X, Y) \le 4 n^{\kappa} + 4$ during this process, and it takes $\Oh(n)+4 \cdot T_U(n, 4 n^{\kappa} + 4)$ time.
    When $X = \tilde X$ and $Y = \tilde Y$, we are given $\wed(\tilde X, \tilde Y)$, and we use it to answer the problem of \cref{thm:static-lb}.
    For all sufficiently large $n$, the algorithm takes $\Oh(n)+T_P(n) + 2n \cdot T_U(n, 2) + 4 T_U(n, 4 n^{\kappa} + 4) \le \Oh(n)+T_P(n) + 2n \cdot T_U(n, 2) + 4 T_U(n^{1 + \delta / (6 \kappa)}, n^{\kappa + \delta / 6}) \le \Oh(n^{0.5 + 1.5 \kappa - \delta / 3})$ time, thus violating the APSP Hypothesis.
    It is easy to verify that $\ed(X, Y) \le 4$ holds throughout the lifetime of the algorithm.
\end{proof}
\begin{remark}
    In \cref{lm:dynamic-lb-variable-k}, we assumed that the preprocessing algorithm can access the entire weight function despite the fact that the strings $X$ and $Y$ are initially empty.
    In an alternative model where this is not allowed (for example, the costs of edits involving a character $a\in \Sigma$ are revealed only when $a$ appears in $X$ or $Y$ for the first time), we can allow for $T_P(n)=n^{\Oh(1)}$ instead of $T_P(n) = \Oh(n^{0.5 + 1.5 \kappa - \delta})$.
    This is because our lower bounds are obtained via reductions from the self-reducible Tripartite Negative Triangle problem, i.e., up to subpolynomial factors, solving a batch of $m=n^{\Oh(1)}$ equal-size instances of Tripartite Negative Triangle is not easier than solving $m$ instances one by one.
    Tasked with solving many instances of this problem, we can run the preprocessing algorithm just once and then, for each instance, perform appropriate updates. 
    If we store the address and the previous value of each modified memory cell, then we can restore the original state of the data structure at no additional asymptotic cost and proceed with handling the next instance. 
\end{remark}

As a corollary of \cref{lm:dynamic-lb-variable-k} for $\kappa = 1$, assuming the APSP Hypothesis, there is no algorithm for dynamic weighted edit distance with $\poly(|X| + |Y|)$ preprocessing and $\Oh(\sqrt{(|X| + |Y| + 1) \cdot (k+1)^{3-\delta}})$ update time, where $k$ is the maximum of $\wed(X, Y)$ before and after the update.
On the other hand, $\tOh(\sqrt{(|X| + |Y| + 1) \cdot (k + 1)})$ update time is trivially achievable by maintaining dynamic string implementation supporting $\tOh(1)$-time equality tests between fragments of $X$ and $Y$ \cite{CKW20} and running the optimal static algorithm of \cite{CKW23} after every update.

\cref{lm:dynamic-lb-variable-k} motivates why \cref{lm:simple-complete-algorithm-upgraded} uses a universal bound $k$ on the value of $\wed(X, Y)$.
Another aspect of \cref{lm:simple-complete-algorithm-upgraded} one could hope to improve is the preprocessing time.
The following lemma shows why it is improbable.

\begin{lemma} \label{lm:dynamic-lb-fixed-k}
    Let $\kappa \in (0, 1)$, $\mu \in [\max(1 - 2 \kappa, 0), 1 - \kappa)$, and $\delta > 0$ be real parameters.
    Assuming the APSP Hypothesis, there is no algorithm that given an integer $n \ge 1$, two strings $X, Y \in \Sigma^{\le n}$, a threshold $k \le \Oh(n^{\kappa})$, and $\Oh(1)$-time oracle access to a normalized weight function $w:\Esigma^2\to \mathbb{R}_{\ge 0}$, maintains $Y$ dynamically subject to $\Oh(n^{\mu})$ updates (each represented as an edit) and computes $\wed_{\le k}(X, Y)$ after every update in $\Oh(n^{0.5 + 0.5 \mu + 1.5 \kappa - \delta})$ time in total.
    No such algorithm exists even if strings $X$ and $Y$ are guaranteed to satisfy $\ed(X, Y) \le 4$ throughout the lifetime of the algorithm.
\end{lemma}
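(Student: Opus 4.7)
The plan is to reduce, in polynomial time, a batch of $m \coloneqq \Theta(n^\mu)$ static bounded weighted edit distance instances (of the kind excluded by \cref{thm:static-lb}) to a single dynamic instance with the claimed parameters. Given $m$ static instances $(X_1, Y_1), \ldots, (X_m, Y_m)$ produced by \cref{lm:smallEDLBReduction}, each of length at most $N$, with $\ed(X_i, Y_i) \le 4$ and a common threshold $K$ (after renaming to force disjoint alphabets), I would initialize the dynamic strings as $\tilde X \coloneqq \tilde Y \coloneqq \bigodot_{i=1}^{m}(X_i \cdot \fancysymbol)$, where $\fancysymbol$ is a fresh separator. The combined weight function $\tilde w$ agrees with $w_i$ on pairs within $\Sigma_i$ and assigns cost $K+1$ to any edit across alphabets or involving $\fancysymbol$; this keeps $\tilde w$ normalized and $\Oh(1)$-time oracle-accessible. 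Setting $N \coloneqq \Theta(n^{1-\mu})$ and $\kappa' \coloneqq \kappa/(1-\mu)$, we obtain $|\tilde X| = \Theta(n)$ and $k \coloneqq K = N^{\kappa'} = n^\kappa$; the hypothesis $\mu \in [\max(1-2\kappa, 0), 1-\kappa)$ translates precisely to $\kappa' \in [0.5, 1)$, which is the regime where \cref{thm:static-lb} applies and, via \cref{lm:smallEDLBReduction}, the underlying Tripartite Negative Triangle parameter $\beta' = 1/\kappa' - 1$ is strictly positive.

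Next I would iterate over $i = 1, \ldots, m$: apply at most $\ed(X_i, Y_i) \le 4$ character edits to transform the $i$-th block of $\tilde Y$ from $X_i$ into $Y_i$, query $\wed_{\le k}(\tilde X, \tilde Y)$, and then apply at most four more edits to revert the block back to $X_i$. Any alignment of $\tilde X$ onto $\tilde Y$ of cost at most $k$ must pair the $m$ copies of $\fancysymbol$ in $\tilde X$ with those in $\tilde Y$ in order, because any other configuration forces an edit touching a $\fancysymbol$ character and pays more than $k$; the alignment therefore factors into $m$ per-block alignments, and since only the $i$-th block differs, the query returns $\wed_{\le k}(\tilde X, \tilde Y) = \wed_{\le k}(X_i, Y_i)$. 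Only $\tilde Y$ is modified, at most one block ever differs from the corresponding block of $\tilde X$, so $\ed(\tilde X, \tilde Y) \le 4$ at every step; the total number of updates is $8m = \Oh(n^\mu)$, within the allowed budget. By the hypothetical running time, the simulation costs $\Oh(n^{0.5 + 0.5\mu + 1.5\kappa - \delta})$ time in total.

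Substituting $n = \Theta(mN)$ and $\kappa = (1-\mu)\kappa'$ rewrites this upper bound as $\Oh(m \cdot N^{0.5 + 1.5\kappa' - \delta'})$ with $\delta' \coloneqq \delta/(1-\mu) > 0$, so the reduction solves all $m$ static instances and hence all $m$ underlying Tripartite Negative Triangle instances within $\Oh(m \cdot N^{0.5 + 1.5\kappa' - \delta'})$ time. The $m$ independent Tripartite Negative Triangle instances, each with part sizes $(N', N', N'^{\beta'/2})$ for $N' = \Theta(N^{1/(1+\beta')})$, admit a standard batched self-reduction: they bundle into a single Tripartite Negative Triangle instance with parts $(mN', N', N'^{\beta'/2})$, where the $P$-side is the disjoint union of the individual $P$-parts and all cross-block $P \times Q$ edges receive weights high enough to forbid cross-block negative triangles. \cref{fct:negtri} applied to this bundled instance gives a lower bound of $\Omega(m \cdot N'^{2 + \beta'/2 - \delta''})$ for every $\delta'' > 0$; via $N = \Theta(N'^{1+\beta'})$ and the identity $(2 + \beta'/2)/(1+\beta') = 0.5 + 1.5\kappa'$, this rewrites as $\Omega(m \cdot N^{0.5 + 1.5\kappa' - \delta'''})$ for every $\delta''' > 0$. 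Choosing $\delta''' < \delta'$ contradicts the upper bound and hence refutes the APSP Hypothesis.

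The main delicate point is parameter bookkeeping: verifying that $\mu \in [\max(1-2\kappa, 0), 1-\kappa)$ matches $\kappa' \in [0.5, 1)$ exactly, and that the self-reduction to the bundled Tripartite Negative Triangle instance delivers the claimed per-instance lower bound through \cref{fct:negtri}. The strict upper bound $\mu < 1 - \kappa$ is essential, as the boundary $\kappa' = 1$ corresponds to $\beta' = 0$ and degenerates \cref{lm:smallEDLBReduction} (parts $(N', N', 1)$, whose APSP bound collapses to the trivial $\Omega(N'^{2-\delta})$ achievable by dynamic programming). Once the parameter alignment is verified, both the reduction and the contradiction follow the standard self-reducibility template of fine-grained complexity.
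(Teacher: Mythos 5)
Your dynamic-simulation half is exactly the paper's argument: concatenating the blocks with an expensive separator $\fancysymbol$, editing one block into $Y_i$ with at most four updates and reverting, the factorization of any cost-at-most-$k$ alignment at the matched separators, the $8m=\Oh(n^\mu)$ update budget with $\ed(\tilde X,\tilde Y)\le 4$ throughout, and the bookkeeping $\kappa'=\kappa/(1-\mu)$ with $\mu\in[\max(1-2\kappa,0),1-\kappa)\leftrightarrow\kappa'\in[0.5,1)$ are all correct and match the paper. The gap is in how you establish hardness of the \emph{batch}. You start from $m$ arbitrary independent Tripartite Negative Triangle instances and claim they ``bundle'' into one instance with parts $(mN',N',N'^{\beta'/2})$ by taking the disjoint union of the $P$-parts only. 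That construction is not well-defined: with $|Q|=N'$ and $|R|=N'^{\beta'/2}$ the parts $Q$ and $R$ must be shared, so the $m$ instances would have to agree on their $Q\times R$ weights, which independent instances do not; and there are no ``cross-block $P\times Q$ edges'' to neutralize when $Q$ is common. Taking disjoint unions on all three sides instead gives shape $(mN',mN',mN'^{\beta'/2})$, and the \cref{fct:negtri} bound of roughly $(m^3N'^{2+\beta'/2})^{1-o(1)}$ for that shape does not transfer to block-diagonal instances, which are solvable in $\Oh(mN'^{2+\beta'/2})$ by brute force. More fundamentally, the logical direction is inverted: a lower bound for the bundled shape constrains algorithms that solve \emph{all} instances of that shape, so you must reduce \emph{from} an arbitrary hard instance of shape $(mN',N',N'^{\beta'/2})$ \emph{to} your batch, not bundle an arbitrary batch into one instance.

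The fix recovers the paper's proof: take a single arbitrary instance with one enlarged part (the paper enlarges the third part to $m\ell^{(1-\kappa')/2}$; enlarging $P$ to $mN'$ works equally well), split that part into $m$ groups --- the sub-instances then automatically share the other two parts and their weights --- apply \cref{lm:smallEDLBReduction} to each, and feed the resulting batch to your simulation; the OR of the $m$ answers solves the original instance, contradicting \cref{fct:negtri}. Two smaller points to tidy up in that rewrite: \cref{lm:smallEDLBReduction} yields per-instance thresholds $k_i=\Oh(N')$ rather than a single common $K$ (use $\hat k=\max_i k_i$ for the separator cost and compare the returned value of $\wed_{\le k}$ against $k_i$), and you must account for the $\Oh(mN'^2)=\Oh(n^{2\kappa+\mu})$ time spent constructing the batch, which the paper absorbs by assuming without loss of generality that $\delta<(1-\kappa-\mu)/2$.
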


\begin{proof}
    We assume that $\delta < (1 - \kappa - \mu) / 2$ holds.
    If it is not the case, we replace $\delta$ with $(1 - \kappa - \mu) / 4$, which only tightens the lemma statement.
    Let $\ell \coloneqq n^{1 - \mu}$, $\kappa' \coloneqq \kappa / (1 - \mu)$, $\varepsilon \coloneqq \delta / (0.5 + 0.5 \mu + 1.5 \kappa)$, and $m \coloneqq n^{\mu}$.
    Consider an arbitrary instance of Tripartite Negative Triangle problem with parts of sizes at most $\ell^{\kappa'}$, $\ell^{\kappa'}$, and $m \cdot \ell^{(1 - k') / 2}$.
    According to \cref{fct:negtri}, assuming the APSP Hypothesis there is no algorithm that solves all such instances in time $\Oh((\ell^{\kappa'} \cdot \ell^{\kappa'} \cdot (m \cdot \ell^{(1 - k') / 2}))^{1 - \varepsilon}) = \Oh(n^{0.5 + 0.5 \mu + 1.5 \kappa - \delta})$.
    By splitting the third part of the graph into $m$ equal-sized subparts, we can reduce this instance of Tripartite Negative Triangle to $m$ instances of Tripartite Negative Triangle each with part sizes at most $\ell^{\kappa'}$, $\ell^{\kappa'}$, and $\ell^{(1 - k') / 2}$.
    By \cref{lm:smallEDLBReduction} for $\beta = (1-\kappa')/\kappa'$, in $\Oh(m \cdot \ell^{2 \kappa'}) = \Oh(n^{2 \kappa + \mu}) \le \Oh(n^{0.5 + 0.5 \mu + 1.5 \kappa - \delta})$ time we can compute strings $X_0, \ldots, X_{m-1}, Y_0, \ldots, Y_{m-1} \in \Sigma^{\le \Oh(\ell)}$, satisfying $\max_{i=0}^{m-1} \ed(X_i, Y_i) \le 4$, oracle access to a normalized weight function $\wdef$, and thresholds $k_0, \ldots, k_{m-1} \le \Oh(n^{\kappa})$ such that the initial instance of Tripartite Negative Triangle is a ``YES'' instance if and only if $\wed(X_i, Y_i) \le k_i$ holds for at least one $i \in \fragmentco{0}{m}$.
    Assuming the existence of the dynamic algorithm from the lemma statement, we will check all inequalities $\wed(X_i, Y_i) \le k_i$ in $\Oh(n^{0.5 + 0.5 \mu + 1.5 \kappa - \delta})$ time, thus contradicting the APSP Hypothesis.

    Let $\hk \coloneqq \max_{i=0}^{m-1} k_i$.
    Extend the alphabet $\Sigma$ with a new symbol $\fancysymbol$ satisfying $w(\fancysymbol, c) = w(c, \fancysymbol) = \hk + 1$ for all $c \in \Esigma \setminus \set{\fancysymbol}$.
    Define $X \coloneqq \bigodot_{i=0}^{m-1} (X_i \cdot \fancysymbol)$.
    Note that $|X| = \Oh(n)$.
    Set $Y = X$ and initialize the algorithm from the lemma statement with $2|X|$, a pair of strings $(X, Y)$, a threshold $n^{\kappa}$, and $w$.
    For every $i \in \fragmentco{0}{m}$, we then do the following sequence of updates: we first make at most $4$ updates transforming the fragment of $Y$ corresponding to $X_i$ into $Y_i$, and then undo these updates to bring $Y$ back to $X$.
    This takes at most $8m = \Oh(n^{\mu})$ updates in total.
    We claim that after transforming the fragment of $Y$ corresponding to $X_i$ into $Y_i$, we have $\wed_{\le \hk}(X, Y) = \wed_{\le \hk}(X_i, Y_i)$, and thus we can decide whether $\wed(X_i, Y_i) \le k_i$.
    Indeed, misaligning any of the $\fancysymbol$ symbols between $X$ and $Y$ makes the alignment cost at least $\hk+1$, and if all $\fancysymbol$ symbols are aligned, the alignment of $X$ onto $Y$ corresponds to a sequence of alignments of $X_j$ onto $X_j$ for all $j \in \fragmentco{0}{m} \setminus \set{i}$ and an alignment of $X_i$ onto $Y_i$.
    It is easy to verify that $\ed(X, Y) \le 4$ holds throughout the lifetime of the algorithm.
\end{proof}

By picking an arbitrary $\kappa \in (0, 1)$ and applying \cref{lm:dynamic-lb-fixed-k} for an arbitrarily small $\delta > 0$ and $\mu = 1 - \kappa - \delta$, we can see that assuming the APSP Hypothesis, there is no algorithm as in \cref{lm:dynamic-lb-fixed-k} with $\Oh(n^{1 + \kappa - 1.5 \delta})$ preprocessing and $\Oh(n^{2 \kappa - \delta / 2})$ update time.
Hence, the algorithm of \cref{lm:simple-complete-algorithm-upgraded} is unlikely to be improved even in the highly restricted setting of \cref{lm:dynamic-lb-fixed-k}.

More generally, by picking some $\gamma \in [0.5, 1)$ and $\kappa \in (0, 1 / (3 - 2 \gamma)]$ and applying \cref{lm:dynamic-lb-fixed-k} for $\mu = 1 - (3 - 2 \gamma) \kappa$ and an arbitrarily small $\delta > 0$, we can see that assuming the APSP Hypothesis, there is no dynamic algorithm as in \cref{lm:dynamic-lb-fixed-k} with $\Oh(n^{1 + \gamma \kappa - \delta})$ preprocessing and $\Oh(n^{(3 - \gamma) \kappa - \delta})$ update time.
In \cref{app:full-algorithm} we give a family of algorithms matching this lower bound for all values of $\gamma$ in a setting that is even more general than the one of \cref{lm:simple-complete-algorithm-upgraded}.

We finish this section by deriving the following theorem from \cref{lm:dynamic-lb-fixed-k}.

\begin{restatable}{theorem}{thmlbmain} \label{thm:lb_main}
    Suppose that \cref{prob:fixed} admits a solution with preprocessing time $T_P(n,k)$ and update time $T_U(n,k)$ for non-decreasing functions $T_P$ and $T_U$.
    If $T_P(n,n^{\kappa}) = \Oh(n^{1+\kappa-\delta})$, $T_U(n,n^{\kappa}) = \Oh(n^{\min(0.5+1.5\kappa,2.5\kappa)-\delta})$, as well as $T_P(n,n^{\kappa})\cdot T_U(n,n^{\kappa})= \Oh(n^{1+3\kappa-\delta})$ hold for some real parameters $\kappa \in (0,1)$ and $\delta > 0$, then the APSP Hypothesis fails.
\end{restatable}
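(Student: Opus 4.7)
The plan is to derive \cref{thm:lb_main} from \cref{lm:dynamic-lb-fixed-k} by a parameter-balancing reduction. Observe that a solution to \cref{prob:fixed} directly handles the restricted dynamic setting of \cref{lm:dynamic-lb-fixed-k}: the promise $\ed(X,Y)\le 4$ and the restriction to updates on $Y$ only are both relaxations, and \cref{prob:fixed} naturally accommodates arbitrary initial strings. Hence, running the hypothesized algorithm for \cref{prob:fixed} with threshold $k=n^{\kappa}$ on an instance of \cref{lm:dynamic-lb-fixed-k} over $n^\mu$ updates takes total time $T_P(n, n^\kappa) + n^\mu \cdot T_U(n, n^\kappa)$. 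It thus suffices to exhibit a valid $\mu \in [\max(1-2\kappa,0),\,1-\kappa)$ for which this total is $\Oh(n^{0.5 + 0.5\mu + 1.5\kappa - \delta'})$ for some $\delta'>0$, contradicting \cref{lm:dynamic-lb-fixed-k} (and hence APSP).

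Writing $T_P(n, n^\kappa) = \Oh(n^a)$ and $T_U(n, n^\kappa) = \Oh(n^b)$, the required bound splits as
\[ a \le 0.5 + 0.5\mu + 1.5\kappa - \delta' \qquad\text{and}\qquad b + \mu \le 0.5 + 0.5\mu + 1.5\kappa - \delta', \]
equivalently $\mu \ge 2a - 1 - 3\kappa + 2\delta'$ and $\mu \le 1 + 3\kappa - 2b - 2\delta'$. Combined with the range requirement $\mu \in [\max(1-2\kappa,0),\, 1-\kappa)$ from \cref{lm:dynamic-lb-fixed-k}, the feasible interval for $\mu$ is nonempty precisely when three boundary conditions are met with strict slack: (I) the product hypothesis $T_P(n,n^\kappa)\cdot T_U(n,n^\kappa) = \Oh(n^{1+3\kappa-\delta})$, equivalently $a+b \le 1 + 3\kappa - \delta$, makes the two $\mu$-inequalities mutually compatible; (II) the preprocessing hypothesis $T_P(n,n^\kappa) = \Oh(n^{1+\kappa-\delta})$ forces $2a - 1 - 3\kappa < 1 - \kappa$, so the $\mu$-lower bound lies below the range upper bound; and (III) the update hypothesis $T_U(n,n^\kappa) = \Oh(n^{\min(0.5+1.5\kappa,\, 2.5\kappa) - \delta})$ forces $1 + 3\kappa - 2b > \max(1-2\kappa, 0)$, so the $\mu$-upper bound lies above the range lower bound. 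The $\min(\cdot,\cdot)$ in (III) cleanly captures the case split $\kappa \le 0.5$ versus $\kappa > 0.5$, in which $\max(1-2\kappa,0)$ equals $1-2\kappa$ and $0$, respectively.

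With any $\mu^*$ interior to this (nonempty) interval and $\delta'>0$ smaller than the slack provided by the hypotheses, the total time $T_P(n,n^\kappa) + n^{\mu^*}T_U(n,n^\kappa) = \Oh(n^{0.5 + 0.5\mu^* + 1.5\kappa - \delta'})$ then contradicts \cref{lm:dynamic-lb-fixed-k} under APSP. The main obstacle I foresee is not technical but bookkeeping: one must verify that the three hypotheses on $T_P$ and $T_U$ correspond exactly, and tightly, to the three non-trivial boundary conditions of the feasibility region for $\mu$, and that the $\min(\cdot,\cdot)$ form in the update hypothesis emerges naturally from the case analysis imposed by the range $[\max(1-2\kappa,0),\,1-\kappa)$. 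Everything else reduces to arithmetic that the argument above already carries out.
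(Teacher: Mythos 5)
Your high-level route is the same as the paper's: instantiate the hypothesized solution to \cref{prob:fixed} as an algorithm for the restricted dynamic problem of \cref{lm:dynamic-lb-fixed-k} and choose $\mu$ so that $T_P$ and $n^{\mu}T_U$ are both within the forbidden budget $\Oh(n^{0.5+0.5\mu+1.5\kappa-\delta'})$. Your identification of the three feasibility conditions with the three hypotheses -- the product bound making the two $\mu$-inequalities compatible, the preprocessing bound keeping the lower endpoint below $1-\kappa$, and the $\min(0.5+1.5\kappa,2.5\kappa)$ term arising from the case split on $\max(1-2\kappa,0)$ -- matches the paper's computation of the interval $[L,R]$ and the verification $L+5\varepsilon\le R$ exactly.

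There is, however, a genuine gap in the step where you write $T_P(n,n^\kappa)=\Oh(n^a)$, $T_U(n,n^\kappa)=\Oh(n^b)$ and then fix a single $\mu^*$ in the resulting interval. The hypotheses are only pointwise upper bounds: the effective exponents $a_n=\log_n T_P(n,n^\kappa)$ and $b_n=\log_n T_U(n,n^\kappa)$ may oscillate with $n$ (even for non-decreasing $T_P,T_U$), and the product hypothesis controls $a_n+b_n$ for each $n$ but not $\limsup_n a_n+\limsup_n b_n$. For instance, with $\kappa=0.5$ and $\delta=0.1$, the exponent pairs $(a_n,b_n)=(1.4,1.0)$ on one subsequence and $(1.25,1.15)$ on another are both consistent with all three hypotheses, yet the first forces $\mu^*\ge 0.3+2\delta'$ while the second forces $\mu^*\le 0.2-2\delta'$; so no fixed $\mu^*$ yields a contradiction with \cref{lm:dynamic-lb-fixed-k}, whose statement quantifies over a single $\mu$. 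The paper therefore does not use \cref{lm:dynamic-lb-fixed-k} as a black box: it unrolls the reduction and runs it for \emph{every} $\mu$ in a constant-size grid $M$ covering $[\max(1-2\kappa,0),1-\kappa)$ with step $\varepsilon$, each attempt subject to a hard time cutoff of $\Oh(n^{1+\kappa-\varepsilon})$, and argues that for every $n$ at least one grid point lands in the (now $n$-dependent) feasible interval. This wrapper also disposes of the related issue that the reduction cannot compute the right $\mu$ because it does not know $T_P$. Your feasibility analysis survives as the proof that the feasible interval always has length at least $5\varepsilon$; what is missing is the grid-search-with-timeout device (or some other mechanism making the choice of $\mu$ uniform in $n$).
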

\label{app:dynamic-lower-bounds:lb_main}
\begin{proof}
    Let $\varepsilon \coloneqq \min(\kappa, 1 - \kappa, \delta) / 5$ and $M \coloneqq \setof{\max(1 - 2 \kappa, 0) + \varepsilon c}{c \in \ZZ_{\ge 0}, \max(1 - 2 \kappa, 0) + \varepsilon c < 1 - \kappa}$.
    Consider an arbitrary instance $G$ of Tripartite Negative Triangle with parts of sizes at most $n^{\kappa}$, $n^{\kappa}$, and $n^{1 - \kappa}$.
    Assuming that the algorithm from the theorem statement exists, we will solve this instance of Tripartite Negative Triangle in time $\Oh(n^{1 + \kappa - \varepsilon})$, thus violating the APSP Hypothesis due to \cref{fct:negtri}.
    To achieve it, we iterate over all $\mu \in M$ and run some algorithm for up to $C \cdot n^{1 + \kappa - \varepsilon}$ steps for a sufficiently large constant $C$.
    Each of the algorithms solves the instance $G$ of Tripartite Negative Triangle or returns FAIL if it was interrupted.
    We then prove that for at least one value of $\mu$, the algorithm finishes without being interrupted.

    We now describe the algorithm we run for a fixed value of $\mu \in M$.
    By splitting the third part of $G$ into $n^{(1 - \kappa - \mu) / 2}$ equal-sized subparts, we reduce the initial instance of Tripartite Negative Triangle to $n^{(1 - \kappa - \mu) / 2}$ instances with part sizes at most $n^{\kappa}$, $n^{\kappa}$, and $n^{(1 + \mu - \kappa) / 2}$ each.
    Using \cref{lm:dynamic-lb-fixed-k}, we can reduce these instances to $n^{(1 - \kappa - \mu) / 2}$ instances of the problem from \cref{lm:dynamic-lb-fixed-k} with parameters $n$, $\kappa$, and $\mu$.\footnote{More precisely, the instance produced by \cref{lm:dynamic-lb-fixed-k} has parameter $\Oh(n)$ rather than $n$, but we ignore this issue for the clarity of presentation.}
    We solve each of these instances using the algorithm from the theorem statement.

    Correctness of the algorithm for a fixed value of $\mu$ is trivial.
    It remains to prove that for at least one $\mu \in M$ the algorithm runs in $\Oh(n^{1 + \kappa - \varepsilon})$ time, and thus finishes before being interrupted.
    Let $\rho \coloneqq \min(\log_n T_P(n, n^{\kappa}), 1 + \kappa - \delta)$ and $\nu \coloneqq \min(0.5 + 1.5 \kappa, 2.5 \kappa, 1 + 3 \kappa - \rho) - \delta$.
    (Note that our algorithm does not know the value $T_P(n, n^{\kappa})$, and thus cannot calculate the values $\rho$ and $\nu$. We use them only for the analysis.)
    By the theorem statement, we have $T_P(n, n^{\kappa}) = \Oh(n^{\rho})$ and $T_U(n, n^{\kappa}) = \Oh(n^{\nu})$.
    Let $L \coloneqq \max(0, 1 - 2 \kappa, 2 \rho - 3 \kappa - 1)$ and $R \coloneqq \min(1 - \kappa, 1 + 3 \kappa - 2 \nu)$.
    It is easy to verify that by the lemma statement we have $L, R \in [\max(1 - 2 \kappa, 0), 1 - \kappa]$ and $L + 5\varepsilon \le R$.
    Therefore, by the definition of $M$, there is some $\mu \in M$ satisfying $\mu \in [L + 2\varepsilon, L + 3 \varepsilon) \subseteq [L + 2\varepsilon, R - 2\varepsilon)$.
    For this choice of $\mu$, the algorithm takes $\Oh(n^{(1 - \kappa - \mu) / 2} \cdot (n^{\rho} + n^{\mu} \cdot n^{\nu})) = \Oh(n^{1 + \kappa - \varepsilon})$ time, thus proving the theorem.
\end{proof}

\end{document}